\documentclass{article}
\usepackage{amsmath,amsthm,amsfonts,amssymb,graphicx,hyperref}
\usepackage[english]{babel}
\usepackage{verbatim}
\usepackage{enumitem}
\usepackage{framed}
\usepackage{stmaryrd,pifont}
\usepackage[numbers]{natbib}
\usepackage{mathtools}
\usepackage{ulem}
\usepackage{cancel} %
\usepackage[utf8]{inputenc} %
\usepackage{xcolor}
\usepackage[all]{xy} \SilentMatrices
\usepackage{forest}
\usepackage{tikz}
\usetikzlibrary{matrix,arrows,bending, positioning, calc}

\usepackage{graphicx} 
\newcommand{\minus}{\smallsetminus}

\usepackage[nottoc]{tocbibind} %

\newcommand{\defi}{\mathrel{\mathop:}=} %
\newcommand{\calA}{\mathcal{A}}

\newcommand{\N}{\mathbb{N}}

\newcommand{\Borel}{\mathcal{B}}
\newcommand{\Power}{\mathcal{P}}
\renewcommand{\emptyset}{\varnothing}

\newcommand{\Rel}{\mathcal{R}}

\newcommand{\sem}[1]{{\llbracket #1 \rrbracket}} %
\newcommand{\pos}[1]{{\langle #1 \rangle}} %

\newcommand{\lmp}[1]{\mathbb{#1}}

\newcommand{\union}{\mathop{\textstyle\bigcup}}

\newcommand{\sm}{\minus}
\newcommand{\sbq}{\subseteq}

\newcommand{\leb}{\mathfrak{m}}

\DeclareMathSymbol{\rest}{\mathord}{AMSa}{"16}

\newcommand{\Tfont}[1]{\mathsf{#1}}
\newcommand{\rR}{\mathrel{R}} %

\newcommand{\s}{\mathsf{s}}
\newcommand{\sfs}{\mathsf{s}}
\newcommand{\e}{\mathsf{e}}
\newcommand{\h}{\mathsf{h}}
\newcommand{\Fam}{\mathcal{F}}
\newcommand{\D}{\mathcal{D}}

\newcommand{\liftrel}[1]{\widehat{#1}}

\newcommand{\dom}{\mathrm{dom}}
\newcommand{\id}{\mathrm{id}}

\DeclareMathOperator{\inl}{inl}
\DeclareMathOperator{\inr}{inr}
\DeclareMathOperator{\supp}{supp}

\newcommand{\Suc}{\mathit{Suc}}
\newcommand{\Tr}{\mathit{Tr}}
\newcommand{\LTS}{\mathit{LTS}}
\newcommand{\WF}{\mathit{WF}}
\newcommand{\bigd}{\Diamond}

\newcommand{\ltsfrom}{\mathbb{T}}

\newlength{\xywd}
\newcommand{\xyrightarrow}[2][]{%
	\sbox{0}{$\scriptstyle#1$}%
	\xywd=\wd0
	\sbox{0}{$\scriptstyle#2$}%
	\ifdim\wd0>\xywd \xywd=\wd0 \fi
	\xymatrix@C\dimexpr\xywd+1em\relax{{}\ar[r]^{#2}_{#1}&{}}%
}
\newcommand{\Rlab}[1]{\xyrightarrow{#1}}

\makeatletter
\newcommand{\oset}[3][0ex]{%
	\mathrel{\mathop{#3}\limits^{
			\vbox to#1{\kern-2\ex@
				\hbox{$\scriptstyle#2$}\vss}}}}
\makeatother
\newcommand{\RlabLTS}[1]{\oset[.2ex]{#1}{\longrightarrow}}

\makeatletter
\let\lemma\@undefined
\let\endlemma\@undefined 
\let\definition\@undefined
\let\enddefinition\@undefined 
\let\example\@undefined
\let\endexample\@undefined 
\let\remark\@undefined
\let\endremark\@undefined
\let\corollary\@undefined
\let\endcorollary\@undefined
\makeatother

\newtheorem{theorem}{Theorem}[section]
\newtheorem{lemma}[theorem]{Lemma}
\newtheorem{prop}[theorem]{Proposition}
\newtheorem{corollary}[theorem]{Corollary}

\newtheorem*{claim*}{Claim}
\theoremstyle{definition}
\newtheorem{definition}[theorem]{Definition}
\newtheorem*{definition*}{Definition}
\newtheorem{remark}[theorem]{Remark}

\newtheorem{example}[theorem]{Example}
\theoremstyle{remark}

\newtheorem*{example*}{Example}

\newcommand{\xindex}[2][{}]{\relax}

\hypersetup{
  colorlinks,
  urlcolor={blue},
  linkcolor={blue!50!black},
  citecolor={blue!50!black},
}

\begin{document}

\title{The complexity of bisimilarity on pointmass processes%
}
\author{Martín Santiago Moroni%
	\thanks{Facultad de Ciencias Exactas y Naturales.
		Departamento de Computación, Universidad de Buenos Aires and ICC UBA \& CONICET}
		\and
  Pedro Sánchez Terraf%
  \thanks{Universidad Nacional de Córdoba.  Facultad de Matemática, Astronomía,  Física y
    Computación.
    \\
    Centro de Investigación y Estudios de Matemática (CIEM-FaMAF),
    Conicet. Córdoba. Argentina.\\
    Supported by Secyt-UNC project 33620230100751CB and Conicet PIP project 11220210100508CO}
}
\maketitle 
\begin{abstract}
  We assess the descriptive complexity of \textit{bisimilarity} or “equality of
  behavior” on a family of Markov decision processes over uncountable standard Borel
  spaces, namely \textit{nondeterministic labelled Markov processes} (NLMP).

  We show that bisimilarity is analytic for processes with a uniform assignment of
  finitely-supported measures to each state and label. More finely, we obtain
  that bisimilarity on the space of countable Kripke frames (or labelled
  transition systems) is classifiable by countable structures.

  We show that bisimilarity of well-founded (“terminating”) processes is
  Borel. We also provide a lower complexity bound by reducing the relation of
  eventual equality of binary sequences $E_0$ to the former. As a consequence,
  there is no countable fragment of basic modal logic with denumerable
  conjunctions that characterizes bisimilarity for processes of small
  rank.

  We finally apply the previous Borel definability to study the well-founded
  part of discrete uniform processes over uncountable spaces.
\end{abstract}

\setcounter{tocdepth}{2}
\tableofcontents
\section{Introduction}
\label{sec:introduction}
This paper is part of an ongoing project to assess the descriptive complexity of
“equality of behavior” on computational processes; more precisely, of Markov
decision processes over uncountable spaces.

The main motivation for studying uncountable spaces radicates in a expanding
vision of the concept of \textit{process} in which we include, for example, the
physical components or mobile parts that are controlled by a computer. This
induces that the “state space” of such a process combines all variables involved
in this expanded framework. Ideally, it is desired to describe the process in
its original presentation, prior to any discretization procedure that might be
needed for explicit calculations.

The probabilistic part models, in some cases, a “quantified uncertainty” of
sorts: The reliability that the aforementioned components comply to the
controller's orders; also, that arising from randomized computations, like a
Monte Carlo approximation.
But a more opaque type of uncertainty is often considered, “internal
nondeterminism” from which little to no information is available.
It might represent our ignorance of the respective probabilities, or just our
desire to describe an abstract general situation that applies to several
concrete models. A third use case, that originated this concept in
\textit{concurrent computation}, is the variety of behaviors arising when a
“scheduler” sequentially organizes a series of tasks that were issued in
parallel by different “cores” of the computer.

The most general setting in this work will be that of \textit{nondeterministic
labelled Markov processes} (NLMP)
\cite{Wolovick,D'Argenio:2012:BNL:2139682.2139685} whose internal nondeterminism
is represented by a menu of distinct probabilistic behaviors for each
state and chosen action. We will focus in this work on processes with countable
menus, and where all (sub)probability measures appearing are finitely-supported.
NLMPs carry a measurable structure that ensure that
basic related concepts (such as the validity sets of the appropriate modal logic)
are definable in the descriptive-theoretical sense.

Prior structures can be considered as particular cases of NLMP. Kripke
frames, for instance faithfully represent processes with no probabilistic part;
when interpreted as computational processes, they are called \textit{labelled
  transition systems} (LTS).
Kripke frames/LTSs also carry the natural notion of computational equivalence,
\textit{bisimilarity}. It is not straightforward to generalize this notion all
the way through NLMP, and there are many natural contenders, which have been
discussed elsewhere \cite{coco,survey_bisim}. Luckily, many diverging versions
coincide in the restricted context we are considering here.

We have a progression of results, involving increasingly stronger
hypothesis. The most general one involves Markov decision processes over
standard Borel spaces:
\begin{theorem}[Theorem~\ref{th:state-bisim-finit-supp}]\label{th:bisim-unif-finit-supp}
  Bisimilarity on uniform, finitely-supported processes is an analytic relation.
\end{theorem}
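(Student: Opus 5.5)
The plan is to show that bisimilarity is the projection of a Borel set. That is, $s\sim t$ holds iff there exists a bisimulation $R$ relating $s$ and $t$, where $R$ ranges over a Polish (or standard Borel) space of codes and the condition ``$R$ is a bisimulation containing $(s,t)$'' is Borel.

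The uniformity hypothesis drives the coding. I take it to mean that for each label $a$ there are Borel maps $f^a_n \colon S\to S$ (for $n\in\N$) and Borel weight functions such that $T_a(s)$, the countable menu of measures at $s$, is enumerated as $(\mu^a_{s,k})_{k\in\N}$. Each $\mu^a_{s,k}$ is supported on $\{f^a_n(s): n\in\N\}$, with weights $w^a_{k,n}(s)$ depending in a Borel way on $s$. With such a parametrization, the set of states reachable from $s$ is countable and Borel-enumerated as $(g_\sigma(s))_{\sigma\in\N^{<\omega}}$, where each $g_\sigma$ is a composition of the $f$'s. I then restrict to bisimulations living on the reachable parts of $s$ and $t$. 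Such a relation is coded by a point $x\in 2^{\N^{<\omega}\times\N^{<\omega}}$, with $x(\sigma,\tau)=1$ meaning that $g_\sigma(s)$ is related to $g_\tau(t)$. This step uses the standard fact that the restriction of a bisimulation to reachable states is again one. I would work with $R$ symmetric on the disjoint union, or equivalently code $R$ as a relation between the two reachable sets, which is enough because bisimilarity is an equivalence and I can use the ``relation between two processes'' version of the definition.

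Next I verify that ``$x$ codes a bisimulation'' is Borel in $(s,t,x)$. For finitely supported measures, the various notions of bisimulation coincide in the restricted context, as noted in the introduction. So I use the relational lifting: $\mu$ and $\nu$ are related iff $\mu(C)=\nu(C')$ for each $R$-closed class. With countable supports this is witnessed by equality of masses on equivalence classes of the equivalence relation generated by $R$ on the countable support. I can equivalently ask for the existence of a weight coupling, a countable object, which is still fine for analyticity since an extra existential quantifier over reals is allowed. The bisimulation clause then reads as follows: for all $\sigma,\tau$ with $x(\sigma,\tau)=1$, all labels $a$, and all $k$, there exists $l$ such that $\mu^a_{g_\sigma(s),k}$ and $\mu^a_{g_\tau(t),l}$ are related, and symmetrically. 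Every quantifier here is countable, except possibly the coupling witnesses, which I add to the existential block. Mass comparisons $\sum_{n\in\cdots} w(\cdot)=\sum\cdots$ are Borel because the $w$'s and $g$'s are Borel and the sums are countable. Equality tests $g_\sigma(s)=g_{\sigma'}(s)$ are Borel since the diagonal of a standard Borel space is Borel; I need these to handle repeated support points correctly when summing masses.

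Finally, $s\sim t$ iff $\exists x\,[\,x(\emptyset,\emptyset)=1\wedge B(s,t,x)\,]$ with $B$ Borel, so bisimilarity is analytic. I expect the main obstacle to be the middle step. One difficulty is choosing the right notion of bisimulation for NLMPs (the measurable-set-based definitions) and proving it agrees with the countable, coupling-style combinatorial condition under finite support. The other is checking that nothing non-Borel hides in comparing measures via $R$-closed sets, especially when support points coincide or measures are sub-probabilities with zero weights. I would first prove a lemma that, for finitely supported measures, the lifting reduces to matching masses of $R$-equivalence classes within the finite union of supports, and then quote it.
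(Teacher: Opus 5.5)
\textbf{The gap: restricting a bisimulation to the reachable parts.} Your architecture is the paper's: code a relation between the uniformly enumerated countable reachable parts (your $g_\sigma$ are the paper's compositions $x_n$) as a point of Cantor space, check the transfer clause with countable quantifiers and Borel mass sums, and project. That part is fine. The gap is the step you call a standard fact: that $s\sim_\s t$ yields a bisimulation, in your combinatorial sense, between the reachable parts of $s$ and $t$.

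For LTSs this is trivial (Lemma~\ref{lem:bisim-restriction-MLTS}). For NLMPs it is not. The lifting $\bar{R}$ on $S$ only tests \emph{measurable} $R$-closed subsets of the uncountable space, and these need not separate distinct $R$-classes. So $\mu\mathrel{\bar{R}}\mu'$ does not give class-by-class equality of masses on the supports, and $R{\downarrow}$ can fail to be a bisimulation (Example~\ref{exm:counterexm-NLMP-state-restriction}).

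This happens even inside the class of the theorem. On $S=[0,1]$, give each $w$ the single transition $\delta_{f(w)}$, with $f$ Borel and $f(u)=y$, $f(v)=z$, $f(y)=y$, $f(z)=z$. Let $R$ be the equivalence with the two classes $V\ni u,v,y$ and $S\setminus V\ni z$, where $V$ is non-Borel. Then $\Sigma(R)=\{\emptyset,S\}$, so $R$ is a state bisimulation. Yet $R{\downarrow}=\{(u,v),(y,v)\}$ on $\{u,y\}\times\{v,z\}$ is not one, since the closed pair $(\{u,y\},\{v\})$ separates $\delta_y$ from $\delta_z$.

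For the same reason, the lemma you plan to prove first is false for $\bar{R}$ computed on $S$: above, $\delta_y\mathrel{\bar{R}}\delta_z$ although the class masses differ. It holds on the countable substructures, where every subset is measurable (essentially Lemma~\ref{lem:barR_characterization_countable_supp}). It therefore cannot by itself bridge the two levels.

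\textbf{What the paper does.} The missing idea is the one developed in Section~\ref{sect:NLMP-substructures}. Restrict only a bisimulation $R$ that is an equivalence with $R=\Rel(\Sigma(R))$. For each $R$-class $C$ meeting the countable set $\supp\nu\cup\supp\nu'$, intersect countably many separating sets $Q_{x,y}\in\Sigma(R)$ to get $Q_C\in\Sigma(R)$ with $Q_C\cap(\supp\nu\cup\supp\nu')=C\cap(\supp\nu\cup\supp\nu')$. The traces $Q_C\cap A$ and $Q_C\cap A'$ are partitions in measure that carry $\nu\mathrel{\bar{R}}\nu'$ down to $\nu_A\mathrel{\overline{R{\downarrow}}}\nu'_{A'}$. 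This is done in Lemmas~\ref{lem:R-bar-interpolation}, \ref{lem:R-bar-interpolation-countable-supp} and~\ref{lem:R_descends_to_bisim_between_substruc}, combined in Lemma~\ref{lem:bisim-equiv-between-generated} with the easy upward direction (Lemma~\ref{lem:ext-bisim-NLMP-internal}), which you also leave implicit.

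The paper applies this to $R={\sim_\s}$, so the hypothesis $R=\Rel(\Sigma(R))$ must be available for ${\sim_\s}$. One way to secure the whole step here is to iterate your own mass-matching clause transfinitely, starting from $S\times S$. Each stage ${\approx_\alpha}$ ($\alpha<\omega_1$) is a Borel equivalence, by exactly your Borel-sum analysis. If ${\sim_\s}\subseteq{\approx_\alpha}$, the Borel ${\approx_\alpha}$-classes lie in $\Sigma({\sim_\s})$, which gives ${\sim_\s}\subseteq{\approx_{\alpha+1}}$. On each countable reachable part the stages stabilize at a countable ordinal, and the stable stage is the bisimulation you need there.
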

\textit{Uniform} processes come with a canonical enumeration of their transition
relations, including the mass associated to each possible successor state.

By eliminating probabilities altogether but keeping the definability of the
transition relations (thus obtaining \textit{uniformly measurable} LTSs), we obtain:
\begin{theorem}[Theorem~\ref{thm:bisim-MLTS-classifiable-countable-struct}]\label{th:bisim-umltss-class-count}
  Bisimilarity on UMLTSs is classifiable by countable structures.
\end{theorem}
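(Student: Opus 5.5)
The plan is to Borel-reduce bisimilarity on a UMLTS to isomorphism of countable structures. Concretely, I will produce a Borel map $x \mapsto T_x$ into the space of countable labelled trees, coded as structures on $\N$, such that $x$ and $y$ are bisimilar if and only if $T_x \cong T_{x'}$ for suitable normal forms.

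Uniformity is the key ingredient. The transition relations of a UMLTS come with a canonical enumeration, so there are Borel functions $f_{a,n}$ (together with Borel domains) such that the $a$-successors of a state $s$ are exactly $\{f_{a,n}(s) : n\in\N,\ s\in\dom f_{a,n}\}$. Hence the set of states reachable from $x$ is countable, and it is Borel-enumerated by finite paths, that is, by words in $(L\times\N)^{<\omega}$ for a countable label set $L$.

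\textbf{Step 1 (unravelling).} From $x$ I build the unravelling tree $U_x$. Its nodes are the finite sequences $(a_1,n_1,\dots,a_k,n_k)$ such that the corresponding composition of the $f$'s is defined at $x$, and the edges are labelled by the $a_i$. The map $x\mapsto U_x$ into the Polish space of labelled trees on $\N$ is Borel, since each clause is a Borel condition. The state $x$ is bisimilar to the root of $U_x$ as an LTS. Consequently $x$ and $y$ are bisimilar if and only if the roots of $U_x$ and $U_y$ are bisimilar as countable LTSs.

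\textbf{Step 2 (bisimilarity to isomorphism).} On countable rooted trees, bisimilarity is not isomorphism, because multiplicities and duplicated subtrees matter. To fix this I will pass to the "$\omega$-fold expansion": replace each child subtree by $\aleph_0$ copies of itself, hereditarily. Formally, the new nodes are sequences $(a_1,n_1,m_1,\dots)$ with an extra copy index $m_i\in\N$, so the map $U\mapsto U^\omega$ is still Borel. Two countable rooted trees are bisimilar if and only if their $\omega$-expansions are isomorphic.
\begin{itemize}
\item If the expansions are isomorphic, the roots are bisimilar, since each tree is bisimilar to its expansion.
\item For the converse, I build an isomorphism by a back-and-forth argument. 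At each pair of matched nodes, bisimilar children come in classes, and every class realised on one side is realised on the other. In the expansion every class appears with multiplicity exactly $\aleph_0$, so the children can be matched class by class, and the construction proceeds down the tree.
\end{itemize}
This is the classical fact that bisimilarity of countable (possibly ill-founded) trees is classifiable by countable structures. It needs an actual inductive construction of the isomorphism rather than an appeal to ranks, so that it covers non-well-founded trees.

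\textbf{Step 3 (composition).} Composing gives the Borel map $x\mapsto (U_x)^\omega$, which witnesses that bisimilarity Borel-reduces to isomorphism of countable $L$-structures.

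I expect the main obstacle to be Step 2 for ill-founded trees. The back-and-forth must be organised carefully by enumerating the nodes of both sides and extending a partial isomorphism that maps each node to a bisimilar node. The $\aleph_0$-multiplicity guarantees that unused copies are always available at each level, which is what makes the extension possible. A secondary check is that the paper's notion of bisimilarity on UMLTSs agrees with the LTS bisimilarity of unravellings. Here I would rely on the fact that the uniform enumeration exhausts all transitions, so that state bisimulation and the relation induced on reachable subsystems coincide.
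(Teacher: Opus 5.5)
Your proposal is correct. It shares the paper's skeleton: unravel, $\omega$-expand, then use that bisimilarity of countable rooted structures becomes isomorphism of their $\omega$-expansions. The first step, however, is factored differently.

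The paper first sends $s$ to a code $h(s)\in\omega\LTS$ of the generated substructure $\lmp{A}_s$, naming each reachable state by the least $n$ with $x_n(s)$ equal to it (Lemma~\ref{lem:h:S-to-LTS-continuous}). It justifies $(\lmp{S},s)\sim(\lmp{A}_s,s)\sim\ltsfrom_{h(s)}$ through the substructure results (Corollary~\ref{cor:s-bisim-t-implies-Ss-bisim-As}, Lemma~\ref{lem:s-bisim-T-hs}). It then applies the continuous expansion $\Omega$ (Lemma~\ref{lem:Omega-continuous}), and quotes the bisimilarity/isomorphism correspondence from the literature (Theorem~\ref{thm:kripke-iso}) rather than proving it.

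You instead index the unravelling by words of enumeration indices. Then $w\mapsto f_w(x)$ is a zigzag morphism from $U_x$ onto the part of $\lmp{S}$ reachable from $x$; its zag clause is exactly the exhaustiveness of the enumeration (Lemma~\ref{lem:property-UMLTS}). Since you never identify equal states, Borelness needs only measurability of the domains of the compositions. The paper's $h$, by contrast, must compare states (the sets $\{s\mid x_m(s)=t_{0,l,a}(x_n(s))\}$ and the least-index naming $f^{(s)}$), which implicitly relies on the diagonal of $S$ being measurable. You also bypass $\lmp{A}_s$ entirely. In exchange, the paper gets your Step~2 for free by citation, and its factorization through $\omega\LTS$ is reused for the bounded-rank result (Theorem~\ref{thm:bisim_S_leq_alpha_Borel}). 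Your map respects ranks in the same way, so it would serve there too.

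Two minor comments. First, the obstacle you anticipate in Step~2 is not a real one. On trees, define the isomorphism top-down by recursion on depth. At each matched pair of bisimilar nodes, choose a bijection between their children that preserves edge label and bisimilarity class. This is possible because each such class occurs either $\aleph_0$ times or not at all, on both sides. Every node has finite depth, so this defines a bijection on all nodes, with no back-and-forth enumeration and no role for well-foundedness.

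Second, your ``secondary check'' mixes two things. The coincidence of state bisimilarity with LTS bisimilarity on image-countable MLTSs over separable spaces is a separately cited fact that the paper uses to identify the notions (see the discussion around Proposition~\ref{prop:equality-bisim-MLTS-tradit-ext}). Exhaustiveness of the enumeration does not yield it; rather, exhaustiveness is what gives your map $w\mapsto f_w(x)$ its zag condition.
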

We move to study bisimilarity on \textit{terminating} processes: Those that end
after a finite number of “interactions”.
They correspond to well-founded accessibility relations, where the depth, or rank, is expressed by countable ordinal.
In the case of bounded ranks, we obtain Borel definability.
\begin{theorem}[Theorem~\ref{thm:bisim_S_leq_alpha_Borel}]\label{th:rank-bounded-segments-wf}
  Bisimilarity on rank-bounded segments of the well-founded part of a UMLTS is Borel.
\end{theorem}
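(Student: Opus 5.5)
Our plan is to define bisimilarity on the rank-bounded segment $S_{\le\alpha}$ (the states of the well-founded part whose rank is at most the countable ordinal $\alpha$) by a transfinite recursion of length $\alpha+1$, and to check that each stage is Borel. For ordinals $\beta\le\alpha$, set $\sim_0 \defi S\times S$. At successors put $x\sim_{\beta+1}y$ if and only if, for every label $a$, each $a$-successor of $x$ is $\sim_\beta$-related to some $a$-successor of $y$, and conversely. At limits take $\sim_\lambda\defi\bigcap_{\beta<\lambda}\sim_\beta$. We then prove two things. First, restricted to states of rank $\le\beta$, the relation $\sim_\beta$ coincides with bisimilarity. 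Second, each $\sim_\beta$ is Borel. Together these show that bisimilarity on $S_{\le\alpha}$ equals $\sim_\alpha\cap(S_{\le\alpha}\times S_{\le\alpha})$, which is Borel provided $S_{\le\alpha}$ itself is Borel.

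The first claim is the standard stratification argument, carried out by induction on rank. One inclusion is immediate: bisimilarity refines every $\sim_\beta$, by induction on $\beta$. For the converse, suppose $x,y$ have rank $\le\beta$ and $x\sim_\beta y$. If $\beta=\gamma+1$, the successors of $x$ and $y$ have rank $\le\gamma$, so matching them up to $\sim_\gamma$ amounts to matching them up to bisimilarity, by the induction hypothesis. If $\beta$ is a limit, then $x$ has rank $\beta'+1\le\beta$ for some $\beta'$, or rank $0$, and we reduce to that smaller stage. The relation consisting of the pairs of equal rank related by the appropriate $\sim$ is then a bisimulation. We also record that rank is a bisimulation invariant on well-founded states, which makes this pairing consistent.

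For Borelness we use uniformity of the UMLTS. This gives Borel functions $f^a_n\colon S\to S$, together with Borel domains $D^a_n$, enumerating the $a$-successors of each state. The successor clause becomes
\[
\forall a\,\forall n\,\bigl(x\in D^a_n\to\exists m\,(y\in D^a_m\wedge (f^a_n(x),f^a_m(y))\in{\sim_\beta})\bigr),
\]
together with the symmetric clause. Since there are countably many labels, this is a countable Boolean combination of preimages of a Borel set under Borel maps, and hence Borel. Limit stages are countable intersections, and $\alpha$ is countable, so every $\sim_\beta$ is Borel by transfinite induction.

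The remaining step, which we expect to be the main obstacle, is showing that $S_{\le\alpha}$ is Borel, with ranks handled uniformly. The set $W_0$ of states with no successors is $\bigcap_{a,n}(S\setminus D^a_n)$, which is Borel. For the higher levels we set
\[
W_{\beta+1}\defi\{x: \forall a\,\forall n\,(x\in D^a_n\to f^a_n(x)\in W_\beta)\}
\]
and take countable unions $W_\lambda\defi\bigcup_{\beta<\lambda}W_\beta$ at limits. The claim to verify is that $W_\beta$ is exactly the set of states of rank $\le\beta$. The point of difficulty is to make sure the paper's rank convention (rank as the supremum of successor ranks plus one) agrees with this recursion at limit stages, and that the general well-founded part, which is only co-analytic, never enters the argument. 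Because $\alpha$ is fixed and countable, we only ever use these Borel approximations, and the theorem follows.
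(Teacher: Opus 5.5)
Your route differs from the paper's and is viable, but both of your limit-stage arguments break on states whose rank is exactly a limit ordinal. Under the convention you attribute to the paper (which is indeed equivalent to its $\varphi$-based definition), a state whose successors have ranks $0,1,2,\dots$ has rank $\omega$. The root of $A(\N)$ is one such state.

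(i) In the coincidence proof, such an $x$ is neither of rank $0$ nor of rank $\beta'+1\le\lambda$, so there is no smaller stage to reduce to. Even when $\mathrm{rank}(x)<\lambda$, reducing also needs $\mathrm{rank}(y)<\lambda$. Bisimulation-invariance of rank cannot supply this, since $x\sim y$ is what is being proved. What is missing is a lemma saying the approximants themselves carry rank information: by induction on $\beta$ (using that $\sim_\beta$ decreases in $\beta$), $x\sim_\beta y$ implies $x\vDash\varphi_\xi\iff y\vDash\varphi_\xi$ for all $\xi\le\beta$. Then argue through successors. Suppose $x\sim_\lambda y$, both of rank $\le\lambda$, and $x'$ is an $a$-successor of $x$ of rank $\delta<\lambda$. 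Then $x\sim_{\delta+2}y$ gives an $a$-successor $y'$ of $y$ with $x'\sim_{\delta+1}y'$. The lemma forces $\mathrm{rank}(y')=\delta$, hence $x'\sim_\delta y'$, and the induction hypothesis gives $x'\sim y'$. Successors of $y$ are handled symmetrically, since they have rank $<\lambda$.

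(ii) Your $W_\lambda\defi\bigcup_{\beta<\lambda}W_\beta$ is the set of states of rank $<\lambda$, not $\le\lambda$. The shift then propagates: $W_{\lambda+1}$ becomes the set of states of rank $\le\lambda$. Take instead $W_\lambda\defi\{x:\forall a\,\forall n\,(x\in D^a_n\to f^a_n(x)\in\bigcup_{\beta<\lambda}W_\beta)\}$. Alternatively, use the paper's $S^{\le\alpha}=S\setminus\sem{\varphi_{\alpha+1}}$ with $\sem{\varphi_{\alpha+1}}=\bigcup_{a}\tilde{\Tfont{T}}_a^{-1}[H_{\sem{\varphi_\alpha}}]$, which needs only hit-measurability. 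Your sets are Borel anyway, so (ii) does not endanger the conclusion, but the identification you claim is false.

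With these repairs your proof goes through. The paper never stratifies bisimilarity on $S$. Instead it:
\begin{itemize}
\item composes the measurable coding $h$ of the generated substructure $\lmp{A}_s$ into $\omega\LTS$ with the $\omega$-expansion $\Omega_0$;
\item observes, by rank invariance plus Lemma~\ref{lem:Omega0-codomain}, that $S^{\le\alpha}$ is sent into $\WF_M^{\le\alpha}$;
\item appeals to the Borelness of tree isomorphism $\equiv$ on $\WF_M^{\le\alpha}$, proved in Appendix~\ref{sec:isom-rank-restr-trees} by a Friedman--Stanley back-and-forth. There multiplicities must be counted through $\mathrm{forth}_k$/$\mathrm{back}_k$, and the $\mathbf{\Pi}_1^1$-rank fact for $\WF$ is needed.
\end{itemize}
Your argument is more elementary and self-contained. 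It is essentially that appendix's induction carried out directly for bisimilarity, where multiplicities are irrelevant and no expansion or coding is required; uniformity is used only to make each successor step Borel. The paper's route in turn recycles the reduction that already yields classifiability by countable structures and the $\omega\LTS$ theorem. It also gives the finer information that bisimilarity on $S^{\le\alpha}$ is Borel reducible to isomorphism of rank-bounded well-founded trees.
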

By grouping all countable LTSs into a Polish space, restricting to those of rank $\alpha$, and examining bisimilarity in this setting, we obtain an analogous result:
\begin{theorem}[Theorem~\ref{thm:bisim-wLTSalpha-Borel}]\label{th:bisim-omega-lts}
  Bisimilarity on the family $\omega \LTS^{\leq \alpha}$  of rank-bounded
  (well-founded) countable LTSs is Borel.
\end{theorem}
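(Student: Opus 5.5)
We prove Theorem~\ref{th:bisim-omega-lts} by showing that, on $\omega\LTS^{\leq\alpha}$, bisimilarity coincides with a relation defined by a transfinite recursion of length at most $\alpha+1$, each stage of which is Borel.

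\textbf{Setting.} We code a countable LTS with labels from a countable set $L$ as an element $T=(R_a)_{a\in L}$ of the Polish space $2^{L\times\omega\times\omega}$. The relation under study is $E\subseteq(\omega\LTS^{\leq\alpha})^2$, given by $T\mathrel{E}T'$ iff the distinguished roots (say, state $0$) are bisimilar. We first check that $\omega\LTS^{\leq\alpha}$ is a Borel subset, since well-founded trees or relations of rank $\leq\alpha$, for countable $\alpha$, form a Borel set.

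\textbf{Approximants.} Define, for pairs $(T,n),(T',m)$, the approximants $\sim_\beta$ by recursion on $\beta\leq\alpha$:
\begin{itemize}
\item $\sim_0$ is everything;
\item $(T,n)\sim_{\beta+1}(T',m)$ iff for every $a$ and every $k$ with $n\RlabLTS{a}k$ in $T$ there is $l$ with $m\RlabLTS{a}l$ in $T'$ and $(T,k)\sim_\beta(T',l)$, and symmetrically;
\item at limit stages, take the intersection.
\end{itemize}
Since the quantifiers range over $\omega$ and the labels are countable, an induction on $\beta<\omega_1$ shows that each $\sim_\beta$ is a Borel subset of $(2^{L\times\omega\times\omega}\times\omega)^2$. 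The successor step uses only countable unions and intersections of Borel sets, and limits are countable intersections.

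\textbf{Key step.} Show that for well-founded systems of rank $\leq\alpha$ the approximation stabilizes, so that $\sim_{\alpha}$ (or $\sim_{\alpha+1}$, depending on the rank convention) is exactly bisimilarity. The inclusion ``bisimilar implies $\sim_\beta$ for all $\beta$'' is the usual induction. For the converse, we prove by induction on $\gamma=\max(\mathrm{rk}(n),\mathrm{rk}(m))$ that if $(T,n)\sim_{\gamma+1}(T',m)$ then $n$ and $m$ are bisimilar. All successors $k,l$ have rank $<\gamma$, and $\sim_{\gamma+1}$ gives matching pairs related by $\sim_\gamma$. Monotonicity ($\sim_\delta\supseteq\sim_{\delta'}$ for $\delta\leq\delta'$) lets us descend to $\sim_{\delta+1}$ with $\delta=\max(\mathrm{rk}\,k,\mathrm{rk}\,l)$, and the induction hypothesis then applies. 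So the relation $\{(k,l):k\sim_{\mathrm{rk}+1}l\}$ is a bisimulation between the two well-founded systems.

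\textbf{Main obstacle.} The step I expect to need the most care is making the rank bookkeeping uniform across the whole space: ranks vary between states, and we need a single countable stage $\alpha+1$ that works for every pair. This is handled by the monotonicity just described together with the uniform bound $\alpha$. Alternatively, one can derive the result from Theorem~\ref{th:rank-bounded-segments-wf} by viewing the disjoint union of all codes as a single UMLTS over the standard Borel space $2^{L\times\omega\times\omega}\times\omega$ with uniformly measurable transitions. The root map $T\mapsto(T,0)$ is Borel, so $E$ is a Borel preimage of a Borel relation.
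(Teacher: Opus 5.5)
Your argument is correct, but it takes a different route from the paper. The paper does not use approximants. It applies the continuous $\omega$-expansion $\Omega_0$, which reduces $\sim$ to the isomorphism relation $\equiv$ of labelled trees (Lemma~\ref{lem:sim-iff-Omega0-equiv}). Via Lemma~\ref{lem:formula-correspondence-wLTS} and Proposition~\ref{prop:WF-tree-formulas}, it checks that $\Omega_0$ sends $\omega\LTS^{\leq\alpha}$ into $\WF_M^{\leq\alpha}$ (Lemma~\ref{lem:Omega0-codomain}). It then proves in Appendix~\ref{sec:isom-rank-restr-trees}, along the lines of Friedman--Stanley, that $\equiv$ restricted to $\WF_M^{\leq\alpha}$ is Borel. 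That proof is an induction on rank using the conditions $\mathrm{forth}^\alpha_k$ and $\mathrm{back}^\alpha_k$ over injective $k$-tuples, which are needed because isomorphism must match the multiplicities of subtree types.

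You instead work with bisimilarity directly. Since bisimilarity ignores multiplicities, a one-element zig-zag suffices and each stage $\sim_\beta$ is visibly Borel. Your rank induction then shows that the single stage $\sim_{\alpha+1}$ already coincides with $\sim$ on $\omega\LTS^{\leq\alpha}$. This needs no tree unfolding, no rank-transfer lemmas and no counting, and it gives an explicit Borel definition. What the paper's detour buys is the reduction $\Omega_0$ to isomorphism of countable structures, which it reuses for classifiability by countable structures (Theorem~\ref{thm:bisim-MLTS-classifiable-countable-struct}) and for Theorem~\ref{thm:bisim_S_leq_alpha_Borel}. Your approximants would in fact transfer to the latter directly, using the measurable enumerations $t_{n,a}$.

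Two small remarks. First, the Borelness of $\omega\LTS^{\leq\alpha}$ concerns the rank of the root state, not of the whole relation. It is best justified by the same kind of induction (rank $\leq\alpha$ iff every successor has rank $<\alpha$) or by Corollary~\ref{cor:omegaLTS_leq_alpha_borel}. Second, your alternative via Theorem~\ref{th:rank-bounded-segments-wf} is not independent of the paper's method, since the paper proves that theorem through $\Omega_0$ and the same appendix result.
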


On the “negative” side, we obtain a lower complexity bound for well-founded
processes of small rank:
\begin{theorem}[Theorem~\ref{thm:E_0-reduction-via-B}]\label{th:bisim-not-smooth}
  Bisimilarity on the family $\WF^{\leq \omega+2}$ of countably branching,
  well-founded trees with rank $\leq\omega+2$ is not smooth.
\end{theorem}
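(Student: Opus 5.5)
We will prove non-smoothness by exhibiting a Borel reduction of $E_0$ (eventual equality on $2^\omega$) to bisimilarity on $\WF^{\leq\omega+2}$. Since $E_0$ is not smooth and smoothness is inherited downward along Borel reductions, this suffices. The plan is to assign to each $x\in 2^\omega$ a countably branching well-founded tree $T_x$ of rank at most $\omega+2$. The assignment $x\mapsto T_x$ will be continuous into the Polish space of trees, and we want $x \mathrel{E_0} y$ iff $T_x$ and $T_y$ are bisimilar.

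\textbf{Building blocks.} For each $n<\omega$ we fix two non-bisimilar well-founded trees $A_n^0,A_n^1$ of finite rank, for instance of rank $n+1$ and $n+2$. We arrange that for $n\neq m$ and any $i,j$, the trees $A_n^i$ and $A_m^j$ are not bisimilar. Bisimilarity classes of finite-rank trees are determined by their rank-bounded modal theory, so such a family is easy to produce, e.g. by encoding the pair $(n,i)$ as a distinct finite chain shape. Given a finite modification, we encode the tail of $x$ from position $k$ onward as the tree $S_k(x)$, whose root has children $A_n^{x(n)}$ for all $n\ge k$, together with children marking ``position $n$ is free'' for $n<k$. More simply, we let
\[
T_x \;=\; \text{root with children } R_k(x)\ \ (k<\omega),
\]
where $R_k(x)$ has a child $A_n^{x(n)}$ for each $n\ge k$ and a child $A_n^{0}$ and a child $A_n^{1}$ for each $n<k$. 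Each $R_k(x)$ has rank at most $\omega$ (a sup of finite ranks), so it has rank $\omega+1$ as a node, and $T_x$ has rank $\leq\omega+2$. Continuity of $x\mapsto T_x$ is clear because each finite piece of $T_x$ depends on finitely many coordinates.

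\textbf{Verification.} Suppose first that $x \mathrel{E_0} y$, with agreement from $N$ on. Then $R_k(x)$ and $R_k(y)$ are literally equal for every $k\ge N$. For $k<N$, the subtree $R_k(x)$ is bisimilar to $R_N(x)$: the extra children $A_n^{x(n)}$ with $k\le n<N$ are already present among the ``both'' children of $R_N(x)$. Hence $T_x$ and $T_y$ have the same set of children up to bisimilarity, and they are bisimilar. Conversely, suppose $T_x$ and $T_y$ are bisimilar. The child $R_0(x)$ must then be bisimilar to some $R_k(y)$. The children of $R_0(x)$ contain exactly one of $A_n^0,A_n^1$ for each $n$, while $R_k(y)$ contains both for each $n<k$. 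Since the $A$'s are pairwise non-bisimilar, a bisimulation must match the children sets up to bisimilarity. This forces $x(n)=y(n)$ for all $n\ge k$, so $x \mathrel{E_0} y$.

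\textbf{Main obstacle.} The delicate step is making the building blocks $A_n^i$ pairwise non-bisimilar while keeping them of finite rank, so that the rank bound $\omega+2$ holds. We also need the child-set comparison in the converse to be rigorous. For the $A$'s we would take $A_n^i$ to be a root with a single branch whose shape encodes $(n,i)$ (e.g. a root whose children are chains of lengths $2n+i+1$ only), and check non-bisimilarity by comparing ranks of children. Finally, we confirm that $\WF^{\leq\omega+2}$ is the standard Borel space used in the paper, so that ``not smooth'' is meaningful there.
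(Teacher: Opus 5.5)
There is a genuine gap: the forward implication $x \mathrel{E_0} y \Rightarrow T_x\sim T_y$ fails for your trees. Your claim that $R_k(x)$ is bisimilar to $R_N(x)$ for $k<N$ checks only the zig half. Every child $A_n^{x(n)}$ of $R_k(x)$ does occur among the children of $R_N(x)$. But for $k\le n<N$, the child $A_n^{1-x(n)}$ of $R_N(x)$ has no bisimilar counterpart among the children of $R_k(x)$, precisely because you made the $A$'s pairwise non-bisimilar. So the ``wildcard'' children do not absorb finite modifications.

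Concretely, take $x=(0,0,0,\dots)$ and $y=(1,0,0,\dots)$. The child $R_0(x)$ of $T_x$ has only the children $A_n^0$, while every child $R_k(y)$ of $T_y$ has $A_0^1$ as a child. Hence no child of $T_y$ is bisimilar to $R_0(x)$, and $T_x\not\sim T_y$. More generally, the bisimilarity class of $R_k(x)$ determines $k$ (the number of $n$ for which both $A_n^0$ and $A_n^1$ occur) together with the tail $(x(n))_{n\geq k}$. In your converse argument, $R_0(x)\sim R_k(y)$ actually forces $k=0$ and then $x=y$. Thus $T_x\sim T_y\iff x=y$: your map reduces the smooth equality relation on $2^\N$, which proves nothing about non-smoothness.

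The missing idea is to make the set of children of the root, up to bisimilarity, an $E_0$-invariant of $x$. The paper attaches below a new root one coding tree $A(m_n(x))$ for \emph{every} finite modification $m_n(x)$ of $x$. Here the maps $m_n\colon 2^\N\to 2^\N$ form a continuous enumeration of all finite modifications, and this uniformity is what makes $x\mapsto B(x)$ continuous. Each $A(z)$ determines $z$ up to bisimilarity (via Lemma~\ref{lem:T_x-formula}).

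Your blocks would serve equally well: let the children of the root be $R_0(m_n(x))$ for $n\in\N$. Then $x\mathrel{E_0}y$ gives literally the same family of children. Conversely, $T_x\sim T_y$ forces the child $R_0(x)=R_0(m_0(x))$ to be bisimilar to some $R_0(m_n(y))$, whence $x=m_n(y)$ and $x\mathrel{E_0}y$. Your rank and continuity computations then go through unchanged.
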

We note a consequence of this result.
\begin{corollary}[Corollary~\ref{cor:ML-not-characterization}]
  There is no countable fragment of basic modal logic with denumerable
  conjunctions that characterizes bisimilarity on $\WF^{\leq \omega+2}$.
\end{corollary}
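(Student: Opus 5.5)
The plan is to derive the corollary from Theorem~\ref{th:bisim-not-smooth} by contraposition. Suppose there were a countable fragment $\mathcal{L}$ of basic modal logic with countable conjunctions, $\mathcal{L}=\{\varphi_n : n\in\omega\}$, that characterizes bisimilarity on $\WF^{\leq\omega+2}$. That is, for all trees $t,t'$ in this family, $t$ and $t'$ are bisimilar exactly when they satisfy the same formulas of $\mathcal{L}$ at the root. I would then define $f:\WF^{\leq\omega+2}\to 2^\omega$ by $f(t)(n)=1$ iff $t\models\varphi_n$. By the characterization, $t$ is bisimilar to $t'$ iff $f(t)=f(t')$. If $f$ is Borel, this makes bisimilarity smooth, contradicting Theorem~\ref{th:bisim-not-smooth}.

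The substantive step is showing that, for each formula $\varphi$ of infinitary modal logic with countable conjunctions, the set $\{t : t\models\varphi\}$ is Borel in the space of countably branching trees (coded, say, as subsets of $\omega^{<\omega}$, or as the space underlying $\WF^{\leq\omega+2}$ as defined later). I would prove the stronger statement that for each node $s\in\omega^{<\omega}$ the set $\{t : s\in t \text{ and } (t,s)\models\varphi\}$ is Borel, by induction on the construction of $\varphi$.
\begin{itemize}
\item $\top$ and atomic cases give clopen conditions.
\item Negation is complementation, intersected with the clopen set $\{s\in t\}$.
\item A countable conjunction is a countable intersection.
\item For $\langle a\rangle\psi$, the set is $\bigcup_{k}\{t: s^\frown k\in t,\ \text{edge labelled } a,\ (t,s^\frown k)\models\psi\}$, a countable union of Borel sets, since successors of $s$ are among the countably many $s^\frown k$.
\end{itemize}
Labels are handled as clopen conditions in the coding. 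The family $\WF^{\leq\omega+2}$ is itself a subset of the ambient Polish space, so $f$ restricted to it is Borel for the relative Borel structure. Smoothness, as used in the theorem, is about Borel reductions from this subspace, so the argument goes through. If $\WF^{\leq\omega+2}$ is known to be Borel (which I expect, as rank-bounded well-founded trees form a Borel set), nothing more is needed.

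I expect the main obstacle to be bookkeeping rather than depth: matching the induction to the precise coding of trees in the paper's definition of $\WF^{\leq\omega+2}$, and making sure that ``characterizes'' is read as root-level equivalence for all pairs in the family. A subtler point is that bisimilarity may be intended on pointed trees at their roots, which is exactly the setting above. I would also remark that the fragment need not be closed under subformulas for the argument to work. We only need each $\varphi_n$ to have a Borel validity set, and that follows from the induction over all formulas.
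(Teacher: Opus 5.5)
Your proposal is correct and follows essentially the paper's route. The paper argues that a countable logic with measurable validity sets characterizing bisimilarity would make $\sim$ smooth, contradicting the non-smoothness obtained from $E_0 \leq_B {\sim}$, and notes that $\mathrm{ML}_{\omega_1}$ has measurable validity sets; your structural induction on formulas over tree codes is the tree-space analogue of Proposition~\ref{prop:semantic-is-Borel-wLTS}, which the paper relies on for that step.
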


We briefly describe the contents of the paper.
Section~\ref{sec:rank-restricted-trees} 
gathers some preliminaries on discrete processes, considers
a Polish space of
countable LTSs, and show that the relation of bisimilarity
$\sim_\alpha$ on well-founded ones of rank ${\leq}\alpha$ is Borel, by using an
appropriate reduction to isomorphism of trees of the same
rank (this will be later used at
Section~\ref{subsect:complexity-bisim-mlts-bounded-rank} to prove
Theorem~\ref{th:rank-bounded-segments-wf}). Section~\ref{sec:e0-reduction}
contains the proof of Theorem~\ref{th:bisim-not-smooth} and its corollary.

Section~\ref{sec:prelims-on-markov} recalls general notions concerning our
Markov decision processes of interest, NLMPs. In Section~\ref{sect:NLMP-substructures} we
determine conditions under which bisimilarity on a process can be inferred from
the behavior of an appropriately defined “substructure”; notable counterexamples
indicate that some care to detail is to be exercised; but for the case of
discrete processes, things work out fine.

Pointmass processes are treated in
Section~\ref{sec:pointmass-nlmp}, up to the proof
Theorem~\ref{th:bisim-unif-finit-supp}. Section~\ref{sec:umlts}
introduces uniformly definable processes, and leverages the results of
Section~\ref{sec:rank-restricted-trees} to obtain
Theorem~\ref{th:bisim-umltss-class-count}.
Section~\ref{sec:conclusion} offers some concluding remarks.

\section{\texorpdfstring{Processes and trees on $\N$}{Processes and trees on ℕ}}
\label{sec:rank-restricted-trees}

For the rest of the paper, $L$ will denote a fixed countable language.
A \emph{labelled transition system} (LTS) with language $L$ is a tuple $\lmp{S}
= (S,\{ R_a \}_{a\in L})$, where $S$ is a set of states and $R_a\sbq S\times S$
are the transition relations. If $s \mathrel{R_a} t$, we write $s \RlabLTS{a}
t$.

We will restrict ourselves to the study of \emph{image-countable} LTSs, that is,
that satisfy that $\{ t \mid s \RlabLTS{a} t \}$ is countable for each $s\in S$
and $a\in L$. Since $L$ is also countable, this leaves us on the realm of \textit{countably
  branching processes}, in CS parlance.

\begin{definition}\label{def:LTS-bisimulation}
  Let $\lmp{S} = (S,\{ R_a \}_{a\in L})$ and 
  $\lmp{S}'= (S',\{ R'_a \}_{a\in L})$ be two LTSs.

  A relation $R \subseteq S \times S'$ is a \emph{bisimulation}
  if $s \mathrel{R} s'$ implies that for every $a \in L$,
  \begin{itemize}
  \item if $s \RlabLTS{a} t$, then there exists $t' \in S'$ 
    such that $s' \oset{a}{\longrightarrow'} t'$ and $t 
    \mathrel{R} t'$, 
  \item if $s' \oset{a}{\longrightarrow'} t'$, then 
    there exists $t \in S$ such that $s \RlabLTS{a} t$ and $t 
    \mathrel{R} t'$.
  \end{itemize}
  For  states $s\in S$ and $s'\in S'$, we say that if $(\lmp{S},s)$ is \emph{bisimilar} to $(\lmp{S}',s')$, denoted $(\lmp{S},s)\sim (\lmp{S}',s')$, if there exists a 
  bisimulation $R$ such that $s \mathrel{R} s'$.
\end{definition}

It is customary to call the first condition \textit{zig} and the second
\textit{zag}.  We will extend this terminology to all bisimulations of the same
kind.

\subsection{Omega expansion and ranks}\label{sec:omega-expansions}

Our primary tool consists of using rooted trees as canonical representatives of
the bisimilarity classes of states of an LTS. 
For countably branching processes, reducing bisimilarity to isomorphism requires
both a tree unfolding and an appropriate “expansion”. The resulting base set
will be a tree of sequences over a countable set; such trees are the ones that
appear prominently in Descriptive Set Theory. For a countable set $N$, we denote
$N^{<\N}$ the set of finite sequences over $N$. A \emph{tree} $T$ over $N$ is a
subset of $N^{<\N}$ closed under initial segments.

Following \cite[p.275]{Blackburn:2006:HML:1207696}, we adapt the definition of
expansions to our purposes:
\begin{definition}\label{def:omega-expansion}
	Given an LTS $\lmp{S}=(S,\{R_a\}_{a\in L})$, an 
	\emph{$\omega$-indexed path from $s \in S$} is a 
	sequence of the form
	\[
	u=(s_1,a_1,n_1)(s_2,a_2,n_2)\dots(s_m,a_m,n_m)
	\]
	such that $a_i \in L$, $n_i \in \N$, $s\!\Rlab{a_1}\!s_1$, and 
	for all $i\in\{1,\dots,m-1\}$, 
	$s_i\!\Rlab{a_{i+1}}\!s_{i+1}$.

        For any tree $T$ over $S\times L\times\N$, we define the binary relations $\Suc_T^a\subseteq 
	T\times T$ for $a\in L$ by
        \begin{equation}
          \label{eq:def-Suc}
          (u,v)\in\Suc_T^a  \iff  v=u(t,a,n) \text{ for some } t\in S \text{ and
          } n \in \N.
        \end{equation}
        
	The \emph{$\omega$-expansion at $s$} of $\lmp{S}$ is the 
	LTS $(\Tr_{\lmp{S}}(s),\{\Suc_{\lmp{S}}^a\}_{a\in L})$, 
	where $\Tr_{\lmp{S}}(s)$ is the tree over $S\times L\times 
	\N$ of all $\omega$-indexed paths 
	from $s$, and $\Suc_{\lmp{S}}^a \defi \Suc_{\Tr_{\lmp{S}}(s)}^a$.
\end{definition}

The significance of this construction is given by the following characterization:

\begin{theorem}[{\cite[Corollary~47]{BlackburnModelTheory}}]\label{thm:kripke-iso}
  If $(\lmp{S},s)$ and $(\lmp{S}',s')$ are image-countable LTSs, we have
  \[
    (\lmp{S},s)\sim (\lmp{S}',s') \iff
    (\Tr_{\lmp{S}}(s),\{\Suc_\lmp{S}^a\}_{a\in L})\cong
    (\Tr_{\lmp{S}'}(s'),\{\Suc_{\lmp{S}'}^a\}_{a\in L}).
  \]
\end{theorem}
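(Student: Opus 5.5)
The right-to-left direction is straightforward; the left-to-right direction is a back-and-forth construction. I will use two facts throughout. First, the projection map $\pi\colon\Tr_{\lmp{S}}(s)\to S$ is defined by $\pi(\emptyset)=s$ and $\pi(u(t,a,n))=t$. The relation $\{(u,\pi(u))\}$ is a bisimulation between the $\omega$-expansion and $\lmp{S}$, so $(\Tr_{\lmp{S}}(s),\emptyset)\sim(\lmp{S},s)$. Second, every node $u$ of the expansion with last state $t$ has, for each label $a$ and each $a$-successor $t'$ of $t$, exactly $\aleph_0$ many $\Suc^a$-children projecting to $t'$, namely $u(t',a,n)$ for $n\in\N$.

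\textbf{($\Leftarrow$).} An isomorphism is in particular a bisimulation. Suppose $f$ is an isomorphism of the expansions; it is a bijection preserving each $\Suc^a$ in both directions, and it must send the root to the root, because the root is the unique node with no predecessor. Composing bisimulations, $(\lmp{S},s)\sim(\Tr_{\lmp{S}}(s),\emptyset)\cong(\Tr_{\lmp{S}'}(s'),\emptyset)\sim(\lmp{S}',s')$. Since bisimilarity is closed under relational composition and inverses, this gives $(\lmp{S},s)\sim(\lmp{S}',s')$.

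\textbf{($\Rightarrow$).} Fix a bisimulation $R$ with $s\mathrel{R}s'$. I will build an isomorphism $f$ level by level, by induction on the length of sequences, maintaining the invariant that $\pi(u)\mathrel{R}\pi'(f(u))$. Put $f(\emptyset)=\emptyset$. Now suppose $f(u)=u'$ is defined with $t=\pi(u)\mathrel{R}\pi'(u')=t'$, and fix a label $a$. The $a$-children of $u$ are $C=\{u(x,a,n): t\RlabLTS{a}x,\ n\in\N\}$ and those of $u'$ are $C'=\{u'(x',a,n)\}$. I need a bijection $g\colon C\to C'$ such that each pair $(c,g(c))$ has projections related by $R$. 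The argument for this step is as follows.
\begin{itemize}
\item If $C$ is empty, then by zag $C'$ is empty too, and vice versa; so assume both are nonempty.
\item Both $C$ and $C'$ are countably infinite, since $x$ ranges over a countable nonempty set and $n$ over $\N$.
\item Let $Z$ be the set of pairs $(x,x')$ of $a$-successors of $t$ and $t'$ respectively with $x\mathrel{R}x'$. Zig shows every $x$ occurs as a first coordinate in $Z$, and zag shows every $x'$ occurs as a second coordinate.
\item Enumerate the pairs of $Z\times\N$. To each pair $((x,x'),k)$ associate the child $u(x,a,\langle x',k\rangle)$ of $u$ and the child $u'(x',a,\langle x,k\rangle)$ of $u'$, where $\langle\cdot,\cdot\rangle$ is a fixed injection of the countable index sets into $\N$, used to split each copy index.
\item Images and preimages of $C$ and $C'$ are then either all hit exactly once, or the pieces are matched by a back-and-forth enumeration that alternately extends the domain and the range. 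This works because every $x$ and every $x'$ has infinitely many copies available, and each has at least one $R$-partner.
\end{itemize}
Set $f$ on $C$ to be $g$ and continue the recursion below each child. Doing this for every label $a$ and every node defines $f$ on the whole tree. Since each $g$ is a bijection and the label is preserved, $f$ is bijective and preserves $\Suc^a$ in both directions.

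I expect the main obstacle to be this matching step, and I will handle it by a back-and-forth enumeration. List $C=\{c_0,c_1,\dots\}$ and $C'=\{c'_0,c'_1,\dots\}$. At even stages, take the least unmatched $c_i$ with projection $x$, choose an $x'$ with $x\mathrel{R}x'$, and match $c_i$ to an unused copy $u'(x',a,m)$. Such a copy exists because only finitely many copies have been used so far and there are infinitely many. Odd stages are symmetric. The infinitely many copies indexed by $n$ are exactly what makes this work, since the number of successors is not preserved by bisimulation. Image-countability is what guarantees that $C$ and $C'$ are countable, so that the enumerations exist.
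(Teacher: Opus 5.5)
Your proof is correct. The paper itself gives no argument for this statement; it imports it from the cited model-theory chapter (Corollary~47 there), so there is no in-paper proof to compare step by step.

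Both directions of your argument are sound. For ($\Leftarrow$), the root is the only node without a predecessor, so any isomorphism preserves the root and is therefore a bisimulation. Composing it with the projection bisimulations $u\mapsto\pi(u)$ then gives $(\lmp{S},s)\sim(\lmp{S}',s')$. For ($\Rightarrow$), you build $f$ level by level with the invariant $\pi(u)\mathrel{R}\pi'(f(u))$. The children sets $C$ and $C'$ are either both empty, by zig/zag, or both countably infinite, by image-countability. The back-and-forth in your last paragraph then does produce the required label-preserving bijection $g$.

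A common way to organize the matching step is different. One works with bisimilarity itself, which is an equivalence on the disjoint union. The $a$-children of $u$ and of $u'$ are partitioned according to the bisimilarity class of their last state. In an $\omega$-expansion each class contributes either $0$ or $\aleph_0$ children, and the same classes occur on both sides, so one chooses bijections class by class. Your version works directly with an arbitrary, possibly non-transitive, bisimulation $R$, and the back-and-forth replaces the class counting. The only cost is that the invariant is $R$ rather than $\sim$, which is harmless.

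One minor point concerns the bullet that sends $((x,x'),k)$ to $u(x,a,\langle x',k\rangle)$ and $u'(x',a,\langle x,k\rangle)$ via a single fixed injection. As stated, this does not give bijections onto $C$ and $C'$, since a non-surjective injection leaves children unmatched. It would work if you chose, for each $x$, a bijection $\{x'\mid (x,x')\in Z\}\times\N\to\N$, and for each $x'$, a bijection $\{x\mid (x,x')\in Z\}\times\N\to\N$. Since your final back-and-forth already settles this step, you can simply drop that bullet.
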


In the following, we will work with basic modal logic augmented with denumerable
conjunctions and disjunctions, denoted by $\mathrm{ML}_{\omega_1}$ (following
\cite[Sect~5.2 pp.~294--295]{BlackburnModelTheory}). Since LTSs are just Kripke \textit{frames},
atomic propositions will not be used.

We define formulas $\varphi_\alpha$ for $\alpha<\omega_1$ as follows:
\begin{align}\label{eq:rank-formulas}
	\begin{split}
		\varphi_0 & \defi\top, \\
		\varphi_{\alpha+1} & \defi \bigvee_{a\in L}\pos{a} 
		\varphi_\alpha, \\
		\varphi_\lambda & \defi 
		\bigwedge_{\alpha<\lambda}\varphi_\alpha 
		\text{\quad if $\lambda$ is a limit ordinal}.
	\end{split}
\end{align}

With these formulas, we can define the rank of a state in an LTS with language $L$.

\begin{definition}\label{def:rank-lts}
	If $\lmp{S}$ is an LTS and $s\in S$, we say that $(\lmp{S},s)$ is \emph{well-founded} if there exists an ordinal $\alpha$ such that $(\lmp{S},s)\nvDash \varphi_{\alpha+1}$. 
	The \emph{rank of $(\lmp{S},s)$} is the smallest of such countable ordinals if it exists, and $\infty$ if it does not. 
	The \emph{well-founded part} of $\lmp{S}$ is the set of states with rank different from $\infty$.
\end{definition}

\begin{prop}\label{prop:bisim-implies-formulas-satisf}
  If $(\lmp{S},s)\sim (\lmp{S}',s')$ then $s$ and $s'$ have the same rank.
\end{prop}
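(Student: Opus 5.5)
The plan is to prove the stronger statement that bisimilar states satisfy exactly the same formulas $\varphi_\alpha$ for every $\alpha<\omega_1$. The rank of a state is determined by which of the $\varphi_\alpha$ it satisfies. Indeed, by Definition~\ref{def:rank-lts} the rank of $(\lmp{S},s)$ is the least countable $\alpha$ with $(\lmp{S},s)\nvDash\varphi_{\alpha+1}$, or $\infty$ if there is none. So once we know that $(\lmp{S},s)\vDash\varphi_\alpha \iff (\lmp{S}',s')\vDash\varphi_\alpha$ for all $\alpha$, the sets of witnesses coincide. Hence the minima coincide, and if one set is empty so is the other, which handles the $\infty$ case.

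The invariance is proved by induction on $\alpha<\omega_1$, simultaneously for all pairs of bisimilar states in all LTSs. Fix a bisimulation $R$ with $s\mathrel{R}s'$; the other direction is symmetric via $R^{-1}$, which is again a bisimulation. There are three cases.

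\textbf{Case $\alpha=0$.} This is trivial, since $\varphi_0=\top$.

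\textbf{Successor case.} Suppose $s\vDash\varphi_{\alpha+1}=\bigvee_{a\in L}\pos{a}\varphi_\alpha$. Then there are $a\in L$ and $t$ with $s\RlabLTS{a}t$ and $t\vDash\varphi_\alpha$. By the zig clause of Definition~\ref{def:LTS-bisimulation} there is $t'$ with $s'\oset{a}{\longrightarrow'}t'$ and $t\mathrel{R}t'$. Since $t$ and $t'$ are bisimilar, the inductive hypothesis gives $t'\vDash\varphi_\alpha$, so $s'\vDash\varphi_{\alpha+1}$.

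\textbf{Limit case.} For limit $\lambda$, the formula $\varphi_\lambda$ is a countable conjunction, and the inductive hypothesis for every $\alpha<\lambda$ yields the claim immediately.

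There is no real obstacle here. The only point needing care is that the induction hypothesis must be stated for all bisimilar pairs, not just for the fixed pair $s,s'$, so that it applies to the successors $t,t'$. An alternative route through Theorem~\ref{thm:kripke-iso} and isomorphism invariance is available, but it needs image-countability; the direct induction avoids that hypothesis.
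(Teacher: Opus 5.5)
Your proof is correct and is essentially the paper's argument. The paper just cites the bisimulation invariance (soundness) of $\mathrm{ML}_{\omega_1}$; you prove the needed instance for the positive formulas $\varphi_\alpha$ directly by induction, and you are right that this direction needs no image-countability.
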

\begin{proof}
  By the correctness of $\mathrm{ML}_{\omega_1}$ for bisimilarity of
  image-countable LTS.
\end{proof}

Going back to $\omega$-expansions, well-founded sequence trees (those which do not have infinite branches) also have their
own notion of rank.
\begin{definition}\label{def:rank-tree}
	If $T$ is a well-founded tree on $A$, the \emph{rank of $s \in
        A^{<\mathbb{N}}$} is recursively defined as $\rho_T(s) \defi \sup\{\rho_T(t)+1 \mid t \in T, \; s \subsetneq t\}$. 
	Furthermore, $\rho_T(s) = 0$ if $s$ has no extensions in $T$ or if $s \notin T$. 
	The \emph{rank of $T$} is defined as $\rho(T) \defi \sup\{\rho_T(s)+1 \mid s \in T\}$.
\end{definition}
We will next show that both notions of rank actually coincide for trees. Indeed,
Definition~\ref{def:rank-lts} could have been stated to refer to the tree rank
of the $\omega$-expansion, but we preferred to use a logical description since
we consider this approach closer to the original LTS, and it is amenable to be
generalized to settings where one can not find a bisimilar tree.

We will use $\WF_N$ to denote the space of well-founded trees over a countable set $N$.
Additionally, $\WF^{\bowtie \alpha}_N$ (with ${\bowtie} \in \{ =, <, \leq, >, \geq \}$)
will denote the subfamily consisting of trees with rank ``$\bowtie\alpha$.''

Trees can be considered to be LTS over a singleton set of labels
$\{ \star \}$, where the transition relation is given by $s\raisebox{0.5ex}[0.1ex]{\Rlab{\star}} t$ if
and only if $t$ is an immediate successor of $s$ in $T$. As an auxiliary
step for relating well-founded LTSs to trees with the same ranks, we define
analogous formulas corresponding to  $\varphi_\alpha$ from
\eqref{eq:rank-formulas} when taking $L=\{ \star \}$. To write them down, and in
the following, we use $\bigd$ instead of $\pos{\star}$.
\begin{align*}
	\psi_0 & \defi \top, \\
	\psi_{\alpha+1} & \defi \bigd \psi_\alpha, \\
	\psi_\lambda & \defi 
	\bigwedge_{\alpha < \lambda} \psi_\alpha 
	\text{\quad if $\lambda$ is a limit ordinal}.
\end{align*}

In the next result, please note that $\Tr_{\lmp{S}}(s)$ is just the tree (with
the single transition relation corresponding to the $\star$ symbol) and should not be conflated
with the $\omega$-expansion using the full $L$.
\begin{lemma}\label{lem:formula-correspondence-wLTS}
	$(\lmp{S}, s) \vDash \varphi_\alpha \iff   
	\bigl((\Tr_{\lmp{S}}(s), \{{\stackrel{\star}{\to}}\}) , \emptyset\bigr) \vDash \psi_\alpha$.
\end{lemma}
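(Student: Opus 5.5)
We prove the statement by transfinite induction on $\alpha$, but in a slightly strengthened form that ranges over all states and nodes simultaneously. Precisely, we show for every $\alpha$: for every state $t\in S$ and every node $u\in\Tr_{\lmp{S}}(s)$ whose last entry has state component $t$ (or $u=\emptyset$ and $t=s$), we have $(\lmp{S},t)\vDash\varphi_\alpha$ iff $\bigl((\Tr_{\lmp{S}}(s),\{\stackrel{\star}{\to}\}),u\bigr)\vDash\psi_\alpha$. The lemma is the instance $u=\emptyset$.

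The key structural fact is the following. For such a pair $(u,t)$, the $\star$-successors of $u$ in the tree are exactly the nodes $u(t',a,n)$ with $a\in L$, $n\in\N$ and $t\RlabLTS{a}t'$. This follows from Definition~\ref{def:omega-expansion}: extending an $\omega$-indexed path by one triple is allowed exactly when the new state is an $a$-successor of the last state. Each such successor node again has last state component $t'$, so the induction hypothesis applies to it. Conversely, every $a$-successor $t'$ of $t$ yields at least one successor node, for instance $u(t',a,0)$.

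The induction then splits into three cases.
\begin{itemize}
\item For $\alpha=0$, both $\varphi_0$ and $\psi_0$ are $\top$, so there is nothing to prove.
\item For a successor $\alpha+1$, we have $(\lmp{S},t)\vDash\bigvee_{a}\pos{a}\varphi_\alpha$ iff there are $a\in L$ and $t'$ with $t\RlabLTS{a}t'$ and $t'\vDash\varphi_\alpha$. By the structural fact and the induction hypothesis, this holds iff some $\star$-successor $u(t',a,n)$ of $u$ satisfies $\psi_\alpha$, which is exactly $u\vDash\bigd\psi_\alpha$.
\item For a limit $\lambda$, both sides are conjunctions over $\alpha<\lambda$, and the induction hypothesis holds for each conjunct.
\end{itemize}

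The only real obstacle is getting the induction statement right. Done naively at the root alone, the successor step fails because it has to pass to deeper nodes and other states, hence the strengthening to all pairs $(u,t)$. One also has to keep the $L$-labelled transitions distinct from the single relation $\stackrel{\star}{\to}$, which collapses all labels. That collapse is harmless precisely because $\varphi_{\alpha+1}$ is a disjunction over all labels, so it never distinguishes which label was used.
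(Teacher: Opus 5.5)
Your proof is correct and is essentially the paper's argument: the same induction on $\alpha$ with the same three cases, with the successor step witnessed by a successor node labelled $(t',a,n)$. The only difference is bookkeeping. The paper keeps the statement at the root (quantified over all $\lmp{S}$ and $s$) and tacitly uses that the subtree of $\Tr_{\lmp{S}}(s)$ below $(t,a,n)$ is exactly $\Tr_{\lmp{S}}(t)$. So the root-only induction does not actually fail as you suggest; your strengthening to all nodes simply makes that identification unnecessary.
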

\begin{proof}
	We use induction on $\alpha$. 
	The case $\alpha=0$ is trivial and the limit case $\lambda$ 
	follows directly from the definition of the formulas and the 
	IH.
	For the successor case, if $(\lmp{S},s)\vDash 
	\varphi_{\alpha+1}$, there exist $a\in L$ and $t\in S$ such that $s \Rlab{a} t$ and $(\lmp{S},t)\vDash \varphi_\alpha$.
	Then, $(t,a,n)\in S\times L\times \mathbb{N}$ witnesses that 
	$(\Tr_{\lmp{S}}(s),\emptyset) \vDash \bigd \psi_\alpha 
	=\psi_{\alpha+1}$.
	The converse is similar.
\end{proof}

We now give a characterization of $\WF^{\leq\alpha}_N$ in terms of the satisfaction of the formulas $\psi_\alpha$ for $\alpha < \omega_1$.

If $T \in \Tr_N$ and $s \in T$, the \emph{section} tree $T_s$ is 
defined by $t \in T_s \iff s^\smallfrown t \in T$.
Recall that $\mathrm{wf}(T)$ denotes the well-founded part of $T$.
We will also use the following notation when necessary: 
if $u \in \mathrm{wf}(T)$ is of the form $(n)^\smallfrown u'$, 
we denote $u'$ by $\mathrm{tail}(u)$.

\begin{lemma}\label{lem:tree-rank-tail}
	If $T \in \Tr_N$ and $u \in \mathrm{wf}(T)$ is of the form 
	$(n)^\smallfrown u'$, then 
	$\rho_T(u) = \rho_{T_{(n)}}(u')$.
\end{lemma}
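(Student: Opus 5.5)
The plan is to prove the equality by well-founded induction on $u$. We induct on the rank $\rho_T(u)$, or equivalently on the well-founded extension order of $T$ restricted to $\mathrm{wf}(T)$. Since $u \in \mathrm{wf}(T)$, the set of extensions of $u$ in $T$ forms a well-founded subtree, so this induction is legitimate. The basic observation is that the map $t \mapsto (n)^\smallfrown t$ is a bijection between the sequences of $T_{(n)}$ and those of $T$ that begin with $n$. This bijection preserves and reflects proper extension: $u' \subsetneq t'$ holds if and only if $(n)^\smallfrown u' \subsetneq (n)^\smallfrown t'$. It also sends $T_{(n)}$ onto $\{v \in T \mid v(0)=n\}$.

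The key step uses this bijection directly. Every proper extension $v \supsetneq u$ in $T$ has $v(0)=n$, because $u$ already begins with $n$ (this needs $u$ nonempty, which the form $(n)^\smallfrown u'$ guarantees). Hence $v=(n)^\smallfrown t'$ with $t'\in T_{(n)}$ and $t'\supsetneq u'$. Conversely, every $t'\in T_{(n)}$ with $t'\supsetneq u'$ yields $v=(n)^\smallfrown t' \in T$ with $v\supsetneq u$. Each such $v$ is again in $\mathrm{wf}(T)$ and is of the required form, with $\mathrm{tail}(v)=t'$. The induction hypothesis therefore gives $\rho_T(v)=\rho_{T_{(n)}}(t')$. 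Taking suprema of $\rho+1$ over these corresponding index sets gives
\[
\rho_T(u)=\sup\{\rho_T(v)+1 \mid v\in T,\ u\subsetneq v\} = \sup\{\rho_{T_{(n)}}(t')+1 \mid t'\in T_{(n)},\ u'\subsetneq t'\} = \rho_{T_{(n)}}(u').
\]
The base case, where $u$ has no extensions, is covered by the same correspondence. In that case $u'$ has no extensions in $T_{(n)}$, and both ranks are $0$.

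Two minor points remain. First, $u'\in T_{(n)}$ holds because $(n)^\smallfrown u' = u\in T$, so the case ``$s\notin T$'' of Definition~\ref{def:rank-tree} does not intervene. Second, $u'$ lies in the well-founded part of $T_{(n)}$ exactly because the bijection maps infinite branches through $u'$ in $T_{(n)}$ to infinite branches through $u$ in $T$. This second point is what makes $\rho_{T_{(n)}}(u')$ meaningful when $T_{(n)}$ itself is not well-founded. The only real care needed is to note that the recursive rank definition is applied locally on $\mathrm{wf}$; there is no substantive obstacle here.
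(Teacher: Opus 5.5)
Your proof is correct and follows essentially the same route as the paper: induction on $\rho_T(u)$, using the correspondence $t'\mapsto (n)^\smallfrown t'$ between the proper extensions of $u'$ in $T_{(n)}$ and those of $u$ in $T$ to match the two suprema term by term. Your extra remarks that $u'\in T_{(n)}$ and that $u'$ lies in the well-founded part of $T_{(n)}$ make explicit two points the paper leaves implicit.
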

\begin{proof}
	We note that if $v\in T$ extends $u$, then 
	$\mathrm{tail}(v)$ extends $\mathrm{tail}(u)$. 
	Conversely,	if $v\in T_{(n)}$ extends $\mathrm{tail}(u)$ then 
	$(n)^\smallfrown v$ extends $u$.
	We prove the result by induction on $\alpha=\rho_T(u)$. 
	For the base case, if $u$ is terminal in $T$, then
	$\mathrm{tail}(u)$ must also be terminal in $T_{(n)}$ by 
	what was just observed. 
	Suppose now that the claim holds for every non-empty word 
	in $T$ of rank $\beta<\alpha$ and let $u=(n)^\smallfrown u'\in T$ 
	such that $\rho_T(u)=\alpha$. 
	Then
	\begin{align*}
		\rho_T(u)&=\sup\{\rho_T(v)+1 \mid v\in T, \; u\subsetneq v\}\\
		&=\sup\{\rho_T((n)^\smallfrown v')+1 \mid v'\in T_{(n)}, \; 
		\mathrm{tail}(u)\subsetneq v'\}\\
		&=\sup\{\rho_{T_{(n)}}(v')+1 \mid v'\in T_{(n)}, \; 
		\mathrm{tail}(u)\subsetneq v'\}\\
		&=\rho_{T_{(n)}}(\mathrm{tail}(u)).\qedhere
	\end{align*}
\end{proof}

The next proposition along with Lemma~\ref{lem:formula-correspondence-wLTS} give the desired rank preservation property.

\begin{prop}\label{prop:WF-tree-formulas}
	$T \notin \WF^{<\alpha}_N \iff (T, \emptyset) \vDash \psi_\alpha$. \qed
\end{prop}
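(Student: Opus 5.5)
The plan is to prove a stronger, pointwise statement by induction on $\alpha$, and then read off Proposition~\ref{prop:WF-tree-formulas} at the root. The pointwise claim is: for every $u\in T$,
\[
(T,u)\vDash\psi_\alpha \iff u\notin\mathrm{wf}(T)\ \text{ or }\ \rho_T(u)\geq\alpha .
\]
Here $\rho_T$ is the rank of Definition~\ref{def:rank-tree}, computed in the well-founded part.

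Before the induction I will record one auxiliary fact. The supremum defining $\rho_T(u)$ may be taken over \emph{immediate} successors $u^\smallfrown(n)\in T$ only. The reason is that $\rho_T$ strictly decreases along proper extensions, so every proper extension $v$ of $u$ extends some immediate successor $w$ with $\rho_T(v)+1\leq\rho_T(w)+1$. Alternatively, this can be organized via sections using Lemma~\ref{lem:tree-rank-tail}, since $\rho_T(u^\smallfrown(n)) = \rho_{T_u}((n))$. I will also use that $u\notin\mathrm{wf}(T)$ iff some immediate successor of $u$ lies outside $\mathrm{wf}(T)$.

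The induction then runs as follows.
\begin{itemize}
\item \emph{Case $\alpha=0$.} This is trivial, since $\psi_0=\top$ and $\rho_T(u)\geq 0$ always holds.
\item \emph{Successor case $\alpha=\beta+1$.} We have $(T,u)\vDash\bigd\psi_\beta$ iff some child $u^\smallfrown(n)\in T$ satisfies $\psi_\beta$. By the induction hypothesis this holds iff some child is ill-founded or has rank $\geq\beta$. If some child is ill-founded, so is $u$. Otherwise all children are in $\mathrm{wf}(T)$, and $\rho_T(u)=\sup_n(\rho_T(u^\smallfrown(n))+1)\geq\beta+1$ iff some term is $\geq\beta+1$. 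Indeed, if every term were $\leq\beta$, the supremum would be $\leq\beta$. This gives the equivalence.
\item \emph{Limit case $\lambda$.} The formula $\psi_\lambda$ is the conjunction of the $\psi_\beta$ for $\beta<\lambda$. The right-hand sides match because $\rho_T(u)\geq\beta$ for all $\beta<\lambda$ iff $\rho_T(u)\geq\lambda$. Ill-foundedness of $u$ does not depend on $\beta$.
\end{itemize}

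Finally I specialize to $u=\emptyset$. Then $T$ is ill-founded iff $\emptyset\notin\mathrm{wf}(T)$. If $T$ is well-founded, the only step needed is to translate $\rho_T(\emptyset)\geq\alpha$ into $T\notin\WF^{<\alpha}_N$.

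I expect this bookkeeping at the root to be the main (if minor) obstacle. By Definition~\ref{def:rank-tree}, $\rho(T)=\rho_T(\emptyset)+1$ for nonempty $T$. The two notions agree at limit $\alpha$, but at successor stages they differ by one. For example, the one-node tree has $\rho(T)=1$ yet fails $\psi_1=\bigd\top$. So the statement must be read with the rank of $T$ measured at the root, namely $\rho_T(\emptyset)$. This reading matches Definition~\ref{def:rank-lts}, where a terminal state has rank $0$, and it is what Lemma~\ref{lem:formula-correspondence-wLTS} needs for the rank-preservation property. If the convention $\rho(T)$ is kept literally, the same argument yields the statement with $\alpha$ replaced by $\alpha+1$ on the left-hand side. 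In either case the substantive content is the pointwise induction above.
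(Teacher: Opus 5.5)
Your argument is correct, and it is organized differently from the paper's.

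\textbf{How the two proofs differ.} The paper treats ill-founded trees separately: it fixes an infinite branch $x$ and shows by induction that every $x\rest i$ satisfies every $\psi_\alpha$. For well-founded trees it proves the two directions by two separate inductions quantified over all trees, passing to the sections $T_{(k)}$ via Lemma~\ref{lem:tree-rank-tail}. You instead prove a single node-wise biconditional inside one fixed tree. Ill-foundedness is absorbed into $\mathrm{wf}(T)$, and the only rank fact you need is that $\rho_T(u)$ can be computed from immediate successors. This avoids choosing a branch and the section bookkeeping. It also gives the exact correspondence $(T,u)\vDash\psi_\alpha \iff \rho_T(u)\geq\alpha$ on $\mathrm{wf}(T)$, which is what exposes the off-by-one at the root.

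\textbf{The off-by-one is a real flaw, not a matter of reading.} With $\rho(T)=\rho_T(\emptyset)+1$, the statement as written fails for every successor $\alpha$; your example $\{\emptyset\}$ with $\alpha=1$ shows this. The paper's proof does not get around it:
\begin{itemize}
\item Its ($\Rightarrow$) half only establishes that if $\rho_T(u)\geq\alpha$ for some $u\in T$, then $(T,\emptyset)\vDash\psi_\alpha$. But $\rho(T)\geq\alpha$ does not supply such a $u$ when $\alpha$ is a successor.
\item Its ($\Leftarrow$) half is fine, since that is the weaker, true direction.
\end{itemize}

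\textbf{Consequences.} The correct form is the one you derive: $T\notin\WF_N^{\leq\alpha}\iff(T,\emptyset)\vDash\psi_\alpha$ for nonempty $T$. Downstream, this only turns $\WF_M^{\leq\alpha}$ into $\WF_M^{\leq\alpha+1}$ in Lemma~\ref{lem:Omega0-codomain}. That change does not affect the Borel conclusions drawn from it.
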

\begin{proof}
	$(\Rightarrow)$ Suppose first that $T\notin \WF_N$ and 
	let $x=(x_i)_{i\in \omega}$ be an infinite branch. 
	We prove by induction on $\alpha<\omega_1$ that $\forall 
	i\geq 0 \; (T,x\rest i)\vDash \psi_\alpha$: the base case is trivial and 
	the limit case follows from the definition of $\psi_\lambda$ and 
	the IH. 
	Suppose that $\forall i\geq 0 \; (T,x\rest i)\vDash \psi_\alpha$. As 
	$x\rest i \to x\rest(i+1)$ and $(T,x\rest(i+1))\vDash \psi_\alpha$, 
	then $(T,x\rest i)\vDash \bigd\psi_\alpha$; therefore 
	$(T,x\rest i)\vDash \psi_{\alpha+1}$ and the claim is proven. 
	In particular, $(T,\emptyset)=(T,x\rest 0) \vDash \psi_\alpha$ for all $\alpha<\omega_1$.
	
	Let us now consider the case of well-founded trees.
	We prove by induction the following claim: if $T\in \WF_N$ and 
	exists $u\in T$ such that $\rho_T(u)\geq\alpha$, then $(T,\emptyset)\vDash \psi_\alpha$. 
	The case $\alpha=0$ is trivial since $(T,\emptyset)\vDash \top$.
	Suppose that the claim is true for $\alpha$ and let $T\in \WF_N$ 
	and $u\in T$ such that $\rho_T(u)\geq \alpha+1$. 
	Then there exists $k\in N$ such that $v \coloneqq u^\smallfrown (k) \in T$ 
	and $\rho_T(v)\geq \alpha$. 
	If $n_0$ is the first coordinate of $u$ (or $\emptyset$ if $u=\emptyset$), 
	by Lemma~\ref{lem:tree-rank-tail}, $\rho_{T_{(n_0)}}(\mathrm{tail}(v))\geq\alpha$. 
	By IH we have that $(T_{(n_0)},\emptyset)\vDash \psi_\alpha$, and 
	therefore $(T,(n_0))\vDash \psi_\alpha$. 
	Then, $(T,\emptyset)\vDash \bigd\psi_\alpha$ and consequently 
	$(T,\emptyset)\vDash \psi_{\alpha+1}$. 
	For the limit case $\lambda$, if $\rho_T(u)\geq \lambda$, then 
	$\forall \alpha <\lambda \; \rho_T(u)\geq \alpha$ and by IH 
	$\forall \alpha<\lambda \; (T,\emptyset)\vDash \psi_\alpha$. 
	Consequently $(T,\emptyset)\vDash \bigwedge_{\alpha<\lambda}\psi_\alpha =\psi_\lambda$.   
	
	$(\Leftarrow)$ Suppose that $(T,\emptyset)\vDash \psi_\alpha$; 
	we must show that $T\notin \WF_N \vee T\in \WF_N^{\geq \alpha}$. 
	We use induction on $\alpha$ to prove that if $T$ is well-founded 
	and $(T,\emptyset)\vDash \psi_\alpha$, then $T\in \WF_N^{\geq 
		\alpha}$. For the case $\alpha=0$ there is nothing to prove as 
	$\WF_N=\WF_N^{\geq 0}$. Suppose that $(T,\emptyset)\vDash 
	\psi_{\alpha+1}=\bigd\psi_\alpha$. Then there exists $k\in N$ 
	such that $(k)\in T$ and $(T_{(k)},\emptyset)\vDash \psi_\alpha$. 
	Given that $T_{(k)}$ is well-founded if $T$ is, by IH we have that 
	$T_{(k)}\in \WF_N^{\geq\alpha}$. It follows that for all 
	$\beta<\alpha$ there exists $v_\beta \in T_{(k)}$ such that 
	$\rho_{T_{(k)}}(v_\beta)\geq \beta$. If $u_\beta \coloneqq (k)^\smallfrown 
	v_\beta$, then $u_\beta \in T$ and by Lemma~\ref{lem:tree-rank-tail} 
	it holds that $\forall \beta <\alpha \; \rho_T(u_\beta)=\rho_{T_{(k)}}(v_\beta)\geq 
	\beta$. Therefore $\rho_T(\emptyset)\geq \alpha$ and consequently 
	$\rho(T)= \rho_T(\emptyset)+1\geq \alpha +1$.
	
	Finally, for the limit case $\lambda$, we note that if 
	$(T,\emptyset)\vDash \psi_\lambda$ then $\forall 
	\alpha<\lambda \; (T,\emptyset)\vDash \psi_\alpha$ and by 
	IH $\forall \alpha<\lambda \; T\in \WF_N^{\geq \alpha}$. 
	Then, $T\in \WF_N^{\geq \lambda}$.
\end{proof}

We end this section recalling that $\WF_N^{\leq\alpha}$ is Borel definable, which is required for the reductions to be proved in Sections~\ref{sec:omega-lts} and \ref{subsect:complexity-bisim-mlts-bounded-rank}.
	
\begin{prop}
	\label{prop:WF-alpha-Borel-standard}
	$\WF_N^{\leq \alpha}$ is a standard Borel space.
\end{prop}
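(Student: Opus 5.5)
The plan is to show that $\WF_N^{\leq\alpha}$ is a Borel subset of the Polish space $\Tr_N$ of all trees over $N$. Here $\Tr_N$ is viewed as a closed subset of $2^{N^{<\N}}$ with the product topology. Since a Borel subset of a Polish space, equipped with the relative Borel structure, is a standard Borel space, this suffices. By Proposition~\ref{prop:WF-tree-formulas} applied to $\alpha+1$, we have
\[
T\in\WF_N^{\leq\alpha}\iff T\in\WF_N^{<\alpha+1}\iff (T,\emptyset)\nvDash\psi_{\alpha+1}.
\]
So it is enough to prove that, for every countable ordinal $\beta$, the set $\{T\in\Tr_N \mid (T,\emptyset)\vDash\psi_\beta\}$ is Borel.

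I would prove a stronger claim by induction on $\beta<\omega_1$: for each $s\in N^{<\N}$, the set
\[
A_\beta^s\defi\{T\in\Tr_N \mid s\in T \text{ and } (T,s)\vDash\psi_\beta\}
\]
is Borel. The induction has three cases.
\begin{itemize}
\item For $\beta=0$, $A_0^s=\{T \mid s\in T\}$ is clopen.
\item For successors, $(T,s)\vDash\bigd\psi_\beta$ holds iff some immediate successor $s^\smallfrown(k)$ lies in $T$ and satisfies $\psi_\beta$. Hence $A_{\beta+1}^s=\bigcup_{k\in N}A_\beta^{s^\smallfrown(k)}$, which is a countable union because $N$ is countable.
\item For limits $\lambda$, $A_\lambda^s=\bigcap_{\beta<\lambda}A_\beta^s$, which is a countable intersection because $\lambda<\omega_1$.
\end{itemize}
Taking $s=\emptyset$ (where $\emptyset\in T$ always, or we restrict to nonempty trees) shows that $\WF_N^{\leq\alpha}=\Tr_N\setminus A^{\emptyset}_{\alpha+1}$ is Borel.

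The main point needing care is the satisfaction semantics. I must check that $(T,s)\vDash\psi$ depends only on the subtree above $s$, so that the recursion above is literally correct. This follows from the definition of the successor relation in trees. I also need the membership condition $s\in T$ to be tracked so that the union over $k$ is correct. Everything else is routine closure of Borel sets under countable operations. An alternative route would cite the classical fact that the tree-rank sets $\{T \mid \rho(T)\leq\alpha\}$ are Borel (Kechris~34.5). The argument above instead reproves it directly from Proposition~\ref{prop:WF-tree-formulas}.
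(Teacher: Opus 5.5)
Your proof is correct, but it takes a different route from the paper's. The paper gets the result in one line from Kechris, Exercise~34.6: $T\mapsto\rho(T)$ is a $\mathbf{\Pi}^1_1$-rank on $\WF_N$, so its initial segments $\WF_N^{\leq\alpha}$ are Borel in $\Tr_N$. You instead reprove Borelness directly from the modal characterization, by induction on $\beta$ showing that each $A^s_\beta$ is Borel. This is self-contained and mirrors the paper's own proof of Proposition~\ref{prop:semantic-is-Borel-wLTS}. The citation, by contrast, gives the stronger structural fact that $\rho$ is a $\mathbf{\Pi}^1_1$-rank, and it does not depend on Proposition~\ref{prop:WF-tree-formulas}. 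Your locality worry is unnecessary: $A^s_{\beta+1}=\bigcup_k A^{s^\smallfrown(k)}_\beta$ follows from the semantics of $\bigd$ and the closure of trees under initial segments.

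One indexing correction. Proposition~\ref{prop:WF-tree-formulas} as stated is off by one at successor indices. The reason is that $\rho(T)=\rho_T(\emptyset)+1$, whereas for nonempty $T$ one has $(T,\emptyset)\vDash\psi_\alpha$ iff $T\notin\WF_N$ or $\rho_T(\emptyset)\geq\alpha$. For example, $T=\{\emptyset,(0)\}$ has $\rho(T)=2$ but $(T,\emptyset)\nvDash\psi_2$.

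Consequently your set $\Tr_N\setminus A^{\emptyset}_{\alpha+1}$ is really $\WF_N^{\leq\alpha+1}$. The correct identity is $\WF_N^{\leq\alpha}=\Tr_N\setminus A^{\emptyset}_{\alpha}$, and your requirement $s\in T$ correctly places the empty tree in it. Since you prove every $A^\emptyset_\beta$ Borel, the conclusion is unaffected.
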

\begin{proof}
	By \cite[Exr.~34.6]{Kechris}, the map $T \mapsto \rho(T)$ is a \textit{$\mathbf{\Pi}_1^1$-rank} on the 
	set $\WF_\N\cong \WF_N$, which in particular means that the initial segments $\WF_N^{\leq
          \alpha} =  \{ x \in \WF_N \mid \rho(x) \leq \alpha \}$ belong in $\Borel(\Tr_N)$.
\end{proof}

\subsection{Omega LTS}
\label{sec:omega-lts}

Next, we will consider the Polish space of all pointed countable LTSs and their corresponding notion of bisimilarity.
As usual, we identify the powerset of a (countable) set $N$ with the Cantor
space $2^N$.
\begin{definition}\label{def:omegaLTS}
  \begin{enumerate}
  \item
    We denote by $\omega \LTS$ the product space 
    $\N\times \prod_{a\in L} 2^{\N\times \N}$ (with $\N$ 
    and $2\defi \{ 0, 1 \}$ both discrete). 
  \item
    For each $x=(x_*,(x_a)_{a\in L})\in \omega \LTS$, let $\ltsfrom_x$ be
    the pointed LTS over $\N$ with root $x_*$ and transition relations $R_a$ 
    satisfying $\chi_{R_a}=x_a$ $(a\in L)$.
  \item
     For each $x,y \in \omega \LTS$, we write $x\sim y$ if and only if $\ltsfrom_x \sim \ltsfrom_y$.
  \end{enumerate}
\end{definition}

\begin{prop}\label{prop:semantic-is-Borel-wLTS}
  For every $\varphi\in\mathrm{ML}_{\omega_1}$, $\{ x \in \omega\LTS \mid \ltsfrom_x
    \models\varphi \}$ is a Borel subset of $\omega \LTS$.
\end{prop}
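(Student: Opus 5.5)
We prove a stronger, pointwise statement by induction on the complexity of $\varphi$ (formulas of $\mathrm{ML}_{\omega_1}$ are well-founded objects, so induction on their construction is legitimate). For each formula $\varphi$ set
\[
  A_\varphi \defi \{ (x,n) \in \omega\LTS\times\N \mid (\ltsfrom_x,n)\models\varphi \},
\]
where $(\ltsfrom_x,n)$ is the LTS of $x$ re-rooted at the state $n$. The root coordinate $x_*$ plays no role in the truth of $\varphi$ at $n$. We show that every $A_\varphi$ is Borel in $\omega\LTS\times\N$. The proposition then follows, because the set in question is the preimage of $A_\varphi$ under the continuous map $x\mapsto (x,x_*)$; this map is continuous since $\N$ is discrete and $x_*$ is a coordinate projection.

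The induction runs as follows.
\begin{itemize}
\item For $\varphi=\top$, $A_\varphi$ is the whole space.
\item Negation corresponds to complementation.
\item Countable conjunctions and disjunctions correspond to countable intersections and unions.
\item For the modal case $\varphi=\pos{a}\psi$ we write
\[
  A_{\pos{a}\psi} = \bigcup_{m\in\N} \bigl\{ (x,n) \mid x_a(n,m)=1 \bigr\} \cap \bigl\{ (x,n) \mid (x,m)\in A_\psi \bigr\}.
\]
For fixed $n,m$, the set $\{x \mid x_a(n,m)=1\}$ is clopen, because it is a basic cylinder in $2^{\N\times\N}$. Since $\N$ is discrete, $\{(x,n)\mid x_a(n,m)=1\}$ is a countable union over $n$ of clopen rectangles, hence clopen. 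The second set is the preimage of $A_\psi$ under the continuous map $(x,n)\mapsto (x,m)$, hence Borel by the induction hypothesis. A countable union of Borel sets is Borel, which closes the induction.
\end{itemize}
If the logic also includes box modalities $[a]$, they are handled dually with a countable intersection, or via $[a]\psi \equiv \neg\pos{a}\neg\psi$.

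The only point needing care is the reason the induction must be carried out for arbitrary states rather than only the root. Trying to prove directly that $\{x \mid \ltsfrom_x\models\varphi\}$ is Borel fails at the modal step, because the subformula $\psi$ is evaluated at a successor $m$, not at the root. Working with the product $\omega\LTS\times\N$ (equivalently, with the countably many sets $A_\varphi^n=\{x \mid (\ltsfrom_x,n)\models\varphi\}$, $n\in\N$) removes this difficulty. Beyond that, the argument uses only that quantification over successors ranges over the countable set $\N$, so no projection of a Borel set onto an uncountable coordinate ever appears.
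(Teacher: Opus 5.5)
Your proof is correct and is essentially the paper's argument. The paper runs the induction directly on $\{x\in\omega\LTS \mid \ltsfrom_x\models\varphi\}$ and handles the modal step as $\bigcup_{t\in\N}\{x \mid x_a(x_*,t)=1\}\cap g_t^{-1}[\sem{\psi}]$, where $g_t(x_*,(x_a)_{a\in L})=(t,(x_a)_{a\in L})$ is the continuous re-rooting map; so $g_t^{-1}[\sem{\psi}]$ is exactly your slice $\{x \mid (x,t)\in A_\psi\}$.

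Since $\omega\LTS$ already carries the root as a coordinate, your extra factor $\N$ is harmless but unnecessary. For the same reason, your closing claim that the direct induction fails at the modal step is not accurate here.
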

\begin{proof}
	Throughout this proof we will denote $\{ x \in \omega\LTS \mid \ltsfrom_x
	\models\varphi \}$ as $\sem{\varphi}$, and proceed by structural induction.
	The base case $\varphi = \top$ gives $\sem{\top}=\omega \LTS$, and the cases for the logical connectives $\neg$, $\bigwedge$, and $\bigvee$ follow easily from the IH.
	
	For the modal formulas $\pos{a}\varphi$ we will use the following auxiliary functions: for each $t\in \N$, define $g_t:\omega LTS \to \omega \LTS$ by $g_t(i,(x_a)_{a\in L}) = (t,(x_a)_{a\in L})$, that is, $g_t$ moves the root of the associated LTS to $t$.
	Each $g_t$ is continuous since, for $a\in L$ and $i,j\in 
	\N$, $g_t^{-1}[\{x\in \omega \LTS\mid x_* = s \wedge x_a(i,j) = 1\}]$ equals $\{x\in \omega \LTS\mid x_a(i,j) = 1\}$ if $s=t$, and $\emptyset$ otherwise.
	We have
	\begin{align*}
		\sem{\pos{a}\varphi} &= \{(x_*, (x_a)_{a\in L})\in \omega\LTS \mid \exists t, x_a(x_*,t) = 1 \wedge (t, (x_a)_{a\in L}) \in \sem{\varphi}\}  \\
		&= \{(x_*, (x_a)_{a\in L}) \in \omega\LTS \mid \exists t, x_a(x_*,t)=1  \wedge g_t(x_*, (x_a)_{a\in L}) \in \sem{\varphi}\} \\
		&= \{(x_*, (x_a)_{a\in L}) \in \omega\LTS \mid \exists t, x_a(x_*,t)=1  \wedge (x_*, (x_a)_{a\in L}) \in g_t^{-1}(\sem{\varphi})\} \\	 
		&= \bigcup_{t\in \N} \{(x_*, (x_a)_{a\in L}) \in \omega\LTS \mid x_a(x_*,t)=1 \} \cap g_t^{-1}(\sem{\varphi}).		
	\end{align*}
	This set is Borel by measurability of the projections $p_a:\omega \LTS \to \N$, the continuity of $g$ and the IH.
\end{proof}

\begin{definition}\label{def:omegaLTS-alpha}
  If $\alpha < \omega_1$, we denote by $\omega \LTS^{\leq \alpha}$
  the set of all $x\in\omega\LTS$ such that $\ltsfrom_x$ is a well-founded
  pointed LTSs with rank at most $\alpha$.
\end{definition}

\begin{corollary}\label{cor:omegaLTS_leq_alpha_borel}
	For each $\alpha < \omega_1$, the set $\omega \LTS^{\leq \alpha}$ is Borel in $\omega \LTS$.
\end{corollary}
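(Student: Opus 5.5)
The plan is to express $\omega\LTS^{\leq\alpha}$ as the validity set of a single $\mathrm{ML}_{\omega_1}$ formula and then invoke Proposition~\ref{prop:semantic-is-Borel-wLTS}. By Definition~\ref{def:rank-lts}, $\ltsfrom_x$ is well-founded of rank at most $\alpha$ exactly when there is some $\beta\leq\alpha$ with $\ltsfrom_x\nvDash\varphi_{\beta+1}$. So I will write
\[
\omega\LTS^{\leq\alpha}=\{x\in\omega\LTS \mid \ltsfrom_x\models \neg\varphi_{\alpha+1}\}.
\]
The right-hand side is Borel by Proposition~\ref{prop:semantic-is-Borel-wLTS}, since $\neg\varphi_{\alpha+1}\in\mathrm{ML}_{\omega_1}$. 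This uses that $\alpha<\omega_1$, so every conjunction occurring in the formulas $\varphi_\gamma$ for $\gamma\leq\alpha+1$ is countable, and that $L$ is countable, so the disjunctions are countable too.

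The one step needing care is to replace the disjunction $\bigvee_{\beta\leq\alpha}\neg\varphi_{\beta+1}$ by the single formula $\neg\varphi_{\alpha+1}$. For this I would prove monotonicity: for $\beta\leq\gamma$, $\varphi_\gamma$ implies $\varphi_\beta$ in every LTS. The argument is by induction on $\gamma$, proving for all $\beta\leq\gamma$ simultaneously.
\begin{itemize}
\item The limit case is immediate, since $\varphi_\lambda$ is the conjunction of all $\varphi_\beta$ with $\beta<\lambda$.
\item For the successor case $\gamma+1$, one first shows that $\varphi_{\gamma+1}$ implies $\varphi_\gamma$. Then $\varphi_\gamma$ implies every $\varphi_\beta$ with $\beta\leq\gamma$ by the induction hypothesis.
\item To show $\varphi_{\gamma+1}\to\varphi_\gamma$, argue by a secondary induction on $\gamma$. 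For $\gamma=0$ it is trivial. For a successor $\gamma=\delta+1$, a witness $s\RlabLTS{a}t$ with $t\models\varphi_{\delta+1}$ gives $t\models\varphi_\delta$, hence $s\models\varphi_{\delta+1}$. For $\gamma$ a limit, $s\models\bigvee_a\pos{a}\varphi_\gamma$ gives a successor $t$ satisfying every $\varphi_\beta$ with $\beta<\gamma$, so $s\models\varphi_{\beta+1}$. By the induction hypothesis this yields $s\models\varphi_\beta$ for all $\beta<\gamma$, that is, $s\models\varphi_\gamma$.
\end{itemize}

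With monotonicity in hand, if $\ltsfrom_x\nvDash\varphi_{\beta+1}$ for some $\beta\leq\alpha$, then $\ltsfrom_x\nvDash\varphi_{\alpha+1}$, because $\beta+1\leq\alpha+1$. Conversely, $\ltsfrom_x\nvDash\varphi_{\alpha+1}$ witnesses well-foundedness with least such ordinal at most $\alpha$. So the equality of sets holds, and the corollary follows.

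Alternatively, the same equality can be cross-checked through Lemma~\ref{lem:formula-correspondence-wLTS} and Proposition~\ref{prop:WF-tree-formulas}. These give $\ltsfrom_x\nvDash\varphi_{\alpha+1}$ iff the $\omega$-expansion tree lies in $\WF^{<\alpha+1}=\WF^{\leq\alpha}$, which confirms that the logical and tree ranks agree. I expect the monotonicity lemma to be the only mildly delicate point, and it is routine.
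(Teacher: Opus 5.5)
Your proof is correct and is the paper's own argument: $\omega\LTS^{\leq\alpha}$ is the complement of $\{x\mid \ltsfrom_x\models\varphi_{\alpha+1}\}$, which is Borel by Proposition~\ref{prop:semantic-is-Borel-wLTS}, and your monotonicity lemma spells out a step the paper's one-line proof leaves implicit (you could also avoid it by writing the set as the countable union $\bigcup_{\beta\leq\alpha}\{x\mid \ltsfrom_x\nvDash\varphi_{\beta+1}\}$). Your closing cross-check is not needed, and it is off by one under the convention $\rho(T)=\rho_T(\emptyset)+1$ of Definition~\ref{def:rank-tree}: a state with no successors has rank $0$, but its expansion $\{\emptyset\}$ has tree rank $1$, so $\ltsfrom_x\nvDash\varphi_{\alpha+1}$ actually corresponds to $\Omega_0(x)\in\WF_M^{\leq\alpha+1}$; you inherit this slip from Proposition~\ref{prop:WF-tree-formulas} as stated.
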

\begin{proof}
	$x \in \omega \LTS^{\leq \alpha}\iff \ltsfrom_x\nvDash 
	\varphi_{\alpha+1}\iff x\notin \{ x \in \omega\LTS \mid \ltsfrom_x
	\models\varphi_{\alpha+1} \}$.
\end{proof}

We can also calculate the $\omega$-expansion of elements in $\omega \LTS$:
\begin{definition}\label{def:Omega}
  Let $M \defi \N \times L \times \N$.
  The map
  \[
    \Omega: \omega \LTS \to \Tr_M \times \prod_{a \in L} 2^{M^{<\N}
      \times M^{<\N}}
  \]
  is defined by $\Omega(x) \defi (\Tr_{\lmp{S}}(x_*),\{\Suc_\lmp{S}^a\}_{a\in
    L})$, where $\lmp{S} = \ltsfrom_x$.
\end{definition}
When necessary, we will write $\Omega$ in components as in
\[
  \Omega(x) =
  (\Omega_0(x), \{ \Omega_a(x) \mid a \in L \}).
\]
We have this consequence of Theorem~\ref{thm:kripke-iso}:
\begin{lemma}\label{lem:sim-iff-Omega-cong}
  For all $x, y \in 
  \omega \LTS$, $x \sim y \iff \Omega(x) \cong  \Omega(y)$. \qed
\end{lemma}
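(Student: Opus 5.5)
This is a direct specialization of Theorem~\ref{thm:kripke-iso}, so the plan is to unfold the definitions and check that the hypotheses of that theorem are met. Fix $x,y\in\omega\LTS$ and write $\lmp{S}=\ltsfrom_x$ and $\lmp{S}'=\ltsfrom_y$; these are the LTSs over $\N$ with transition relations given by $\chi_{R_a}=x_a$ and $\chi_{R'_a}=y_a$, and roots $x_*$ and $y_*$ respectively.

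First, by Definition~\ref{def:omegaLTS}, $x\sim y$ means exactly $(\lmp{S},x_*)\sim(\lmp{S}',y_*)$ in the sense of Definition~\ref{def:LTS-bisimulation}. Second, by Definition~\ref{def:Omega}, $\Omega(x)$ is literally the $\omega$-expansion $(\Tr_{\lmp{S}}(x_*),\{\Suc^a_{\lmp{S}}\}_{a\in L})$. The relations $\Suc^a$ are coded as elements of $2^{M^{<\N}\times M^{<\N}}$ (subsets of $M^{<\N}\times M^{<\N}$, not merely of the tree), but they are contained in $\Tr_{\lmp{S}}(x_*)\times\Tr_{\lmp{S}}(x_*)$. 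Hence an isomorphism of the pairs $\Omega(x)\cong\Omega(y)$ is just an isomorphism of the structures $(\Tr_{\lmp{S}}(x_*),\{\Suc^a\})$ and $(\Tr_{\lmp{S}'}(y_*),\{\Suc'^a\})$. I will state this reading of $\cong$ explicitly: a bijection between the first components that preserves and reflects each $\Suc^a$.

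The only hypothesis to verify is image-countability. Both state spaces are $\N$, so every set of $a$-successors is countable, and Theorem~\ref{thm:kripke-iso} applies directly. Combining the two identifications gives $x\sim y\iff\Omega(x)\cong\Omega(y)$.

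There is no real obstacle here. The only point needing a word of care is the one in the second paragraph: the convention that $\cong$ on the codomain of $\Omega$ means isomorphism of the underlying LTSs, with the transition relations viewed as relations on the tree. This matches the codomain's notation.
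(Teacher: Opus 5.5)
Your proposal is correct and matches the paper, which states this lemma as an immediate consequence of Theorem~\ref{thm:kripke-iso} with no further argument; as you note, the theorem applies because every $\ltsfrom_x$ is an LTS over $\N$ and hence image-countable. Your remark that $\cong$ on the codomain of $\Omega$ means isomorphism of the structures $(\Omega_0(x),\{\Omega_a(x)\}_{a\in L})$ is a reasonable clarification that the paper leaves implicit.
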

Recall from (\ref{eq:def-Suc}) at Definition~\ref{def:omega-expansion} that the $\Suc_\lmp{S}^a$ relations
are actually definable from the tree $\Omega_0(x)$. Hence bisimilarity is
reducible to the following relation $\equiv$ on $\Tr_M$:
\begin{equation}
  \label{eq:equiv-def}
  T \equiv T' \iff (T, \{\Suc^a_T\}_{a \in L}) \cong (T', \{\Suc^a_{T'}\}_{a \in
    L}).
\end{equation}

\begin{lemma}\label{lem:sim-iff-Omega0-equiv}
  For all $x, y \in 
  \omega \LTS$, $x \sim y \iff \Omega_0(x) \equiv  \Omega_0(y)$. \qed
\end{lemma}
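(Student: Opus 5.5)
The plan is to derive the statement from Lemma~\ref{lem:sim-iff-Omega-cong}, so it suffices to show that an isomorphism of the full structures $\Omega(x)\cong\Omega(y)$ is the same thing as an isomorphism of labelled trees $\Omega_0(x)\equiv\Omega_0(y)$ in the sense of (\ref{eq:equiv-def}). By definition, $\Omega(x) = (\Tr_{\lmp{S}}(x_*),\{\Suc_\lmp{S}^a\}_{a\in L})$ with $\lmp{S}=\ltsfrom_x$. Also, $\Suc_\lmp{S}^a$ is by Definition~\ref{def:omega-expansion} literally $\Suc^a_{T}$ for $T=\Tr_{\lmp{S}}(x_*)=\Omega_0(x)$. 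Hence $\Omega(x)$ coincides, as a structure, with $(\Omega_0(x),\{\Suc^a_{\Omega_0(x)}\}_{a\in L})$, and similarly for $y$.

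The key steps, in order, are as follows.

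First, verify the identity $\Omega_a(x)=\Suc^a_{\Omega_0(x)}$ for each $a\in L$. This is immediate from (\ref{eq:def-Suc}): the relation on $\Tr_{\lmp{S}}(x_*)$ relating $u$ to $u(t,a,n)$ is exactly the one defined from the tree alone. No information about $\ltsfrom_x$ beyond the tree is used.

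Second, rewrite the right-hand side of Lemma~\ref{lem:sim-iff-Omega-cong}. The isomorphism $\Omega(x)\cong\Omega(y)$ is precisely $(\Omega_0(x),\{\Suc^a_{\Omega_0(x)}\}_a)\cong(\Omega_0(y),\{\Suc^a_{\Omega_0(y)}\}_a)$, which is $\Omega_0(x)\equiv\Omega_0(y)$ by (\ref{eq:equiv-def}).

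Third, chain the equivalences: $x\sim y$ iff $\Omega(x)\cong\Omega(y)$ iff $\Omega_0(x)\equiv\Omega_0(y)$.

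There is essentially no obstacle. The only point deserving care is that isomorphism in Lemma~\ref{lem:sim-iff-Omega-cong} means isomorphism of $L$-structures, namely bijections of the underlying trees preserving each $\Suc^a$. We must not read it as requiring preservation of the ambient codomain $\prod_a 2^{M^{<\N}\times M^{<\N}}$ coordinates or of the literal labels $(t,a,n)$. Under that reading the statement is a pure rewriting.
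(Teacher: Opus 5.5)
Your proposal is correct and follows essentially the same argument as the paper. The paper states the lemma with \qed right after remarking that, by (\ref{eq:def-Suc}), $\Suc^a_{\lmp{S}}=\Suc^a_{\Omega_0(x)}$. Thus $\Omega(x)$ is literally $(\Omega_0(x),\{\Suc^a_{\Omega_0(x)}\}_{a\in L})$, and the claim follows from Lemma~\ref{lem:sim-iff-Omega-cong} together with the definition (\ref{eq:equiv-def}) of $\equiv$.
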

\begin{lemma}\label{lem:Omega-continuous}
	$\Omega$ is continuous.
\end{lemma}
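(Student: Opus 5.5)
The target spaces are products of Cantor spaces: $\Tr_M$ is a closed subspace of $2^{M^{<\N}}$ and each $2^{M^{<\N}\times M^{<\N}}$ carries the product topology. So it suffices to show that every coordinate map is continuous, and for that it is enough that, for each fixed $u\in M^{<\N}$ (resp.\ each pair $(u,v)$), the value $\Omega_0(x)(u)$ (resp.\ $\Omega_a(x)(u,v)$) depends only on finitely many coordinates of $x$. Since those coordinates are discrete ($x_*\in\N$, $x_a(i,j)\in 2$), a function of finitely many of them is locally constant, hence continuous.

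\textbf{Step 1: $\Omega_0$.} Fix $u=(s_1,a_1,n_1)\dots(s_m,a_m,n_m)\in M^{<\N}$. By Definition~\ref{def:omega-expansion}, $u\in\Omega_0(x)$ iff $x_{a_1}(x_*,s_1)=1$ and $x_{a_{i+1}}(s_i,s_{i+1})=1$ for all $i<m$; the empty sequence is always in the tree. So $\{x\mid u\in\Omega_0(x)\}$ is a finite union over $r\in\N$ of the sets $\{x_*=r\}$ intersected with the basic clopen set fixing $x_{a_1}(r,s_1)=1$ and $x_{a_{i+1}}(s_i,s_{i+1})=1$ for $i<m$. Although the union ranges over $r$, only the single value $r=x_*$ matters at each point, and its complement is handled in the same way. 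Hence this set is clopen. In other words, for each $r$ the condition on $\{x_*=r\}$ is a finite Boolean combination of basic clopen conditions, and the sets $\{x_*=r\}$ form a clopen partition.

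\textbf{Step 2: $\Omega_a$.} Fix $(u,v)$. By \eqref{eq:def-Suc}, $(u,v)\in\Omega_a(x)$ iff $u,v\in\Omega_0(x)$ and $v=u^\smallfrown(t,a,n)$ for some $t,n$. The second conjunct does not depend on $x$ at all, so the set is either empty or equal to $\{x\mid u\in\Omega_0(x)\}\cap\{x\mid v\in\Omega_0(x)\}$, which is clopen by Step 1. Equivalently, $\Omega_a$ is the composition of $\Omega_0$ with the continuous map $T\mapsto\Suc^a_T$ from $\Tr_M$ to $2^{M^{<\N}\times M^{<\N}}$.

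\textbf{Main obstacle.} The only delicate point is the dependence on the root $x_*$ in the first transition. I handle it by splitting according to the clopen partition $\{x_*=r\}_{r\in\N}$, as in the proof of Proposition~\ref{prop:semantic-is-Borel-wLTS}. Beyond this the argument is routine. Finally, continuity into the product follows from continuity of each coordinate, and $\Omega_0(x)$ is indeed a tree, since the path condition is closed under initial segments.
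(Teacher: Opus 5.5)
Your proof is correct and follows essentially the paper's route: you compute $\{x \mid u\in\Omega_0(x)\}$ by splitting on the root value $x_*=r$, and you reduce $\Omega_a$ to $\Omega_0$ because the condition $v=u^\smallfrown(t,a,n)$ does not depend on $x$. You in fact go slightly further than the paper by checking that these sets are clopen, so that preimages of the subbasic sets $\{T\mid u\notin T\}$ are open as well, which the paper leaves implicit by only saying ``open''; the one slip is calling the union over $r\in\N$ ``finite'' (it is countable), but your clopen-partition argument does not rely on finiteness.
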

\begin{proof}
	We show that the component functions of
	$\Omega(x)=(\Omega_0(x),\{\Omega_a(x)\mid a \in L\})$ are continuous.
	Fix	$u=(n_0,a_0,m_0)\dots(n_k,a_k,m_k)\in M^{<\N}$.  We compute
	\begin{align*}
		\Omega_0^{-1}[\{T\in \Tr_M &\mid u\in T\}] = \\
		&= \{x\in \omega\LTS \mid u\in \Omega_0(x)\} \\
		&\textstyle
		= \{x \mid x_{a_0}(x_*,n_0)=1
		\wedge \bigwedge_{1\le i\le k} x_{a_i}(n_{i-1},n_i)=1\} \\
		&\textstyle
		= \{x \mid x_{a_0}(x_*,n_0)=1\}
		\cap \bigcap_{1\le i\le k}\{x \mid x_{a_i}(n_{i-1},n_i)=1\}.
	\end{align*}
	The first set in this intersection equals
	\[
	\bigcup_{b\in \N}
	p_{a_0}^{-1}\bigl[\{f\in 2^{\N\times\N}\mid f(b,n_0)=1\}\bigr]
	\cap p_*^{-1}[\{b\}],
	\]
	which is open in $\omega\LTS$.  Likewise, the set
	$\{x \mid x_{a_i}(n_{i-1},n_i)=1\}$ is open for each
	$i\in\{1,\dots,k\}$.
	
	For $\Omega_a$ we must show that, for $a\in L$ and
	$u,v\in M^{<\N}$, the set
	\[	
	\Omega_a^{-1}\bigl[\{X\subseteq M^{<\N}\times M^{<\N} \mid
	(u,v)\in X\}\bigr]
	= \{x\in \omega\LTS \mid (u,v)\in \Omega_a(x)\}
	\]
	is open.  
	If $v=u^\smallfrown(n,a,m)$ for some $n,m\in\N$, this set is
	$\{x \mid v\in \Omega_0(x)\}$, which is open by continuity of $\Omega_0$.  
	If $u,v$ are not of this form, the set is empty.
\end{proof}

\begin{lemma}\label{lem:Omega0-codomain}
	The image of the restriction of $\Omega_0$ to 
	$\omega \LTS^{\leq \alpha}$ is contained in $\WF_M^{\leq \alpha}$.
\end{lemma}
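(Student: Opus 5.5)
Given $x\in\omega\LTS^{\leq\alpha}$, we must show that $T\defi\Omega_0(x)=\Tr_{\lmp{S}}(x_*)$, with $\lmp{S}=\ltsfrom_x$, is a well-founded tree over $M$ with $\rho(T)\leq\alpha$. The plan is to move from the LTS side to the tree side using the rank formulas, chaining Lemma~\ref{lem:formula-correspondence-wLTS} with Proposition~\ref{prop:WF-tree-formulas}.

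First, unpack the hypothesis. By Definition~\ref{def:omegaLTS-alpha}, $\ltsfrom_x$ is well-founded with rank at most $\alpha$. By Definition~\ref{def:rank-lts}, the rank is the least $\beta$ with $(\lmp{S},x_*)\nvDash\varphi_{\beta+1}$, so there is some $\beta\leq\alpha$ with $(\lmp{S},x_*)\nvDash\varphi_{\beta+1}$. We need $\varphi_{\alpha+1}$ itself to fail. For this we use monotonicity: $\models\varphi_\gamma$ implies $\models\varphi_\delta$ for $\delta\leq\gamma$, hence $\nvDash\varphi_{\beta+1}$ implies $\nvDash\varphi_{\alpha+1}$. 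Monotonicity is proved by a routine induction on $\gamma$ over all states simultaneously. The successor step uses that $\pos{a}$ is monotone, and the limit step is immediate from the conjunction. (Alternatively, it can be read off on the tree side once the translation below is in place, since the sets $\WF^{<\gamma}_M$ increase with $\gamma$.) This yields $(\lmp{S},x_*)\nvDash\varphi_{\alpha+1}$.

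Next, apply Lemma~\ref{lem:formula-correspondence-wLTS} with $\alpha+1$: the $\star$-tree $(T,\emptyset)$ does not satisfy $\psi_{\alpha+1}$. Proposition~\ref{prop:WF-tree-formulas} then gives $T\in\WF^{<\alpha+1}_M$. So $T$ is well-founded with $\rho(T)<\alpha+1$, i.e.\ $\rho(T)\leq\alpha$, and therefore $T\in\WF^{\leq\alpha}_M$.

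The only point needing care is the first step, passing from ``the rank is at most $\alpha$'' to the failure of exactly $\varphi_{\alpha+1}$, which requires the monotonicity of the $\varphi$'s. A second check is that Lemma~\ref{lem:formula-correspondence-wLTS} applies verbatim here: it is stated for $\Tr_{\lmp{S}}(s)$ over $S\times L\times\N$, and here $S=\N$, so the tree is exactly $\Omega_0(x)$ over $M$.
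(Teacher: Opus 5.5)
Your argument is the paper's own proof: the paper also just chains Lemma~\ref{lem:formula-correspondence-wLTS} with Proposition~\ref{prop:WF-tree-formulas}, tacitly at the index $\alpha+1$. Your monotonicity step only makes explicit what the paper leaves implicit, and you could bypass it by applying both results to the witnessing $\beta\le\alpha$.

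The genuine problem is the step where you invoke Proposition~\ref{prop:WF-tree-formulas} at the successor index $\alpha+1$. The paper's convention is $\rho(T)=\sup\{\rho_T(s)+1\mid s\in T\}$, i.e.\ $\rho(T)=\rho_T(\emptyset)+1$ for nonempty $T$. With it, that proposition is off by one at successor ordinals. Already $T=\{\emptyset\}$ has $\rho(T)=1$, so $T\notin\WF^{<1}_N$, yet $(T,\emptyset)\nvDash\psi_1=\bigd\top$. Its proof of $(\Rightarrow)$ actually establishes only $\rho_T(\emptyset)\ge\alpha\Rightarrow(T,\emptyset)\vDash\psi_\alpha$. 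For $\alpha=\gamma+1$, however, the hypothesis $T\notin\WF^{<\alpha}_N$ yields only $\rho_T(\emptyset)\ge\gamma$. A direct induction gives the correct version: for nonempty well-founded $T$, $(T,\emptyset)\vDash\psi_\alpha\iff\rho_T(\emptyset)\ge\alpha\iff T\notin\WF^{\le\alpha}_N$.

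Consequently the lemma is false as stated, and no argument can prove it. Take $x\in\omega\LTS$ with $x_*=0$ whose only transition is $0\RlabLTS{a}1$. Then $\ltsfrom_x\vDash\varphi_1$ and $\ltsfrom_x\nvDash\varphi_2$, so $x\in\omega\LTS^{\le1}$. But $\Omega_0(x)=\{\emptyset\}\cup\{(1,a,n)\mid n\in\N\}$ has $\rho_{\Omega_0(x)}(\emptyset)=1$, hence $\rho(\Omega_0(x))=2$ and $\Omega_0(x)\notin\WF^{\le1}_M$. For $\alpha=0$, a root with no transitions has rank $0$ by Definition~\ref{def:rank-lts}, while $\rho(\{\emptyset\})=1$.

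Running your chain with the corrected proposition gives $\ltsfrom_x\nvDash\varphi_{\alpha+1}\iff\rho_{\Omega_0(x)}(\emptyset)\le\alpha\iff\Omega_0(x)\in\WF^{\le\alpha+1}_M$. So what is true is $\Omega_0[\omega\LTS^{\le\alpha}]\subseteq\WF^{\le\alpha+1}_M$. This weaker inclusion is all that Theorems~\ref{thm:bisim-wLTSalpha-Borel} and~\ref{thm:bisim_S_leq_alpha_Borel} need. The reason is that ${\equiv}\rest\WF^{\le\alpha+1}_M$ is still a countable union of the Borel relations $\equiv_\beta$, $\beta\le\alpha+1$, from the appendix.
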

\begin{proof}
	For all $x\in \omega LTS$, we can apply Lemma~\ref{lem:formula-correspondence-wLTS} and Proposition~\ref{prop:WF-tree-formulas} to get $\ltsfrom_x \vDash \varphi_\alpha \iff (\Tr_{\ltsfrom_x}(x_*), \emptyset) \vDash \psi_\alpha \iff \Omega_0(x) \notin \WF_M^{<\alpha}$.
\end{proof}

Summing up all of the previous results, we can show that the restriction of
bisimilarity on \( \omega \LTS \) to \( \omega \LTS^{\leq \alpha} \) is Borel.

\begin{theorem}\label{thm:bisim-wLTSalpha-Borel}
	For each \( 1 \leq \alpha < \omega_1 \), \( \sim \) on \( \omega \LTS^{\leq \alpha} \) is Borel.
\end{theorem}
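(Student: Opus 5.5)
The plan is to reduce the problem to a known Borel result: isomorphism of well-founded countable trees of bounded rank. By Lemma~\ref{lem:sim-iff-Omega0-equiv}, for $x,y\in\omega\LTS^{\leq\alpha}$ we have $x\sim y$ iff $\Omega_0(x)\equiv\Omega_0(y)$. By Lemma~\ref{lem:Omega-continuous}, $\Omega_0$ is continuous, hence Borel. By Lemma~\ref{lem:Omega0-codomain}, $\Omega_0$ maps $\omega\LTS^{\leq\alpha}$ into $\WF_M^{\leq\alpha}$. Moreover, $\omega\LTS^{\leq\alpha}$ is Borel by Corollary~\ref{cor:omegaLTS_leq_alpha_borel}. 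Consequently ${\sim}\restriction\omega\LTS^{\leq\alpha}$ is the preimage of ${\equiv}\restriction\WF_M^{\leq\alpha}$ under the Borel map $\Omega_0\times\Omega_0$, restricted to a Borel set. It therefore suffices to show that $\equiv$ restricted to $\WF_M^{\leq\alpha}$ is Borel.

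To handle the labels, I would encode $(T,\{\Suc^a_T\}_{a\in L})$ as a plain tree. Since the label $a$ of the edge $u\to u^\smallfrown(t,a,n)$ is read off the last letter, an isomorphism of these structures is exactly a tree isomorphism that preserves the $L$-coordinate of each node's last letter. One option is to replace each node by one that records this coordinate, for instance by inserting a gadget: under each edge labelled $a$, attach a fresh chain of length depending on $a$. This would add rank, though, and could push us past a fixed bound; adding a bounded amount such as $\omega$ at each level is unpleasant. The cleaner option is to prove directly, by transfinite induction on $\beta\leq\alpha$, that the relation
\[
E_\beta=\{(T,T')\in\WF_M^{\leq\beta}\times\WF_M^{\leq\beta} \mid T\equiv T'\}
\]
is Borel. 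The induction is driven by the following recursion. Write $T_{(m)}$ for the section at a first-level node $m=(t,a,n)$. Then $T\equiv T'$ iff for each $a\in L$ the multisets of $\equiv$-classes of $\{T_{(m)}\mid m=(t,a,n)\in T\}$ and $\{T'_{(m')}\mid m'=(t',a,n')\in T'\}$ coincide. This multiset condition expresses the existence of a label-preserving bijection between the children that matches the subtrees up to $\equiv$.

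The main obstacle is making this ``equal multisets, counted up to $\aleph_0$'' condition Borel rather than $\mathbf{\Sigma}^1_1$, since a bijection is an existential quantifier over a Polish space. I would avoid the quantifier by counting. For each $k\in\N\cup\{\aleph_0\}$, the condition ``for every child $m$ of $T$, the number of $a$-children $m'$ of $T$ with $T_{(m)}\equiv T_{(m')}$ equals the number of $a$-children $m''$ of $T'$ with $T_{(m)}\equiv T'_{(m'')}$'', together with its symmetric version, only quantifies over countably many children. It is therefore Borel, provided the sections $T\mapsto T_{(m)}$ are continuous and $E_\gamma$ is Borel for $\gamma<\beta$. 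Two details need care. First, the children of a rank-$\le\beta$ tree have rank $<\beta$ when $\beta$ is a successor. Second, at limit $\beta$ the children can have any rank $<\beta$, so I would use $\bigcup_{\gamma<\beta}E_\gamma$, a countable union, which is Borel. Also, $\WF_M^{\leq\gamma}$ is Borel by Proposition~\ref{prop:WF-alpha-Borel-standard}. Together these give a Borel definition of $E_\beta$, and the proof finishes with $\beta=\alpha$.

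The alternative to this induction is to cite the standard fact that isomorphism on $\WF^{\leq\alpha}$ (and on countable structures of bounded Scott rank) is Borel, as in Kechris or Friedman--Stanley, and apply it to the labelled trees, viewed as structures with unary predicates on nodes that record the last label.
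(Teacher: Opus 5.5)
Your proposal is correct and follows essentially the paper's route. You use $\Omega_0$ as a continuous reduction to $\equiv$ on bounded-rank trees and prove the latter Borel by rank induction via a countable back-and-forth condition; your ``equal counts per $\equiv$-class of $a$-children'' is just a reformulation of the paper's $\mathrm{forth}^\alpha_k$/$\mathrm{back}^\alpha_k$ conditions over injective $k$-tuples in Appendix~\ref{sec:isom-rank-restr-trees}, and the only detail to add is that the children-multiset recursion must be restricted to nonempty trees (all that $\Omega_0$ produces), since otherwise it would vacuously identify the empty tree with $\{\emptyset\}$.
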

\begin{proof}
  Using the same arguments of \cite[Section (1.2)]{friedman_stanley_1989}, it
  can be shown that ${\equiv}\rest(\WF_M^{\leq \alpha})$ is Borel (for a self-contained proof,
  check Appendix~\ref{sec:isom-rank-restr-trees}). By considering the properties
  of $\Omega_0$ from Lemmas~\ref{lem:sim-iff-Omega0-equiv}
  and~\ref{lem:Omega-continuous}, of its codomain when restricted to $\omega
  \LTS^{\leq \alpha}$ at Corollary~\ref{cor:omegaLTS_leq_alpha_borel}
  and Lemma~\ref{lem:Omega0-codomain}, and
  Proposition~\ref{prop:WF-alpha-Borel-standard}, we conclude that $\Omega_0$ is a
  continuous reduction of ${\sim}\rest(\omega \LTS^{\leq \alpha})$ to
  the $\equiv$ relation, hence it is Borel.
\end{proof}

\subsection{\texorpdfstring{%
  $E_0$ reduction and an application to $\mathrm{ML}_{\omega_1}$}
  {E₀ reduction and an application to MLω₁}
}
\label{sec:e0-reduction}
We will provide a reduction of the $E_0$ relation to bisimilarity between well-founded trees of a bounded rank.
It is convenient to think of $E_0$ as a relation between subsets of $\N$, identifying \( x \subseteq \N \) with its characteristic function \( \N \to 2 \).

We fix a tree 
\[
  A(\N) \defi \{ \emptyset \} \cup \{ (n, 0, \dots, 0) \mid \text{at most
    }n\text{ zeros after the first term} \}
\]
on \( \N \) that has a unique branch of
length \( k+1 \) for each \( 0 \leq k \in \N \), so that branches corresponding to different lengths are incompatible, i.e., they have no segments in common (see Figure~\ref{fig:T_omega-T_even}, left).
We say that \( u \in \N^{<\N} \) is \emph{admissible} if \( u \in A(\N) \).
We note that for each admissible \( u \), its first term $u_0$ determines the branch to which it belongs.
Given a subset \( x \subseteq \N \), we consider the subtree \( A(x) \)
of \( A(\N) \) consisting of its branches of length \( k+1 \) for each \( k
\in x \) (see Figure~\ref{fig:T_omega-T_even}, right).

\definecolor{myorange}{RGB}{217, 95, 2}
\definecolor{mygreen}{RGB}{119, 172, 48}
\definecolor{darkgray}{RGB}{60, 60, 60}

\tikzset{
  tree label/.style={font=\small, align=center, anchor=south},
  caption label/.style={font=\normalsize, text=black},
  node style/.style={font=\footnotesize, inner sep=2pt, text=black},
}
	
\begin{figure}[h]
  \begin{center}
    \begin{forest}
      for tree={
        node style,
        edge={draw=darkgray, thick},
        l sep=20pt, 
        s sep=10pt,
      }
      [$\emptyset$, name=rootA, 
        [(0)]
        [(1) [{(1,0)}] ] 
        [(2) [{(2,0)} [{(2,0,0)}] ] ]
        [(3) [{(3,0)} [{(3,0,0)} [{(3,0,0,0)} [{}, edge={draw=none} ] ] ] ] ]
        [\ldots] 
      ]
    \end{forest}
    \qquad
    \begin{forest}
      for tree={
        node style,
        edge={draw=darkgray, thick},
        l sep=20pt, 
        s sep=10pt,
      }
      [$\emptyset$, name=rootB,
        [(0)]
        [(2) [{(2,0)} [{(2,0,0)}] ] ]
        [(4) [{(4,0)} [{(4,0,0)} [{(4,0,0,0)} [{(4,0,0,0,0)}  ] ] ] ] ]
        [\dots] 
      ]
    \end{forest}
  \end{center}
  \caption{The trees $A(\N)$ and $A(\{ 0, 2, 4, \dots \})$.}\label{fig:T_omega-T_even},
\end{figure}
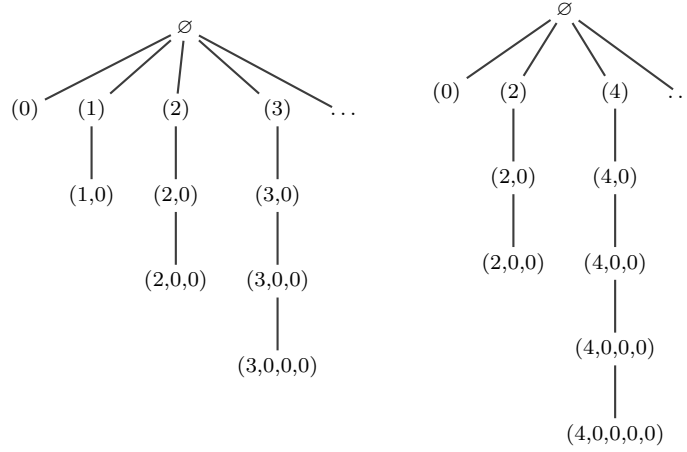

\begin{lemma}\label{lem:T_x-formula}
  If \( x \subseteq \N \), then \( (A(x), \emptyset) \vDash
\bigd^{k+1}(\neg \bigd \top) \iff k \in x \). \qed
\end{lemma}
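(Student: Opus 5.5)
The plan is to unfold the semantics of $\bigd^{k+1}(\neg\bigd\top)$ on the tree $A(x)$ and read off the lengths of its maximal branches. By definition of $\bigd$ (the single $\star$-transition to immediate successors), $(A(x),\emptyset)\vDash\bigd^{k+1}(\neg\bigd\top)$ holds iff there is a path $\emptyset=u^0\to u^1\to\dots\to u^{k+1}$ in $A(x)$ with each $u^{i+1}$ an immediate successor of $u^i$, and $u^{k+1}$ a leaf (since $\neg\bigd\top$ says exactly that there are no successors). This follows by a routine induction on $k$, applied to an arbitrary node rather than only to the root: $(A(x),u)\vDash\bigd^{j}\theta$ iff some node extending $u$ by exactly $j$ terms lies in $A(x)$ and satisfies $\theta$.

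The next step is to identify the leaves of $A(x)$ at depth $k+1$, i.e.\ sequences of length $k+1$. From the definition of $A(\N)$, the admissible sequences starting with $n$ are $(n,0,\dots,0)$ with at most $n$ zeros, so the branch with first term $n$ consists of the nodes of lengths $1,\dots,n+1$ and its unique leaf is $(n,0^n)$, which has length $n+1$. Since $A(x)$ keeps precisely the branches with first term $n\in x$ (together with the root), its leaves are exactly the sequences $(n,0^n)$ with $n\in x$. One caveat to check is that the root $\emptyset$ is a leaf only when $x=\emptyset$; it has length $0$, so it never interferes with depth $k+1\geq1$.

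Combining the two steps: if $k\in x$, then the path $\emptyset,(k),(k,0),\dots,(k,0^k)$ has $k+1$ steps and ends in a leaf, so the formula holds. Conversely, a witnessing path ends in a leaf of length $k+1$, which must be $(n,0^n)$ with $n\in x$ and $n+1=k+1$, hence $k\in x$. The only point needing care, and the one I expect to be the main obstacle, is the correct reading of ``branch of length $k+1$'' as counting nodes excluding the root, so that the indices line up: $\bigd^{k+1}$ counts edges from $\emptyset$, and the branch headed by $n$ has exactly $n+1$ edges. Everything else is immediate from the definitions.
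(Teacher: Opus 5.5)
Your argument is correct: the formula holds at the root exactly when some $(k+1)$-step path from $\emptyset$ ends in a leaf, and the only leaf of $A(x)$ of length $k+1$ is $(k,0,\dots,0)$ (with $k$ zeros), present precisely when $k\in x$. The paper states this lemma without proof, as immediate, and your unfolding of the semantics is exactly the routine verification it leaves to the reader.
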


For measurability purposes, we will need a uniform way to list all finite
modifications $\{ m_n(x) \mid n \in \N \}$ of a point \( x = (x_0, x_1, \cdots) \in 2^{\N} \).
The only requirement is that the functions \( m_n : 2^{\N} \to
  2^{\N} \) are continuous. We may assume that $m_0$ is the identity.

Given \( x \subseteq \N \), we construct a tree \( B(x) \) on \(
\N \): 
We prefix with $n$ each node of the tree $A(m_n(x))$ corresponding to the $n$th modification of
$x$. Then we adjoin $\emptyset$ as the new root (see Figure~\ref{fig:B_x}).
We observe that the rank of \( B(x) \) is at most \( \omega + 2 \) (it equals
$\omega + 1$ if and only if $x$ is finite).
Moreover, \( B(x) \in 2^{\N^{<\N}} \) forms an LTS with the successor relation.
\begin{figure}[h]
  \begin{center}
    \begin{forest}
      for tree={
        node style,
        edge={draw=darkgray, thick},
        l sep=20pt, 
        s sep=10pt,
      }
      [$\emptyset$, name=rootC, s sep=15pt,
        [(0), text=myorange, edge={draw=darkgray}, name=startOrange
          [{(0,0)}, text=myorange, edge={draw=myorange}, name=leftOrange]
          [{(0,2)}, text=myorange, edge={draw=myorange} 
	    [{(0,2,0)}, text=myorange, edge={draw=myorange} 
	      [{(0,2,0,0)}, text=myorange, edge={draw=myorange}] 
	    ] 
          ]
          [{(0,4)\ \dots}, text=myorange, edge={draw=myorange} 
	    [{(0,4,0)}, text=myorange, edge={draw=myorange} 
	      [{(0,4,0,0)}, text=myorange, edge={draw=myorange}, name=midOrange
	        [{(0,4,0,0,0)}, text=myorange, edge={draw=myorange}
	          [{(0,4,0,0,0,0)}, text=myorange, edge={draw=myorange}, name=endOrange]
	        ]
	      ]
	    ] 
          ]
        ]
        [(1) [$A(m_1(x))$] ]
        [(2) [$A(m_2(x))$] ]
        [(3) [$A(m_3(x))$] ]
        [\dots, edge={draw=none}] 
      ]
      ]
        \draw[mygreen, thick] 
        (startOrange.west) 
        to[out=180, in=90] ([xshift=-1.2cm]leftOrange.east) 
        to[out=270, in=180] ([yshift=-0.5cm]endOrange.south)
        to[out=0, in=270] ([xshift=0.8cm]endOrange.east)
        to[out=90, in=0] (startOrange.east);
    \end{forest}
  \end{center}
  \caption{The tree $B(x)$ with explicit $A(m_0(x))$ for $x = \{ 0, 2, 4, \dots \}$.}\label{fig:B_x}
\end{figure}
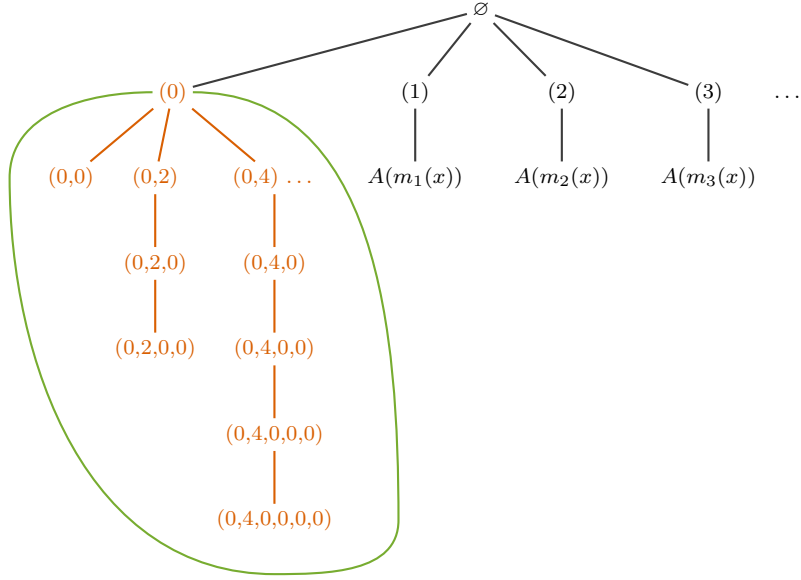

\begin{lemma}\label{lem:S-continuous}
	$B:2^\N\to \WF_{\N}^{\leq \omega+2}$ is continuous.
\end{lemma}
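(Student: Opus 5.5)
We need to show two things: that $B$ takes values in $\WF_{\N}^{\leq\omega+2}$, and that $B$ is continuous as a map $2^\N\to 2^{\N^{<\N}}$, where the subspace $\WF_{\N}^{\leq\omega+2}$ carries the relative topology. Since $2^{\N^{<\N}}$ has the product topology, continuity means: for each fixed $u\in\N^{<\N}$, the set $\{x\in 2^\N \mid u\in B(x)\}$ is clopen in $2^\N$.

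For the codomain, we compute ranks in $A(y)$ for any $y\subseteq\N$. The branch of length $k+1$ consists of $(k),(k,0),\dots,(k,0,\dots,0)$ with $k$ zeros. Its root-level node $(k)$ has rank $k$, so $\rho_{A(y)}(\emptyset)=\sup\{k+1\mid k\in y\}\le\omega$. Applying Lemma~\ref{lem:tree-rank-tail} to the prefixing by $n$, each node $(n)$ of $B(x)$ has rank $\rho_{A(m_n(x))}(\emptyset)\le\omega$. Hence $\rho_{B(x)}(\emptyset)\le\omega+1$, so $\rho(B(x))\le\omega+2$. $B(x)$ is a tree with no infinite branches, since every node below $(n)$ lies on a finite branch of $A(m_n(x))$. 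So $B(x)\in\WF_\N^{\le\omega+2}$.

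For continuity, fix $u$ and split into cases.
\begin{itemize}
\item If $u=\emptyset$ or $u=(n)$, then $u\in B(x)$ always (we take the convention that each prefix $(n)$ is present, as it is the root of the copy of $A(m_n(x))$). The set is all of $2^\N$.
\item Otherwise $u=(n)^\smallfrown v$ with $v\neq\emptyset$. Then $u\in B(x)\iff v\in A(m_n(x))$. If $v$ is not admissible, this never holds and the set is empty. If $v$ is admissible, its first term $v_0=k$ determines the branch, so $v\in A(y)\iff k\in y\iff y(k)=1$.
\end{itemize}
In the admissible case the set is $m_n^{-1}[\{y\mid y(k)=1\}]$. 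This is the preimage of a clopen basic set under the continuous map $m_n$, hence clopen. In every case the preimage of a subbasic clopen set is clopen, so $B$ is continuous into $2^{\N^{<\N}}$, and therefore into the subspace $\WF_\N^{\le\omega+2}$.

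The main obstacle is bookkeeping rather than substance. One must be precise about what "prefix each node of $A(m_n(x))$ with $n$" means for the root $\emptyset$ of $A(m_n(x))$, so that $(n)\in B(x)$ and the rank computation via Lemma~\ref{lem:tree-rank-tail} goes through exactly. The continuity of the $m_n$ is exactly what the construction was designed to supply, so that step is immediate.
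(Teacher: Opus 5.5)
Your proof is correct and is essentially the paper's argument. The paper factors $B=P\circ M$ and checks subbasic preimages for each factor separately, whereas you compute the preimage of $\{T\mid u\in T\}$ under the composite directly and land on the same sets $m_n^{-1}[\{y\mid y(k)=1\}]$; your rank bound via Lemma~\ref{lem:tree-rank-tail} and your observation that these preimages are clopen fill in details the paper leaves implicit.
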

\begin{proof}
  We divide the construction of the function $B$ into two steps:
  \begin{enumerate}
    \item Let $M:2^\N\to (\Tr_{\N})^\N$ be given by 
    $M(x)=(A(m_i(x)))_{i\in \N}$. We show that $M$ is continuous; let $u\in \N^{<\N}$:
    \begin{align*}
      (\pi_n\circ M)^{-1}&[\{T\in \Tr_{\N}\mid u\in 
            T\}]=\\
      &=\{x\in 2^\N\mid u\in \pi_n(M(x))\} \\
      &= \{x\in 2^\N \mid u\in A(m_n(x))\} \\
      &=
      \begin{cases*}
        \emptyset, & if $u$ is not admissible,\\
        \{x\in 2^\N \mid m_n(x)(u_0)=1\}, & if $u$ is admissible.
      \end{cases*}
    \end{align*}
    The continuity of $m_n$ ensures that the second branch is an open set.
    
    \item Now let $P:(\Tr_{\N})^\N \to 
    \Tr_{\N}$ be the “union” map that glues a 
    sequence of trees to a new root:
    \[
      P\bigl((T_i)_{i\in \N}\bigr)
      =\{\emptyset\}\cup \{i^\smallfrown s\mid 
      s\in T_i\}.
    \]
    We verify that $P$ is also continuous. Let $u\in 
    \N^{<\N}$ be of the form  $k^\smallfrown v$ 
    for some $k\in \N$ and $v\in \N^{<\N}$.
    We compute:
    \begin{align*}
      P^{-1}[\{T\in \Tr_{\N} \mid u\in T\}]
      &=\bigl\{(T_i)_i\in 
      \textstyle\prod_{n\in \N}\Tr_{\N}\mid v\in T_k\bigr\} \\
      &=\textstyle\prod_{i<k}\Tr_{\N}\times \{T\in 
      \Tr_{\N}\mid v\in T\}\times \textstyle\prod_{i>k}\Tr_{\N}.
    \end{align*}
    This set is open in the product topology. 
  \end{enumerate}
  Since $B=P\circ M$, we have proved that it is continuous.
\end{proof}

\begin{theorem}\label{thm:E_0-reduction-via-B}
	If $x,y\subseteq \N$, then $x \mathrel{E_0} y \iff 
	(B(x),\emptyset)\sim (B(y),\emptyset)$. 
	Consequently, $E_0 \leq_B {\sim}_{\WF^{\leq \omega+2}}$.
\end{theorem}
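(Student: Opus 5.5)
Both directions of the equivalence rest on comparing the \emph{sets} of subtrees hanging from the root of $B(x)$ up to bisimilarity. The first step is to understand bisimilarity between trees of the form $A(z)$. I claim that $(A(z),\emptyset)\sim(A(w),\emptyset)$ iff $z=w$. If $z=w$ the trees are equal. Conversely, Lemma~\ref{lem:T_x-formula} shows that $A(z)$ satisfies $\bigd^{k+1}(\neg\bigd\top)$ exactly when $k\in z$. Since modal formulas are preserved by bisimilarity (correctness of $\mathrm{ML}_{\omega_1}$, as in Proposition~\ref{prop:bisim-implies-formulas-satisf}), bisimilar $A(z),A(w)$ force $z=w$. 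Note that $A(z)$ rooted at $\emptyset$ is exactly the subtree of $B(x)$ below the node $(n)$ when $z=m_n(x)$, since $B(x)$ is built by the gluing map $P$.

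$(\Rightarrow)$ Suppose $x\mathrel{E_0}y$. Then $\{m_n(x)\mid n\in\N\}=\{m_n(y)\mid n\in\N\}$, because both sets equal the $E_0$-class of $x$ (the $m_n$ list all finite modifications). I define a relation $R$ between the nodes of $B(x)$ and $B(y)$. It relates $\emptyset$ to $\emptyset$, and relates $(n)^\smallfrown s$ to $(n')^\smallfrown s$ whenever $m_n(x)=m_{n'}(y)$ and $s\in A(m_n(x))$. Checking zig and zag is routine. At the root, each child $(n)$ of $B(x)$ is matched with some $(n')$ satisfying $m_{n'}(y)=m_n(x)$, and symmetrically. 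Below the root the two subtrees are identical, so the identity on sequences works.

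$(\Leftarrow)$ Suppose $(B(x),\emptyset)\sim(B(y),\emptyset)$ via a bisimulation $R$. Apply zig at the root to the child $(0)$, where $m_0(x)=x$. This gives some $n'$ with $(0)\mathrel{R}(n')$. The restriction of $R$ to the subtrees below these nodes is still a bisimulation, so $(A(x),\emptyset)\sim(A(m_{n'}(y)),\emptyset)$. By the first step, $x=m_{n'}(y)$, hence $x\mathrel{E_0}y$.

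For the ``consequently'' part, Lemma~\ref{lem:S-continuous} says $B$ is continuous, hence Borel, and takes values in $\WF^{\leq\omega+2}_\N$. Combined with the equivalence just proved, $B$ is a Borel reduction of $E_0$ to $\sim$ restricted to that space. The only point needing care is the observation that $\{m_n(x)\}$ is exactly the $E_0$-class of $x$, which is the defining property of the enumeration $m_n$. I expect no real obstacle; the subtlest step is justifying that bisimilarity between subtrees is inherited from $R$, which holds because in a tree the successors of $(n)^\smallfrown s$ all lie in the same subtree.
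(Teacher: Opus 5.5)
Your proof is correct and follows essentially the paper's argument: both directions compare the subtrees $A(m_n(\cdot))$ hanging from the root, using Lemma~\ref{lem:T_x-formula} and the invariance of $\mathrm{ML}_{\omega_1}$ under bisimilarity. The paper's $(\Leftarrow)$ folds your zig-plus-restriction step into the single distinguishing formula $\bigd\varphi(x)$. In $(\Rightarrow)$, your direct bisimulation is slightly more robust than the paper's isomorphism. The isomorphism needs a bijection $h$ with $m_n(x)=m_{h(n)}(y)$, hence matching multiplicities in the enumeration, and the stated requirements on the $m_n$ do not guarantee this.
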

\begin{proof}
  $(\Rightarrow)$ If $x$ and $y$ 
  differ on finitely many elements, the set of their finite 
  modifications is the same and there exists a bijection 
  $h:\N\to \N$ such that $m_n(x)=m_{h(n)}(y)$. 
  The relation $R\subseteq B(x)\times B(y)$ that glues the roots and 
  the corresponding representative subtrees of $m_n(x)$ and 
  $m_{h(n)}(y)$ is an isomorphism and, a fortiori, a bisimulation. 
  
  $(\Leftarrow)$ Given $z\subseteq \N$, consider the formula
  \[
    \varphi(z)\defi 
    \bigwedge_{n\in z}\bigd^{n+1}(\neg \bigd\top)\wedge 
    \bigwedge_{n\notin z} \neg(\bigd^{n+1}(\neg \bigd\top)).
  \]
  By Lemma~\ref{lem:T_x-formula} we observe that 
  $(A(w),\emptyset)\vDash \varphi(z) \iff w=z$.
  
  Suppose that $x\mathrel{\cancel{E_0}} y$. 
  By construction, since $m_0(x)=x$, we have 
  $(B(x),\emptyset)\vDash \bigd \varphi(x)$. 
  However, $(B(y), \emptyset)$ does not satisfy this formula, 
  since none of its subtrees $A(m_n(y))$ satisfies 
  $\varphi(x)$.
\end{proof}
If we work with an LTS on a measurable space, we say that a logic is \emph{measurable} if the sets of validity of its formulas are measurable sets. 
By the Harrington-Kechris-Louveau dichotomy theorem (\cite[Thm.1.1]{Harrington1990AGD}), we deduce that $\sim$ is not smooth and obtain the following corollary.

\begin{corollary}
	There is no countable and measurable logic that characterizes bisimilarity for well-founded trees with countable branching and rank $\leq\omega+2$.
\end{corollary}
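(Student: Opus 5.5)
The plan is to argue by contradiction, combining Theorem~\ref{thm:E_0-reduction-via-B} with the Harrington--Kechris--Louveau dichotomy. Suppose $\mathcal{L}$ is a countable measurable logic that characterizes bisimilarity on well-founded countably branching trees of rank ${\leq}\omega+2$. Characterizing means that two pointed trees are bisimilar exactly when they satisfy the same formulas of $\mathcal{L}$. I will regard the trees as points of the standard Borel space $\WF_\N^{\leq\omega+2}$, which is standard Borel by Proposition~\ref{prop:WF-alpha-Borel-standard}. Measurability then says that each validity set $\sem{\varphi}=\{T\in\WF_\N^{\leq\omega+2}\mid (T,\emptyset)\vDash\varphi\}$ is Borel. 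For $\mathrm{ML}_{\omega_1}$-fragments this is guaranteed by the argument of Proposition~\ref{prop:semantic-is-Borel-wLTS}, adapted to trees coded in $2^{\N^{<\N}}$.

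Fix an enumeration $(\varphi_k)_{k\in\N}$ of $\mathcal{L}$ and define $f\colon\WF_\N^{\leq\omega+2}\to 2^\N$ by $f(T)(k)=1\iff T\in\sem{\varphi_k}$. Each coordinate of $f$ is the characteristic function of a Borel set, so $f$ is Borel. By the characterization hypothesis, $T\sim T'\iff f(T)=f(T')$. Hence $\sim$ on $\WF_\N^{\leq\omega+2}$ is smooth, i.e.\ Borel reducible to equality on $2^\N$.

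Next, compose with the continuous map $B$ of Lemma~\ref{lem:S-continuous}. By Theorem~\ref{thm:E_0-reduction-via-B}, $f\circ B\colon 2^\N\to 2^\N$ is a Borel map satisfying $x\mathrel{E_0}y\iff f(B(x))=f(B(y))$. This would make $E_0$ smooth. That contradicts the Harrington--Kechris--Louveau theorem \cite[Thm.1.1]{Harrington1990AGD}, which says exactly that $E_0$ is not smooth. A more elementary alternative is the classical fact that any Borel, hence Baire measurable, $E_0$-invariant function into $2^\N$ is constant on a comeager set, which contradicts the existence of uncountably many $E_0$-classes. Either way, no such logic $\mathcal{L}$ can exist.

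The main obstacle is bookkeeping rather than mathematics. I need to make precise that ``measurable logic'' refers to validity at the root in the standard Borel space $\WF_\N^{\leq\omega+2}$, so that $f$ is genuinely Borel. I also need that characterizing bisimilarity yields both directions of $T\sim T'\iff f(T)=f(T')$. The forward direction, that bisimilar trees satisfy the same formulas, is the invariance condition, and it holds for $\mathrm{ML}_{\omega_1}$ by the correctness already invoked in Proposition~\ref{prop:bisim-implies-formulas-satisf}. Once these are fixed, the argument is just the composition above.
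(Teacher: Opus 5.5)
Your proposal is correct and follows the paper's route: a countable measurable characterizing logic would make $\sim$ on $\WF_\N^{\leq\omega+2}$ smooth via $T\mapsto(\chi_{\sem{\varphi_k}}(T))_{k\in\N}$, and precomposing with the continuous reduction $B$ of Theorem~\ref{thm:E_0-reduction-via-B} would then make $E_0$ smooth, which is impossible. You spell out the smoothness step that the paper leaves implicit; one small quibble is that Harrington--Kechris--Louveau is a dichotomy whose exclusivity rests on the classical non-smoothness of $E_0$, and your Baire-category alternative proves that fact directly.
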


We highlight that there does exist at least one countable logic that characterizes bisimilarity for countable LTS (\cite[Exm.13]{2012arXiv1211.0967S}), but as stated above, none of these will be measurable.

Since $\mathrm{ML}_{\omega_1}$ has measurable validity sets, we can conclude:
\begin{corollary}\label{cor:ML-not-characterization}
	There is no countable fragment of $\mathrm{ML}_{\omega_1}$ that characterizes bisimilarity for well-founded trees with countable branching and rank $\leq\omega+2$.
\end{corollary}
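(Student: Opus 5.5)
The plan is to argue by contradiction, using the previous corollary together with the measurability of $\mathrm{ML}_{\omega_1}$. Suppose $\Phi \subseteq \mathrm{ML}_{\omega_1}$ is a countable fragment that characterizes bisimilarity on $\WF^{\leq\omega+2}$. This means that for pointed trees $(T,\emptyset)$, $(T',\emptyset)$ of rank $\leq\omega+2$, we have $(T,\emptyset)\sim(T',\emptyset)$ if and only if they satisfy the same formulas of $\Phi$. It is enough to check that $\Phi$ is a \emph{measurable} logic in the sense just defined, on the standard Borel space where the reduction lives. The conclusion then follows from the preceding corollary.

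To make this precise without citing that corollary as a black box, I would argue directly on $2^\N$. Enumerate $\Phi=\{\theta_k\mid k\in\N\}$ and define $f:2^\N\to 2^\N$ by $f(x)(k)=1$ iff $(B(x),\emptyset)\vDash\theta_k$. By Theorem~\ref{thm:E_0-reduction-via-B} and the characterization hypothesis, $x\mathrel{E_0}y \iff f(x)=f(y)$. So $f$ would witness that $E_0$ is smooth, provided $f$ is Borel, and this contradicts the Harrington--Kechris--Louveau dichotomy.

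The main step is showing that each set $\{x\in 2^\N\mid (B(x),\emptyset)\vDash\theta\}$ is Borel. I would mimic the proof of Proposition~\ref{prop:semantic-is-Borel-wLTS}, doing structural induction on $\theta$. For each node $u\in\N^{<\N}$, let $V_\theta(u)=\{x\mid u\in B(x)\wedge (B(x),u)\vDash\theta\}$. The connectives are immediate. For the modal case,
\[
V_{\bigd\theta}(u)=\textstyle\bigcup_{n\in\N} V_\theta(u^\smallfrown(n)),
\]
because $V_\theta(v)$ already forces $v\in B(x)$ and trees are closed under initial segments. For negation, set $V_{\neg\theta}(u)=\{x\mid u\in B(x)\}\sm V_\theta(u)$. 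The set $\{x\mid u\in B(x)\}$ is clopen by Lemma~\ref{lem:S-continuous}, so every $V_\theta(u)$ is Borel, and the relevant set is $V_\theta(\emptyset)$.

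Alternatively, I could use the Borel map $B$ into $\Tr_\N$ and apply a tree version of Proposition~\ref{prop:semantic-is-Borel-wLTS}. I expect the only delicate point to be the bookkeeping showing that satisfaction is Borel uniformly in the node. Once that is in place, the contradiction with non-smoothness of $E_0$ is immediate.
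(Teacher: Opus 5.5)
Your proposal is correct and follows the paper's route. The paper also combines the reduction $B$ from Theorem~\ref{thm:E_0-reduction-via-B}, the non-smoothness of $E_0$ from the Harrington--Kechris--Louveau dichotomy, and the measurability of $\mathrm{ML}_{\omega_1}$ validity sets; your node-indexed induction on $V_\theta(u)$ just makes explicit the Borelness of $x\mapsto[(B(x),\emptyset)\vDash\theta]$, which the paper only asserts.
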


\section{Preliminaries on Markov processes}
\label{sec:prelims-on-markov}
We now turn to process that involve stochastic behavior; we review some basic
concepts and set notation.

The set of subprobability measures over a measurable space
$(S,\Sigma)$ will be denoted by $\Delta(S)$.
If $R$ is a relation on $S$, its \emph{lifting} $\bar{R}$ to $\Delta(S)$ is the equivalence relation given by:
\begin{equation}\label{eq:int-lift}
	\mu \mathrel{\bar{R}} \mu' \iff \forall Q\in \Sigma(R) \;
	\mu(Q)=\mu'(Q),
\end{equation}
where $\Sigma(R)$ stands for the sub-$\sigma$-algebra of measurable $R$-closed sets: $E\in \Sigma$ such that $\{s\in S \mid \exists x\in E \; (x\mathrel{R}s \vee s\mathrel{R}x)\} \subseteq E$.
It is useful to think that $\Sigma(R)$ contains all the information about $R$ that can be measurably encoded.

Let  $\Delta(\Sigma)$ be the smallest $\sigma$-algebra that makes all
the evaluation functions $\mathrm{ev}_E:\Delta(S) \to [0,1]$ measurable with
respect to $\Borel([0, 1])$.

\begin{definition}\label{def:hit-algebra}
	Given a set $X$ and a family $\Gamma\sbq \Power(X)$, the \emph{hit
		$\sigma$-algebra} $H(\Gamma)$ is the smallest
	$\sigma$-algebra on $\Gamma$ that contains all sets $H_D\defi \{G\in
	\Gamma\mid G\cap D \neq\emptyset\}$ with $D \in \Gamma$.
\end{definition}

\begin{definition}\label{def:nlmp}
	A \emph{nondeterministic labelled Markov process}, or NLMP, is a structure $\lmp{S} 
	=(S,\Sigma,\{\Tfont{T}_a\mid a\in L\})$ where $\Sigma$ is 
	a $\sigma$-algebra over the state set $S$, and 
	for each label $a\in L$, $\Tfont{T}_a:(S,\Sigma)\to 
	(\Delta(\Sigma),H(\Delta(\Sigma)))$ is measurable. 
	We say that $\lmp{S}$ is \emph{image-finite (countable)} if all sets $\Tfont{T}_a(s)$ are finite (countable). 
	NLMPs that are not image-countable are called \emph{image-uncountable}.
\end{definition}

\begin{definition}\label{def:nlmp_sb}
  A relation $R \sbq S\times S$ is a \emph{state bisimulation} on an NLMP
  $(S,\Sigma,\{\Tfont{T}_a\mid a\in L\})$ if it is symmetric and for all $a \in
    L$, $s\mathrel{R}t$ implies that for every $\mu \in \Tfont{T}_a(s)$ there
    exists $\mu'\in \Tfont{T}_a(t)$ such that $\mu \mathrel{\bar{R}}\mu'$.
    
    We say that $s,t\in S$ are state \emph{bisimilar}, denoted by $s\mathrel{\sim_\sfs}t$, if there is a state bisimulation $R$ such that $s\mathrel{R}t$.
\end{definition}

We now turn to the external version of state bisimulation.

\begin{definition}\label{def:closed-pair}
	Let $R\subseteq S\times S'$ be a relation, and let $A\subseteq S$ and $A'\subseteq S'$. 
	The pair $(A,A')$ is called \emph{$R$-closed pair} if $R\cap(A\times S') = R\cap(S\times A')$.
\end{definition}

\begin{lemma}[{\cite[Lemma~3.20]{survey_bisim}}]\label{lem:pair-equiv}
	Let $R\sbq S\times S'$, $A\sbq S$, and $A'\sbq S'$.
	\begin{enumerate}
		\item\label{item:in-equiv} $(A,A')$ is an $R$-closed pair iff $R[A]\subseteq A'$ and $R^{-1}[A']\subseteq A$.
		\item\label{item:closure-prop} The family of $R$-closed pairs is closed under complementation
		and arbitrary unions and intersections in coordinates.
		\item\label{item:closed-pairs-antimonotonicity} If $R_0\subseteq
		R_1$, then every $R_1$-closed pair is also $R_0$-closed.
	\end{enumerate}
\end{lemma}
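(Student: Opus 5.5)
The plan is to verify the three items in order, using the definition of $R$-closed pair directly, and obtaining items 2 and 3 from the characterization in item~\ref{item:in-equiv}.

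For item~\ref{item:in-equiv}, I will unfold $R\cap(A\times S') = R\cap(S\times A')$ as the statement that for every $(s,s')\in R$ we have $s\in A \iff s'\in A'$. The forward half of this biconditional says exactly that $R[A]\subseteq A'$. The backward half says exactly that $R^{-1}[A']\subseteq A$. Each implication reverses step by step, so the equivalence follows from elementwise bookkeeping.

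For item~\ref{item:closure-prop}, I will use the reformulation "for all $(s,s')\in R$, $s\in A\iff s'\in A'$".
\begin{itemize}
\item Complementation: if the biconditional holds for $(A,A')$, it also holds for $(S\setminus A, S'\setminus A')$, since negating both sides of an equivalence preserves it.
\item Unions: for a family $(A_i,A'_i)_{i\in I}$ of $R$-closed pairs and $(s,s')\in R$, we have $s\in\bigcup_i A_i$ iff $s\in A_i$ for some $i$, iff $s'\in A'_i$ for some $i$, iff $s'\in\bigcup_i A'_i$.
\item Intersections: the argument is the same with "for all $i$" in place of "for some $i$". Alternatively, intersections follow from unions and complements via De Morgan.
\end{itemize}

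For item~\ref{item:closed-pairs-antimonotonicity}, I will apply item~\ref{item:in-equiv}. If $R_0\subseteq R_1$, then $R_0[A]\subseteq R_1[A]\subseteq A'$ and $R_0^{-1}[A']\subseteq R_1^{-1}[A']\subseteq A$, so $(A,A')$ is $R_0$-closed.

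No step is a genuine obstacle. The only point needing care is the initial translation of the set equation $R\cap(A\times S')=R\cap(S\times A')$ into the pointwise biconditional over pairs of $R$. Once that is done, everything else is propositional logic applied uniformly over $(s,s')\in R$.
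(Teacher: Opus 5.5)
Your argument is correct: once $R\cap(A\times S') = R\cap(S\times A')$ is read as ``for every $(s,s')\in R$, $s\in A \iff s'\in A'$'', all three items reduce to propositional bookkeeping exactly as you describe. The paper gives no proof of its own for this lemma and instead imports it from the cited survey (Lemma~3.20 there); your direct pointwise verification is the standard argument for it.
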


Given $R \subseteq S \times
S'$, we use the notation $\Sigma^\times(R)$ for the
family of $R$-closed measurable pairs $(Q, Q')$. 
The lift $\bar{R}$ of $R$ to $\Delta(S) \times \Delta(S')$ is defined as
\begin{equation}\label{eq:ext-lift}
	\mu \mathrel{\bar{R}} \mu' \iff \forall (Q, Q') \in
	\Sigma^{\times}(R) \; \mu(Q) = \mu'(Q').
\end{equation}

\begin{definition}\label{def:nlmp-ext-sb}
  A relation $R \subseteq S\times S'$ is
  an \emph{external state bisimulation} if for every $a \in L$,
  $s\mathrel{R}s'$ implies that for every $\mu \in \Tfont{T}_a(s)$
  there exists $\mu'\in \Tfont{T}'_a(s')$ such that $\mu
  \mathrel{\bar{R}}\mu'$ (“zig”) and for every $\mu' \in \Tfont{T'}_a(s')$
  there exists $\mu\in \Tfont{T}_a(s)$ such that $\mu
  \mathrel{\bar{R}}\mu'$ (“zag”).

  We will say that $s \in S, s'\in S'$ are \emph{externally bisimilar}, denoted
  by $s\sim_\sfs^\times s'$, if there exists an external state
  bisimulation $R$ such that $s\mathrel{R}s'$.
\end{definition}

As usual, it can be proved that the union of an arbitrary family of external state bisimulations is an external state bisimulation. Therefore, $\sim_\sfs^\times$ is an external state bisimulation.

\begin{lemma}[{\cite[Lemma~3.21]{survey_bisim}}]\label{lem:closed-vs-closed-pair}
	If $R\subseteq S\times S$, the following statements hold:
	\begin{enumerate}
		\item\label{item:E-closed-iff-EE-closed} $E$ is $R$-closed if and only if $(E,E)$ is an $R$-closed pair.
		\item\label{item:closed-pair-reflexive-case} If $R$ is reflexive and $(E,E')$ is an $R$-closed pair, then $E=E'$.
	\end{enumerate}
\end{lemma}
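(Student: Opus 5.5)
The two items will be proved directly from the definition of $R$-closed pair, using the characterization in Lemma~\ref{lem:pair-equiv}(\ref{item:in-equiv}): $(A,A')$ is $R$-closed iff $R[A]\subseteq A'$ and $R^{-1}[A']\subseteq A$.

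For item~(\ref{item:E-closed-iff-EE-closed}), I will unfold both sides with $A=A'=E$. $E$ is $R$-closed when every $s$ with $x\mathrel{R}s$ or $s\mathrel{R}x$ for some $x\in E$ lies in $E$. The first disjunct says $R[E]\subseteq E$, and the second says $R^{-1}[E]\subseteq E$. By Lemma~\ref{lem:pair-equiv}(\ref{item:in-equiv}), these two inclusions together are exactly the statement that $(E,E)$ is an $R$-closed pair. So both directions follow by rewriting the definitions, with no case analysis needed.

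For item~(\ref{item:closed-pair-reflexive-case}), I will assume $R$ is reflexive and $(E,E')$ is $R$-closed, and prove $E\subseteq E'$ and $E'\subseteq E$.
\begin{itemize}
\item If $x\in E$, then $x\mathrel{R}x$ by reflexivity, so $x\in R[E]\subseteq E'$.
\item If $x\in E'$, then again $x\mathrel{R}x$, so $x\in R^{-1}[E']\subseteq E$.
\end{itemize}
Alternatively, one can work from $R\cap(E\times S)=R\cap(S\times E')$: the pair $(x,x)$ belongs to the left side iff $x\in E$, and to the right side iff $x\in E'$.

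I do not expect a real obstacle. The only point needing care is to read ``$R$-closed set'' as requiring closure under both $R$ and $R^{-1}$, because of the disjunction in its definition. This is what makes it match the two-sided condition on pairs rather than one-sided forward closure.
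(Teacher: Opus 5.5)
Your proof is correct. Item (1) is Lemma~\ref{lem:pair-equiv}(\ref{item:in-equiv}) with $A=A'=E$, once you read the definition of an $R$-closed set as $R[E]\subseteq E$ and $R^{-1}[E]\subseteq E$; item (2) follows from $(x,x)\in R$ exactly as you argue. The paper gives no proof of its own and simply cites the lemma from the survey, and your direct unfolding of the definitions is the standard verification.
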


The first item in this Lemma implies that, whenever $\lmp{S} = \lmp{S'}$, every external symmetric state bisimulation $R$ is an internal state bisimulation. 
When $R$ is reflexive, we get an equivalence $(Q, Q') \in \Sigma^\times(R) \iff Q = Q' \in \Sigma(R)$. Therefore, the liftings	\eqref{eq:int-lift} and \eqref{eq:ext-lift} essentially coincide (modulo the correspondence $x\longleftrightarrow (x,x)$) and we get the following result.

\begin{lemma}[{\cite[Lemma~4.16]{survey_bisim}}]\label{lem:S-equals-S'-case}
	If $\lmp{S}=\lmp{S}'$, the external definition of state
	bisimulation coincide with the internal one in
	the reflexive \textup{[NB: and symmetric]} case.
\end{lemma}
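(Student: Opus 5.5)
The plan is to prove both directions of the equivalence directly. Here $\lmp{S}=\lmp{S}'$, and we compare two kinds of relation $R\subseteq S\times S$. On one side are the external state bisimulations (Definition~\ref{def:nlmp-ext-sb}) that are reflexive and symmetric. On the other side are the internal state bisimulations (Definition~\ref{def:nlmp_sb}) that are reflexive; these are symmetric by definition. The key ingredient is Lemma~\ref{lem:closed-vs-closed-pair}. I will first show that for reflexive $R$ the two liftings coincide, that is, for $\mu,\mu'\in\Delta(S)$,
\[
\mu \mathrel{\bar{R}}_{\mathrm{int}} \mu' \iff \mu \mathrel{\bar{R}}_{\mathrm{ext}} \mu',
\]
where the left side refers to \eqref{eq:int-lift} and the right side to \eqref{eq:ext-lift}.

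To prove this lifting identity, I will establish a bijection $\Sigma(R)\to\Sigma^\times(R)$ given by $Q\mapsto (Q,Q)$. Item~\ref{item:E-closed-iff-EE-closed} of Lemma~\ref{lem:closed-vs-closed-pair} shows that the map is well defined, and that every closed pair of the form $(Q,Q)$ comes from some $Q\in\Sigma(R)$. Item~\ref{item:closed-pair-reflexive-case}, which uses reflexivity, shows that every $R$-closed pair $(Q,Q')$ satisfies $Q=Q'$, so the map is onto. Measurability of the pair is just measurability of $Q$. With this bijection, the condition ``$\mu(Q)=\mu'(Q')$ for all $(Q,Q')\in\Sigma^\times(R)$'' becomes literally ``$\mu(Q)=\mu'(Q)$ for all $Q\in\Sigma(R)$''.

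Next I compare the transfer conditions. Suppose $R$ is reflexive and symmetric and is an external state bisimulation. Its zig condition is exactly the internal condition, once the liftings are identified. Therefore $R$ is an internal state bisimulation, and it is symmetric by assumption. Conversely, suppose $R$ is a reflexive internal state bisimulation, hence symmetric. Zig holds immediately. For zag, take $s\mathrel{R}s'$ and $\mu'\in\Tfont{T}_a(s')$. By symmetry $s'\mathrel{R}s$, so the internal condition gives $\mu\in\Tfont{T}_a(s)$ with $\mu'\mathrel{\bar R}\mu$. Since $\bar R$ is defined by equalities of values, it is symmetric, so $\mu\mathrel{\bar R}\mu'$, which is what zag requires.

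The only subtle point is the surjectivity $\Sigma^\times(R)\to\Sigma(R)$ coming from reflexivity. Without reflexivity, external closed pairs $(Q,Q')$ with $Q\neq Q'$ would impose additional constraints, and the liftings could differ; this is why the bracketed hypothesis is stated. Everything else is bookkeeping.
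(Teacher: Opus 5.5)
Your proposal is correct and follows the paper's own justification, given in the paragraph just before the lemma: Lemma~\ref{lem:closed-vs-closed-pair} identifies $\Sigma^\times(R)$ with $\{(Q,Q)\mid Q\in\Sigma(R)\}$ for reflexive $R$, so the liftings \eqref{eq:int-lift} and \eqref{eq:ext-lift} coincide, and symmetry handles zag. The one refinement the paper makes is that the direction from symmetric external to internal needs only item~\ref{item:E-closed-iff-EE-closed}, since the diagonal pairs are among the external closed pairs; reflexivity is needed only for the converse.
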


\section{Substructures}\label{sect:NLMP-substructures}

Given a fixed NLMP $\lmp{S}=(S,\Sigma,\{\Tfont{T}_a\mid a\in L \})$, we are
interested in a notion of substructure of $\lmp{S}$. Recall that for $A\sbq S$, we have the
following family of its subsets, $\Sigma\rest A=\{Q\cap A\mid Q\in \Sigma\}$.
This is the initial $\sigma$-algebra for the inclusion $\iota:A\to S$ since $\iota^{-1}[Q]=Q\cap A\in \Sigma\rest A$, and therefore $\iota:(A,\Sigma\rest A) \to (S,\Sigma)$ is measurable. By applying the functor $\Delta$, we obtain the measurable and injective map $\Delta \iota:(\Delta(A),\Delta(\Sigma\rest A))\to (\Delta(S),\Delta(\Sigma))$ given by $(\Delta \iota)(\tilde{\mu})=\tilde{\mu}\circ \iota^{-1}$. We note that for $Q \in \Sigma$,
\begin{align}\label{eq:delta-i-preimage-base-set}
	\begin{split}
		(\Delta \iota)^{-1}[\Delta^{<q}(Q)]&=\{\tilde{\mu}\in \Delta(A)\mid 
		\Delta \iota(\tilde{\mu}) \in \Delta^{<q}(Q)\} \\
		&=\{\tilde{\mu}\in \Delta(A)\mid \tilde{\mu}(\iota^{-1}(Q))<q\}\\
		&=\{\tilde{\mu}\in \Delta(A)\mid \tilde{\mu}(Q\cap A)<q\} \\
		&=\Delta^{<q}(Q\cap A)\in \Delta(\Sigma\rest A).
	\end{split}
\end{align}

\begin{lemma}\label{lem:delta-i-preimage}
  If $A\sbq S$ and $\Gamma \sbq \Sigma$, then $(\Delta
  \iota)^{-1}[\Delta(\Gamma)]=\Delta(\Gamma\rest A)$. In particular,
  $\Delta(\Sigma\rest A) =(\Delta \iota)^{-1}[\Delta(\Sigma)]$.
\end{lemma}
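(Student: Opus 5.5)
The plan is a generating-set argument. Recall that $\Delta(\Gamma)$ denotes the $\sigma$-algebra on $\Delta(S)$ generated by the sets $\Delta^{<q}(Q)=\{\mu\mid \mu(Q)<q\}$, with $Q\in\Gamma$ and $q$ rational. Likewise, $\Delta(\Gamma\rest A)$ is generated on $\Delta(A)$ by the sets $\Delta^{<q}(Q\cap A)$ with $Q\in\Gamma$. If $\Gamma$ is not already a $\sigma$-algebra, I take $\Delta(\Gamma)$ to mean the $\sigma$-algebra generated by these evaluation sets, and treat $\Gamma\rest A$ in the same way.

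The key fact is the standard one: preimage commutes with generated $\sigma$-algebras. For any map $f\colon X\to Y$ and any family $\mathcal{G}\sbq\Power(Y)$,
\[
f^{-1}[\sigma(\mathcal{G})]=\sigma\bigl(f^{-1}[\mathcal{G}]\bigr).
\]
I would prove this with the good-sets principle. For $\supseteq$: the family $f^{-1}[\sigma(\mathcal{G})]$ is a $\sigma$-algebra on $X$, because preimages preserve complements and countable unions, and it contains $f^{-1}[\mathcal{G}]$. For $\sbq$: the family $\{B\sbq Y\mid f^{-1}[B]\in\sigma(f^{-1}[\mathcal{G}])\}$ is a $\sigma$-algebra on $Y$ that contains $\mathcal{G}$, so it contains $\sigma(\mathcal{G})$.

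Now apply this with $f=\Delta\iota$ and $\mathcal{G}=\{\Delta^{<q}(Q)\mid Q\in\Gamma,\ q\in\mathbb{Q}\}$. The computation \eqref{eq:delta-i-preimage-base-set} gives
\[
(\Delta\iota)^{-1}[\Delta^{<q}(Q)]=\Delta^{<q}(Q\cap A),
\]
so $f^{-1}[\mathcal{G}]$ is exactly the generating family of $\Delta(\Gamma\rest A)$. Combining the two gives $(\Delta\iota)^{-1}[\Delta(\Gamma)]=\Delta(\Gamma\rest A)$. The special case follows by taking $\Gamma=\Sigma$, since $\Sigma\rest A$ is by definition the trace family.

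The main point needing care is checking that the generators match exactly on both sides. Two things must be confirmed. First, $\Delta(\Gamma\rest A)$ must be generated by the sets $\Delta^{<q}(Q\cap A)$ for \emph{all} $Q\in\Gamma$, which is immediate since $\Gamma\rest A=\{Q\cap A\mid Q\in\Gamma\}$. Second, the threshold convention (rational $q$, strict inequality, or evaluation maps into $\Borel([0,1])$) must be the same on both sides. If $\Delta(\cdot)$ is instead defined through measurability of the evaluation maps $\mathrm{ev}_E$, I would note that $\mathrm{ev}_Q\circ\Delta\iota=\mathrm{ev}_{Q\cap A}$, which is the same identity at the level of maps, and the generated-$\sigma$-algebra argument goes through unchanged.
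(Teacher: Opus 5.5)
Your proposal is correct and is essentially the paper's proof. Both arguments use that preimage under $\Delta\iota$ commutes with generating $\sigma$-algebras, and then use \eqref{eq:delta-i-preimage-base-set} to identify the pulled-back generators $\Delta^{<q}(Q)$ with the generators $\Delta^{<q}(Q\cap A)$ of $\Delta(\Gamma\rest A)$. The only difference is that the paper applies the commutation step $(\Delta\iota)^{-1}[\sigma(\mathcal{G})]=\sigma((\Delta\iota)^{-1}[\mathcal{G}])$ without comment, whereas you justify it with the good-sets principle.
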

\begin{proof}
  \begin{align*}
    (\Delta \iota)^{-1}[\Delta(\Gamma)]&= (\Delta \iota)^{-1}[\sigma(\{\Delta^{<q}(Q)\mid Q\in \Gamma\})] &&
    \text{Definition of $\Delta(\Gamma)$}\\
    &=\sigma((\Delta \iota)^{-1}[\{\Delta^{<q}(Q) \mid Q\in\Gamma\}]) \\
    &= \sigma(\{\Delta^{<q}(Q\cap A) \mid Q\in\Gamma \}) && 
    \text{Equation~\eqref{eq:delta-i-preimage-base-set}}\\
    &= \sigma(\{\Delta^{<q}(Q')\mid Q'\in \Gamma\rest A \}) &&
    \text{Definition of $\Gamma\rest A$} \\
    &= \Delta(\Gamma\rest A). && \hfill \qedhere
  \end{align*}
\end{proof} 

\subsection{Thick subspaces}
\label{sec:thick-subspaces}

We will use the following concept and result from \cite{Halmos}: A subset $A$ of a measure space $(X,\Sigma,\mu)$ is said to be \emph{thick} 
with respect to $\mu$ if $\mu^*(A)=\mu(X)$. This is equivalent to the condition $\forall F\in \Sigma \; (F\cap A=\emptyset \implies \mu(F)=0)$.

\begin{theorem}[{\cite[p.75]{Halmos}}]\label{thm:A-thick-measure-space}
  Let $A$ be a thick subset of a measure space $(S,\Sigma,\mu)$. Then the
  stipulation $\mu_A(E\cap A)\defi\mu(E)$ for $E\in\Sigma$ well-defines a
  measure space $(A,\Sigma\rest A,\mu_A)$.
\end{theorem}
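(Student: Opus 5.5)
We need to prove Halmos' theorem: for $A$ thick, $\mu_A(E\cap A)\defi\mu(E)$ is well defined and is a measure on $(A,\Sigma\rest A)$.

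The first step is well-definedness. Suppose $E,F\in\Sigma$ satisfy $E\cap A=F\cap A$. Then the symmetric difference $E\triangle F$ belongs to $\Sigma$ and is disjoint from $A$. By the characterization of thickness recalled just before the statement ($F\cap A=\emptyset$ implies $\mu(F)=0$), this gives $\mu(E\triangle F)=0$. Hence $\mu(E)=\mu(E\cap F)+\mu(E\setminus F)=\mu(E\cap F)=\mu(F)$. This additive decomposition is legitimate even when the measures are infinite, since both $E\setminus F$ and $F\setminus E$ are null. So $\mu_A$ is a well-defined function on $\Sigma\rest A$. We also have $\mu_A(\emptyset)=\mu_A(\emptyset\cap A)=\mu(\emptyset)=0$.

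The second step is countable additivity, which is the only place needing care: the chosen representatives of a disjoint family in $A$ need not be disjoint in $S$. Let $(E_n\cap A)_n$ be pairwise disjoint with $E_n\in\Sigma$. We disjointify the representatives by setting
\[
F_n\defi E_n\setminus\textstyle\bigcup_{k<n}E_k\in\Sigma.
\]
Because the traces $E_k\cap A$ are pairwise disjoint, we have $F_n\cap A=E_n\cap A$. Therefore $\mu_A(E_n\cap A)=\mu(F_n)$ by well-definedness. Moreover,
\[
\textstyle\bigcup_n F_n=\bigcup_n E_n\quad\text{and}\quad \bigl(\bigcup_n E_n\bigr)\cap A=\bigcup_n (E_n\cap A).
\]
Using $\sigma$-additivity of $\mu$ on the disjoint family $(F_n)_n$, we conclude
\[
\mu_A\Bigl(\textstyle\bigcup_n(E_n\cap A)\Bigr)=\mu\Bigl(\textstyle\bigcup_n F_n\Bigr)=\textstyle\sum_n\mu(F_n)=\sum_n\mu_A(E_n\cap A).
\]

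Finally, $\Sigma\rest A$ is a $\sigma$-algebra on $A$, being the trace of $\Sigma$; this was already noted in the paper as the initial $\sigma$-algebra for the inclusion $\iota$. Combining the three facts, $(A,\Sigma\rest A,\mu_A)$ is a measure space. The main (mild) obstacle is the disjointification in the second step; everything else is routine, and in particular no finiteness assumption on $\mu$ is needed.
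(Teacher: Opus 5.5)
Your proof is correct. The paper gives no proof of its own here; it only cites Halmos. Your argument is the standard one for this result: a null symmetric difference gives well-definedness, and disjointifying the representatives via $F_n=E_n\setminus\bigcup_{k<n}E_k$ gives $\sigma$-additivity.

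One caveat concerns your closing remark that no finiteness is needed. That is true under the formulation you actually use, namely that every measurable set disjoint from $A$ is null. However, the paper's primary definition $\mu^*(A)=\mu(S)$ is equivalent to that formulation only when $\mu(S)<\infty$. For infinite $\mu$, the statement with $\mu^*(A)=\mu(S)$ as the definition fails. Take Lebesgue measure on $\mathbb{R}$ with $A=[0,\infty)$: the sets $(-\infty,0)$ and $\emptyset$ have the same trace on $A$ but different measures. The paper only deals with subprobability measures, so this does not affect anything there.
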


If $A$ is measurable, the condition of being thick with respect to $\mu$ is
equivalent to $\mu(A)=\mu(S)$. Furthermore, for $E \in \Sigma$ we have
\[
  \mu_A(E\cap A) = \mu(E) = \mu(E\cap A)+\mu(E\cap A^c) = \mu(E\cap A),
\]
so $\mu_A = \mu\rest_{\Sigma\rest A}$.

\begin{lemma}
	If $A$ is measurable, $\Delta \iota$ is a bijection between $\Delta(A,\Sigma\rest A)$ and $\{\mu\in \Delta(S) \mid \mu(A) = \mu(S)\}$ with inverse $j$ defined by $j(\mu) = \mu_A$.
\end{lemma}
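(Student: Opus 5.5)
To prove the lemma, I would check three things directly: that $\Delta\iota$ lands in the stated set, that $j$ lands in $\Delta(A,\Sigma\rest A)$, and that the two composites are identities. The setup is $A\in\Sigma$, with $\Delta\iota(\tilde\mu)=\tilde\mu\circ\iota^{-1}$, so that $(\Delta\iota)(\tilde\mu)(Q)=\tilde\mu(Q\cap A)$ for every $Q\in\Sigma$.

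\emph{First, the image of $\Delta\iota$.} Let $\tilde\mu\in\Delta(A)$ and put $\mu\defi\Delta\iota(\tilde\mu)$. Then $\mu(A)=\tilde\mu(A\cap A)=\tilde\mu(A)$ and $\mu(S)=\tilde\mu(S\cap A)=\tilde\mu(A)$. Hence $\mu(A)=\mu(S)$. Being a subprobability measure is preserved by the pushforward.

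\emph{Second, $j$ is well defined.} Let $\mu\in\Delta(S)$ with $\mu(A)=\mu(S)$. Since $A$ is measurable, the remark after Theorem~\ref{thm:A-thick-measure-space} shows that $A$ is thick. The theorem then yields a measure $\mu_A$ on $(A,\Sigma\rest A)$, and by the same remark $\mu_A(E\cap A)=\mu(E)=\mu(E\cap A)$. Its total mass is $\mu_A(A)=\mu(S)\le 1$, so $\mu_A\in\Delta(A)$. Because $A\in\Sigma$, we have $\Sigma\rest A=\{Q\in\Sigma\mid Q\sbq A\}$, and $\mu_A$ is simply the restriction of $\mu$ to this family. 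Well-definedness therefore also follows directly, without appealing to Halmos.

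\emph{Third, the composites.} For $\mu$ in the set and $Q\in\Sigma$,
\[
(\Delta\iota)(j(\mu))(Q)=\mu_A(Q\cap A)=\mu(Q),
\]
so $\Delta\iota\circ j=\id$. Conversely, for $\tilde\mu\in\Delta(A)$ and $E\cap A\in\Sigma\rest A$,
\[
j(\Delta\iota(\tilde\mu))(E\cap A)=\Delta\iota(\tilde\mu)(E)=\tilde\mu(E\cap A),
\]
so $j\circ\Delta\iota=\id$. These two identities give bijectivity, with $j$ as the inverse.

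None of the steps is a real obstacle. The only point needing care is the thickness/well-definedness of $\mu_A$, that is, that $\mu_A(E\cap A)$ does not depend on the choice of $E$. This is exactly where the hypothesis $\mu(A)=\mu(S)$ is used: it gives $\mu(A^c)=0$, so any two sets $E,E'$ with the same trace on $A$ differ only inside $A^c$ and thus have equal measure. Measurability of $A$ is what makes $\mu(A)$ meaningful and allows the elementary restriction argument.
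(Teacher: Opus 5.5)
Your proof is correct and follows the paper's argument. You show that $\Delta\iota$ lands in the set of measures for which $A$ is thick, then verify $j\circ\Delta\iota=\id$ and $\Delta\iota\circ j=\id$ from $\mu_A(E\cap A)=\mu(E)$, additionally spelling out that $j(\mu)$ is a subprobability measure on $(A,\Sigma\rest A)$, which the paper leaves implicit in its appeal to Theorem~\ref{thm:A-thick-measure-space}.
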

\begin{proof}
  If $\tilde{\mu}\in \Delta(A)$, $\Delta \iota(\tilde{\mu})(S)
  =\tilde{\mu}(A)=\Delta \iota(\tilde{\mu})(A)$. Thus, $A$ is thick with respect to
  $\Delta \iota(\tilde{\mu})\in \Delta(S)$. It is clear that $\Delta \iota$ is
  injective: For $E\in \Sigma$, note that $\tilde{\mu}(E\cap A) = \tilde{\mu}\circ \iota^{-1}(E) =
  \Delta \iota(\tilde{\mu})(E) = (\Delta \iota(\tilde{\mu}))_A(E\cap A)$, and thus
  $(j\circ \Delta \iota)(\tilde{\mu}) = \tilde{\mu}$.

  It is now sufficient to check that $j$ is a right inverse for $\Delta
  \iota$. Suppose that $A$ is thick with respect to $\mu \in \Delta(S)$. Then we
  have
  \[
    (\Delta \iota\circ j)(\mu) =  \mu_A\circ \iota^{-1}=\mu,
  \]
  where the last equality follows by the definition of $\mu_A$
  (Theorem~\ref{thm:A-thick-measure-space}).
\end{proof}

We observe that $\dom(j)=\{\mu \in \Delta(S)\mid \mu(A) = \mu(S)\}$ belongs to $\Delta(\Sigma)$ because it is the set where the two measurable functions $\mathrm{ev}_A$ and $\mathrm{ev}_S$ coincide.
If $E\in \Sigma$, 
\begin{align*}
  j^{-1}[\Delta^{<q}(E\cap A)]& = \{\mu \in \dom(j) \mid 
    \mu_A\in \Delta^{<q}(E\cap A)\} \notag \\
  & = \{\mu \in \dom(j) 
    \mid \mu_A(E\cap A)<q \} \label{eq:j-measurable} \\
  & = \{\mu \in \dom(j) \mid \mu(E)<q\} = 
    \Delta^{<q}(E)\cap \mathrm{dom}(j) \in \Delta(\Sigma). \notag
\end{align*}

Suppose that $A\sbq S$ is thick for all $\mu \in \bigcup\{\Tfont{T}_a(s)\mid s\in A, \; a\in L\}$. By Theorem~\ref{thm:A-thick-measure-space}, for each $s\in A$ we have a set of measures 
\begin{equation*}
	(\Tfont{T}_a\rest A)(s)\defi\{\mu_A\in \Delta(A)\mid \mu \in 
	\Tfont{T}_a(s)\}\sbq \Delta(A)
\end{equation*}
and we can consider the tuple 
\begin{equation}\label{eq:sub-NLMP}
	\lmp{A}\defi(A,\Sigma\rest A,\{\Tfont{T}_a\rest A \mid a\in L\}).
\end{equation}

\begin{prop}\label{prop:substruct-is-NLMP}
	Let $A \subseteq S$ be measurable such that for all $s \in A$, $a \in L$, and any measure $\mu \in \Tfont{T}_a(s)$, $A$ is a thick subset of $(S, \Sigma, \mu)$. 
	Then $\lmp{A}$ in \eqref{eq:sub-NLMP} is an NLMP.
\end{prop}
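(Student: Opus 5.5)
The goal is to show that each $\Tfont{T}_a\rest A : (A,\Sigma\rest A)\to(\Delta(A\text{-}\Sigma\rest A), H(\Delta(\Sigma\rest A)))$ is measurable. Here $\Delta(\Sigma\rest A)$ abbreviates the $\sigma$-algebra on $\Delta(A)$. Well-definedness of the values is already settled: the thickness hypothesis together with Theorem~\ref{thm:A-thick-measure-space} gives $\mu_A\in\Delta(A)$ for every $\mu\in\Tfont{T}_a(s)$ with $s\in A$. Moreover, each $(\Tfont{T}_a\rest A)(s)$ lies in $\Delta(\Sigma\rest A)$, since it is the image under $j$ of a set lying inside $\dom(j)$, and the preceding Lemma makes $j$ a bijection with inverse $\Delta\iota$. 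Concretely, $(\Tfont{T}_a\rest A)(s) = (\Delta\iota)^{-1}[\Tfont{T}_a(s)]$, and this is measurable by Lemma~\ref{lem:delta-i-preimage} because $\Tfont{T}_a(s)\in\Delta(\Sigma)$.

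Since the hit $\sigma$-algebra is generated by the sets $H_D$ with $D\in\Delta(\Sigma\rest A)$, it suffices to show that $(\Tfont{T}_a\rest A)^{-1}[H_D]\in\Sigma\rest A$ for each such $D$. By Lemma~\ref{lem:delta-i-preimage}, write $D=(\Delta\iota)^{-1}[D']$ with $D'\in\Delta(\Sigma)$. For $s\in A$, the key identity I aim for is
\[
(\Tfont{T}_a\rest A)(s)\cap D\neq\emptyset \iff \Tfont{T}_a(s)\cap D'\cap\dom(j)\neq\emptyset .
\]
It follows from $\mu_A\in D\iff \Delta\iota(\mu_A)=\mu\in D'$, which holds because $\Delta\iota\circ j=\id$ on $\dom(j)$, and from the fact that every $\mu\in\Tfont{T}_a(s)$ lies in $\dom(j)$ by thickness.

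Since $\dom(j)\in\Delta(\Sigma)$ (observed after the Lemma), $D'\cap\dom(j)\in\Delta(\Sigma)$. Hence $(\Tfont{T}_a\rest A)^{-1}[H_D] = \Tfont{T}_a^{-1}[H_{D'\cap\dom(j)}]\cap A$. This set is in $\Sigma\rest A$ by measurability of $\Tfont{T}_a$, and it is even in $\Sigma$ because $A$ is measurable.

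The main obstacle is the bookkeeping in the key identity. One must be careful that $D$ ranges over the pulled-back $\sigma$-algebra, which is exactly the content of Lemma~\ref{lem:delta-i-preimage}, and that the correspondence $\mu\leftrightarrow\mu_A$ is bijective on the relevant measures. Once that equivalence is established, the remaining steps are routine.
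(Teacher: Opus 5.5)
Your proposal is correct and takes essentially the same route as the paper. You identify $(\Tfont{T}_a\rest A)(s)=(\Delta\iota)^{-1}[\Tfont{T}_a(s)]$, write $D=(\Delta\iota)^{-1}[D']$ via Lemma~\ref{lem:delta-i-preimage}, and compute $(\Tfont{T}_a\rest A)^{-1}[H_D]$ as the trace on $A$ of $\Tfont{T}_a^{-1}$ of a hit set; your extra intersection with $\dom(j)$ is harmless but redundant, since thickness already gives $\Tfont{T}_a(s)\subseteq\dom(j)$ for $s\in A$, and so the paper obtains simply $\Tfont{T}_a^{-1}[H_{D'}]\cap A$.
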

\begin{proof}
	If $\mu \in \Tfont{T}_a(s)$, then $\mu = \mu_A \circ \iota^{-1}$, and since $\Tfont{T}_a(s) \in \Delta(\Sigma)$, we have 
	\begin{align*}
		(\Tfont{T}_a \rest A)(s) &= \{\mu_A \in \Delta(A) \mid \mu \in \Tfont{T}_a(s)\} = \{\mu_A \in \Delta(A) \mid \Delta \iota(\mu_A) \in \Tfont{T}_a(s)\} \\
		&= \{\mu \in \Delta(A) \mid \Delta \iota(\mu) \in \Tfont{T}_a(s)\} = (\Delta \iota)^{-1}[\Tfont{T}_a(s)] \in \Delta(\Sigma \rest A).
	\end{align*}
	
	Furthermore, $\Tfont{T}_a \rest A: A \to \Delta(\Sigma \rest A)$ is a measurable function. Indeed, let $\tilde{D} \in \Delta(\Sigma \rest A)$ and $D \in \Delta(\Sigma)$ such that $\tilde{D} = (\Delta \iota)^{-1}[D]$, then 
	\begin{align*}
		(\Tfont{T}_a \rest A)^{-1}[H_{\tilde{D}}] &= \{s \in A \mid (\Tfont{T}_a \rest A)(s) \cap \tilde{D} \neq \emptyset\} \\
		&= \{s\in A\mid \exists \mu\in \Tfont{T}_a(s), \; \Delta \iota(\mu_A)\in D\} \\
		&= \{s \in A \mid \exists \mu\in \Tfont{T}_a(s), \; \mu\in D\} \\ 
		&= \{s \in A \mid \Tfont{T}_a(s)\cap D\neq \emptyset\} \\
		&= \Tfont{T}_a^{-1}[H_D] \cap A \in \Sigma \rest A. \tag*{\qedhere}
	\end{align*}
\end{proof}

\begin{definition}\label{def:sub-NLMP}
  Let $\lmp{S}$ be an NLMP and let $A \subseteq S$ be measurable such that for
  all $s \in A$, $a \in L$, and any measure $\mu \in \Tfont{T}_a(s)$, $A$ is a
  thick subset of $(S, \Sigma, \mu)$.  We say that $\lmp{A}$ defined in
  \eqref{eq:sub-NLMP} is a \emph{substructure} (or sub-NLMP)
  of $\lmp{S}$.
\end{definition}

\subsection{Upward coherence of bisimulation}
\label{sec:upward-coherence-bisimulation}
\begin{prop}\label{prop:S,s-bisim-A,s}
	Let $\lmp{S}$ be an NLMP and $\lmp{A}$ a substructure. 
	If $s \in A$, then $(\lmp{S}, s) \sim_\s^\times 
	(\lmp{A}, s)$.
\end{prop}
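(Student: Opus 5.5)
The natural witness is the graph of the inclusion, $R \defi \{(s,s) \mid s\in A\} \sbq S\times A$. I will show that $R$ is an external state bisimulation between $\lmp{S}$ and $\lmp{A}$; the statement then follows because $s\mathrel{R}s$ for every $s \in A$.

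The first step is to describe the $R$-closed measurable pairs. Take $(Q,Q')\in\Sigma^\times(R)$, so $Q\in\Sigma$ and $Q'\in\Sigma\rest A$. By Lemma~\ref{lem:pair-equiv}(\ref{item:in-equiv}), the pair is $R$-closed if and only if $R[Q]\sbq Q'$ and $R^{-1}[Q']\sbq Q$. Since $R[Q]=Q\cap A$ and $R^{-1}[Q']=Q'$, this means exactly that $Q'=Q\cap A$. Thus $\Sigma^\times(R)$ consists precisely of the pairs $(Q, Q\cap A)$ with $Q\in\Sigma$.

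Next, fix $s\in A$ and $a\in L$ and check the two transfer conditions.

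\textbf{Zig.} Given $\mu\in\Tfont{T}_a(s)$, take $\mu' \defi \mu_A\in(\Tfont{T}_a\rest A)(s)$. By Theorem~\ref{thm:A-thick-measure-space}, thickness gives $\mu_A(Q\cap A)=\mu(Q)$ for every $Q\in\Sigma$. By the description of $\Sigma^\times(R)$ above, this is exactly $\mu\mathrel{\bar R}\mu_A$.

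\textbf{Zag.} An element of $(\Tfont{T}_a\rest A)(s)$ has the form $\mu_A$ for some $\mu\in\Tfont{T}_a(s)$, by definition of $\Tfont{T}_a \rest A$. That same $\mu$ is the required witness, by the identical computation.

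The only step needing care is the characterization of the closed pairs. There one must use that $Q'$ ranges over $\Sigma\rest A$ and that $R$ is the diagonal on $A$, so that $Q'$ is forced to equal $Q\cap A$. Once that is settled, the remaining verifications are immediate from the defining property of $\mu_A$.
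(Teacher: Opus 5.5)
Your proof is correct and is essentially the paper's own argument: the paper also takes $R=\id\rest A\subseteq S\times A$ as the witness and observes that every measurable $R$-closed pair $(E,E')$ satisfies $E'=E\cap A$. It then obtains both zig and zag from the thickness identity $\mu(E)=\mu_A(E\cap A)$. Your derivation of $Q'=Q\cap A$ via Lemma~\ref{lem:pair-equiv} simply makes explicit a step that the paper asserts without justification.
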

\begin{proof}
	It suffices to show that the relation $R \subseteq S \times A$ given by 
	$R \defi \id \rest A$ is an external bisimulation relation between $(\lmp{S}, s)$ and $(\lmp{A}, s)$. 
	Let $(x, x) \in R$ and $(E, E')$ be an $R$-closed pair such that $E \in \Sigma$ and $E' \in \Sigma \rest A$. 
	Then $E \cap A = E'$. 
	Now, if $\mu \in \Tfont{T}_a(x)$, since $A$ is thick with respect to $\mu$, we have $\mu(E) = \mu(E \cap A) = 
	\mu_A(E')$. 
	Conversely, if $\mu_A \in \Tfont{T}_a \rest A(x)$, then $\mu_A(E') = \mu_A(E \cap A) = \mu(E)$.
\end{proof}

We now turn to the notions of bisimulations between substructures.
As always, we first need to say something about the 
measurable closed sets for a relation on the subspaces.

\begin{lemma}\label{lem:closed-sets-and-pairs-S-S'}
		Let $A \subseteq S$, $A' \subseteq S'$ and $R \subseteq A \times A'$. If $(Q,Q')$ is a $(\Sigma, \Sigma')$-measurable $R$-closed pair, then $(Q\cap A,Q'\cap A')$ is a $(\Sigma\rest A, \Sigma'\rest A')$-measurable $R$-closed pair.
		Conversely, if $(E,E')$ is a $(\Sigma\rest A, \Sigma'\rest A')$-measurable $R$-closed pair, then $E=Q\cap A$ and $E'=Q'\cap A'$ for some $(\Sigma, \Sigma')$-measurable $R$-closed pair $(Q,Q')$.\qed
\end{lemma}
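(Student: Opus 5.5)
For the first half, given a $(\Sigma,\Sigma')$-measurable $R$-closed pair $(Q,Q')$, I will show directly that $(Q\cap A,Q'\cap A')$ is an $R$-closed pair for $R$ viewed as a relation on $A\times A'$. Measurability is immediate from the definitions of $\Sigma\rest A$ and $\Sigma'\rest A'$. For closedness I will use item~\ref{item:in-equiv} of Lemma~\ref{lem:pair-equiv}. Since $R\sbq A\times A'$, we have $R[Q\cap A]\sbq R[Q]\cap A'\sbq Q'\cap A'$, and symmetrically $R^{-1}[Q'\cap A']\sbq Q\cap A$. Here it matters that the relation $R$ is literally the same set whether regarded inside $S\times S'$ or inside $A\times A'$, so the equality $R\cap(Q\times S')=R\cap(S\times Q')$ restricts to the equality relative to $A,A'$.

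For the converse, let $(E,E')$ be a $(\Sigma\rest A,\Sigma'\rest A')$-measurable $R$-closed pair. First I pick $Q_0\in\Sigma$ and $Q'_0\in\Sigma'$ with $E=Q_0\cap A$ and $E'=Q'_0\cap A'$. The issue is that $(Q_0,Q'_0)$ need not be $R$-closed in $S\times S'$, so I have to modify it outside $A$ and $A'$. Since $R\sbq A\times A'$, only points of $A$ and $A'$ ever appear in $R$. So I would check directly, via item~\ref{item:in-equiv} of Lemma~\ref{lem:pair-equiv}, that $(Q_0,Q'_0)$ itself is already $R$-closed as a pair in $S\times S'$. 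Indeed, $R[Q_0]=R[Q_0\cap A]=R[E]\sbq E'\sbq Q'_0$, and likewise $R^{-1}[Q'_0]=R^{-1}[E']\sbq E\sbq Q_0$. So I take $Q=Q_0$ and $Q'=Q'_0$.

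No step is a real obstacle; the argument is bookkeeping once item~\ref{item:in-equiv} of Lemma~\ref{lem:pair-equiv} is used. The one point to be careful about is that $R$-closedness is evaluated relative to ambient sets that differ ($S\times S'$ versus $A\times A'$). The characterization through $R[\cdot]$ and $R^{-1}[\cdot]$ does not depend on the ambient set, and this is why the argument works. Notably, measurability of $A$ and $A'$ is not needed for this lemma.
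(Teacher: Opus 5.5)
Your proof is correct. Both directions follow from the characterization in Lemma~\ref{lem:pair-equiv}(\ref{item:in-equiv}) together with $\dom(R)\subseteq A$ and $\mathrm{ran}(R)\subseteq A'$; your observation that \emph{any} representatives $Q_0\in\Sigma$, $Q_0'\in\Sigma'$ of $E,E'$ already form an $R$-closed pair is the whole content of the converse. The paper states this lemma without proof, and your direct verification is exactly the routine check it leaves to the reader.
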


\begin{lemma}\label{lem:ext-in-A-iff-ext-in-S}
 Let $\lmp{A}\leqslant \lmp{S}$, $\lmp{A'}\leqslant\lmp{S'}$ be two substructures and $R\sbq A\times A'$.
 Then, $R$ is an external state bisimulation between $\lmp{A}$ and $\lmp{A}'$ if and only if is an external state bisimulation between $\lmp{S}$ and $\lmp{S}'$.
\end{lemma}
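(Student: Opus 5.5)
The plan is to compare the two bisimulation conditions clause by clause for a fixed pair $s \mathrel{R} s'$ (so $s\in A$, $s'\in A'$). Since $R\sbq A\times A'$, only states of the substructures are ever related, and the transitions of $\lmp{A}$ at $s$ are exactly $(\Tfont{T}_a\rest A)(s)=\{\mu_A \mid \mu\in\Tfont{T}_a(s)\}$, and similarly for $\lmp{A}'$. The lemma then reduces to a single claim about lifted relations: for $\mu\in\Tfont{T}_a(s)$ and $\mu'\in\Tfont{T}'_a(s')$,
\[
  \mu \mathrel{\bar{R}} \mu' \text{ (lifting w.r.t.\ $\Sigma^\times(R)$ on $S\times S'$)} \iff
  \mu_A \mathrel{\bar{R}} \mu'_{A'} \text{ (lifting w.r.t.\ measurable $R$-closed pairs of $(A,\Sigma\rest A)$, $(A',\Sigma'\rest A')$)}.
\]
Once this is in place, zig and zag transfer in both directions: a witness $\mu'$ in $\lmp{S}'$ yields the witness $\mu'_{A'}$ in $\lmp{A}'$, and a witness $\mu'_{A'}$ in $\lmp{A}'$ comes from some $\mu'\in\Tfont{T}'_a(s')$, which serves as a witness in $\lmp{S}'$. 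Zag is symmetric.

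To prove the claim I will use Lemma~\ref{lem:closed-sets-and-pairs-S-S'} together with thickness. For ($\Rightarrow$), take a $(\Sigma\rest A,\Sigma'\rest A')$-measurable $R$-closed pair $(E,E')$. The lemma gives a measurable $R$-closed pair $(Q,Q')$ with $E=Q\cap A$ and $E'=Q'\cap A'$. Then
\[
  \mu_A(E)=\mu(Q)=\mu'(Q')=\mu'_{A'}(E'),
\]
where the first and last equalities follow from the definition of $\mu_A$ in Theorem~\ref{thm:A-thick-measure-space}. For ($\Leftarrow$), take $(Q,Q')\in\Sigma^\times(R)$. The first half of Lemma~\ref{lem:closed-sets-and-pairs-S-S'} says $(Q\cap A,Q'\cap A')$ is a measurable $R$-closed pair in the substructures, so
\[
  \mu(Q)=\mu_A(Q\cap A)=\mu'_{A'}(Q'\cap A')=\mu'(Q').
\]

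The main obstacle is making sure Lemma~\ref{lem:closed-sets-and-pairs-S-S'} really applies, since it is stated without proof. The subtle direction is extending a relatively measurable closed pair $(E,E')$ to a globally measurable closed pair. Because $R\sbq A\times A'$, any $(Q,Q')$ with $Q\cap A=E$ and $Q'\cap A'=E'$ is automatically $R$-closed: $R[Q]=R[E]\sbq E'\sbq Q'$, and symmetrically for $R^{-1}[Q']$, by Lemma~\ref{lem:pair-equiv}(\ref{item:in-equiv}). Measurable extensions exist by the definition of $\Sigma\rest A$, so I will verify this point explicitly. Thickness of $A$ (respectively $A'$) for every measure in $\Tfont{T}_a(s)$ with $s\in A$ is what makes $\mu_A$ well defined and independent of the chosen extension $Q$.
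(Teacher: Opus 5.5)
Your proposal is correct and takes the same approach as the paper's proof. Both transfer $\bar{R}$ between $\mu,\mu'$ and $\mu_A,\mu'_{A'}$ using the two halves of Lemma~\ref{lem:closed-sets-and-pairs-S-S'} and the identity $\mu_A(Q\cap A)=\mu(Q)$, and your check that any measurable extension of a relatively closed pair is automatically $R$-closed (because $R\sbq A\times A'$) fills in a step the paper leaves unproved.
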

\begin{proof}
	Let $s\rR t$ (necessarily we have $s\in A$ and $t\in A'$).
	
	$(\Rightarrow)$  
	Assume $R$ is a bisimulation between $\lmp{A}$ and $\lmp{A}'$ and let $\mu\in \Tfont{T}_a(s)$. 
	Then $\mu_A\in (\Tfont{T}_a\rest A)(s)$.  
	Since $s\mathrel{R} t$, there exists 
	$\mu'\in \Tfont{T}'_a(t)$ such that 
	$\mu_A \mathrel{\overline{R}} \mu'_{A'}$.  
	We verify that $\mu \mathrel{\overline{R}} \mu'$.
	Using (the first part of) Lemma~\ref{lem:closed-sets-and-pairs-S-S'}, if $(Q,Q')$ is a $(\Sigma, \Sigma')$-measurable $R$-closed pair, then $(Q\cap A,Q'\cap A')$ is a $(\Sigma\rest A, \Sigma'\rest A')$-measurable $R$-closed pair.
	Hence,
	\[
	\mu(Q)=\mu_A(Q\cap A)=\mu'_{A'}(Q'\cap A')=\mu'(Q').
	\]
	Repeating this argument for the zag condition, we conclude that $R$ is an external state bisimulation between $\lmp{S}$ and $\lmp{S}'$.  
	
	$(\Leftarrow)$  
	Assume $R$ is a bisimulation between $\lmp{S}$ and $\lmp{S}'$ and let 
	$\tilde{\mu}\in (\Tfont{T}_a\rest A)(s)$.  
	Then $\tilde{\mu}=\mu_A$ for some $\mu\in \Tfont{T}_a(s)$.  
	Since $s\mathrel{R} t$, there exists $\mu'\in \Tfont{T}'_a(t)$ 
	such that $\mu \mathrel{\overline{R}} \mu'$.  
	We verify that $\mu_A \mathrel{\overline{R}} \mu'_{A'}$. 
	Using (the last part of) Lemma~\ref{lem:closed-sets-and-pairs-S-S'}, if $(E,E')$ is a $(\Sigma\rest A, \Sigma'\rest A')$-measurable $R$-closed pair, then $E=Q\cap A$ and $E'=Q'\cap A'$ for some $(\Sigma, \Sigma')$-measurable $R$-closed pair $(Q,Q')$. So
	\[
	\mu_A(E) = \mu_A(Q\cap A)=\mu(Q)=\mu'(Q')=\mu'_{A'}(Q'\cap A') = \mu'(E').
	\]
	Repeating this argument for the zag condition, we conclude that $R$ is an external state bisimulation between $\lmp{A}$ and $\lmp{A}'$.\qedhere   
\end{proof}

In the case where $\lmp{S}=\lmp{S}'$, we can directly compare external and internal bisimulations because they are of the same type.
Some care is required, however, since the internal notion requires symmetry.
In this setting, Lemma~\ref{lem:S-equals-S'-case} ensures that the two notions coincide on reflexive and symmetric relations.

\begin{corollary}\label{cor:bisim-in-NLMP-A} 
	Let $\lmp{A} \leqslant \lmp{S}$ and $R \subseteq A \times A$.
	\begin{enumerate}
		\item\label{item:sigma-R-rest} $\Sigma(R) \rest A = \Sigma \rest A(R)$.
		\item\label{item:state-on-A-iff-on-S} If $R$ is symmetric and reflexive, then $R$ is a state bisimulation on $\lmp{A}$ if and only if it is a state bisimulation on $\lmp{S}$.
	\end{enumerate}
\end{corollary}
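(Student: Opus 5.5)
For item~\ref{item:sigma-R-rest}, I will reduce everything to closed pairs and apply Lemma~\ref{lem:closed-sets-and-pairs-S-S'} with $S'=S$ and $A'=A$. Throughout, $R\subseteq A\times A$ is regarded also as a relation on $S$. Here $\Sigma(R)\rest A$ means $\{Q\cap A\mid Q\in\Sigma(R)\}$, and $\Sigma\rest A(R)$ means the $R$-closed members of $\Sigma\rest A$.

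For ``$\subseteq$'', take $Q\in\Sigma(R)$. By Lemma~\ref{lem:closed-vs-closed-pair}\eqref{item:E-closed-iff-EE-closed}, $(Q,Q)$ is an $R$-closed measurable pair. The first half of Lemma~\ref{lem:closed-sets-and-pairs-S-S'} then makes $(Q\cap A,Q\cap A)$ a $\Sigma\rest A$-measurable $R$-closed pair, so $Q\cap A$ is $R$-closed in $A$ by the same item.

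For ``$\supseteq$'', take $E\in\Sigma\rest A(R)$. Then $(E,E)$ is a closed pair, and the converse half of the lemma gives an $R$-closed pair $(Q,Q')$ with $Q\cap A=E=Q'\cap A$. The problem is that $Q$ and $Q'$ need not coincide, so $Q$ alone need not be $R$-closed. This is the step I expect to be the main obstacle.

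My first attempt will be to replace the pair by $P\defi Q\cap Q'$. It is measurable and satisfies $P\cap A=E$. Since $R\subseteq A\times A$, the $R$-image of $P$ is $R[E]\subseteq E\subseteq P$, and likewise $R^{-1}[P]=R^{-1}[E]\subseteq E\subseteq P$. So $P\in\Sigma(R)$. In fact this observation makes the appeal to the lemma almost redundant: any measurable $Q$ with $Q\cap A=E$ is $R$-closed whenever $E$ is, because $R$ only sees points of $A$. I will probably present it directly that way.

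For item~\ref{item:state-on-A-iff-on-S}, let $R$ be reflexive on $A$ and symmetric; we view it on $S$ as a relation that is symmetric and contained in $A\times A$. The difficulty is that Lemma~\ref{lem:S-equals-S'-case} is stated for reflexive relations, and $R$ is not reflexive on $S$. So I will argue directly with the closed-set characterization rather than invoke that lemma verbatim.

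By Lemma~\ref{lem:ext-in-A-iff-ext-in-S} with $\lmp{A}'=\lmp{A}$ and $\lmp{S}'=\lmp{S}$, it suffices to show two things:
\begin{enumerate}
\item on $\lmp{A}$, where $R$ is reflexive and symmetric, internal and external state bisimulation coincide, which is exactly Lemma~\ref{lem:S-equals-S'-case};
\item on $\lmp{S}$, for symmetric $R\subseteq A\times A$ and measures that are thick on $A$, internal and external coincide.
\end{enumerate}

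For the second claim, I compare the two lifts on measures $\mu,\mu'$ for which $A$ is thick, which is the only case that arises since they come from $\Tfont{T}_a(s)$ with $s\in A$.

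If $(Q,Q')$ is an $R$-closed measurable pair, then, as computed above, both $Q\cap A$ and $Q'\cap A$ are $R$-closed in $A$, and they coincide. They coincide because the closed pair $(Q\cap A,Q'\cap A)$ of the reflexive relation $R$ on $A$ must be diagonal, by Lemma~\ref{lem:closed-vs-closed-pair}\eqref{item:closed-pair-reflexive-case}. Hence $P=Q\cap Q'\in\Sigma(R)$, and thickness gives $\mu(Q)=\mu(P)$ and $\mu'(Q')=\mu'(P)$.

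Conversely, every $P\in\Sigma(R)$ yields the pair $(P,P)$. Therefore the internal lift \eqref{eq:int-lift} and the external lift \eqref{eq:ext-lift} agree on such measures. Since $R$ is symmetric, the zag condition follows from zig. Consequently, internal bisimulation on $\lmp{S}$ is equivalent to external bisimulation on $\lmp{S}$, which by Lemma~\ref{lem:ext-in-A-iff-ext-in-S} is equivalent to external bisimulation on $\lmp{A}$, which by Lemma~\ref{lem:S-equals-S'-case} is equivalent to internal bisimulation on $\lmp{A}$.
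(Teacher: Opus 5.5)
Your proof is correct and has the same skeleton as the paper's. The paper simply cites Lemmas~\ref{lem:closed-vs-closed-pair}(\ref{item:E-closed-iff-EE-closed}) and~\ref{lem:closed-sets-and-pairs-S-S'} for item~1, and Lemmas~\ref{lem:S-equals-S'-case} and~\ref{lem:ext-in-A-iff-ext-in-S} for item~2. What you add is two steps that this citation leaves implicit or does not literally cover.

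\textbf{Item 1.} The converse half of Lemma~\ref{lem:closed-sets-and-pairs-S-S'} only yields a closed \emph{pair} $(Q,Q')$. Your remark that $R[Q]=R[Q\cap A]$ and $R^{-1}[Q]=R^{-1}[Q\cap A]$ whenever $R\subseteq A\times A$ settles this at once. It shows that every measurable $Q$ whose trace on $A$ is $R$-closed is itself $R$-closed.

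\textbf{Item 2.} The paper's chain also invokes Lemma~\ref{lem:S-equals-S'-case} on $\lmp{S}$, where $R$ is reflexive only on $A$. For such relations the external lift imposes strictly more conditions in general; for instance, it tests the closed pair $(\emptyset,S\sm A)$. Your argument reduces an arbitrary closed pair $(Q,Q')$ to the closed set $Q\cap Q'$, using that the relevant measures are concentrated on $A$. This is exactly what justifies ``internal $\Rightarrow$ external'' on $\lmp{S}$. The other direction holds for any symmetric relation, as the paper notes after Lemma~\ref{lem:closed-vs-closed-pair}.

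\textbf{A shortcut.} Item~1 together with $\mu_A(Q\cap A)=\mu(Q)$ already gives $\mu\mathrel{\bar{R}}\mu'\iff\mu_A\mathrel{\bar{R}}\mu'_A$ for the internal lifts. So item~2 follows directly from item~1, without detouring through external bisimulations and without using reflexivity at all.
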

\begin{proof}
	The first item follows from Lemmas~\ref{lem:closed-vs-closed-pair}(\ref{item:E-closed-iff-EE-closed}) and \ref{lem:closed-sets-and-pairs-S-S'}.
	The second one is a consequence of Lemmas~\ref{lem:S-equals-S'-case} and \ref{lem:ext-in-A-iff-ext-in-S}.
\end{proof}

The next result relates external state
bisimulations between substructures of the same NLMP with its
internal bisimulations.

\begin{lemma}\label{lem:ext-bisim-NLMP-internal}
	Let $\lmp{A}$ and $\lmp{A}'$ be two substructures of $\lmp{S}$.
	If $R \subseteq A \times A'$ is an external state bisimulation, then
	$R \cup R^{-1}$ is a state bisimulation on $\lmp{S}$.
\end{lemma}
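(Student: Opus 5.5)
Here is the plan. By Lemma~\ref{lem:ext-in-A-iff-ext-in-S}, applied with $\lmp{S}'=\lmp{S}$, the relation $R\sbq A\times A'$ is an external state bisimulation between $\lmp{S}$ and $\lmp{S}$. It is more convenient to work with that. The inverse of an external state bisimulation is again one, because the defining conditions are symmetric. Indeed, $(Q,Q')$ is $R$-closed iff $(Q',Q)$ is $R^{-1}$-closed, which follows from Lemma~\ref{lem:pair-equiv}(\ref{item:in-equiv}), and the zig and zag conditions simply swap. So $R^{-1}$ is an external state bisimulation on $\lmp{S}$ as well, and therefore so is $U\defi R\cup R^{-1}$. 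This last step needs a short check, since the lifting for $U$ is taken with respect to $U$-closed pairs, which are fewer than the $R$-closed ones.

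The key step, then, is to show directly that $U$ is a symmetric internal state bisimulation on $\lmp{S}$. I cannot invoke Lemma~\ref{lem:S-equals-S'-case}, because $U$ need not be reflexive. I would verify the internal condition by hand. Take $s\mathrel{U}t$, say $s\mathrel{R}t$ (the case $s\mathrel{R^{-1}}t$ is symmetric), and $\mu\in\Tfont{T}_a(s)$. Since $R$ is an external bisimulation, there is $\mu'\in\Tfont{T}_a(t)$ with $\mu\mathrel{\bar R}\mu'$ in the external sense. It remains to see that $\mu(Q)=\mu'(Q)$ for every $Q\in\Sigma(U)$. By Lemma~\ref{lem:closed-vs-closed-pair}(\ref{item:E-closed-iff-EE-closed}), $Q$ is $U$-closed iff $(Q,Q)$ is a $U$-closed pair. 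Since $R\sbq U$, Lemma~\ref{lem:pair-equiv}(\ref{item:closed-pairs-antimonotonicity}) makes $(Q,Q)$ an $R$-closed pair, and it is measurable. Hence $\mu(Q)=\mu'(Q)$. This yields the internal zig for $U$ at $(s,t)$. The case $t\mathrel{R}s$ uses the zag condition of $R$, or equivalently the zig of $R^{-1}$.

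The only delicate point, and so the main obstacle, is the translation between substructure measures and ambient ones. The external bisimulation is given between $\lmp{A}$ and $\lmp{A}'$, whose measures are the restrictions $\mu_A$. That is exactly why I would first pass to $\lmp{S}$ via Lemma~\ref{lem:ext-in-A-iff-ext-in-S}, which relies on the thickness hypotheses. After that, the argument reduces to the antimonotonicity of closed pairs. Symmetry of $U$ holds by construction, so $U$ is a state bisimulation on $\lmp{S}$.
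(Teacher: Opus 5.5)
Your proof is correct and follows the paper's route. You pass to $\lmp{S}$ via Lemma~\ref{lem:ext-in-A-iff-ext-in-S}, then use that $(Q,Q)$ is a measurable $R$-closed pair for every $Q\in\Sigma(R\cup R^{-1})$ to obtain the internal condition; the paper instead says ``$R\cup R^{-1}$ is a symmetric external bisimulation, hence internal by the remark after Lemma~\ref{lem:closed-vs-closed-pair},'' while you inline that remark together with the antimonotonicity check, which makes your first paragraph's claim that the union is external unnecessary.
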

\begin{proof}
  By Lemma~\ref{lem:ext-in-A-iff-ext-in-S}, $R$ and $R^{-1}$ are external
  state bisimulations between $\lmp{S}$ and $\lmp{S}$. Their union is also a
  \textit{symmetric} external bisimulation. Hence it is also internal by the
  comment after Lemma~\ref{lem:closed-vs-closed-pair}.
\end{proof}

\subsection{Restricting bisimulations}
\label{sec:restricting-bisimulations}

So far, we have worked with relations defined on the base spaces of the substructures, that is, subsets of $A\times A'$.
We now turn to the study of restrictions of relations on $S$ to
substructures.
If $R\sbq S\times S$ and $A, A'\subseteq S$ are fixed, we let $R{\downarrow} \defi R \cap A\times A'$.

\begin{example}\label{exm:counterexm-NLMP-state-restriction}
	We construct a process $\lmp{S}$ showing that the restriction of a
	state bisimulation on $\lmp{S}$ to a substructure $\lmp{A}$ does not
	necessarily yield a state bisimulation on $\lmp{A}$.
	
	Consider the interval $I = (0,1)$ equipped with the Lebesgue measure
	$\leb$, and let $V \subseteq (0,\tfrac{1}{2})$ be a $\leb$-measurable
	non-Borel subset.
	Let $Q \subseteq V$ and $Q' \subseteq I \setminus V$ be Borel sets such
	that $\leb(Q) = \leb(V)$ and $\leb(Q') = \leb(I \setminus V)$, and fix
	$c \in (0,1)$.
	On the measurable space $(S,\Sigma) \defi (I \cup \{s,t\},
	\sigma(\Borel(I) \cup \Power(\{s,t\})))$ we define the transitions
	\begin{align*}
		\tau(s) &= \leb, \\
		\tau(t) &= c\,\leb\rest(0,\tfrac{1}{2}) +
		(1-c)\,\leb\rest[\tfrac{1}{2},1).
	\end{align*}
	If $R \defi \Rel(\{V,\{s,t\}\})$, then
	$\Sigma(R) = \{S,\emptyset,I,\{s,t\}\}$ and therefore $R$ is a state
	bisimulation.
	
	Define $A \defi Q \cup Q' \cup \{s,t\}$. Then $A$ is measurable and	thick for all the transitions.
	The relation $R{\downarrow}$ has equivalence classes
	$Q$, $Q'$, and $\{s,t\}$.
	Moreover, the following strict inclusion holds:
	\[
	\Sigma(R)\rest A
	= \{A,\{s,t\},Q \cup Q',\emptyset\}
	\subsetneq
	\sigma(\{Q,Q',\{s,t\}\})
	= \Sigma\rest A(R{\downarrow}).
	\]
	It follows that $R{\downarrow}$ is not a state bisimulation on $\lmp{A}$,
	since $(s,t) \in R{\downarrow}$, but $Q \in \Sigma\rest A(R{\downarrow})$ satisfies
	\[
	\tau(s)(Q) = \leb(V) \neq c\,\leb(V) = \tau(t)(Q).
	\]
\end{example}

A problem in the previous example is that the bisimulation $R$ is not
``well behaved.''

\begin{lemma}[{\cite[Lemma~5.4.6]{Gao2008InvariantDS}}]
	\label{lem:invariant-analytic-separation}
	Let $E$ be an analytic equivalence relation on a standard Borel space
	$X$.
	Let $B,C \subseteq X$ be two disjoint analytic $E$-invariant sets.
	Then there exists a Borel $E$-invariant set $D$ such that
	$B \subseteq D$ and $D \cap C = \emptyset$.
\end{lemma}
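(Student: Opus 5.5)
The plan is to derive the statement from the Lusin separation theorem together with a saturation argument: starting from a Borel separator, we close it up under $E$ in countably many stages. Write $[P]_E \defi \{x \in X \mid \exists y\in P\; x \mathrel{E} y\}$ for the $E$-saturation of a set $P$. The key observation is that if $P$ is analytic, so is $[P]_E$. Indeed, $[P]_E$ is the projection onto the first coordinate of the analytic set $E\cap (X\times P)$, and analytic sets are closed under Borel images.

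We build by recursion an increasing sequence of Borel sets $D_0\subseteq D_1\subseteq\cdots$ with $B\subseteq D_0$ and $D_n\cap C=\emptyset$ for every $n$. First apply Lusin separation to the disjoint analytic sets $B$ and $C$ to get a Borel $D_0 \supseteq B$ with $D_0\cap C=\emptyset$. Given $D_n$, consider $[D_n]_E$, which is analytic by the observation above. It is disjoint from $C$: if $x\in C$ were $E$-related to some $y\in D_n$, then $E$-invariance of $C$ would give $y\in C$, contradicting $D_n\cap C=\emptyset$. Lusin separation applied to $[D_n]_E$ and $C$ then yields a Borel $D_{n+1}\supseteq [D_n]_E$ with $D_{n+1}\cap C=\emptyset$. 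Since $E$ is reflexive, $D_n\subseteq [D_n]_E\subseteq D_{n+1}$.

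Set $D\defi\bigcup_n D_n$. This set is Borel, contains $B$, and misses $C$. It remains to check that $D$ is $E$-invariant. If $x\in D$, say $x\in D_n$, and $x\mathrel{E}x'$, then $x'\in [D_n]_E\subseteq D_{n+1}\subseteq D$; symmetry of $E$ makes the direction of the relation irrelevant.

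The only step that uses that $E$ is analytic, and the one needing care, is showing that $[D_n]_E$ is analytic. This rests on expressing it as a projection of $E\cap(X\times D_n)$, an intersection of an analytic set with a Borel set, and on the closure of analytic sets under projections of standard Borel products. The remaining steps are routine set manipulations.
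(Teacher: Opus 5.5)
Your proof is correct. It is the standard argument: apply Lusin separation repeatedly to the $E$-saturations, each of which is analytic as a projection of $E\cap(X\times D_n)$, and then take the countable union. The paper gives no proof of its own; it only quotes Gao's Lemma~5.4.6, where the result is proved by this same argument. Note that your construction uses only the $E$-invariance of $C$, so the invariance hypothesis on $B$ is not needed.
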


\begin{lemma}[{\cite[Cor.~2, p.~73]{arveson1976invitation}}]
	\label{lem:seq-borel-functions-quotient}
	Let $X$ be an analytic measurable space and let $R$ be an equivalence
	relation on $X$.
	Suppose there exists a sequence of Borel functions
	$f_1,f_2,\dots : X \to \mathbb{R}$ such that for all $x,y \in X$,
	\[
	x \mathrel{R} y \iff \forall n\; f_n(x) = f_n(y).
	\]
	Then the quotient space $X/R$ is analytic.
\end{lemma}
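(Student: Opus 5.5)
We will realize the quotient $X/R$ concretely inside the Polish space $\mathbb{R}^{\N}$ by means of the separating family $(f_n)$. Since $X$ is an analytic measurable space, we may assume that $X$ is an analytic subset of some Polish space $Y$, endowed with the relative Borel $\sigma$-algebra. Define $F : X \to \mathbb{R}^{\N}$ by $F(x) \defi (f_n(x))_{n\in\N}$. Each coordinate of $F$ is Borel, so $F$ is Borel measurable into the product. The hypothesis says exactly that $x \mathrel{R} y \iff F(x) = F(y)$, so $R$ is the kernel of $F$. Hence $F$ factors through a bijection
\[
\hat F : X/R \to F[X], \qquad \hat F([x]_R) \defi F(x).
\]
We equip $X/R$ with the quotient $\sigma$-algebra, whose measurable sets correspond to the $R$-invariant measurable subsets of $X$.

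Next we check that the target is an analytic space. $F[X]$ is the image of an analytic set under a Borel map into a Polish space, so it is an analytic subset of $\mathbb{R}^{\N}$. It therefore suffices to show that $\hat F$ is a measurable isomorphism between $X/R$ and $F[X]$ with its relative Borel structure.

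Measurability of $\hat F$ is immediate. For a Borel $D \subseteq \mathbb{R}^{\N}$, the preimage $\hat F^{-1}[D \cap F[X]]$ corresponds to $F^{-1}[D]$. This set is Borel in $X$ and $R$-invariant, because $F$ is constant on $R$-classes.

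The main step is measurability of $\hat F^{-1}$. Take an $R$-invariant Borel set $B \subseteq X$; we must show that $F[B]$ is relatively Borel in $F[X]$. The sets $F[B]$ and $F[X\setminus B]$ are both analytic, being Borel images of Borel subsets of an analytic space. They are disjoint: if $F(x) = F(y)$ with $x \in B$, then $x \mathrel{R} y$, so $y \in B$ by invariance. By the Lusin separation theorem there is a Borel $D \subseteq \mathbb{R}^{\N}$ with $F[B] \subseteq D$ and $D \cap F[X\setminus B] = \emptyset$. Consequently $F[B] = D \cap F[X]$, which is relatively Borel. This separation step is where both hypotheses are used: analyticity of $X$ makes the two images analytic, and invariance of $B$ makes them disjoint.

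Therefore $\hat F$ is an isomorphism of measurable spaces between $X/R$ and the analytic set $F[X]$, and $X/R$ is analytic.
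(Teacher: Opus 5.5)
Your proof is correct. Realizing $X/R$ as the analytic set $F[X]\subseteq\mathbb{R}^{\N}$ via $F=(f_n)_{n}$, and getting measurability of $\hat F^{-1}$ by separating the disjoint analytic sets $F[B]$ and $F[X\setminus B]$ for each $R$-invariant Borel $B$, is essentially the standard proof of the result the paper cites from Arveson. The paper gives no proof of its own, and your direct appeal to the Lusin separation theorem is the same ingredient that Arveson's unique-structure results for analytic Borel spaces rest on.
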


Recall that an equivalence relation $R$ on a standard Borel space is
smooth if and only if there exists a countable family $\Fam$ of Borel
sets such that $R = \Rel(\Fam)$.
In particular, the hypotheses of the previous lemma are satisfied by
taking the characteristic functions of the sets in $\Fam$.

\begin{lemma}\label{lem:restriction-smooth-is-NLMP-state-bisim}
	Suppose that $\lmp{A}$ is a substructure of $\lmp{S}$ and that
	$R \subseteq S \times S$ is a smooth state bisimulation.
	Then $R{\downarrow}$ is a state bisimulation on
	$\lmp{A}$.
\end{lemma}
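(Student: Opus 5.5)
The plan is to show the inclusion $\Sigma\rest A(R{\downarrow}) \subseteq \Sigma(R)\rest A$; the converse inclusion is immediate, and Example~\ref{exm:counterexm-NLMP-state-restriction} shows that this is exactly the inclusion that can fail. Once it holds, the bisimulation condition transfers directly. Let $s \mathrel{R{\downarrow}} t$ with $s,t\in A$, and take $\mu_A \in (\Tfont{T}_a\rest A)(s)$ coming from $\mu\in\Tfont{T}_a(s)$. Since $R$ is a state bisimulation, there is $\mu'\in\Tfont{T}_a(t)$ with $\mu\mathrel{\bar R}\mu'$. For any $E\in\Sigma\rest A(R{\downarrow})$, the inclusion provides $Q\in\Sigma(R)$ with $E = Q\cap A$. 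Thickness of $A$ then gives
\[
\mu_A(E)=\mu(Q)=\mu'(Q)=\mu'_A(E),
\]
so $\mu_A \mathrel{\overline{R{\downarrow}}} \mu'_A$. Symmetry of $R{\downarrow}$ is inherited from $R$.

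To prove the inclusion I will use smoothness. Throughout I assume, as in this part of the paper, that $(S,\Sigma)$ is standard Borel. Write $R = \Rel(\Fam)$ with $\Fam=\{F_n\}_n$ a countable family of Borel sets. Set $f: S\to 2^\N$, $f(x) = (\chi_{F_n}(x))_n$. This map is Borel, and $x\mathrel{R}y \iff f(x)=f(y)$; in particular $R = (f\times f)^{-1}[\Delta_{2^\N}]$ is Borel, hence analytic. Now fix an $R{\downarrow}$-closed $E\in\Sigma\rest A$. Since $A$ is measurable, both $E$ and $A\sm E$ are Borel subsets of $S$. Their $R$-saturations $B\defi f^{-1}[f[E]]$ and $C\defi f^{-1}[f[A\sm E]]$ are analytic, being preimages under a Borel map of analytic images of Borel sets, and they are $R$-invariant. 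They are also disjoint. Indeed, if $f(e)=f(e')$ with $e\in E$ and $e'\in A\sm E$, then $e \mathrel{R} e'$ with both points in $A$, so $e\mathrel{R{\downarrow}}e'$, contradicting the $R{\downarrow}$-closedness of $E$.

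Lemma~\ref{lem:invariant-analytic-separation}, applied to the analytic equivalence relation $R$, then yields a Borel $R$-invariant $Q$ with $B\subseteq Q$ and $Q\cap C=\emptyset$. Alternatively, one could apply Lusin separation to $f[E]$ and $f[A\sm E]$ in $2^\N$ and pull the separating set back by $f$. In either case $Q\in\Sigma(R)$, and $Q\cap A = E$ because $E\subseteq B\subseteq Q$ and $A\sm E\subseteq C$ is disjoint from $Q$.

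The main obstacle is precisely this separation step. One has to be careful that $E$ and $A\sm E$ are genuinely Borel in $S$, which uses that $A$ is measurable, and that their saturations are disjoint, which uses closedness under $R{\downarrow}$ rather than under $R$. The example shows that with a non-smooth $R$ the saturations need not be separable by an $R$-closed measurable set, so smoothness is exactly what makes the analytic saturations available and separable.
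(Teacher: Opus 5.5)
Your proof is correct and follows the same route as the paper. Both arguments separate the $R$-saturation of an $R{\downarrow}$-closed set from an $R$-invariant analytic set using Lemma~\ref{lem:invariant-analytic-separation}, obtain $Q\in\Sigma(R)$ with $Q\cap A=E$, and transfer $\mu\mathrel{\bar{R}}\mu'$ through the thickness of $A$; the paper gets analyticity from the quotient $S/R$ via Lemma~\ref{lem:seq-borel-functions-quotient}, while you get it from the Borel map $f$ into $2^\N$.

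Your choice of $C$ as the saturation of all of $A\setminus E$ is exactly what makes $Q\cap A=E$ hold. The paper's written proof instead separates only from the class of a single point, $C=\pi^{-1}[\pi[\{s\}]]$ (presumably meant with $z$), and that alone does not yield $W=D\cap A$.
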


\begin{proof}
	By Lemma~\ref{lem:seq-borel-functions-quotient}, the quotient space $S/R$
	is analytic.
	Assume that $R{\downarrow}$ is nonempty, and let $(s,s') \in R{\downarrow}$ and
	$\mu_A \in (\Tfont{T}_a\rest A)(s)$.
	Let $\mu' \in \Tfont{T}_a(s')$ be such that $\mu \mathrel{\overline{R}} \mu'$.
	We show that $\mu_A \mathrel{\overline{R{\downarrow}}} \mu'_A$.
	By Corollary~\ref{cor:bisim-in-NLMP-A}(\ref{item:sigma-R-rest}),
	$(\Sigma\rest A)(R{\downarrow}) = \Sigma(R{\downarrow})\rest A$, and since $A$ is
	measurable, $(\Sigma\rest A)(R{\downarrow}) \subseteq \Sigma(R{\downarrow})$.
	If $W \in (\Sigma\rest A)(R{\downarrow})$, then $W \in \Sigma$ and $W \subseteq A$.
	If $W = A$, then
	$\mu_A(W) = \mu(A) = \mu(S) = \mu'(S) = \mu'_A(W)$.
	If $W \subsetneq A$, let $z \in A \setminus W$ and define
	$B \defi \pi^{-1}[\pi[W]]$ and $C \defi \pi^{-1}[\pi[\{s\}]]$.
	Both sets are analytic $R$-invariant and satisfy $B \cap C = \emptyset$.
	By Lemma~\ref{lem:invariant-analytic-separation}, there exists
	$D \in \Sigma(R)$ containing $B$ and disjoint from $C$.
	Hence $W = D \cap A \in \Sigma(R)\rest A$, and therefore
	$\mu_A(W) = \mu(D) = \mu'(D) = \mu'_A(W)$.	
\end{proof}

The reliance of this Lemma on the properties of standard Borel spaces is not merely a convenience. 
Indeed, outside this class of spaces, the restriction property can fail even for simple relations. In \cite[Exm.~3.37]{survey_bisim}, an LMP is constructed using non-measurable sets in which two states $s,t$ are state bisimilar in the total process, yet they are not related by any external bisimulation between two substructures containing them. 
This pathology demonstrates that without the structural guarantees of standard Borel spaces, global bisimulations do not necessarily induce local ones. 
Consequently, to ensure that restrictions preserve good properties, we will focus on a suitable subclass of NLMPs.

\begin{definition}\label{def:partition-in-measure}
  Let $A\in \Sigma$. We say that a family $\{B_i \mid i \in I \}$ of measurable
  subsets of $A$ is a
  \emph{partition in measure} of $A$ if
  \begin{itemize}
  \item $I$ is countable;
  \item $\nu(A) = \nu(\union \{B_i \mid i \in I \})$;
  \item $i\neq j \implies \nu(B_i\cap B_j) = 0$.
  \end{itemize}
\end{definition}
Note that if $U\subseteq A$ is measurable, $\{U\cap B_i \mid i \in I \}$ is also a
partition in measure of $U$.

\begin{lemma}\label{lem:partition-sum}
  If $\{B_i \mid i \in I \}$ is a partition in measure of $A$, then
  \[
    \nu(A) = \nu_A\bigl(\union\{B_i \mid i \in I \}\bigr) 
    = \sum \{\nu_A(B_i) \mid i \in I \}. \qed
  \]
\end{lemma}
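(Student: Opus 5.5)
We argue directly from the definition of a partition in measure, together with countable additivity of the measure $\nu$ on $\Sigma$; here $\nu$ is a fixed measure on $(S,\Sigma)$ and $\nu_A$ is its restriction to $\Sigma\rest A$, which coincides with $\nu$ on measurable subsets of $A$ because $A\in\Sigma$ (see the remark after Theorem~\ref{thm:A-thick-measure-space}). The identity has two equalities, and we handle them separately.

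\textbf{First equality.} Since $I$ is countable and each $B_i$ is measurable, $U\defi\union\{B_i\mid i\in I\}$ lies in $\Sigma$. Since each $B_i\subseteq A$, also $U\subseteq A$, so $\nu_A(U)=\nu(U)$. The second clause of Definition~\ref{def:partition-in-measure} gives $\nu(U)=\nu(A)$, and this yields the first equality.

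\textbf{Second equality.} The sets $B_i$ are only almost disjoint, so we first disjointify them. Fix an enumeration $i_0,i_1,\dots$ of $I$, and define
\[
C_k \defi B_{i_k}\setminus \union_{l<k} B_{i_l}.
\]
Each $C_k$ is measurable, the $C_k$ are pairwise disjoint, and their union is $U$. Countable additivity then gives $\nu(U)=\sum_k \nu(C_k)$.

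It remains to check that $\nu(C_k)=\nu(B_{i_k})$. The difference $B_{i_k}\setminus C_k$ equals $\union_{l<k}(B_{i_k}\cap B_{i_l})$. This is a finite union of sets, each of measure zero by the third clause of Definition~\ref{def:partition-in-measure}, so it is $\nu$-null. Hence $\nu(C_k)=\nu(B_{i_k})$.

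Substituting gives $\nu_A(U)=\sum_{i\in I}\nu_A(B_i)$. The sum does not depend on the enumeration, because all its terms are nonnegative.

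\textbf{Main obstacle.} The only delicate point is this passage from almost-disjointness to genuine disjointness; the rest is bookkeeping. If $\nu(A)=\infty$ is allowed, the argument still holds verbatim, since the null-set subtraction only uses monotonicity and finite additivity.
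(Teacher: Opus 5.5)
Your proof is correct. The paper states this lemma without proof (it is marked \qed as routine), and your argument is exactly the standard verification being left to the reader: the first equality comes directly from the second clause of Definition~\ref{def:partition-in-measure}, and the second comes from disjointifying along an enumeration of $I$ and noting that the overlaps $B_{i_k}\cap B_{i_l}$ are null. The same computation, carried out inside the measure space $(A,\Sigma\rest A,\nu_A)$, also covers the later use in Lemma~\ref{lem:R-bar-interpolation}, where $A$ is only assumed to be thick.
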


Most of our results will depend on relations $R$ being \textit{measurably generated}, that is, that are of the form $R = \Rel(\Lambda)$ for some $\Lambda \subseteq \Sigma$. 
All such relations satisfy $R=\Rel(\Sigma(R))$.
 
\begin{lemma}\label{lem:R-bar-interpolation}
  Let $R$ be an equivalence relation on $S$ such that $R=\Rel(\Sigma(R))$,
  $A,A'\subseteq S$, $\nu, \nu'\in \Delta(S)$ such that $A$, $A'$ are thick for
  $\nu$, $\nu'$ respectively. Assume moreover that there exist $B_i\in
  \Sigma\rest A$  and
  $B_i'\in \Sigma\rest A'$ (for $i\in I$) such that
  \begin{enumerate}
  \item\label{item:1} $\{B_i \mid i \in I \}$ is a partition in measure of $A$;
  \item\label{item:2} $\{B_i' \mid i \in I \}$ is a partition in measure of $A'$;
  \item\label{item:3} for all $i\in I$, $\nu_A(B_i) = \nu'_{A'}(B_i')$; and
  \item\label{item:4}
    for each $i\in I$ there is an $R$-class $C$ %
    such that
    \begin{enumerate}
    \item\label{item:4a} $C \cap A \subseteq B_i$ and for every $Q\in\Sigma$, if
      $C\cap A \subseteq Q $, then $\nu_A(Q\cap B_i) = \nu_A(B_i)$; and
    \item\label{item:4b} analogous to Item~\ref{item:4a} but with $B_i'$ and $A'$.
    \end{enumerate}
  \end{enumerate}
  Then $\nu_A \mathrel{\overline{R{\downarrow}}} \nu'_{A'}$.
\end{lemma}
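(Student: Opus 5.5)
We have to show that $\nu_A(E)=\nu'_{A'}(E')$ for every measurable $R{\downarrow}$-closed pair $(E,E')$, where $E\in\Sigma\rest A$, $E'\in\Sigma\rest A'$ and $R{\downarrow}=R\cap(A\times A')$. (If the lift is read internally, with $A=A'$, we test single $R{\downarrow}$-closed sets $E$. By Lemma~\ref{lem:closed-vs-closed-pair} this is the diagonal case $E=E'$, so the same argument covers it.) Lemma~\ref{lem:partition-sum} splits both sides:
\[
\nu_A(E)=\sum_{i\in I}\nu_A(E\cap B_i),\qquad \nu'_{A'}(E')=\sum_{i\in I}\nu'_{A'}(E'\cap B'_i).
\]
This uses the remark that $\{E\cap B_i\}$ is again a partition in measure of $E$, and similarly on the primed side. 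Since the two sums are indexed by the same $I$, it suffices to prove the termwise equalities $\nu_A(E\cap B_i)=\nu'_{A'}(E'\cap B'_i)$.

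Fix $i$, and let $C$ be the $R$-class given by Item~\ref{item:4}. The goal is a dichotomy: each term equals either the full mass $\nu_A(B_i)=\nu'_{A'}(B'_i)$ (Item~\ref{item:3}) or $0$, and the same alternative occurs on both sides. We assume $C$ meets both $A$ and $A'$; if it missed one of them, condition~\ref{item:4a} or~\ref{item:4b} would force $\nu_A(B_i)=0$ or $\nu'_{A'}(B'_i)=0$, taking $Q=\emptyset$. By Item~\ref{item:3} both masses would then vanish and the $i$-th terms would trivially agree. Since $C$ is a single $R$-class, $C\cap A$ and $C\cap A'$ are pairwise $R{\downarrow}$-related. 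Closedness of $(E,E')$ in the form of Lemma~\ref{lem:pair-equiv}(\ref{item:in-equiv}) then gives two alternatives. Either $C\cap A\subseteq E$ and $C\cap A'\subseteq E'$, or both are disjoint from $E$ and $E'$ respectively.

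In the first alternative, write $E=Q\cap A$ with $Q\in\Sigma$. Then $C\cap A\subseteq Q$, so \ref{item:4a} yields $\nu_A(E\cap B_i)=\nu_A(Q\cap B_i)=\nu_A(B_i)$, and similarly \ref{item:4b} gives $\nu'_{A'}(E'\cap B'_i)=\nu'_{A'}(B'_i)$. In the second alternative we apply the same reasoning to the complementary pair $(A\setminus E, A'\setminus E')$. This pair is again $R{\downarrow}$-closed by Lemma~\ref{lem:pair-equiv}(\ref{item:closure-prop}), so the first alternative applies to it and both complements carry full mass on their blocks. Since $\nu_A(E\cap B_i)+\nu_A(B_i\setminus E)=\nu_A(B_i)$, both terms are $0$. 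Summing over $i$ finishes the proof.

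The main obstacle is the assumption that $C$ meets both $A$ and $A'$, which is needed so that closedness of the pair actually links $E$ and $E'$ on the $i$-th block. The degenerate cases have to be disposed of via the $Q=\emptyset$ argument above, or excluded by reading \ref{item:4} as implicitly requiring nonempty intersections. The hypothesis $R=\Rel(\Sigma(R))$ does not seem needed in this termwise argument. I expect it to matter only if $R{\downarrow}$-closedness must be transferred back to $\Sigma(R)$-sets, as was done in the proof of Lemma~\ref{lem:restriction-smooth-is-NLMP-state-bisim}.
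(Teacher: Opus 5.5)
Your proof is correct and follows essentially the paper's route: decompose both sides along the partitions in measure, then show block by block that the class $C$ from Item~4 lies entirely inside or entirely outside the closed pair, so each block contributes either the full mass $\nu_A(B_i)=\nu'_{A'}(B_i')$ or zero on both sides. The paper phrases this as ``positive on one side iff positive on the other, and positive implies full'' and then reindexes the sums, rather than proving your termwise equality via the complementary pair; like you, it never actually uses the hypothesis $R=\Rel(\Sigma(R))$ in this proof.
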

\begin{proof}
  Let $(U,U')$ be an $R{\downarrow}$-closed measurable pair. We need to show that
  $\nu_A(U) =\nu'_{A'}(U')$.

  First note that $\{U\cap B_i \mid i \in I \}$ is a partition in measure of $U$  and the
  same holds for $B_i'$, $U'$. We will show that
  \begin{enumerate}[label=(\textsc{\roman*})]
  \item\label{item:6} $\nu_A(U\cap B_i) >0 \iff \nu'_{A'}(U'\cap B_i')>0$; and
  \item\label{item:7} $\nu_A(U\cap B_i) >0 \implies \nu_{A}(U\cap B_i) =
  \nu_{A}(B_i)$, and the same with primed variables.
  \end{enumerate}
  Assume $\nu_A(U\cap B_i) >0$. By Item~\ref{item:4a} (with $Q=B_i\sm U$) we
  conclude that $U \cap C \cap A$ is nonempty; we show that $C\cap A' \subseteq
  U'$. Let $u \in U \cap C \cap A$ be arbitrary and let $u' \in C \cap A'$; we
  have that $u\rR u'$ and hence $u'\in U'$ since $(U,U')$ is an
  $R{\downarrow}$-closed pair. We have:
  \begin{align*}
    \nu'_{A'}(U'\cap B_i')
    & =  \nu'_{A'}(B_i') && \text{by Item~\ref{item:4b},}\\
    & =  \nu_{A}(B_i) && \text{by Item~\ref{item:3},}\\
    &\geq\nu_A(U\cap B_i)\\
    &> 0  && \text{by assumption.}
  \end{align*}
  
  We now can calculate as follows:
  \begin{align*}
    \nu_A(U) &= \nu_A\bigl(\union\{U\cap B_i \mid i \in I \}\bigr) && \{U\cap
      B_i\}_{i\in I}
      \text{ partitions,}\\
    &= \sum \{\nu_A(U\cap B_i) \mid i \in I \} &&  \text{by
      Lemma~\ref{lem:partition-sum},} \\
    &= \sum \{\underline{\nu_A(U\cap B_i)} \mid i \in I, \nu_A(U\cap B_i) > 0 \} \\
    &= \sum \{\nu_A(B_i) \mid i \in I, \nu_A(U\cap B_i) > 0 \}  &&  \text{by 
      Item~\ref{item:7},} \\
    &= \sum \{\nu'_{A'}(B_i') \mid i \in I, \underline{\nu_A(U\cap B_i) > 0} \}  &&
    \text{by Item~\ref{item:3},}\\
    &= \sum \{\nu'_{A'}(B_i') \mid i \in I, \nu'_{A'}(U'\cap B_i') > 0 \}   &&  \text{by 
      Item~\ref{item:6},} \\
    &= \sum \{\nu'_{A'}(U' \cap B_i') \mid i \in I, \nu'_{A'}(U'\cap B_i') > 0 \}   &&  \text{by 
      Item~\ref{item:7},} \\
    &= \sum \{\nu'_{A'}(U' \cap B_i') \mid i \in I \} \\
    &= \nu'_{A'}\bigl(\union\{U'\cap B_i' \mid i \in I \}\bigr) \\
    &= \nu'_{A'}(U'). &&\qedhere
  \end{align*}
\end{proof}

We say that a measure  $\nu \in \Delta(S)$ is \emph{countably} (resp., \emph{finitely})
\emph{supported} or that it has \emph{countable} (\emph{finite}) \emph{support} if there is a
countable (finite) subset $N\subseteq S$ such that $\nu(S\sm N) = 0$. For such a
measure $\nu$, its \emph{support} is
\[
  \supp\nu \defi \{ s \in S \mid \nu(\{s\}) > 0 \}.
\]
\begin{lemma}\label{lem:R-bar-interpolation-countable-supp}
  Let $R$ be an equivalence relation on $S$ such that $R=\Rel(\Sigma(R))$,
  $A,A'\in\Sigma$, $\nu, \nu'\in \Delta(S)$ having countable supports and such
  that $A$, $A'$ are thick for $\nu$, $\nu'$ respectively. If 
  $\nu \mathrel{\overline{R}} \nu'$, then 
  $\nu_A \mathrel{\overline{R{\downarrow}}} \nu'_{A'}$.
\end{lemma}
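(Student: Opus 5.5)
The plan is to apply Lemma~\ref{lem:R-bar-interpolation} with partitions indexed by the $R$-classes that meet the supports. Let $N=\supp\nu$ and $N'=\supp\nu'$, both countable. Because $A$ is thick for $\nu$ and $A\in\Sigma$, we have $\nu(S\sm A)=0$. Hence every $s\in N$ lies in $A$, since singletons... Here is a first obstacle: singletons need not be measurable. The paper's notion of support uses $\nu(\{s\})$, so we will assume singletons of points in the support are measurable, as in standard Borel spaces. With that assumption, $\nu(\{s\})>0$ together with $\nu(S\sm A)=0$ forces $s\in A$. The same argument gives $N'\subseteq A'$.

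Next I would choose the index set $I$. It will be the countable set of $R$-classes $C$ that meet $N\cup N'$. For each such $C$, set $B_C\defi C\cap N$ and $B'_C\defi C\cap N'$; these are countable sets of points, hence measurable, and they lie in $A$ and $A'$ respectively. Conditions \ref{item:1} and \ref{item:2} of Lemma~\ref{lem:R-bar-interpolation} then hold. The $B_C$ are pairwise disjoint, and their union is $N$, which carries full $\nu$-measure, equal to $\nu(A)$. For condition~\ref{item:4a}, note that $C\cap A\supseteq B_C$ is not an issue in that direction. What the lemma actually requires is $C\cap A\subseteq B_i$, which fails in general. So I would instead take $B_C\defi C\cap A$ when it is measurable, which need not happen. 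The better course is to adapt the proof of Lemma~\ref{lem:R-bar-interpolation} directly rather than match its hypotheses verbatim: only properties \ref{item:6} and \ref{item:7} are used, with $B_C=C\cap N$.

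Condition \ref{item:3} is the key step. I need $\nu(C\cap N)=\nu'(C\cap N')$ for each class $C$. Since $R=\Rel(\Sigma(R))$, the class $C$ is the intersection of all $R$-closed measurable sets containing it. The countable set $N\cup N'$ meets only countably many other classes. For each such class $D\neq C$, pick $Q_D\in\Sigma(R)$ containing $C$ but not $D$. Then $Q\defi\bigcap_D Q_D\in\Sigma(R)$ contains $C$, and $Q\cap(N\cup N')=C\cap(N\cup N')$. Since $\nu\mathrel{\overline{R}}\nu'$, we get $\nu(C\cap N)=\nu(Q)=\nu'(Q)=\nu'(C\cap N')$.

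Finally, take any $R{\downarrow}$-closed measurable pair $(U,U')$ with $U\subseteq A$ and $U'\subseteq A'$, and compute
\[
\nu_A(U)=\sum_{s\in U\cap N}\nu(\{s\}).
\]
To finish I show that $U$ contains either all of $C\cap N$ or none of it, and similarly for $U'$, with the same choice in both. Let $u\in U\cap C\cap N$ and $v\in C\cap(N\cup N')$. Then $u\rR v$, both lie in $A\cup A'$, and closedness of the pair gives $v\in U'$ if $v\in A'$. To get $v\in U$ when $v\in A$, I apply closedness twice, passing through a point of $C\cap A'$ if one exists. If no such point exists, then $C\cap N'=\emptyset$, so $\nu'(C\cap N')=0=\nu(C\cap N)$, and the class contributes nothing. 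The main obstacle is this bookkeeping of where the points of $C$ lie relative to $A$ and $A'$. Once it is settled, summing over the classes gives $\nu_A(U)=\nu'_{A'}(U')$.
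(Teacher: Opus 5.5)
Your proof is correct, but after the shared first step it departs from the paper. Both arguments construct, for each $R$-class $C$ meeting $N\cup N'$, a set $Q\in\Sigma(R)$ with $C\subseteq Q$ and $Q\cap(N\cup N')=C\cap(N\cup N')$; this is the paper's $Q_C$.

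The paper then takes the blocks $B_i\defi Q_{C_i}\cap A$ and $B_i'\defi Q_{C_i}\cap A'$. These can overlap, though only in $\nu$-null sets. What matters is that they contain $C\cap A$ and $C\cap A'$, which is exactly what makes Item~\ref{item:4a} hold. The paper concludes by verifying all hypotheses of Lemma~\ref{lem:R-bar-interpolation}.

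You use $Q$ only for the classwise identity $\nu(C\cap N)=\nu'(C\cap N')$, and then redo the interpolation argument by hand on atoms. The sets $C\cap N$ form an exact partition of $N$. $R{\downarrow}$-closedness, chained through a point of $C\cap A'$, gives the simultaneous all-or-nothing behaviour of $U$ and $U'$ on each class, after which the two countable sums match.

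The paper's route is modular: the interpolation lemma concerns partitions in measure and never mentions supports. The price is checking Item~\ref{item:4a} and the null-overlap computation. Yours is shorter and more elementary, but specific to atomic measures.

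Two small remarks. First, since support points have positive mass, your identity already shows that $C\cap N\neq\emptyset$ iff $C\cap N'\neq\emptyset$ for every relevant class. As $N'\subseteq A'$, the case with no point of $C\cap A'$ is vacuous, not merely a class contributing zero. Second, your assumption that singletons are measurable is not an extra hypothesis: the paper's definition of $\supp$ and its use of $\nu(S\sm\supp\nu)=0$ presuppose it as well.

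In a final write-up, drop the false start with $B_C\defi C\cap A$ and say up front that you bypass Lemma~\ref{lem:R-bar-interpolation}.
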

\begin{proof}
  For every $x,y \in S$ such that $x\not\rR y$, let $Q_{x,y}\in \Sigma(R)$ such that
  $Q_{x,y}\cap \{ x, y \} = \{ x \}$. Let $C$ be an $R$-equivalence class and any
  $x\in C$, define
  \[
    Q_C \defi \bigcap\{ Q_{x,y} \mid y \in (\supp\nu \cup \supp\nu') \sm C \}.
  \]
  We have that $C\subseteq Q_C\in\Sigma(R)$ and
  \begin{equation}
    \label{eq:QC_inter_sup_eq_C_inter_sup}
     Q_C \cap (\supp\nu \cup \supp\nu') = C \cap (\supp\nu \cup
     \supp\nu').
  \end{equation}
  We claim that 
  \[
    \{Q_C \cap A \mid C \text{ eq.~class}, C\cap (\supp\nu \cup
     \supp\nu') \neq \emptyset \}
  \]
  and analogously with $A'$, are partitions in
  measure satisfying the hypothesis of Lemma~\ref{lem:R-bar-interpolation}, and
  hence we obtain the conclusion.

  Both families are countable and consist of measurable sets. Moreover, if $C$
  is an $R$-class, we have
  \begin{align*}
    \nu_A(Q_C\cap A)
    &= \nu(Q_C) && A \text{ thick,}\\
    &= \nu'(Q_C) && Q_C \text{ is $R$-closed,}\\
    &= \nu'_{A'}(Q_C\cap A') && A' \text{ thick.}
  \end{align*}

  Finally, take $D\neq C$ a different $R$-class. Hence
  \begin{align*}
    \nu((Q_C\cap A) \cap (Q_D\cap A))
    &= \nu((Q_C\cap A) \cap (Q_D\cap A) \cap \supp\nu) \\
    &= \nu((Q_C\cap \supp\nu \cap A) \cap (Q_D\cap \supp\nu \cap A)) \\
    &= \nu((Q_C\cap \supp\nu \cap A) \cap \dots) \\
    &\leq \nu((Q_C\cap (\supp\nu \cup \supp\nu') \cap A) \cap \dots) \\
    &= \nu((C\cap (\supp\nu \cup \supp\nu') \cap A) \cap \dots) \\
    &= \nu((C\cap (\supp\nu \cup \supp\nu') \cap A) \cap (D\cap \dots)) \\
    &= 0,
  \end{align*}
  since $C\cap D = \emptyset$. We conclude that, if $\{ C_i \mid i\in I \}$ is an
  enumeration of the equivalences classes $C$ such that $C\cap (\supp\nu \cup
  \supp\nu') \neq \emptyset$, $B_i \defi Q_{C_i}\cap A$, and $B_i' \defi  Q_{C_i}\cap A'$,
  \[
    \{ B_i \mid i \in I \} \text{ and }     \{ B_i' \mid i \in I \}
  \]
  are partitions in measure of $A$ and $A'$, respectively, satisfying
  Item~\ref{item:3} from the hypothesis of Lemma~\ref{lem:R-bar-interpolation}.

  Now, to show Item~\ref{item:4} we let $C\defi C_i$ for each $i\in I$, we get
  \[
    C \cap A \subseteq Q_C\cap A = B_i,
  \]
  and if $C\cap A \subseteq Q \in \Sigma$, we have
  \begin{align*}
    \nu_A(Q\cap B_i) &= \nu_A(Q\cap Q_C \cap A) && \text{$B_i$'s definition}\\
    &= \nu_A(Q \cap Q_C \cap A \cap (\supp\nu \cup \supp\nu'))\\
    &= \nu_A(Q \cap A \cap Q_C \cap (\supp\nu \cup \supp\nu')) \\
    &= \nu_A(Q \cap A \cap C \cap (\supp\nu \cup \supp\nu'))
    && \text{by Eq.~(\ref{eq:QC_inter_sup_eq_C_inter_sup})},\\
    &= \nu_A(A \cap C \cap (\supp\nu \cup \supp\nu'))
    &&  \text{since }C\cap A \subseteq Q, \\
    &= \nu_A(A \cap Q_C \cap (\supp\nu \cup \supp\nu')) \\
    &= \nu_A(B_i \cap (\supp\nu \cup \supp\nu'))\\
    &= \nu_A(B_i)
  \end{align*}
  which concludes the proof.
\end{proof}

\begin{lemma}\label{lem:R_descends_to_bisim_between_substruc}
  Assume that for all $a\in L$, $s\in S$, and $\mu \in \Tfont{T}_a(s)$, $\supp\mu$ is countable. Let $R$ be a state bisimulation on $\lmp{S}$
  such that $R=\Rel(\Sigma(R))$.

  Hence for all $s,s'\in S$ such that $s \rR s'$ and for all 
  substructures $\lmp{A}$ and $\lmp{A}'$ such that $s\in A$ and $s' \in A'$,
  $R{\downarrow}$ is a $z$-closed external state bisimulation (which relates $s$ to $s'$) between $\lmp{A}$ and
  $\lmp{A}'$.
\end{lemma}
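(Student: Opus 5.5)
We check zig and zag for $R{\downarrow} = R\cap(A\times A')$ directly; the measure-theoretic work is done by Lemma~\ref{lem:R-bar-interpolation-countable-supp}. First, $s\in A$, $s'\in A'$ and $s\rR s'$, so $s \mathrel{R{\downarrow}} s'$. We treat $R$ as an equivalence relation. This holds because $R=\Rel(\Sigma(R))$ is the kernel of a family of sets, hence reflexive, symmetric and transitive. This is exactly the setting of Lemma~\ref{lem:R-bar-interpolation-countable-supp}.

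For the zig condition, let $x \mathrel{R{\downarrow}} x'$ with $x\in A$, $x'\in A'$, fix $a\in L$, and take $\tilde\mu\in(\Tfont{T}_a\rest A)(x)$. By definition $\tilde\mu=\mu_A$ for some $\mu\in\Tfont{T}_a(x)$, and $A$ is thick for $\mu$ because $\lmp{A}$ is a substructure and $x\in A$. Since $R$ is a state bisimulation on $\lmp{S}$ and $x\rR x'$, there is $\mu'\in\Tfont{T}_a(x')$ with $\mu\mathrel{\bar R}\mu'$. Since $x'\in A'$, the set $A'$ is thick for $\mu'$, so $\mu'_{A'}\in(\Tfont{T}_a\rest A')(x')$. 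Both $\mu$ and $\mu'$ have countable support by hypothesis, and $A,A'\in\Sigma$. Lemma~\ref{lem:R-bar-interpolation-countable-supp} therefore gives $\mu_A\mathrel{\overline{R{\downarrow}}}\mu'_{A'}$, where the lift is the external one over the $(\Sigma\rest A,\Sigma\rest A')$-measurable $R{\downarrow}$-closed pairs.

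For the zag condition we argue symmetrically. Take $\tilde\mu'=\mu'_{A'}$ with $\mu'\in\Tfont{T}_a(x')$. Symmetry of $R$ gives $x'\rR x$, so there is $\mu\in\Tfont{T}_a(x)$ with $\mu'\mathrel{\bar R}\mu$. The lift $\bar R$ of an equivalence relation is symmetric by \eqref{eq:int-lift}, so $\mu\mathrel{\bar R}\mu'$. Applying Lemma~\ref{lem:R-bar-interpolation-countable-supp} again gives $\mu_A\mathrel{\overline{R{\downarrow}}}\mu'_{A'}$.

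The main obstacle is a mismatch of liftings. Lemma~\ref{lem:R-bar-interpolation-countable-supp} is stated with the internal lift $\bar R$ of \eqref{eq:int-lift} in its hypothesis, while its conclusion must be read with the external lift \eqref{eq:ext-lift} on $A\times A'$. Its proof does exactly this: it tests against arbitrary $R{\downarrow}$-closed pairs $(U,U')$ through Lemma~\ref{lem:R-bar-interpolation}. I would make this reading explicit, and note that the sets $Q_C$ used there lie in $\Sigma(R)$, so $\nu(Q_C)=\nu'(Q_C)$ is exactly what $\mu\mathrel{\bar R}\mu'$ provides.

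The word ``$z$-closed'' in the statement presumably refers to closedness of the relation under the zig/zag steps. If it instead means the pair $(A,A')$ behaves well with respect to $R{\downarrow}$, I would check that the construction above respects it, i.e.\ that every $R$-class meets $A$ and $A'$ compatibly with the supports. This seems to be guaranteed by thickness, since the supports of all transitions from points of $A$ lie in $A$.
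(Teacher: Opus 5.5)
Your zig and zag arguments are correct and match the paper's proof. You take $\mu\in\Tfont{T}_a(x)$ with $\tilde\mu=\mu_A$, use that $R$ is a state bisimulation to get $\mu'\in\Tfont{T}_a(x')$ with $\mu\mathrel{\bar R}\mu'$, and apply Lemma~\ref{lem:R-bar-interpolation-countable-supp}. Your handling of zag via symmetry of $R$ and of $\bar R$ spells out what the paper leaves implicit, and your remark on internal versus external lifts is accurate.

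\textbf{The gap is the $z$-closedness clause.} It is part of the conclusion, and you leave it unproved. Your fallback reading, in terms of how the pair $(A,A')$ meets $R$-classes together with thickness, is not the right notion.

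In this paper a relation $W\sbq S\times S'$ is \emph{$z$-closed} (zigzag-closed, i.e.\ difunctional) if $x\mathrel{W}y'$, $x_1\mathrel{W}y'$ and $x_1\mathrel{W}y_1'$ imply $x\mathrel{W}y_1'$. Equivalently, $W[W^{-1}[W[x]]]=W[x]$ for all $x$. This is exactly how it is used in Lemma~\ref{lem:barR_characterization_countable_supp}. There it makes $(R^{-1}[R[x]],R[x])$ an $R$-closed pair and $R^{-1}\circ R$ an equivalence relation on its domain.

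The property matters downstream. Lemma~\ref{lem:bisim-equiv-between-generated} and Theorem~\ref{th:state-bisim-finit-supp} rely on the bisimulation produced here being $z$-closed.

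The missing step is one line, as in the paper. Suppose $x\mathrel{R{\downarrow}}y'$, $x_1\mathrel{R{\downarrow}}y'$ and $x_1\mathrel{R{\downarrow}}y_1'$. Symmetry and transitivity of the equivalence relation $R$ give $x\rR y_1'$. Since $x\in A$ and $y_1'\in A'$, it follows that $x\mathrel{R{\downarrow}}y_1'$.
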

\begin{proof}
  Assume $R$ as in the hypotheses. Since $R$ is an equivalence relation,
  $R{\downarrow}$ is immediately $z$-closed.
  To see that $R{\downarrow}$ is an external state bisimulation, assume
  $t \mathrel{R{\downarrow}} t'$ and $\tilde\nu \in (\Tfont{T}_a\rest
  A)(t)$. Then $\tilde{\nu}=\nu_A$ for some $\nu\in \Tfont{T}_a(t)$.
  Since $R$ is a state bisimulation, there
  exists $\nu'\in \Tfont{T}_a(t')$ such that $\nu \mathrel{\overline{R}}
  \nu'$. By Lemma~\ref{lem:R-bar-interpolation-countable-supp}, we have $\nu_A
  \mathrel{\overline{R{\downarrow}}} \nu'_{A'} \in(\Tfont{T}_a\rest A')(t')$,
  and we are done.
\end{proof}

\section{Pointmass NLMP}\label{sec:pointmass-nlmp}

\subsection{Discrete processes}
\label{sec:discrete-processes}

For the rest of the paper we will work with processes $\lmp{S}$ where, for every $s\in S$ and $a\in L$, all the transitions $\mu\in \Tfont{T}_a(s)$ are discrete, that is, can be written as a sub-convex combination $\sum_k r_k\delta_{x_k}$ for some $\{r_k\}_k\sbq[0,1]$ and $\{x_k\}_k\sbq S$. 
For the subfamily of finitely supported processes, which we informally call
“pointmass processes”, we will prove that bisimilarity is analytic
(Theorem~\ref{th:state-bisim-finit-supp}) by producing a Borel definition of bisimulation.

\begin{definition}
  Let $\lmp{S}=(S,\Sigma,\{\Tfont{T}_a\mid a\in L\})$ be an image-countable NLMP. over a
    standard Borel
    space. We say that $\lmp{S}$ is \emph{discrete} if every $\mu \in \Tfont{T}_a(\cdot)$ has countable support.
\end{definition}	

Trivially, every countable NLMP (endowed with the powerset $\sigma$-algebra) is automatically discrete.

\begin{lemma}\label{lem:bisim-equiv-between-generated}
  Let $\lmp{S}=(S,\Sigma,\{\Tfont{T}_a\mid a\in L\})$ be a discrete NLMP, $s,s'\in S$ and $\lmp{A}$, $\lmp{A}'$ two substructures such that $s\in A$ and $s' \in A'$.	
    Then $s\sim_\s s'$ if and only if there exists a $z$-closed external state bisimulation $R\sbq A\times A'$ between $\lmp{A}$ and $\lmp{A}'$ such that $s\mathrel{R} s'$.
\end{lemma}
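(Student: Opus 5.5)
The statement has two directions, and the plan is to handle each with the machinery of Section~\ref{sect:NLMP-substructures}. Here ``$z$-closed'' is read as in Lemma~\ref{lem:R_descends_to_bisim_between_substruc}: the relation is the restriction of an equivalence relation, so it is closed under zig-zag composition within $A\cup A'$.

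\emph{Forward direction.} Assume $s\sim_\s s'$. The first task is to produce a state bisimulation $R$ on $\lmp{S}$ with $s\rR s'$ and $R=\Rel(\Sigma(R))$. I would take $R$ to be state bisimilarity ${\sim_\s}$ itself, or, failing that, the measurably generated relation $R'\defi\Rel(\Sigma(R_0))$ for some state bisimulation $R_0$ containing $(s,s')$.

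\begin{itemize}
\item $R'$ is an equivalence relation containing $R_0$, so $s\mathrel{R'}s'$.
\item $\Sigma(R')=\Sigma(R_0)$: every $R_0$-closed measurable set is a union of $R'$-classes, and conversely every $R'$-closed set is $R_0$-closed because $R_0\subseteq R'$.
\item Consequently $\overline{R'}=\overline{R_0}$, and the transfer conditions for $R_0$ transfer to $R'$.
\item To see that $R'$ is a state bisimulation, note that $R_0$ may first be replaced by its reflexive--transitive closure, which is again a state bisimulation (standard for state bisimulations). If $x\mathrel{R'}y$ with $x,y$ not $R_0$-related, one still needs matching transitions. This is where discreteness enters: the supports are countable, and $\Sigma(R_0)$-sets separate $R'$-classes by definition.
\end{itemize}

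If this verification gets murky, I would fall back on a cleaner route: for discrete NLMPs, compare the lifts directly on countable supports, as in the proof of Lemma~\ref{lem:R-bar-interpolation-countable-supp}.

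Once such an $R$ is available, Lemma~\ref{lem:R_descends_to_bisim_between_substruc} applies immediately, since discrete processes have countable supports. It yields that $R{\downarrow}=R\cap(A\times A')$ is a $z$-closed external state bisimulation between $\lmp{A}$ and $\lmp{A}'$ relating $s$ and $s'$.

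\emph{Backward direction.} Given $R\sbq A\times A'$ an external state bisimulation with $s\rR s'$, Lemma~\ref{lem:ext-bisim-NLMP-internal} shows that $R\cup R^{-1}$ is a state bisimulation on $\lmp{S}$. Hence $s\sim_\s s'$. The $z$-closedness hypothesis is not needed in this direction.

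The main obstacle is the forward direction's first step: showing that $\sim_\s$ (or a suitable bisimulation containing $(s,s')$) is measurably generated. Example~\ref{exm:counterexm-NLMP-state-restriction} shows that without such a property the restriction to substructures can fail. I would first try the $\Rel(\Sigma(R_0))$ trick above, checking the transfer property for the enlarged classes by means of countable supports. If that does not close, I would restrict to a smooth bisimulation and invoke Lemma~\ref{lem:restriction-smooth-is-NLMP-state-bisim} instead.
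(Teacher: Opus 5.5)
Your route is the paper's. $(\Leftarrow)$ is Lemma~\ref{lem:ext-bisim-NLMP-internal}. $(\Rightarrow)$ is Lemma~\ref{lem:R_descends_to_bisim_between_substruc}, which the paper applies directly to ${\sim_\s}$, tacitly using ${\sim_\s}=\Rel(\Sigma({\sim_\s}))$.

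You correctly isolate that hypothesis as the only nontrivial point, but you do not prove it. The bullet ``one still needs matching transitions'' is exactly where the argument has to happen, and the mechanism you point to there (countable supports) is not what makes it work. Your fallbacks do not close the gap either:
\begin{itemize}
\item Lemma~\ref{lem:R-bar-interpolation-countable-supp} already assumes $R=\Rel(\Sigma(R))$, since it needs the separating sets $Q_{x,y}\in\Sigma(R)$.
\item Lemma~\ref{lem:restriction-smooth-is-NLMP-state-bisim} requires a smooth bisimulation, which you do not have. It also only produces an internal bisimulation on one substructure, not an external one between $\lmp{A}$ and $\lmp{A}'$.
\end{itemize}

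The missing idea is image-countability, which is part of the definition of a discrete NLMP, combined with hit-measurability of $\Tfont{T}_a$. Set up as follows:
\begin{itemize}
\item Let $R_0$ be a state bisimulation, $R'\defi\Rel(\Sigma(R_0))$, $x\mathrel{R'}y$ and $\mu\in\Tfont{T}_a(x)$.
\item Suppose no element of the countable set $\Tfont{T}_a(y)=\{\nu_j\}_{j\in J}$ is $\overline{R_0}$-related to $\mu$.
\item Pick $Q_j\in\Sigma(R_0)$ with $\nu_j(Q_j)\neq\mu(Q_j)$.
\item Let $D\defi\bigcap_{j\in J}\{\nu\in\Delta(S)\mid \nu(Q_j)=\mu(Q_j)\}$, with $D=\Delta(S)$ if $J=\emptyset$.
\end{itemize}
Then $D\in\Delta(\Sigma(R_0))\subseteq\Delta(\Sigma)$ and $D$ is $\overline{R_0}$-invariant. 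Hence $\Tfont{T}_a^{-1}[H_D]\in\Sigma$. It is $R_0$-closed, because $R_0$ is a symmetric state bisimulation and $D$ is $\overline{R_0}$-invariant. It contains $x$ (witnessed by $\mu$) but not $y$, contradicting $x\mathrel{R'}y$.

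So $R'$ is a state bisimulation. Since $\Sigma(R')=\Sigma(R_0)$, as you noted, $R'=\Rel(\Sigma(R'))$. Taking $R_0={\sim_\s}$ yields ${\sim_\s}=\Rel(\Sigma({\sim_\s}))$. Lemma~\ref{lem:R_descends_to_bisim_between_substruc} then finishes the forward direction as you say. Countable supports are used only inside that lemma.
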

\begin{proof}
  The $(\Rightarrow)$ direction follows from Lemma~\ref{lem:R_descends_to_bisim_between_substruc} applied to ${\sim_\s}$.
  For the converse, apply Lemma~\ref{lem:ext-bisim-NLMP-internal}.
\end{proof}

From Lemma~\ref{lem:S-equals-S'-case}, we know that within a single (general) NLMP,
\[
s \sim_{\s} t
\iff
(\lmp{S},s) \sim_{\s}^{\times} (\lmp{S},t).
\]
Moreover, Lemma~\ref{lem:bisim-equiv-between-generated} shows that,
for discrete NLMP, this equivalence extends to arbitrary
substructures: for any substructures $\lmp{A}$ and $\lmp{A}'$ of
$\lmp{S}$ containing $s$ and $s'$ respectively,
\begin{equation}\label{eq:bisim_in_S_iff_in_substructures}
	s \sim_{\s} t \iff (\lmp{A},s) \sim_{\s}^{\times} (\lmp{A}',t).
\end{equation}

With the following definition, we aim to represent, for an NLMP $\lmp{S}$ and a state $s\in S$, the set $A_s$ consisting of $s$ and all states accessible from $s$ (successors of $s$, successors of successors of $s$, and so forth, within a finite number of steps).

\begin{definition}\label{def:general_A_s}
	Let $\lmp{S}=(S,\Sigma,\{\Tfont{T}_a\mid a\in 
	L\})$ be a discrete NLMP. For $s \in S$, we define 
	\begin{enumerate}
		\item $\tilde{\Tfont{T}}_a(s)\defi \bigcup\{\supp \mu\mid \mu \in \Tfont{T}_a(s)\}$.
		\item $A_s\defi \bigcup_{n\in \omega}A_n(s)$, where the family 
		$\{A_n(s)\}_{n\in \omega}$ is given recursively by: 
		\begin{itemize}
			\item $A_0(s)=\{s\}$, 
			\item $A_{n+1}(s)=A_n(s)\cup \bigl(\bigcup \{\tilde{\Tfont{T}}_a(x) \mid x\in A_n(s), \, a\in L\}\bigr)$.
		\end{itemize}
	\end{enumerate}
\end{definition}

Notice that each $A_s\subseteq S$ is measurable, since it is a countable set.
Also, for all $x \in A_s$, $a \in L$, and $\mu \in \Tfont{T}_a(x)$, we have $\mu(S)=\mu(\supp(\mu))\leq\mu(A_s)\leq\mu(S)$. 
So $A_s$ is a thick subset of $(S, \Sigma, \mu)$.
It follows that $\lmp{A}_s$ is a substructure as in Definition~\ref{def:sub-NLMP}.

\begin{remark}\label{rem:about_As}
  \begin{enumerate}
  \item
    $\lmp{A}_s$ is the smallest substructure of $\lmp{S}$ containing $s$.
  \item
    The substructures $\lmp{A}_s$ show that it is possible to find countable $\lmp{A}$
    and $\lmp{A}'$ satisfying the conclusion of Lemma~\ref{lem:R_descends_to_bisim_between_substruc}.
  \item\label{item:T_empty}
    $\tilde{\Tfont{T}}_a(s) = \emptyset \iff \Tfont{T}_a(s) = \emptyset \vee \Tfont{T}_a(s) = \{0\}$.
  \end{enumerate}
\end{remark}

\begin{lemma}\label{lem:A_s_saturated}
	Let $\lmp{S}$, $\lmp{S}'$ be two discrete NLMPs.
	Let $R\sbq S\times S'$ be a $z$-closed external state bisimulation such that $s\mathrel{R} s'$.
	Then, for all $n \in \omega$, $A_n(s') \sbq R[A_n(s)]$ and $A_n(s) \sbq R^{-1}[A_n(s')]$.
\end{lemma}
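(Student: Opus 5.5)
The plan is to argue by induction on $n$, proving both inclusions simultaneously; they are symmetric, because $R^{-1}$ is again a $z$-closed external state bisimulation with zig and zag exchanged. I read ``$z$-closed'' as $R\circ R^{-1}\circ R\subseteq R$. For $n=0$, both $A_0(s')=\{s'\}\subseteq R[\{s\}]$ and $\{s\}\subseteq R^{-1}[\{s'\}]$ follow from $s\mathrel{R}s'$.

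For the inductive step, suppose $A_n(s')\subseteq R[A_n(s)]$, and take $y'\in A_{n+1}(s')\setminus A_n(s')$. By Definition~\ref{def:general_A_s}, there are $x'\in A_n(s')$, $a\in L$ and $\mu'\in\Tfont{T}'_a(x')$ with $y'\in\supp\mu'$. By the induction hypothesis there is $x\in A_n(s)$ with $x\mathrel{R}x'$. The zag condition then yields $\mu\in\Tfont{T}_a(x)$ with $\mu\mathrel{\bar R}\mu'$. It therefore suffices to show the key claim: for such $\mu,\mu'$ we have $\supp\mu'\subseteq R[\supp\mu]$. Since $\supp\mu\subseteq\tilde{\Tfont{T}}_a(x)\subseteq A_{n+1}(s)$, this gives $y'\in R[A_{n+1}(s)]$.

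To prove the key claim, I will use the pair
\[
  Q\defi R^{-1}\bigl[R[\supp\mu]\bigr],\qquad Q'\defi R[\supp\mu].
\]
Clearly $R[Q]\subseteq Q'$ holds as well. Conversely, $R^{-1}[Q']=Q$, and $z$-closedness gives $R[Q]=R R^{-1}R[\supp\mu]\subseteq R[\supp\mu]=Q'$. By Lemma~\ref{lem:pair-equiv}(\ref{item:in-equiv}), $(Q,Q')$ is an $R$-closed pair. Since $\supp\mu\subseteq Q$ and $\mu$ is discrete, $\mu(Q)=\mu(S)$. Testing $\bar R$ on the trivially closed pair $(S,S')$ gives $\mu(S)=\mu'(S')$. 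Testing it on $(Q,Q')$ then gives $\mu'(Q')=\mu(Q)=\mu'(S')$. Hence every atom of $\mu'$ lies in $Q'$, i.e.\ $\supp\mu'\subseteq R[\supp\mu]$. The inclusion $A_{n+1}(s)\subseteq R^{-1}[A_{n+1}(s')]$ is obtained in the same way, using zig instead of zag.

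The main obstacle is measurability: $\bar R$ only quantifies over measurable closed pairs, and $R[\supp\mu]$ may be an uncountable, non-Borel set. I would first try to reduce to the countable substructures $\lmp{A}_s$ and $\lmp{A}'_{s'}$. There every subset is measurable, because $\Sigma\rest A$ is the powerset for a countable $A$ in a standard Borel space. The pair $(Q\cap A_s,\,Q'\cap A'_{s'})$ is then measurable, and Lemma~\ref{lem:closed-sets-and-pairs-S-S'} together with Lemma~\ref{lem:ext-in-A-iff-ext-in-S} should transport the equality $\mu(Q)=\mu'(Q')$. However, those lemmas presuppose $R\subseteq A\times A'$, which is circular here, since relating $R$ to these sets is exactly what the lemma is meant to establish.

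The fallback is to relativise only the pair and not $R$. Working inside $R\cap(A_s\times A'_{s'})$, I would check directly that the measurable pair built from the countable sets $\supp\mu$ and $\supp\mu'$ is closed enough for the computation above, approximating $Q'$ by $R[\supp\mu]\cap\supp\mu'$ together with its complement. If measurable closedness cannot be secured this way, I expect the intended definition of $z$-closed also demands that such saturations be measurable, and I would invoke that.
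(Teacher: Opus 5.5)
Your core argument is the paper's. The paper fixes $x'\in\supp\mu'$ and assumes $R^{-1}[x']\cap\supp\mu=\emptyset$. It then tests $\mu\mathrel{\bar R}\mu'$ on the pair $\bigl(R^{-1}[x'],\,\{x'\}\cup R[R^{-1}[x']]\bigr)$, whose closedness comes from $z$-closedness exactly as in your computation. Your pair $\bigl(R^{-1}[R[\supp\mu]],\,R[\supp\mu]\bigr)$, together with $(S,S')$, is a global repackaging of the same idea.

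One step of your version needs repair: $\supp\mu\sbq Q$ is not automatic. A point $x\in\supp\mu$ with $R[x]=\emptyset$ does not lie in $R^{-1}[R[\supp\mu]]$. You can exclude such points by testing $\bar R$ on the closed pair $(\{x\},\emptyset)$. Alternatively, enlarge $Q$ to $\supp\mu\cup R^{-1}[R[\supp\mu]]$, which is still $R$-closed; this is exactly the role of the summand $\{x'\}$ in the paper's pair.

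The measurability problem you flag is a genuine gap, and none of your fallbacks can close it, because the lemma is false as stated. The paper's proof simply asserts that its pair is measurable. Here is a counterexample.
\begin{itemize}
\item Take $S=S'=[0,1]$ with its Borel sets and a single label. Put $\Tfont{T}_a(u)\defi\{\delta_{f(u)}\}$ for a Borel map $f$ with $f(0)=0$ and $f(1/3)=2/3$; this is an image-finite discrete NLMP.
\item Let $V\sbq[0,1]$ be non-Borel with $0,1/3\in V$ and $2/3\notin V$, and put $R\defi(V\times V)\cup(V^c\times V^c)$.
\item Being an equivalence relation, $R$ is $z$-closed. By Lemma~\ref{lem:closed-vs-closed-pair}, every $R$-closed pair has the form $(Q,Q)$ with $Q$ a union of $R$-classes. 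So the only measurable ones are $(\emptyset,\emptyset)$ and $(S,S)$.
\item Hence $\bar R$ relates any two probability measures, and $R$ is an external state bisimulation.
\item Yet for $s=0$ and $s'=1/3$ we have $2/3\in A_1(s')$, while $R[A_1(s)]=R[\{0\}]=V\not\ni 2/3$.
\end{itemize}
No measurable closed pair separates anything here, so relativising the pair or approximating $Q'$ cannot help.

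What is needed is an extra hypothesis. For instance, require $S$ and $S'$ to be countable. More generally, require the point saturations $R[x]$, $R^{-1}[R[x]]$, $R^{-1}[x']$, $R[R^{-1}[x']]$ to be measurable. This covers every place where the paper actually applies the lemma:
\begin{itemize}
\item Lemma~\ref{lem:barR_characterization_countable_supp};
\item the countable substructures $\lmp{X}_s$;
\item the relation $R=\id\rest A_s$ behind Corollary~\ref{cor:s-bisim-t-implies-Ss-bisim-As}.
\end{itemize}
Under such a hypothesis, your argument with the repair above is a complete proof.

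The same example, read as an MLTS, also refutes the ``$\Leftarrow$'' direction of Proposition~\ref{prop:equality-bisim-MLTS-tradit-ext} as stated.
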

\begin{proof}
	We only prove that $A_n(s')\sbq R[A_n(s)]$ for all $n$. 
	The case $n=0$, $A_0(s')=\{s'\}\sbq R[A_0(s)]$, follows by hypothesis.
	
	Suppose this is true for $n$, we want to show that is also true for $A_{n+1}(s')=A_n(s')\cup \bigl(\bigcup \{\tilde{\Tfont{T}}'_a(t') \mid t'\in A_n(s'), \, a\in L\}\bigr)$.
	Let $x' \in \tilde{\Tfont{T}}'_a(t')$ for some $t'\in A_n(s')$.
	By IH, $t'\in R[A_n(s)]$, so there is $t\in A_n(s)$ such that $t\mathrel{R}t'$.
	
	Suppose $x'\in \supp \mu'$ for $\mu' \in \Tfont{T}'_a(t')$. 
	Since $t\mathrel{R} t'$, there exists $\mu \in \Tfont{T}_a(t)$ such that $\mu \mathrel{\bar{R}} \mu'$.
	We want to check that $R^{-1}[x']\cap \supp \mu\neq \emptyset$.
	Suppose, for contradiction, that it is empty and consider the pair $(R^{-1}[x'],\{x'\}\cup R[R^{-1}[x']])$.
	This is an $R$-closed measurable pair such that $\mu'(\{x'\}\cup R[R^{-1}[x']])\geq \mu'(\{x'\})> 0 = \mu(R^{-1}[x'])$, but this contradicts $\mu \mathrel{\bar{R}} \mu'$.
	Thus, we know that there exists $x\in \supp \mu \sbq \tilde{\Tfont{T}}_a(t)$ such that $x\mathrel{R} x'$.
	As $\tilde{\Tfont{T}}_a(t)\sbq A_{n+1}(s)$, we conclude $x'\in R[A_{n+1}(s)]$.
\end{proof}

\begin{remark}\label{rem:external_is_traditional_countable_NLMP}
	Note that by the proof of case $n=1$ of the previous lemma, an external bisimulation behaves like a standard bisimulation in the following sense: If $s\mathrel{R} s'$, then for every $t\in \tilde{\Tfont{T}}_a(s)$, there exists $t'\in \tilde{\Tfont{T}}'_a(s')$ such that $t\mathrel{R} t'$ (and the analogous ``zag'' statement).
\end{remark}

One immediate consequence of Lemma~\ref{lem:A_s_saturated} is that $A_{s'} \sbq R[A_s]$ and $A_s \sbq R^{-1}[A_{s'}]$. Also notice that, if the NLMPs considered are exactly the substructures $\lmp{A}_s$ and $\lmp{A}_{s'}$ then we get an equality, and we say that the pair $(A_s, A_{s'})$ is $R$-saturated.

The next results will allow us to simplify the universal quantification in the
definition of $\mathrel{\bar{R}}$ to a finite domain for finitely supported processes.
\begin{lemma}\label{lem:barR_characterization_countable_supp}
	Let $\lmp{A}$, $\lmp{A}'$ be two countable NLMPs over separable spaces.
	Let $R\sbq A\times A'$ be a $z$-closed external state bisimulation such that $s\mathrel{R} s'$.
	Then, for every $\mu\in \Tfont{T}_a(s)$ and $\mu'\in \Tfont{T}'_a(s')$ it holds:
		\begin{align*}
			\mu \mathrel{\bar{R}} \mu' \iff & \forall x \in \supp \mu,\,
			\mu(R^{-1}[R[x]] \cap \supp \mu) = \mu'(R[x] \cap \supp \mu') \land {} \\
			& \forall y' \in \supp \mu',\,
			\mu(R^{-1}[y'] \cap \supp \mu) = \mu'(R[R^{-1}[y']] \cap \supp \mu').
		\end{align*}
\end{lemma}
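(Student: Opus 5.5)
The plan is to exploit the fact that, on countable spaces with the relevant $\sigma$-algebras containing singletons (separability), every subset of $A$ or $A'$ is measurable. So the $R$-closed measurable pairs are just the $R$-closed pairs $(Q,Q')$, i.e. those with $R[Q]\subseteq Q'$ and $R^{-1}[Q']\subseteq Q$ (Lemma~\ref{lem:pair-equiv}(\ref{item:in-equiv})). Since $R$ is $z$-closed, the relation $R$ between $A$ and $A'$ behaves like a partial equivalence. We first check that for $x\in A$ the pair $(R^{-1}[R[x]],R[x])$ is $R$-closed: $R[R^{-1}[R[x]]]\subseteq R[x]$ is exactly $z$-closedness. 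Symmetrically, $(R^{-1}[y'],R[R^{-1}[y']])$ is $R$-closed. Measures of any set agree with measures of its intersection with the support, so the direction $(\Rightarrow)$ follows by evaluating $\mu\mathrel{\bar R}\mu'$ on these two pairs.

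For $(\Leftarrow)$, fix an arbitrary $R$-closed pair $(Q,Q')$. Write $\mu(Q)=\sum_{x\in Q\cap\supp\mu}\mu(\{x\})$. Group $Q\cap\supp\mu$ into the blocks $R^{-1}[R[x]]\cap\supp\mu$ for $x\in Q\cap\supp\mu$; these are contained in $Q$ by closedness. By $z$-closedness, two such blocks are either equal or disjoint whenever $R[x]\neq\emptyset$. We handle separately the points $x$ with $R[x]=\emptyset$, whose block degenerates to the empty set.

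Here lies the main obstacle. For $x\in\supp\mu$ with $R[x]=\emptyset$, the first hypothesis gives $\mu(\emptyset)=\mu'(\emptyset)$, which carries no information, so such $x$ are not obviously accounted for. We will use that $x\in\tilde{\Tfont{T}}_a(s)$ and Remark~\ref{rem:external_is_traditional_countable_NLMP}: $s\mathrel{R}s'$ and $R$ is a bisimulation, so every $x\in\supp\mu$ has some $R$-partner. Hence every support point lies in a nondegenerate block (similarly on the primed side, via the zag part). With that, $\mu(Q)$ becomes a sum over the distinct blocks meeting $Q\cap\supp\mu$.

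Next we match blocks. Each block $R^{-1}[R[x]]$ corresponds to $R[x]$. By the first hypothesis, its $\mu$-mass equals $\mu'(R[x]\cap\supp\mu')$. The map $x\mapsto R[x]$ is injective on distinct blocks, and $R[x]\subseteq Q'$. Conversely, each $y'\in Q'\cap\supp\mu'$ lies in $R[R^{-1}[y']]$, whose preimage block lies in $Q$. Using the second hypothesis, a primed block with positive $\mu'$-mass contains a point $y'$ whose $\mu$-side block $R^{-1}[y']\cap\supp\mu$ has equal, hence positive, mass, so it is hit by a $\mu$-support point. Therefore the blocks appearing in the two sums correspond bijectively, blocks of zero mass contribute nothing, and summing gives $\mu(Q)=\mu'(Q')$. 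Countable additivity justifies the rearrangements because the supports are countable.
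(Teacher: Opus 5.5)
Your proposal is correct and follows the paper's proof. For $(\Rightarrow)$ you use the same $R$-closed pairs $(R^{-1}[R[x]],R[x])$ and $(R^{-1}[y'],R[R^{-1}[y']])$. For $(\Leftarrow)$ you use the same decomposition of $Q$ into $R^{-1}\circ R$-blocks, and you dispose of partnerless support points through the bisimulation property; the paper cites Lemma~\ref{lem:A_s_saturated} for this, and Remark~\ref{rem:external_is_traditional_countable_NLMP} is its $n=1$ case. The only difference is minor: the paper derives $\mu(Q)\le\mu'(Q')$ from the first condition and obtains the reverse inequality by symmetry from the second, whereas you use both conditions at once to match the positive-mass blocks bijectively.
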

\begin{proof}
	$(\Rightarrow)$ It is enough to see that for every $x\in \supp \mu$ and $y'\in \supp \mu'$, the pairs $(R^{-1}[R[x]], R[x])$ and $(R^{-1}[y'], R[R^{-1}[y']])$	are measurable $R$-closed.
	Measurability follows from the separability of $A$, $A'$.
	Since $R$ is $z$-closed, $R[R^{-1}[R[x]]]=R[x]$ and $R^{-1}[R[R^{-1}[y']]]=R^{-1}[y']$. 
	Then, $\mu(R^{-1}[R[x]] \cap \supp \mu) = \mu(R^{-1}[R[x]]) = \mu'(R[x])= \mu'(R[x] \cap \supp \mu')$, and similarly for the other equation.
	
	$(\Leftarrow)$ Let $(Q,Q')$ be an $R$-closed measurable pair with $Q\sbq A$ and $Q'\sbq A'$. 
	We can write $Q$ as a disjoint union $Q=\{x\in Q \mid R[x] = \emptyset\} \cup \bigcup_i R^{-1}[R[x_i]]$ for some (countable) subset $\{x_i\}$ of $Q$ (note that, as $R$ is $z$-closed, $R^{-1}\circ R$ is an equivalence relation on its domain).
	
	For every measure $\mu\in \Tfont{T}_a(s)$, we will use the following two facts:
	\begin{enumerate}
		\item $\mu(\{x\in Q \mid R[x] = \emptyset\})=0$. Readily, by the
                  second inclusion of Lemma~\ref{lem:A_s_saturated}, if $R[x]=\emptyset$, then $x\notin A_1(s)$. Hence, $x\notin \supp \mu$.
		\item If $R^{-1}[R[x_i]]\cap \supp \mu \neq \emptyset$, then $\mu(R^{-1}[R[x_i]])= \mu(R^{-1}[R[y_i]])$ for some $y_i$ in the intersection. Furthermore, by $z$-transitivity, $R[x_i]=R[y_i]$.	
	\end{enumerate}		
	We can now calculate:
	\begin{align*}
		\mu(Q)& = \mu(Q\cap \supp \mu) \\
		&= \mu(\{x\in Q \mid R[x] = \emptyset\}\cap \supp \mu) + \mu\bigl(\bigcup_i R^{-1}[R[x_i]]\cap \supp \mu\bigr) \\
		&= \sum_i \mu(R^{-1}[R[x_i]]\cap \supp \mu) \\
		&= \sum \{\mu(R^{-1}[R[x_i]]\cap \supp \mu)\mid R^{-1}[R[x_i]]\cap \supp \mu \neq \emptyset \} \\
		&= \sum \{\mu(R^{-1}[R[y_i]]\cap \supp \mu)\mid R^{-1}[R[x_i]]\cap \supp \mu \neq \emptyset \} \\
		&= \sum \{\mu'(R[y_i]\cap \supp \mu') \mid R^{-1}[R[x_i]]\cap \supp \mu \neq \emptyset \} \\ 
		&= \sum \{\mu'(R[x_i]\cap \supp \mu') \mid R^{-1}[R[x_i]]\cap \supp \mu \neq \emptyset \} \\ 
		&= \mu'\bigl(\bigcup \{R[x_i]\cap \supp \mu' \mid R^{-1}[R[x_i]]\cap \supp \mu \neq \emptyset \} \bigr) \\
		&\leq \mu'(Q').
	\end{align*} 
	By symmetry, we conclude $\mu(Q)=\mu'(Q')$.
\end{proof}

\begin{definition}\label{def:uniform-discrete-nlmp}
	A discrete NLMP is called \emph{(finitely supported) uniform}
	if there are partial measurable functions
	$r_{k,n,a}:\{s \in S \mid \Tfont{T}_a(s)\neq\emptyset\} \to [0,1]$ and $t_{k,n,a}:\{s \in S \mid \Tfont{T}_a(s)\neq\emptyset\} \to S$ such that for every $a\in L$, and $s\in S$ satisfying $\Tfont{T}_a(s)\neq\emptyset$,
	\begin{align}\label{eq:discrete-uniform-NLMP}
		\Tfont{T}_a(s)=\bigl\{ \textstyle\sum_k r_{k,n,a}(s)
		\delta_{t_{k,n,a}(s)}\mid n\in \omega \bigr\},
	\end{align}
	and for every $n\in \omega$, $r_{k,n,a}(s) = 0$ for all but finitely
	many $k\in \omega$.
\end{definition}

If $\lmp{S}$ is uniform, then we have a countable family of
functions $t_{k,n,a}$ that can be used to obtain every state that is reachable from a fixed state $s \in S$ in a finite number of transitions.
This will allow us to describe the set $A_s$ of Definition~\ref{def:general_A_s} by means of suitable combinations of the partial functions $t_{k,n,a}$.

Observe that the domains of these functions are measurable sets,
because
\[
\{s \in S \mid \Tfont{T}_a(s) \neq \emptyset\}
= \Tfont{T}^{-1}_a[H_S].
\]
Moreover, the measurability of $t_{k,n,a}$ guarantees that $\{s \in S \mid t_{k,n,a}(s) \in A\} = (t_{k,n,a})^{-1}[A] \in \Sigma$ whenever $A \in \Sigma$.
Hence, if $k, k', n, n' \in \omega$ and $a,a' \in L$, the domain of the composition
$t_{k',n',a'} \circ t_{k,n,a}$ is
\[
\{s \in S \mid t_{k,n,a}(s) \in
\{s \in S \mid \Tfont{T}_{a'}(s) \neq \emptyset\}\}
= (t_{k',n',a'})^{-1}\!\bigl[\Tfont{T}^{-1}_a[H_S]\bigr],
\]
which is measurable.  In this way we can form the various compositions of
the partial functions $t_{k,n,a}$; their domains depend on the particular
process and need not all coincide, but the crucial point is that each of
them is measurable on its domain.

\begin{definition}\label{def:x_n}
	If $\lmp{S}$ is a uniform NLMP with enumeration $\{t_{k,n,a} \mid a \in L;\; k,n \in \omega\}$, we define
 	$\{x_n \mid n \ge 1\}$ to be an enumeration of all finite compositions of	the functions $t_{k,n,a}$.  
 	In addition, we set $x_0 = \id$.
\end{definition}

Notice that the domain of the compositions may vary. A key property is this:

\begin{lemma}\label{lem:A_s_included_in_x_n}
	Let $\mathbb{S} = (S,\Sigma,\{\Tfont{T}_a \mid a \in L\})$ be a uniform NLMP. For $s\in S$, define $X_s \defi\{x_n(s)\}_{n\in \omega}$.
 	Then, $\lmp{X}_s$ is a substructure of $\lmp{S}$ which contains $A_s$.
\end{lemma}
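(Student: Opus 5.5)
The plan has three steps: show that $X_s$ is measurable and thick for every relevant transition (so that $\lmp{X}_s$ is a substructure in the sense of Definition~\ref{def:sub-NLMP}), and then show $A_s \sbq X_s$ by induction along the levels $A_n(s)$ of Definition~\ref{def:general_A_s}.

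\textbf{Measurability.} Here $X_s=\{x_n(s)\}_{n\in\omega}$, where we only keep those $n$ with $s\in\dom(x_n)$; note $x_0(s)=s$. This is a countable set. Since $\Sigma$ is the $\sigma$-algebra of a standard Borel space, singletons are measurable, so $X_s\in\Sigma$.

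\textbf{Closure and thickness.} Let $y=x_n(s)\in X_s$, $a\in L$ and $\mu\in\Tfont{T}_a(y)$. By uniformity \eqref{eq:discrete-uniform-NLMP}, $\Tfont{T}_a(y)\neq\emptyset$ and $\mu=\sum_k r_{k,m,a}(y)\delta_{t_{k,m,a}(y)}$ for some $m$. Hence $\supp\mu\sbq\{t_{k,m,a}(y)\mid k\in\omega\}$.

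Each such point equals $(t_{k,m,a}\circ x_n)(s)$. This composition is again a finite composition of the $t$'s (or just $t_{k,m,a}$ itself when $n=0$), so it is some $x_{n'}$ with $s\in\dom(x_{n'})$. The key identity is therefore
\[
  \{\,t_{k,m,a}\circ x_n \mid k,m,n\in\omega,\ a\in L\,\} \sbq \{\,x_{n'} \mid n'\in\omega\,\},
\]
which holds by the very definition of the enumeration in Definition~\ref{def:x_n}.

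It follows that $\supp\mu\sbq X_s$, and so
\[
  \mu(S)=\mu(\supp\mu)\le\mu(X_s)\le\mu(S).
\]
Since $X_s$ is measurable, this means $X_s$ is thick for $\mu$. By Proposition~\ref{prop:substruct-is-NLMP}, $\lmp{X}_s$ is then a substructure.

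\textbf{Containment of $A_s$.} We prove $A_n(s)\sbq X_s$ by induction on $n$.
\begin{itemize}
  \item For $n=0$, $A_0(s)=\{s\}=\{x_0(s)\}$.
  \item For the inductive step, any new point of $A_{n+1}(s)$ lies in $\tilde{\Tfont{T}}_a(y)$ for some $y\in A_n(s)\sbq X_s$. By the closure step, $\tilde{\Tfont{T}}_a(y)=\bigcup\{\supp\mu\mid\mu\in\Tfont{T}_a(y)\}\sbq X_s$.
\end{itemize}
Taking the union over $n$ gives $A_s\sbq X_s$. Alternatively, once $\lmp{X}_s$ is known to be a substructure containing $s$, this follows from Remark~\ref{rem:about_As}(1).

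\textbf{Main obstacle.} The only delicate point is bookkeeping of the partial domains. One must check that $s\in\dom(t_{k,m,a}\circ x_n)$ whenever $y=x_n(s)$ satisfies $\Tfont{T}_a(y)\neq\emptyset$. This holds because the domain of $t_{k,m,a}$ is exactly $\{z\mid\Tfont{T}_a(z)\neq\emptyset\}$.

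A second subtlety is the weights $r_{k,m,a}$: points with $r_{k,m,a}(y)=0$ may still appear among the $t_{k,m,a}(y)$. This is harmless, since such points lie in $X_s$ anyway and we only need $\supp\mu\sbq X_s$.
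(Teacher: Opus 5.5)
Your proposal is correct and follows essentially the same route as the paper's proof. Measurability comes from countability, thickness comes from $\supp\mu \subseteq \{t_{k,m,a}(x_n(s))\}_k \subseteq X_s$, and $A_s \subseteq X_s$ is proved by the same induction on the levels $A_n(s)$; you spell out the closure and domain bookkeeping for thickness, which the paper only asserts.
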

\begin{proof}
The countable set $X_s$ is measurable, and for all $n\in \omega$, $a \in L$, and measure $\mu \in \Tfont{T}_a(x_n(s))$, $X_s$ is a thick subset of $(S, \Sigma, \mu)$. From this we get the substructure property.
		
For the last assertion, we show by induction on $m$ that $A_m(s) \subseteq X_s$.
For the base case, $\{s\} = A_0(s) \subseteq X_s$ because $x_0(s) = s$.
Assume $A_m(s) \subseteq X_s$ and let $z \in A_{m+1}(s) \setminus A_m(s)$.  
Then $z \in \tilde{\Tfont{T}}_a(y)$ for some $y \in A_m(s)$ and $a\in L$, i.e., $z\in \supp \mu$ for some $\mu \in \Tfont{T}_a(y)$.
By the inductive hypothesis $y = x_{n_0}(s)$ for some $n_0 \in \omega$, hence $z = t_{k,m,a}(x_{n_0}(s)) = x_{n_1}(s)$ for certain $k,m,n_1 \in \omega$.
Thus $z \in X_s$, and we conclude that $A_{m+1}(s) \subseteq X_s$.
\end{proof}
We finally arrive to the result that bisimilarity on uniform pointmass processes
is analytic.
\begin{theorem}\label{th:state-bisim-finit-supp}
  State bisimilarity in a finitely supported, uniform NLMP is $\mathbf{\Sigma}_1^1$.
\end{theorem}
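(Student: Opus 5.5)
We will show that $\sim_\s\subseteq S\times S$ is the projection of a Borel subset of $S\times S\times 2^{\omega\times\omega}$. The idea is that a bisimulation witnessing $s\sim_\s s'$ can always be taken to live on countable sets enumerated by the functions $x_n$ of Definition~\ref{def:x_n}.

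By Lemma~\ref{lem:A_s_included_in_x_n}, $\lmp{X}_s$ and $\lmp{X}_{s'}$ are substructures containing $s$ and $s'$. So \eqref{eq:bisim_in_S_iff_in_substructures} together with Lemma~\ref{lem:bisim-equiv-between-generated} gives the following: $s\sim_\s s'$ iff there is a $z$-closed external state bisimulation $R\subseteq X_s\times X_{s'}$ with $s\mathrel{R}s'$. We code such an $R$ by $c\in 2^{\omega\times\omega}$, reading $c(n,m)=1$ as $x_n(s)\mathrel{R}x_m(s')$. The coding is legitimate only when $x_n(s)$ and $x_m(s')$ are defined. Since codes need not respect repetitions in the enumeration, we also require \emph{consistency}: whenever $x_n(s)=x_{n'}(s)$ and $x_m(s')=x_{m'}(s')$, then $c(n,m)=c(n',m')$. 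This is a countable conjunction of Borel conditions, because $\{s\mid x_n(s)=x_{n'}(s)\}$ is Borel: the diagonal of a standard Borel space is Borel, and the $x_n$ are measurable on Borel domains. The $z$-closedness condition and $c(0,0)=1$ are also countable Boolean combinations of values of $c$, hence clopen.

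The core of the proof is to express the zig and zag conditions in a Borel way. Fix $a$, indices $n,m$ with $c(n,m)=1$, and for zig an index $j$ (the $j$th measure $\mu=\sum_k r_{k,j,a}(x_n(s))\,\delta_{t_{k,j,a}(x_n(s))}$). We require the existence of some $j'$ such that the criterion of Lemma~\ref{lem:barR_characterization_countable_supp} holds for $\mu$ and the $j'$th measure $\mu'$ at $x_m(s')$. Finite support is what makes this work. The points of $\supp\mu$ are $t_{k,j,a}(x_n(s))=x_{p(n,k,j,a)}(s)$ for a fixed recursive index function $p$, with $k$ ranging over the finitely many $k$ having positive weight. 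The quantity $\mu(R^{-1}[R[x]]\cap\supp\mu)$ is then a finite sum of weights $r_{k,j,a}(x_n(s))$ over those $k$ whose point is $z$-related (via $c$) to $x$. Finiteness here is the finiteness of $\{k\mid r_{k,j,a}>0\}$, which is Borel in $s$ because it can be written as ``$\exists K\,\forall k>K$, $r=0$''. Summing weights of distinct $k$ mapping to the same point is harmless, because both sides are masses of sets.

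So each equation becomes: for all $k$, if $r_{k,j,a}(\cdot)>0$, then $\sum_{l} r_{l,j,a}\cdot[\text{related}] = \sum_{l'} r'_{l',j',a}\cdot[\text{related}]$. Each sum is countable with Borel terms, hence a Borel function of $(s,s',c)$, and equality of two Borel real functions is a Borel condition. The quantifiers $\forall a,n,m,j\ \exists j'$ and their zag counterparts are countable. Therefore the set $B$ of triples $(s,s',c)$ satisfying all of these conditions is Borel.

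It remains to check that $s\sim_\s s'$ iff $\exists c\,(s,s',c)\in B$. The forward direction takes the code of the $z$-closed bisimulation $R$ given above. For the converse, a consistent code defines a relation $R\subseteq X_s\times X_{s'}$, and Lemma~\ref{lem:barR_characterization_countable_supp} turns the encoded equations into $\mu\mathrel{\bar R}\mu'$. That lemma assumes $R$ is already a $z$-closed bisimulation, but its ($\Leftarrow$) direction only uses that $R$ is $z$-closed and that points of $\supp\mu$ have $R$-images. The latter follows from the encoded zig condition itself, since $\mu(\ldots)>0$ forces nonempty related sets. I expect this to be the main obstacle: adapting that lemma without circularity, and doing the bookkeeping of partial domains and repeated points carefully.
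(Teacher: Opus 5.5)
Your strategy is the paper's: pass to the countable substructures $\lmp{X}_s,\lmp{X}_{s'}$, code $R$ by a point of $2^{\omega\times\omega}$, and use Lemma~\ref{lem:barR_characterization_countable_supp} to turn $\mu\mathrel{\bar R}\mu'$ into equalities of finite weighted sums. Two of your choices are improvements: the ``countable sum of Borel terms'' argument is cleaner than the paper's case split on the finite sets $J_{x,x',R,n,k}$, and the consistency clause handles bookkeeping the paper leaves implicit.

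The gap is at the step you flag yourself. You claim the encoded equations force every $y\in\supp\mu$ to have an $R$-image, and this is false. If $R[y]=\emptyset$, then also $R^{-1}[R[y]]=\emptyset$, so in particular $y\notin R^{-1}[R[y]]$. The equation attached to $y$ then reads $0=0$, and nothing is forced.

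Concretely, take the five-state countable process with $\Tfont{T}_a(x)=\{\delta_y\}$, $\Tfont{T}_a(x')=\{\delta_{y'}\}$, $\Tfont{T}_a(y)=\{\delta_z\}$ and $\Tfont{T}_a(y')=\Tfont{T}_a(z)=\emptyset$. The code of $R=\{(x,x')\}$ satisfies every clause of your set $B$, since all the equations are $0=0$. Yet $\delta_y\mathrel{\bar R}\delta_{y'}$ fails, as the $R$-closed pair $(\{y\},\emptyset)$ shows. Moreover $x\not\sim_{\s}x'$, because $y$ has an $a$-transition and $y'$ has none. So the projection of $B$ is strictly larger than $\sim_{\s}$.

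The repair is to add the missing clause explicitly. Whenever $c(n,m)=1$ and $r_{k,j,a}(x_n(s))>0$, require some $m'$ with $c(p(n,k,j,a),m')=1$. Require the dual condition for the support points of the measures at $x_m(s')$. Alternatively, use $\{y\}\cup R^{-1}[R[y]]$ and $\{y'\}\cup R[R^{-1}[y']]$ in the two families of equations, which makes a missing image produce $\mu(\{y\})>0=0$.

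This clause has the three properties you need:
\begin{itemize}
\item It is Borel, being built from countable quantifiers over Borel conditions.
\item Every genuine $z$-closed external bisimulation satisfies it. To see this, run the argument of Lemma~\ref{lem:A_s_saturated} on the $R$-closed pair $(\{y\}\cup R^{-1}[R[y]],R[y])$, and on its dual for the other side.
\item Together with $z$-closedness, it is exactly what the computation in the ($\Leftarrow$) half of Lemma~\ref{lem:barR_characterization_countable_supp} uses. With it, that computation gives $\mu(Q)\le\mu'(Q')$ and, by the dual clause, the reverse inequality.
\end{itemize}

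For the record, the paper's proof also applies that lemma to arbitrary $R\in 2^{\N\times\N}$ without its bisimulation hypothesis, so it glosses over the same point. You were right to flag it, but your justification does not close it.
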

\begin{proof}
  Let $\lmp{S} = (S,\Sigma,\{ \Tfont{T}_a \mid a \in L \})$ be a finitely supported, uniform
  NLMP, and for each $s\in S$, consider its countable substructure $\lmp{X}_s$.
  By Lemma~\ref{lem:bisim-equiv-between-generated}, 
  \begin{quote}
    $s \sim_\s s'$ iff there exists a $z$-closed external state bisimulation $R$
    between $\lmp{X}_s$ and $\lmp{X}_{s'}$ such that $s\mathrel{R}s'$.
  \end{quote}
  We will first show that “$R$ is a $z$-closed external state bisimulation” is Borel
  (assuming that testing membership for $R$ is), and afterwards that the existential
  quantifier for $R$ can range on a Cantor space.

  It is easy to see that “$R$ is a $z$-closed” is Borel. For being
  an external bisimulation, we need to separate cases in which transition sets
  are empty, for technical reasons:
  \begin{align}
    x \mathrel{R} x'
    \implies
    \forall a \in L, \;
    &\bigl(\Tfont{T}_a(x) = \emptyset \iff \Tfont{T}'_a(x') = \emptyset \bigr)
    \land {} \label{eq:T-empty}\\
    &\bigl(\Tfont{T}_a(x) \neq \emptyset \implies
    \forall \mu \in \Tfont{T}_a(x) \;
    \exists \mu' \in \Tfont{T}'_a(x') \; \mu \mathrel{\bar{R}} \mu' \bigr)
    \land {} \label{eq:zig-line}\\
    &\bigl(\Tfont{T}'_a(x') \neq \emptyset \implies
    \forall \mu \in \Tfont{T}'_a(x') \;
    \exists \mu' \in \Tfont{T}_a(x) \; \mu \mathrel{\bar{R}} \mu' \bigr) \label{eq:zag-line}
  \end{align}
  We analyze past the first universal quantifier.  The first line
  (\ref{eq:T-empty}) is Borel by hit-measurability; since the second and third
  lines are symmetrical, we focus on the “zig” condition
  \begin{equation}\label{eq:zig}
    \forall \mu \in \Tfont{T}_a(x) \;
    \exists \mu' \in \Tfont{T}'_a(x') \; \mu \mathrel{\bar{R}} \mu' 
  \end{equation}  
  under the assumption that $\Tfont{T}_a(x)$ is nonempty. In this case, it can
  be replaced by the following:
  \begin{equation}
    \label{eq:R-bar-sum}
      \forall n,\,    \exists n',\,
      \bigl( \textstyle\sum_j r_{j,n,a}(x) \delta_{t_{j,n,a}(x)} \bigr)
      \mathrel{\bar{R}}
      \bigl( \textstyle\sum_{j'}r_{j',n',a}(x') \delta_{t_{j',n',a}(x')} \bigr)
  \end{equation}
  We will apply
  Lemma~\ref{lem:barR_characterization_countable_supp}, for  $\mu = \sum_j
  r_{j,n,a}(x) \delta_{t_{j,n,a}(x)}$ and analogous $\mu'$, to reduce the occurrence
  of $\bar{R}$ to $R$. We note that 
  \begin{align*}
    \supp \mu &= \{ t_{k,n,a}(x) \mid r_{k,n,a} (x) \neq 0 \}\\
    \supp \mu' &= \{ t_{k',n',a}(x) \mid r_{k',n',a} (x') \neq 0 \}
  \end{align*}
  Hence that lemma implies that (\ref{eq:R-bar-sum}) is equivalent
  to assertion that for all $n$, there exists $n'$ such the conjunction of the following two conditions:
  \begin{multline}
    \label{eq:R-bar-first}
    \forall k,\, r_{k,n,a} (x) \neq 0 \implies \\
    \bigl( \textstyle\sum_j r_{j,n,a}(x) \delta_{t_{j,n,a}(x)} \bigr)(R^{-1}[R[t_{k,n,a}(x)]] \cap \supp \mu)
    =\\
    \bigl( \textstyle\sum_{j'}r_{j',n',a}(x') \delta_{t_{j',n',a}(x')} \bigr)(R[t_{k,n,a}(x)] \cap \supp \mu')
  \end{multline}
  and
  \begin{multline}
    \label{eq:R-bar-second}
    \forall k',\, r_{k',n',a} (x') \neq 0 \implies \\
    \bigl( \textstyle\sum_j r_{j,n,a}(x) \delta_{t_{j,n,a}(x)} \bigr)(R^{-1}[t_{k',n',a}(x')] \cap \supp \mu)
    =\\
    \bigl( \textstyle\sum_{j'}r_{j',n',a}(x') \delta_{t_{j',n',a}(x')} \bigr)(R[R^{-1}[t_{k',n',a}(x')]]\cap \supp \mu')
  \end{multline}
  hold.
  
  To see that (\ref{eq:R-bar-first}) is a measurable condition, it is then enough to show that the functions
  \begin{align*}
    G(x,x',R,n,k) &\defi    \bigl( \textstyle\sum_j r_{j,n,a}(x) \delta_{t_{j,n,a}(x)} \bigr)(R^{-1}[R[t_{k,n,a}(x)]] \cap \supp \mu),\\
    G'(x,x',R,n',k) &\defi \bigl( \textstyle\sum_{j'}r_{j',n',a}(x') \delta_{t_{j',n',a}(x')} \bigr)(R[t_{k,n,a}(x)] \cap \supp \mu').
  \end{align*}
  are Borel.
  Let us focus on $G(x,x',R,n,k)$. To analyze its value,
  \begin{equation*}
    \sum\{ r_{j,n,a}(x) \mid t_{j,n,a}(x) \in R^{-1}[R[t_{k,n,a}(x)]] \cap \supp \mu \},
  \end{equation*}
  consider the inner predicate:
  \begin{multline*}
    t_{j,n,a}(x) \in R^{-1}[R[t_{k,n,a}(x)]] \cap \supp \mu \iff \\
    r_{j,n,a}(x)\neq 0 \land \exists z\in X_{x'} \, (t_{k,n,a}(x)\mathrel{R} z \land t_{j,n,a}(x) \mathrel{R} z) \iff \\ 
    r_{j,n,a}(x)\neq 0 \land \exists l, \, (t_{k,n,a}(x)\mathrel{R} x_l(x') \land t_{j,n,a}(x) \mathrel{R} x_l(x')).
  \end{multline*}
  We deduce:
  \begin{equation*}
    G(x,x',R,n',k) = \sum\{ r_{j,n,a}(x) \mid j \in J_{x,x',R,n,k} \},
  \end{equation*}
  where
  \[
    J_{x,x',R,n,k} \defi \{j \mid r_{j,n,a}(x)\neq 0 \land \exists l, \, t_{k,n,a}(x)\mathrel{R} x_l(x') \land t_{j,n,a}(x) \mathrel{R} x_l(x')\}.
  \]

  Since $\lmp{S}$ is finitely supported, the sets $J_{x,x',R,n,k}$ are all
  finite. Each of the conditions $J_{x,x',R,n,k}=N$ for a finite $N\subseteq
  \omega$ is Borel. For example, 
  \begin{multline*}
    J_{x,x',R,n,k}= \{1\} \iff r_{1,n,a}(x)\neq 0 \land \exists l, \, t_{k,n,a}(x)\mathrel{R} x_l(x') \land t_{1,n,a}(x) \mathrel{R} x_l(x') \\
      \land \forall j\neq 1, \, r_{j,n,a}(x) = 0 \lor \neg (t_{k,n,a}(x)\mathrel{R} x_l(x') \land t_{1,n,a}(x) \mathrel{R} x_l(x')).
  \end{multline*}
  We can finally give a Borel definition of $G$:
  \begin{equation*}
    G(x,x',R,n',k) =
    \begin{cases}
      0, & \text{if } J_{x,x',R,n,k} = \emptyset \\
      r_{0,n,a}(x), & \text{if } J_{x,x',R,n,k}= \{0 \} \\
      r_{1,n,a}(x), & \text{if } J_{x,x',R,n,k}= \{1 \} \\
      r_{0,n,a}(x) + r_{1,n,a}(x), & \text{if } J_{x,x',R,n,k}= \{0,1 \} \\
      \dots & \dots
    \end{cases}
  \end{equation*}
  The same analysis can be performed for $G'$, and repeated for
  (\ref{eq:R-bar-second}) as well by using respective functions $K$ and $K'$.

  We finally take advantage of the canonical enumeration of $\lmp{X}_s$ and $\lmp{X}_{s'}$
  and hence replace the unspecified $R$ by a point in $2^{\N\times \N}$.
  Note that $x,x'$ in (\ref{eq:T-empty}) range over elements of ${X}_s$,
  ${X}_{s'}$, respectively. This amounts to $x_p(s)$, $x_{p'}(s')$ for
  $p,p'\in\omega$.
  Therefore, we can define $s\sim_\s s'$ by:
  \begin{align*}
    \exists R\in 2^{\N\times \N},& \; (0,0)\in 
    R \land  \forall (p,p')\in R, 
    \forall a \in L,\\
    &\bigl(\Tfont{T}_a(x_p(s)) = \emptyset \iff \Tfont{T}'_a(x_{p'}(s')) = \emptyset \bigr)
    \land {}\\
    &\bigl(\Tfont{T}_a(x_p(s)) \neq \emptyset \implies
    \forall \mu \in \Tfont{T}_a(x_p(s)) \;
    \exists \mu' \in \Tfont{T}'_a(x_{p'}(s')) \; \mu \mathrel{\bar{R}} \mu' \bigr)
    \land {}\\
    &\bigl(\Tfont{T}'_a(x_{p'}(s')) \neq \emptyset \implies
    \forall \mu \in \Tfont{T}'_a(x_{p'}(s')) \;
    \exists \mu' \in \Tfont{T}_a(x_p(s)) \; \mu \mathrel{\bar{R}} \mu' \bigr).
  \end{align*}
  Since we have already shown that each line of this definition is Borel, we can
  conclude.
\end{proof}

\subsection{Measurable labelled transition systems}
\label{sec:meas-pointm-proc}

In this section we will consider uncountable LTSs by framing them in the context
of discrete NLMPs to be able to obtain definability results. In order to do
this, we assume that the base set of LTSs considered are indeed standard Borel
spaces $(S, \Sigma)$, and that the transition relations $R_a\subseteq S\times S$ are
actually correspond to appropriate measurable functions $\tilde{\Tfont{T}}_a: S
\to \Sigma$.

The way to do this is to regard LTSs over standard Borel spaces as “non-probabilistic NLMPs” \cite{Wolovick}:
These are processes $(S,\Sigma,\{\Tfont{T}_a\mid a\in L\})$ in which $\Tfont{T}_a(s)$ is a set of point, or Dirac, measures.
In this case, we can express $\Tfont{T}_a(s)=\{\delta_x\mid 
x\in \tilde{\Tfont{T}}_a(s)\}$ with $\tilde{\Tfont{T}}_a(s)\subseteq S$ for each $s\in S$. This implies
that we will be working with probability measures only. In particular, the
zero measure is not included, and we get the equivalence $\Tfont{T}_a(s)=\emptyset \iff \tilde{\Tfont{T}}_a(s)=\emptyset$ (cf. Remark~\ref{rem:about_As}(\ref{item:T_empty})).

Since the mapping $\delta:(S,\Sigma)\to(\Delta(S),\Delta(\Sigma))$ given by $\delta(s) = \delta_s$ is an embedding when $S$ is separable and metrizable, we can dispense with the space
$(\Delta(S),\Delta(\Sigma))$ and work with the structure 
$(S,\Sigma,\{\tilde{\Tfont{T}}_a\mid a\in L\})$.

\begin{definition}\label{def:mlts}
	A \emph{measurable labelled transition system}, or 
	MLTS\xindex{MLTS}, is a tuple $\lmp{S} 
	=(S,\Sigma,\{\tilde{\Tfont{T}}_a\mid a\in L\})$ where 
	$(S,\Sigma)$ is a measurable space and for each label 
	$a\in L$, 
	$\tilde{\Tfont{T}}_a:(S,\Sigma)\rightarrow(\Sigma,H(\Sigma))$
	is a measurable map.
\end{definition}

We have the following correspondence between non-probabilistic NLMP and MLTS. 
\begin{prop}[{\cite[Prop.~4.7]{Wolovick}}]\label{prop:equiv_nonprobNLMP_MLTS}
	Suppose $\Sigma$ is countably generated and separates points in $S$, and 
	for all $a\in L$ and $s\in S$, $\Tfont{T}_a(s)=\{\delta_x\mid x\in 
	\tilde{\Tfont{T}}_a(s)\}$ for sets 
	$\tilde{\Tfont{T}}_a(s)\subseteq S$. Then 
	$(S,\Sigma,\{\Tfont{T}_a\mid a\in L\})$ is an NLMP if and only if 
	$(S,\Sigma,\{\tilde{\Tfont{T}}_a(s)\mid a\in L\})$ is an MLTS.
\end{prop}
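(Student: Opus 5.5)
The plan is to prove both directions by translating measurability conditions across the Dirac embedding $\delta\colon S\to\Delta(S)$, $s\mapsto\delta_s$. Since $\Sigma$ is countably generated and separates points, the singletons are measurable, and $\delta$ is a measurable injection. Moreover, for every $Q\in\Sigma$ and $q\in(0,1]$ we have
\[
  \delta^{-1}[\Delta^{<q}(Q)] = S\setminus Q ,
\]
and $\delta^{-1}[\Delta^{<q}(Q)]=\emptyset$ when $q=0$. Hence $\delta^{-1}[\Delta(\Sigma)]=\Sigma$. The set of Dirac measures $\delta[S]$ lies in $\Delta(\Sigma)$: this is where countable generation is used, since for a countable generating algebra $\{G_m\}$ it can be written as
\[
  \delta[S] = \{\mu \mid \mu(S)=1\}\cap\bigcap_m \{\mu \mid \mu(G_m)\in\{0,1\}\}.
\]
To see this, note that a $\{0,1\}$-valued probability measure on a countably generated, point-separating $\sigma$-algebra is concentrated on the intersection of the generators (or their complements) of measure one. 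That intersection is a single point, because the generators separate points, and it is nonempty since it has measure one. Therefore $\delta$ is a Borel isomorphism onto a measurable subset of $\Delta(S)$, and for every $\mathcal{E}\in\Delta(\Sigma)$ the set $\delta^{-1}[\mathcal{E}]$ belongs to $\Sigma$.

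For the direction NLMP $\Rightarrow$ MLTS, I first check that each $\tilde{\Tfont{T}}_a(s)$ lies in $\Sigma$. This follows because $\tilde{\Tfont{T}}_a(s)=\delta^{-1}[\Tfont{T}_a(s)]$ and $\Tfont{T}_a(s)\in\Delta(\Sigma)$. Next I check measurability of $\tilde{\Tfont{T}}_a$ with respect to $H(\Sigma)$. For $D\in\Sigma$, the image $\delta[D]$ belongs to $\Delta(\Sigma)$, since it equals $\delta[S]\cap\{\mu\mid\mu(D)=1\}$. Then
\[
  \tilde{\Tfont{T}}_a^{-1}[H_D]
  = \{s \mid \tilde{\Tfont{T}}_a(s)\cap D\neq\emptyset\}
  = \{s \mid \Tfont{T}_a(s)\cap\delta[D]\neq\emptyset\}
  = \Tfont{T}_a^{-1}[H_{\delta[D]}] \in \Sigma .
\]

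For the converse, MLTS $\Rightarrow$ NLMP, I first get $\Tfont{T}_a(s)=\delta[\tilde{\Tfont{T}}_a(s)]\in\Delta(\Sigma)$ by the formula above. For the hit sets, take $\mathcal{D}\in\Delta(\Sigma)$. Because every element of $\Tfont{T}_a(s)$ is a Dirac measure, $\Tfont{T}_a(s)\cap\mathcal{D}\neq\emptyset$ holds if and only if $\tilde{\Tfont{T}}_a(s)\cap\delta^{-1}[\mathcal{D}]\neq\emptyset$. Hence
\[
  \Tfont{T}_a^{-1}[H_{\mathcal{D}}] = \tilde{\Tfont{T}}_a^{-1}\bigl[H_{\delta^{-1}[\mathcal{D}]}\bigr],
\]
which is measurable because $\delta^{-1}[\mathcal{D}]\in\Sigma$.

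I expect the main obstacle to be the measurability of $\delta[S]$ and of the images $\delta[D]$ inside $\Delta(\Sigma)$, since the hit $\sigma$-algebra only involves sets belonging to $\Delta(\Sigma)$. If the $\{0,1\}$-valued measure argument turns out to be delicate, I would instead appeal to the fact that an injective measurable map between countably generated, point-separating spaces (here $\Delta(\Sigma)$ separates points as well) need not have measurable range in general. In that case I would keep the explicit description of $\delta[S]$ by the countable intersection above, which is where the hypotheses on $\Sigma$ are genuinely used.
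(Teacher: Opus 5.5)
Your proof is correct. The description $\delta[S]=\{\mu\mid\mu(S)=1\}\cap\bigcap_m\{\mu\mid\mu(G_m)\in\{0,1\}\}$ relies on a countable generating family of a point-separating $\sigma$-algebra itself separating points, which holds because the sets not separating a fixed pair form a $\sigma$-algebra. It yields $\delta[D]=\delta[S]\cap\{\mu\mid\mu(D)=1\}\in\Delta(\Sigma)$, so $\delta$ is a measurable isomorphism onto a measurable set. Both directions then follow from $\tilde{\Tfont{T}}_a(s)=\delta^{-1}[\Tfont{T}_a(s)]$, $\Tfont{T}_a(s)=\delta[\tilde{\Tfont{T}}_a(s)]$ and your two hit-set identities. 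This is the ``$\delta$ is an embedding'' idea the paper invokes just before defining MLTSs; the paper gives no proof of its own and cites Wolovick.

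Drop your closing fallback paragraph. That injective measurable maps need not have measurable range is the reason the explicit countable-intersection description is needed, not an alternative to it.
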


We address the question of the complexity of bisimilarity in this kind of processes.
We can think of an MLTS $\lmp{S}$ as an LTS with certain measurability constraints. 
The transition relations $R_a\subseteq S\times S$ are 
defined by
\begin{equation}\label{eq:mlts-R_a-definition}
	x\mathrel{R_a} y \iff y\in \tilde{\Tfont{T}}_a(x).
\end{equation} 
If $x\mathrel{R_a}y$, we denote it in the usual way for LTS using $x\Rlab{a} y$.
We recall that for LTS, we have the standard relational notion of bisimulation (Definition~\ref{def:LTS-bisimulation}), and thus we can also use it in the context of MLTS.
From \cite[Prop.~12]{2012arXiv1211.0967S}, we know that for an image-countable MLTS over a separable space, state and standard definitions of internal bisimilarity coincide.
Examples of image-uncountable MLTS where these definitions differ can be found in
\cite{DWTC09:qest,D'Argenio:2012:BNL:2139682.2139685}.
 
In the case of external state bisimulation, Equation~\eqref{eq:ext-lift} of lifting a relation yields $\delta_x\mathrel{\bar{R}}\delta_{x'}$ if and only if for every measurable $R$-closed pair $(Q,Q')$, it holds that $x\in Q\iff x'\in Q'$. 
From this we deduce that $R$ is an external state bisimulation if
\begin{quote}
  $s\mathrel{R}s'$ implies $\forall x\in \tilde{\Tfont{T}}_a(s) \;
  \exists x'\in \tilde{\Tfont{T}}_a(s') \;
  x\mathrel{\Rel^\times(\Sigma^\times(R))} x'$,
\end{quote}
and the corresponding “zag” condition, hold.

\begin{prop}\label{prop:equality-bisim-MLTS-tradit-ext} Let $\lmp{S}$ and $\lmp{S}'$ be two image-countable MLTS over separable spaces and $R\subseteq S\times S'$ a $z$-closed relation. Then, $R$ is a standard bisimulation if and only if is an external state bisimulation.
\end{prop}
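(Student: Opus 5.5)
The plan is to reduce both directions to a comparison between the lifted relation $\bar R$ on Dirac measures and the relation $R$ itself. By the discussion preceding the statement, $R$ is an external state bisimulation exactly when $s\mathrel{R}s'$ implies that for every $x\in\tilde{\Tfont{T}}_a(s)$ there is $x'\in\tilde{\Tfont{T}}_a(s')$ with $x\mathrel{\Rel^\times(\Sigma^\times(R))}x'$, together with the symmetric zag clause. Here $x\mathrel{\Rel^\times(\Sigma^\times(R))}x'$ means that for every measurable $R$-closed pair $(Q,Q')$ we have $x\in Q\iff x'\in Q'$. A standard bisimulation instead asks for $x\mathrel{R}x'$. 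So it suffices to show that, on the states that matter, these two relations coincide.

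The easy direction is to show that $R$ is contained in $\Rel^\times(\Sigma^\times(R))$. Suppose $x\mathrel{R}x'$ and $(Q,Q')$ is an $R$-closed pair. By Lemma~\ref{lem:pair-equiv}(\ref{item:in-equiv}), $x\in Q$ gives $x'\in R[Q]\subseteq Q'$, and $x'\in Q'$ gives $x\in R^{-1}[Q']\subseteq Q$. Consequently every standard bisimulation is an external state bisimulation; this direction needs neither $z$-closedness nor separability.

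For the converse, assume $R$ is an external state bisimulation, $s\mathrel{R}s'$, and $x\in\tilde{\Tfont{T}}_a(s)$. Pick $x'\in\tilde{\Tfont{T}}_a(s')$ with $\delta_x\mathrel{\bar R}\delta_{x'}$; the goal is to show $x\mathrel{R}x'$. I would first invoke Remark~\ref{rem:external_is_traditional_countable_NLMP} (the $n=1$ step of Lemma~\ref{lem:A_s_saturated}). It gives some $y'\in\tilde{\Tfont{T}}_a(s')$ with $x\mathrel{R}y'$, which in particular guarantees $R[x]\neq\emptyset$.

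The main obstacle is measurability of the candidate test pair $(R^{-1}[R[x]],R[x])$. Because $R$ is $z$-closed, $R[R^{-1}[R[x]]]=R[x]$ and $R^{-1}[R[x]]\supseteq R^{-1}[R[x]]$, so the pair is $R$-closed. To get measurability I would restrict to countable substructures. Image-countability lets one pass to the countable sets $A_s$, $A_{s'}$ of Definition~\ref{def:general_A_s}; in the Dirac case these are just the reachable states. In a separable, point-separating space every countable set is measurable, which is the same separability argument used in Lemma~\ref{lem:barR_characterization_countable_supp}. Lemma~\ref{lem:ext-in-A-iff-ext-in-S} then transfers the bisimulation property between the substructures and the whole spaces. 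With these ingredients, the pair $(R^{-1}[R[x]]\cap A_s,\;R[x]\cap A_{s'})$ is measurable and $R{\downarrow}$-closed, and $x$ belongs to its first coordinate. Hence $\delta_x\mathrel{\bar R}\delta_{x'}$ forces $x'\in R[x]$, that is, $x\mathrel{R}x'$.

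The zag clause follows symmetrically, using the pair $(R^{-1}[x'],R[R^{-1}[x']])$. Combining both directions proves the proposition.
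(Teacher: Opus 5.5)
Your forward direction (standard $\Rightarrow$ external, via $R\subseteq\Rel^\times(\Sigma^\times(R))$) is correct and is exactly the paper's argument. For the converse, the paper's entire proof is the citation of Remark~\ref{rem:external_is_traditional_countable_NLMP}. The point $y'\in\tilde{\Tfont{T}}_a(s')$ with $x\mathrel{R}y'$ that you take from it is already the witness required by the standard zig clause, so you could have stopped there.

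The extra step you add, proving that the particular $\bar R$-witness $x'$ satisfies $x\mathrel{R}x'$, does not go through. The pair $(R^{-1}[R[x]]\cap A_s,\,R[x]\cap A_{s'})$ is measurable and $R{\downarrow}$-closed, but it is not $R$-closed: nothing prevents $R[x]$ from leaving $A_{s'}$. So it is not in $\Sigma^\times(R)$, and the hypothesis $\delta_x\mathrel{\bar R}\delta_{x'}$, which only concerns pairs in $\Sigma^\times(R)$, says nothing about it. Lemma~\ref{lem:ext-in-A-iff-ext-in-S} cannot bridge this gap, because it concerns relations already contained in $A\times A'$. Passing from $R$ to $R{\downarrow}$ is precisely what needs the extra hypothesis $R=\Rel^\times(\Sigma^\times(R))$ of Proposition~\ref{prop:bisim_rest_substructures}. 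Even then, the $\overline{R{\downarrow}}$-witness would be a new point, not your $x'$. In fact the claim you are aiming at is false, as the example below shows.

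The measurability obstacle you hit also affects the Remark that both you and the paper rely on. Its proof, the case $n=1$ of Lemma~\ref{lem:A_s_saturated}, treats $(R^{-1}[x'],\{x'\}\cup R[R^{-1}[x']])$ as a measurable pair. That is unjustified for an arbitrary $z$-closed $R$ on an uncountable space, and the converse direction genuinely fails without further hypotheses.

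Take $S=S'=[0,1]$ with its Borel sets and a non-Borel $V\subseteq[0,1]$. Choose distinct points $s,s',x_1,x_2$ with $x_1\in V$ and $s,s',x_2\notin V$. Put $\tilde{\Tfont{T}}_a(s)=\{x_1\}$, $\tilde{\Tfont{T}}_a(s')=\{x_2\}$, and all other transition sets empty. This is an image-finite MLTS, since every hit-preimage is a subset of $\{s,s'\}$. Let $R$ be the equivalence relation with classes $\{s,s'\}$, $V$ and $[0,1]\setminus(V\cup\{s,s'\})$.

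By Lemma~\ref{lem:closed-vs-closed-pair}, the measurable $R$-closed pairs are the $(Q,Q)$ with $Q$ a Borel union of classes, that is, $Q\in\{\emptyset,\{s,s'\},[0,1]\setminus\{s,s'\},[0,1]\}$. Hence $\delta_u\mathrel{\bar R}\delta_v$ iff ($u\in\{s,s'\}\iff v\in\{s,s'\}$), and $R$ is a $z$-closed external state bisimulation over a standard Borel space. Yet $s\mathrel{R}s'$, $x_1\in\tilde{\Tfont{T}}_a(s)$, and the only element $x_2$ of $\tilde{\Tfont{T}}_a(s')$ satisfies $(x_1,x_2)\notin R$. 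So $R$ is not a standard bisimulation, and $\delta_{x_1}\mathrel{\bar R}\delta_{x_2}$ also refutes your intermediate claim.

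The converse does hold under either of two extra hypotheses. If $R=\Rel^\times(\Sigma^\times(R))$, then $\delta_x\mathrel{\bar R}\delta_{x'}$ is literally $x\mathrel{R}x'$. If the state spaces are countable, every $R$-closed pair is measurable, so both the Remark and your test pair $(R^{-1}[R[x]],R[x])$ work.
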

\begin{proof}
	Suppose that $R \subseteq S \times S'$ is a standard bisimulation, $s \mathrel{R} s'$, and $x \in \tilde{\Tfont{T}}_a(s)$, that is, $s \Rlab{a} x$. 
	Then there exists $x' \in S'$ such that $s' \Rlab{a} x'$ and $x \mathrel{R} x'$. 
	Consequently, if $(Q, Q')$ is a measurable $R$-closed pair, $x \in Q \iff x' \in Q'$. 
	The zag condition is similar, and thus $R$ is an external state bisimulation.
	
	The reverse direction was noted in Remark~\ref{rem:external_is_traditional_countable_NLMP}.\qedhere
\end{proof}

Given this result, we will denote any of these bisimilarities by
$\sim$.

\begin{corollary}\label{cor:s-bisim-t-implies-Ss-bisim-As}
  Let $\lmp{S}$ be an image-countable MLTS over a separable space. For all $s\in S$, $(\lmp{S},s) \sim (\lmp{A}_s,s)$.
\end{corollary}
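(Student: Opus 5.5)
The plan is to view the MLTS $\lmp{S}$ as a non-probabilistic NLMP, using Proposition~\ref{prop:equiv_nonprobNLMP_MLTS}: each $\Tfont{T}_a(s)=\{\delta_x\mid x\in\tilde{\Tfont{T}}_a(s)\}$ is a countable set of Dirac measures. This NLMP is discrete, since every Dirac measure has singleton support. So the countable set $A_s$ of Definition~\ref{def:general_A_s} is available, and $\lmp{A}_s$ is a substructure of $\lmp{S}$ by the remarks following that definition. Here $A_s$ is just the set of states reachable from $s$ in finitely many labelled steps.

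The core step is Proposition~\ref{prop:S,s-bisim-A,s}, which gives $(\lmp{S},s)\sim_\s^\times(\lmp{A}_s,s)$. Its witness is the relation $R=\id\rest A_s\subseteq S\times A_s$. This relation is $z$-closed: if $x\mathrel{R}y$, $x'\mathrel{R}y$ and $x'\mathrel{R}y'$, then all four points coincide.

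We then transfer this to standard bisimilarity via Proposition~\ref{prop:equality-bisim-MLTS-tradit-ext}. Its hypotheses need checking. First, $\lmp{A}_s$ must be an image-countable MLTS over a separable space. The restricted transitions are $\tilde{\Tfont{T}}_a(x)$ for $x\in A_s$; these are countable and contained in $A_s$, because the successors of reachable states are reachable. Separability of $(A_s,\Sigma\rest A_s)$ follows since $A_s$ is countable and $\Sigma$ separates points, so $\Sigma\rest A_s$ is the power set, again using Proposition~\ref{prop:equiv_nonprobNLMP_MLTS}. The proposition then says $R$ is a standard bisimulation relating $s$ to $s$, which gives $(\lmp{S},s)\sim(\lmp{A}_s,s)$.

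I expect the main obstacle to be this bookkeeping around the substructure, namely identifying $(\Tfont{T}_a\rest A_s)(x)=\{(\delta_y)_{A_s}\mid y\in\tilde{\Tfont{T}}_a(x)\}$ with Dirac measures on $A_s$. It is routine, because $(\delta_y)_{A_s}=\delta_y\rest(\Sigma\rest A_s)$ and $y\in A_s$.

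If any of these hypotheses proved delicate, I would bypass them with a direct argument. The relation $\id\rest A_s$ is plainly a standard bisimulation in the sense of Definition~\ref{def:LTS-bisimulation}, since for $x\in A_s$ the $a$-successors of $x$ in $\lmp{S}$ and in $\lmp{A}_s$ are literally the same set $\tilde{\Tfont{T}}_a(x)\subseteq A_s$, so zig and zag are trivial.
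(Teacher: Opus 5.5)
Your proposal is correct and follows the paper's proof. Proposition~\ref{prop:S,s-bisim-A,s} gives $(\lmp{S},s)\sim_\s^\times(\lmp{A}_s,s)$ via $\id\rest A_s$, and Proposition~\ref{prop:equality-bisim-MLTS-tradit-ext}, whose $z$-closedness hypothesis you correctly verify, turns this into standard bisimilarity; your fallback argument (successors of states in $A_s$ stay in $A_s$, so zig and zag for $\id\rest A_s$ are immediate) is also sound and more elementary.
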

\begin{proof}
	Proposition~\ref{prop:S,s-bisim-A,s} implies that 
    $(\lmp{S},s) \sim_\s^\times (\lmp{A}_s,s)$, and by
    Proposition~\ref{prop:equality-bisim-MLTS-tradit-ext}, we can replace $\sim_\s^\times$ with standard bisimulation between LTS.
\end{proof}

In what follows, and given the equivalence of all definitions of bisimilarity in
this context, it will
suffice to work with the LTS structure of $\lmp{A}_s$ via the equations
\eqref{eq:mlts-R_a-definition}, that is, with the system $(A_s, \{ R_a \mid a
  \in L \})$. Following the approach in \cite{2012arXiv1211.0967S}, we aim to
represent this system as a tree over $\mathbb{N}$ and thereby find a reduction
of the bisimilarity relation to the isomorphism relation.

\subsection{Uniformly measurable LTS}\label{sec:umlts}

We have seen that Proposition~\ref{prop:equiv_nonprobNLMP_MLTS} relates image-countable MLTSs to a special case of discrete NLMPs.
We still want to uniformly enumerate all states accessible from other states in an MLTS. We could do this by asking that the associated DNLMP be uniformly measurable, but we need to do it in a special way.

\begin{definition}\label{def:umlts}
	An image-countable MLTS $\lmp{S} = (S,\Sigma,\{\tilde{\Tfont{T}}_a\mid a\in
	L\})$ is \emph{uniformly measurable} (UMLTS), if the associated DNLMP $(S,\Sigma,\{\Tfont{T}_a(s) \mid a\in L\})$ admits a uniform structure such that, for all $a\in L$ and $s\in S$, if $\Tfont{T}_a(s)\neq \emptyset$ ($\tilde{\Tfont{T}}_a(s)\neq \emptyset)$ the enumerating functions $r_{k,n,a}$ and $t_{k,n,a}$ satisfy:
	\begin{enumerate}
		\item $\{t_{0,n,a}(s)\mid n \in \omega\} = \tilde{\Tfont{T}}_a(s) \wedge \forall n,\, r_{0,n,a}(s)=1$; 
		\item $\forall k \geq 1,\, \forall n,\, t_{k,n,a}(s)=s \wedge r_{k,n,a}(s)=0$.
	\end{enumerate}
\end{definition}

Note that, whenever non-empty, we get the equalities 
\begin{align*}
\Tfont{T}_a(s) &= \bigl\{ \textstyle\sum_k r_{k,n,a}(s)
\delta_{t_{k,n,a}(s)}\mid n\in \omega \bigr\} =\{r_{0,n,a}(s) \delta_{t_{0,n,a}(s)}\mid n\in \omega\} \\
&= \{\delta_x\mid x\in \tilde{\Tfont{T}}_a(s)\}.
\end{align*} 
This way, we can obtain a more simple uniform enumeration of $\tilde{\Tfont{T}}_a$.

\begin{lemma}\label{lem:property-UMLTS}
	An MLTS $\lmp{S}=(S,\Sigma,\{\tilde{\Tfont{T}}_a\mid a\in 
	L\})$ is uniformly measurable iff there exist measurable functions $t_{n,a}:\{s \in S \mid \tilde{\Tfont{T}}_a(s)\neq\emptyset\}\to S$ such that $\tilde{\Tfont{T}}_a(s)=\{t_{n,a}(s)\}_{n\in\omega}$ for every $s \in S$ with $\tilde{\Tfont{T}}_a(s)\neq\emptyset$.		
\end{lemma}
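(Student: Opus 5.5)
The equivalence is proved by building one enumeration from the other, in each direction.

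For the forward direction, assume $\lmp{S}$ is a UMLTS. Fix enumerating functions $r_{k,n,a}$, $t_{k,n,a}$ as in Definition~\ref{def:umlts}. Put $t_{n,a}\defi t_{0,n,a}$. These are measurable partial functions on $\{s\mid \Tfont{T}_a(s)\neq\emptyset\}$, and by the equivalence $\Tfont{T}_a(s)=\emptyset\iff\tilde{\Tfont{T}}_a(s)=\emptyset$ for non-probabilistic NLMPs, this set equals $\{s\mid\tilde{\Tfont{T}}_a(s)\neq\emptyset\}$. The first clause of Definition~\ref{def:umlts} then gives $\tilde{\Tfont{T}}_a(s)=\{t_{n,a}(s)\}_{n\in\omega}$ directly.

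For the converse, suppose measurable $t_{n,a}$ with the stated property are given, and let $D_a\defi\{s\mid\tilde{\Tfont{T}}_a(s)\neq\emptyset\}$. First I check that $D_a$ is measurable: it is $\tilde{\Tfont{T}}_a^{-1}[H_S]$, and $\tilde{\Tfont{T}}_a$ is $H(\Sigma)$-measurable since $\lmp{S}$ is an MLTS. On $D_a$ I define
\begin{itemize}
\item $t_{0,n,a}\defi t_{n,a}$ and $r_{0,n,a}\defi 1$;
\item $t_{k,n,a}(s)\defi s$ and $r_{k,n,a}\defi 0$ for $k\geq1$.
\end{itemize}
The $r$'s are constant, the identity is measurable, and each $t_{0,n,a}$ is measurable by hypothesis. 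For each $n$ only $k=0$ has nonzero weight, so finite support holds. The two clauses of Definition~\ref{def:umlts} hold by construction.

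It remains to verify equation~\eqref{eq:discrete-uniform-NLMP} for the associated DNLMP, where $\Tfont{T}_a(s)=\{\delta_x\mid x\in\tilde{\Tfont{T}}_a(s)\}$. For $s\in D_a$ we have
\[
\textstyle\sum_k r_{k,n,a}(s)\delta_{t_{k,n,a}(s)}=\delta_{t_{n,a}(s)},
\]
and ranging over $n$ these measures give exactly $\{\delta_x\mid x\in\tilde{\Tfont{T}}_a(s)\}=\Tfont{T}_a(s)$. This also shows the associated structure is a discrete, image-countable process, as uniformity presupposes.

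The only delicate point is bookkeeping: the NLMP-side domain $\{s\mid\Tfont{T}_a(s)\neq\emptyset\}$ must coincide with the MLTS-side domain $D_a$. This holds because the zero measure never occurs in the associated DNLMP, so both domains are the same measurable set. Nothing else is an obstacle.
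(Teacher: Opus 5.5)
Your proposal is correct and is essentially the paper's proof. The paper also sets $t_{n,a} \defi t_{0,n,a}$ in one direction, and in the other puts $t_{0,n,a} \defi t_{n,a}$ with $r_{0,n,a}=1$ and pads with $t_{k,n,a}(s)=s$, $r_{k,n,a}=0$ for $k\geq 1$. Your extra checks (that the two domains coincide and are measurable as $\tilde{\Tfont{T}}_a^{-1}[H_S]$, and that the resulting sums recover $\Tfont{T}_a(s)$) only spell out what the paper leaves implicit.
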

\begin{proof}
	If $\lmp{S}$ is uniformly measurable, take $t_{n,a} \defi t_{0,n,a}$.
	Conversely, define $t_{0,n,a}\defi t_{n,a}$ and the rest of $t_{k,n,a}$ and $r_{k,n,a}$ as in Definition~\ref{def:umlts}. 
\end{proof}

We will prove that, for these structures, we can find a Borel reduction from
bisimilarity to 
isomorphism between first order structures, that is, 
we will show that  bisimilarity on UMLTS is classifiable by countable structures.

The reduction is sketched on Figure~\ref{fig:UMLTS-second-method}. We start with a
pointed UMLTS $(\lmp{S},s)$, and we pass to its substructure $\lmp{A}_s$ induced
from $s$, and to the encoding $\calA_s$ of the latter as an element of $\omega\LTS$; both
steps are carried out by the function denoted by $h$. Finally, the
$\omega$-expansion $\Omega$ turns $\calA_s$ into the first-order structure
$(\Tr_{\lmp{S}}(s),\{\Suc_{\lmp{S}}^a\}_{a\in L})$.
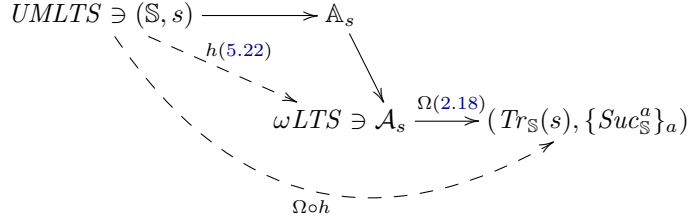
\begin{figure}[h]
  \centerline{
    \xymatrix{
      {\mathit{UMLTS}\ni(\lmp{S},s)} \ar[r]
      \ar@{-->}[dr]^{h(\ref{lem:h:S-to-LTS-continuous})}
      \ar@/_4.0pc/@{-->}[drr]_{\Omega\circ h}
      &
           {\lmp{A}_s} \ar[d]!/r 2em/
           &
           \\
           & {\omega \LTS\ni \calA_s}
           \ar[r]^(.45){\Omega (\ref{lem:Omega-continuous})}
           &
              {(\Tr_{\lmp{S}}(s),\{\Suc_{\lmp{S}}^a\}_a)}
    }
  }
  \label{fig:UMLTS-second-method}
  \caption{Obtaining a first-order structure from a pointed UMLTS.}
\end{figure}

\begin{lemma}\label{lem:closed-discrete-implies-UMLTS}
	Let $\mathbb{S} = (S,\Sigma,\{\tilde{\Tfont{T}}_a \mid a \in L\})$ be an
	MLTS and assume that there is a Polish topology $\tau$ on $S$ such that
	$\tilde{\Tfont{T}}_a(s)$ is closed and discrete for every label $a \in L$
	and every $s \in S$. Then $\mathbb{S}$ is uniformly measurable.
\end{lemma}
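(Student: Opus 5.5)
By Lemma~\ref{lem:property-UMLTS}, it suffices to produce measurable functions $t_{n,a}$, defined on the measurable set $D_a \defi \{s \in S \mid \tilde{\Tfont{T}}_a(s)\neq\emptyset\} = \tilde{\Tfont{T}}_a^{-1}[H_S]$, such that $\tilde{\Tfont{T}}_a(s)=\{t_{n,a}(s)\}_{n\in\omega}$. The plan is to pick points of $\tilde{\Tfont{T}}_a(s)$ inside basic open sets, using the hit measurability of $\tilde{\Tfont{T}}_a$.

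First, observe that $\tilde{\Tfont{T}}_a(s)$ is countable. A closed discrete subset of a Polish, hence second countable, space is countable. Fix a countable base $\{U_m\}_{m\in\omega}$ for $\tau$, and fix a compatible complete metric $d$. We may assume $\Sigma$ is the Borel $\sigma$-algebra of $\tau$, since $(S,\Sigma)$ is standard Borel and $\tau$ is compatible with it; I would state this as the standing reading of the hypothesis.

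For each $s\in D_a$ and each $m$ with $U_m\cap\tilde{\Tfont{T}}_a(s)\neq\emptyset$, we want a Borel choice of a point of $U_m\cap\tilde{\Tfont{T}}_a(s)$. Since every point $x\in\tilde{\Tfont{T}}_a(s)$ is isolated, there is a basic $U_m$ with $U_m\cap\tilde{\Tfont{T}}_a(s)=\{x\}$. So it is enough to define, for each $m$, a measurable map $c_m$ on the set
\[
  E_m\defi\{s\in D_a \mid |U_m\cap\tilde{\Tfont{T}}_a(s)|=1\}
\]
returning the unique point, and then extend it by some default $t_{0}$-value outside $E_m$.

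The steps are as follows.
\begin{enumerate}
\item[(i)] Show that $E_m$ is measurable. The condition ``$U_m\cap\tilde{\Tfont{T}}_a(s)\neq\emptyset$'' is $\tilde{\Tfont{T}}_a^{-1}[H_{U_m}]$. The condition ``at least two points'' is $\bigcup_{i,j}\tilde{\Tfont{T}}_a^{-1}[H_{U_i}]\cap\tilde{\Tfont{T}}_a^{-1}[H_{U_j}]$, where the union ranges over pairs of disjoint basic sets $U_i,U_j\subseteq U_m$. This is valid because $\tau$ is Hausdorff.
\item[(ii)] Show that $c_m$ is measurable on $E_m$. For a basic $U_j$,
\[
  c_m^{-1}[U_j]=E_m\cap \tilde{\Tfont{T}}_a^{-1}\bigl[H_{U_m\cap U_j}\bigr],
\]
which is measurable because open sets are Borel. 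Since the $U_j$ generate the Borel sets, $c_m$ is measurable.
\item[(iii)] Build the enumeration. Define the default choice $t_{\ast}(s)\defi c_{m(s)}(s)$, where $m(s)$ is the least $m$ with $s\in E_m$. Such an $m$ exists because $\tilde{\Tfont{T}}_a(s)\neq\emptyset$ has an isolated point. The map is measurable since $\{s \mid m(s)=m\}$ is a Boolean combination of the $E_{m'}$. Then set $t_{m,a}(s)\defi c_m(s)$ if $s\in E_m$, and $t_{m,a}(s)\defi t_\ast(s)$ otherwise.
\end{enumerate}
Every point of $\tilde{\Tfont{T}}_a(s)$ is some $c_m(s)$, and every value lies in $\tilde{\Tfont{T}}_a(s)$. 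Hence $\tilde{\Tfont{T}}_a(s)=\{t_{m,a}(s)\}_m$, and Lemma~\ref{lem:property-UMLTS} concludes.

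I expect the main obstacle to be the measurability bookkeeping in (ii), namely that hit sets $H_D$ are available only for $D\in\Sigma$. Here this is harmless because all the sets used are open. The other delicate point is justifying that $\tau$ generates $\Sigma$. I would make that assumption explicit if it is not already implicit in ``Polish topology on $S$''. Closedness of $\tilde{\Tfont{T}}_a(s)$ is not needed in this argument; discreteness alone suffices, together with the countability it yields.
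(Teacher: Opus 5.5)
Your proof is correct, but it takes a different route from the paper's.

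The paper invokes the selection theorem for $F(S)$. This gives Borel maps $d_n\colon F(S)\to S$ with $\{d_n(F)\}_n$ dense in every nonempty closed $F$. Since a dense subset of a discrete space is the whole space, $t_{n,a}(s)\defi d_n(\tilde{\Tfont{T}}_a(s))$ enumerates $\tilde{\Tfont{T}}_a(s)$.

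You build the selectors by hand instead:
\begin{itemize}
\item each point of $\tilde{\Tfont{T}}_a(s)$ is isolated by some basic $U_m$;
\item the sets $E_m$, where $U_m$ captures exactly one point, are measurable via hit sets of basic open sets and the Hausdorff property;
\item the maps $c_m$ are measurable because $c_m^{-1}[U_j]$ is again a hit condition, namely on $U_m\cap U_j$.
\end{itemize}

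Your route is self-contained and slightly more general. Closedness is never used, only discreteness; the paper needs closedness only to land in $F(S)$ and apply the selection theorem. It also makes explicit a step the paper leaves implicit: that $s\mapsto\tilde{\Tfont{T}}_a(s)$ is measurable into $F(S)$ with the Effros Borel structure. That structure is generated by the traces of the hit sets $H_U$ with $U$ open, so the step needs exactly the hit-measurability on open sets that you use directly.

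The paper's route buys brevity by citing a standard theorem. Both arguments rely on reading the hypothesis as $\Sigma=\Borel(\tau)$, which you rightly make explicit.
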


\begin{proof}
	Because $(S,\tau)$ is Polish, the Selection Theorem for $F(S)$
	\cite[Thm.~12.13]{Kechris} yields a sequence of Borel functions
	$d_n : F(S) \to S$ such that $\{d_n(F)\}_{n \in \omega}$ is dense in $F$
	for every non-empty closed set $F \subseteq S$.
	
	If $\emptyset \neq \tilde{\Tfont{T}}_a(s) \in F(S)$, the set
	$\{d_n(\tilde{\Tfont{T}}_a(s))\}_{n \in \omega}$ is dense in
	$\tilde{\Tfont{T}}_a(s)$.  Since $\tilde{\Tfont{T}}_a(s)$ is discrete, we
	must have
	$\{d_n(\tilde{\Tfont{T}}_a(s))\} = \tilde{\Tfont{T}}_a(s)$.  
	Hence
	\[	
	t_{n,a}(s) \defi d_n(\tilde{\Tfont{T}}_a(s))
	\]
	is measurable, providing the required measurable enumeration.
\end{proof}

\begin{example}
	Every image-finite MLTS on a standard Borel space is uniformly measurable.
\end{example}

\begin{example}\label{exm:F-unif-meas}
	Consider the MLTS
	\[	
	\mathbb{F} =
	(\Tr_\N \times \N^{<\N},
	\Borel(\Tr_\N \times \N^{<\N}),
	\tilde{\Tfont{T}}).
	\]
	The set of states reachable under the single label is
	\[	
	\tilde{\Tfont{T}}((T,s)) \defi
	\{(T,s') \mid s,s' \in T \text{ and } s \prec s'\},
	\]
	where $s \prec s' \iff \exists n \in \N\; s' = s^\smallfrown n$.
	This process was introduced in \cite[§3.2]{2012arXiv1211.0967S} as an
	example of an MLTS on a Polish space whose bisimilarity relation is
	analytic but not Borel.
	
	The set $\tilde{\Tfont{T}}((T,s)) = \{T\} \times \{s' \in T \mid s \prec s'\}$
	is discrete, because all its points share the first coordinate~$T$ and
	differ on a discrete second coordinate.  It is also closed, being the
	product of two closed sets.  Consequently, by
	Lemma~\ref{lem:closed-discrete-implies-UMLTS} we conclude that
	$\mathbb{F}$ is uniformly measurable.
\end{example}

The functions $x_n$ from Definition~\ref{def:x_n} enjoy the following property: There is a path from $s$ to $t$ only if there exists $n \ge 1$ such that $t = x_n(s)$.
The converse does not hold since every composition $x_n$ of $t_{k,n,a}$ with $k\geq
1$ is the identity function. This might have been simplified by restricting the
enumeration $\{ x_n \}_n$ to compositions of $t_{0,n,a}$ functions, but we
preferred not to add one more (clashing) definition.
\begin{lemma}\label{lem:A-bot-properties}
	Let $\mathbb{S} = (S,\Sigma,\{\tilde{\Tfont{T}}_a \mid a \in L\})$ be an	UMLTS and $s\in S$.
	Then:
	\begin{enumerate}
		\item\label{item:x_n-set-is-A_s}
		$A_s = \{x_n(s) \mid n \in \omega\}$;
		\item
		$x_n(s) \Rlab{a} x_m(s)$ if and only if
		$\exists l \in \omega,\; x_m(s) = t_{0,l,a}(x_n(s))$.
	\end{enumerate}
\end{lemma}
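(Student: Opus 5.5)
For item~\ref{item:x_n-set-is-A_s}, the inclusion $A_s \subseteq \{x_n(s) \mid n\in\omega\} = X_s$ is exactly Lemma~\ref{lem:A_s_included_in_x_n}. The plan for the reverse inclusion is to show, by induction on the length of the composition, that every $x_n(s)$ (whenever defined) lies in $A_s$. The base case is $x_0(s)=s\in A_0(s)$. For the inductive step, write $x_n = t_{k,l,a}\circ x_m$ with $x_m(s)\in A_j(s)$ for some $j$, and split on $k$.

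If $k\ge 1$, Definition~\ref{def:umlts} gives $t_{k,l,a}(y)=y$ on its domain. Hence $x_n(s)=x_m(s)\in A_j(s)$.

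If $k=0$, then $\tilde{\Tfont{T}}_a(x_m(s))\neq\emptyset$ (we are in the domain) and $t_{0,l,a}(x_m(s))\in\tilde{\Tfont{T}}_a(x_m(s))$. The set $\Tfont{T}_a(x_m(s))$ consists of Dirac measures $\delta_x$ with $x\in \tilde{\Tfont{T}}_a(x_m(s))$, and $\supp\delta_x=\{x\}$. So the NLMP version of $\tilde{\Tfont{T}}_a$ from Definition~\ref{def:general_A_s} coincides with the MLTS map $\tilde{\Tfont{T}}_a$. Therefore $x_n(s)\in A_{j+1}(s)\subseteq A_s$.

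The only delicate point is this identification of the two uses of $\tilde{\Tfont{T}}_a$, together with the fact that partial compositions are only considered where defined. I expect this bookkeeping to be the main (mild) obstacle.

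For the second item, we use the definition \eqref{eq:mlts-R_a-definition}: $x_n(s)\Rlab{a}x_m(s)$ holds iff $x_m(s)\in\tilde{\Tfont{T}}_a(x_n(s))$. If this set is nonempty, Definition~\ref{def:umlts}(1) says it equals $\{t_{0,l,a}(x_n(s))\mid l\in\omega\}$, which gives both directions. If it is empty, the left side fails. The right side also fails, since $x_n(s)$ is then outside the domain of $t_{0,l,a}$, so no equation $x_m(s)=t_{0,l,a}(x_n(s))$ can hold.
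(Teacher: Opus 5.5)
Your proposal is correct and follows essentially the paper's argument. One inclusion of item~1 is Lemma~\ref{lem:A_s_included_in_x_n}. The other is an induction on the length of the composition defining $x_n$: you track the level $A_j(s)$ and split on $k=0$ versus $k\geq 1$, whereas the paper uses that $t_{k,n,a}(y)\in\tilde{\Tfont{T}}_a(y)\cup\{y\}$ and that $A_y\subseteq A_s$ whenever $y\in A_s$. Item~2 is read off from \eqref{eq:mlts-R_a-definition} and Definition~\ref{def:umlts} in both. Your explicit handling of the case $\tilde{\Tfont{T}}_a(x_n(s))=\emptyset$, and of the two meanings of $\tilde{\Tfont{T}}_a$, fills in details the paper leaves implicit.
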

\begin{proof}
	\begin{enumerate}
		\item The $\subseteq$ inclusion is Lemma~\ref{lem:A_s_included_in_x_n}.	
		For the converse inclusion note that
		$\forall k\forall n, \; t_{k,n,a}(s) \in \tilde{\Tfont{T}}_a(s)\cup \{s\} \subseteq A_s$; 
		hence, from Equation~(\ref{eq:mlts-R_a-definition}) and the definition of $A_s$, we deduce
		$t_{k,n,a}(s) \Rlab{b} y \implies
		A_y \subseteq A_{t_{k,n,a}(s)} \subseteq A_s$.
		An induction on the length of the composition that defines $x_n$
		then shows $x_n(s) \in A_s$ for every $n \ge 1$.  By definition,
		$x_0(s) \in A_s$ as well.
		
		\item By Equation~\eqref{eq:mlts-R_a-definition} and Definition~\ref{def:umlts}:
		\[x_n(s) \Rlab{a} x_m(s) \iff x_m(s)\in \tilde{\Tfont{T}}_a(x_n(s)) = \{t_{0,n,a}(x_n(s))\mid n \in \omega\}.\qedhere
		\]
	\end{enumerate}
\end{proof}

Let $f^{(s)}:A_s\to \N$ be defined by $f^{(s)}(r)=\min \{n\in \N 
  \mid x_n(s)=r\}$ and let
\begin{equation}
  \calA_s \defi (\N,\{ f^{(s)}[R_a\rest A_s] \}_{a\in L}).
\end{equation}
It can be seen that $f^{(s)}$ is a zigzag morphism between the LTS $\lmp{A}_s$
and $\calA_s$; consequently, the $\omega$-expansion of $\calA_s$ at $0$
(Definition~\ref{def:omega-expansion}) will give a 
bisimilar representation of $\lmp{A}_s$ as a (tree) LTS over 
$\N$, which in turn can be considered as a point $x\in \omega 
\LTS$ satisfying $x_*=0$.

Now let $h:S\to \omega \LTS$ the map that represents $(\calA_s,0)$ canonically as
an element of $\omega \LTS$
\begin{equation}
	h(s) \defi (0,\{ \chi_{f^{(s)}[R_a\rest A_s]} \}_{a\in L}).
\end{equation}

\begin{lemma}\label{lem:h:S-to-LTS-continuous}
  The application $h:S\to \omega \LTS$ is measurable. 
\end{lemma}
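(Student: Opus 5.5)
Since $\omega\LTS = \N\times\prod_{a\in L}2^{\N\times\N}$ carries the product topology, its Borel $\sigma$-algebra is generated by the coordinate projections. The first coordinate of $h$ is the constant $0$, so it is enough to show, for each $a\in L$ and each pair $(i,j)\in\N\times\N$, that
\[
  H_{a,i,j}\defi\{s\in S \mid (i,j)\in f^{(s)}[R_a\rest A_s]\}
\]
belongs to $\Sigma$. The plan is to express $H_{a,i,j}$ through the countably many partial measurable maps $x_n$ (Definition~\ref{def:x_n}) and $t_{0,l,a}$, using only countable unions and intersections together with equality tests.

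First record measurability of the building blocks. Each $x_n$ is a finite composition of the partial functions $t_{k,n,a}$, so, as discussed before Definition~\ref{def:x_n}, its domain $D_n$ is measurable and $x_n$ is measurable on it. Since $(S,\Sigma)$ is standard Borel, the diagonal $\Delta_S$ is Borel in $S\times S$. Hence for measurable partial maps $f,g$ the set $\{s \mid s\in\dom f\cap\dom g,\ f(s)=g(s)\}$ equals $(f,g)^{-1}[\Delta_S]$ and lies in $\Sigma$. This separability fact will be used repeatedly.

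Next, write $i\in f^{(s)}[A_s]$ as ``$x_i(s)$ is defined and $x_i(s)\neq x_m(s)$ for every $m<i$ with $x_m(s)$ defined''. By Lemma~\ref{lem:A-bot-properties}(\ref{item:x_n-set-is-A_s}) this says exactly that $i$ is the least index of the point $x_i(s)$, i.e.\ $f^{(s)}(x_i(s))=i$. It is a finite Boolean combination of equality sets, so it is measurable; call this set $M_i$. Then $(i,j)\in f^{(s)}[R_a\rest A_s]$ holds iff $s\in M_i\cap M_j$ and $x_i(s)\Rlab{a}x_j(s)$. By the second item of Lemma~\ref{lem:A-bot-properties} the latter is
\[
  \textstyle\bigcup_{l\in\omega}\{s \mid x_i(s)\in\dom t_{0,l,a},\ t_{0,l,a}(x_i(s))=x_j(s)\},
\]
a countable union of equality sets of measurable partial maps, hence in $\Sigma$. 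Intersecting gives $H_{a,i,j}\in\Sigma$, which proves that $h$ is measurable.

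The main obstacle is bookkeeping rather than conceptual. One has to make sure that the partial domains are handled correctly, so that compositions like $t_{0,l,a}\circ x_i$ are measurable on measurable domains; here the domain $\{\tilde{\Tfont{T}}_a\neq\emptyset\}=\Tfont{T}_a^{-1}[H_S]$ is measurable. One also has to check that the ``least index'' condition is what makes $f^{(s)}$ well defined and injective on the image, so that pairs $(i,j)$ where $i$ is not a least index are correctly excluded.
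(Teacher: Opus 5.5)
Your proof is correct and takes the same route as the paper: reduce to the subbasic sets given by the coordinates $(i,j)$ of each $x_a$, then use Lemma~\ref{lem:A-bot-properties} to write $x_i(s)\mathrel{R_a}x_j(s)$ as a countable union over $l$ of the equality sets $\{s\mid t_{0,l,a}(x_i(s))=x_j(s)\}$. Your least-index sets $M_i$ and your appeal to the Borel diagonal are genuinely needed, because the paper's proof identifies $\{s\mid \pi_a(h(s))(n,m)=1\}$ directly with $\{s\mid x_n(s)\mathrel{R_a}x_m(s)\}$ and leaves the measurability of the equality sets implicit. That identification overlooks that $f^{(s)}$ only takes least indices as values (for instance, some $x_p$ with $p\geq 1$ equals $t_{1,0,a}$ and so agrees with $x_0=\id$ on its domain).
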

\begin{proof}
  Let $W\subseteq \N$ and fixed $n,m\in \N$, 
  \begin{align*}
    h^{-1}[W\times \pi_a^{-1}[\{X &\mid (n,m)\in X\}]] = \\
    &=\{s\in 
    S\mid h(s)(*)\in W \; \wedge \, \pi_a(h(s))(n,m)=1\}\\
    &= \{s\in S\mid 0\in W \}\cap \{s\in S \mid 
      \pi_a(h(s))(n,m)=1\}\\
      &=\begin{cases}
      \{s\in S\mid 
	x_n(s)\Rlab{a}x_m(s)\} & 0\in W\\
      \emptyset  & 0 \notin W.
      \end{cases}
  \end{align*}
  Next, by Lemma~\ref{lem:A-bot-properties},
  \begin{align*}
    \{s\in S\mid 	x_n(s)\Rlab{a}x_m(s)\}
    &=  \{s\in S\mid \exists l\in \omega 
      \, x_m(s)=t_{0,l,a}(x_n(s))\}\\
    &= \bigcup_{l\in \omega}\{s\in 
      S\mid x_m(s)=t_{0,l,a}(x_n(s))\}.
  \end{align*}
  Since this set is measurable and the $\sigma$-algebra on $\omega 
  \LTS$ is generated by sets of the form $W\times 
  \pi_a^{-1}[\{X\mid (n,m)\in X\}$, we obtain the measurability 
      of $h$.
\end{proof}

\begin{lemma}\label{lem:s-bisim-T-hs}
	Let $\mathbb{S}$ be a UMLTS and $s\in S$. 
	Then, 
	\[
	(\lmp{S},s)\sim (\lmp{A}_s,s)\sim (\calA_s,0) = \ltsfrom_{h(s)}.
	\]
\end{lemma}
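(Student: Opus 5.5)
The statement has three links, and I will establish them one at a time. The first, $(\lmp{S},s)\sim(\lmp{A}_s,s)$, is exactly Corollary~\ref{cor:s-bisim-t-implies-Ss-bisim-As}. It applies because a UMLTS is by definition an image-countable MLTS, and its underlying space is standard Borel, hence separable. The last link, $(\calA_s,0)=\ltsfrom_{h(s)}$, is definitional. By construction $h(s)$ has root coordinate $0$ and $a$-components $\chi_{f^{(s)}[R_a\rest A_s]}$. By Definition~\ref{def:omegaLTS}, $\ltsfrom_{h(s)}$ is then the pointed LTS on $\N$ with root $0$ and relations $f^{(s)}[R_a\rest A_s]$, which is $(\calA_s,0)$.

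The real content is the middle link, $(\lmp{A}_s,s)\sim(\calA_s,0)$. For this I will use the graph of $f^{(s)}$, namely $G\defi\{(r,f^{(s)}(r))\mid r\in A_s\}$, and show that it is a bisimulation in the sense of Definition~\ref{def:LTS-bisimulation}. The pair $(s,0)$ lies in $G$: since $x_0=\id$, we have $f^{(s)}(s)=\min\{n\mid x_n(s)=s\}=0$.

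For the zig condition, suppose $r\mathrel{G}f^{(s)}(r)$ and $r\Rlab{a}t$ in $\lmp{A}_s$, with $t\in A_s$. Then $(f^{(s)}(r),f^{(s)}(t))\in f^{(s)}[R_a\rest A_s]$ by definition of the image relation, and $t\mathrel{G}f^{(s)}(t)$.

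For the zag condition, suppose $f^{(s)}(r)\Rlab{a}m$ in $\calA_s$. Then there are $r',t\in A_s$ with $r'\Rlab{a}t$, $f^{(s)}(r')=f^{(s)}(r)$ and $f^{(s)}(t)=m$. Now $f^{(s)}$ is injective, since $x_{f^{(s)}(r)}(s)=r$, so it has the left inverse $n\mapsto x_n(s)$. Hence $r'=r$, and $t$ witnesses the zag condition with $t\mathrel{G}m$.

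I also have to check that states of $\calA_s$ outside the range of $f^{(s)}$ cause no trouble. They are never reached from $0$: every pair in $f^{(s)}[R_a\rest A_s]$ has its first coordinate in the range of $f^{(s)}$. So the zag clause only ever needs to be checked at points $f^{(s)}(r)$, and $G$ suffices.

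Finally, bisimilarity of pointed LTSs composes, because the relational composition of bisimulations is a bisimulation. Chaining the three links gives the lemma. The step needing most care is the bookkeeping in the middle link: that $f^{(s)}$ is well defined and injective on $A_s$ (Lemma~\ref{lem:A-bot-properties}(\ref{item:x_n-set-is-A_s}), which says $A_s=\{x_n(s)\mid n\in\omega\}$), and that the unused elements of $\N$ are irrelevant. There is no genuine obstacle.
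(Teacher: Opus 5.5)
Your proof is correct and follows the same route as the paper. The first link is Corollary~\ref{cor:s-bisim-t-implies-Ss-bisim-As}, the last is definitional, and the middle one comes from $f^{(s)}$ being injective and both preserving and reflecting transitions. Your explicit check of the zag clause is precisely the zigzag-morphism property that the paper's brief ``embedding'' remark tacitly relies on: every $a$-successor of $f^{(s)}(r)$ in $\calA_s$ lies in the range of $f^{(s)}$, because the relations of $\calA_s$ are the images $f^{(s)}[R_a\rest A_s]$.
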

\begin{proof}
	By Corollary~\ref{cor:s-bisim-t-implies-Ss-bisim-As} we have the first bisimilarity.
	For the second, note that the map $f^{(s)}:\lmp{A}_s\to \calA_s$ is an embedding, i.e., is 1-1 and for all $a$, $n$ and $m$,
	\[
    x_n(s)\Rlab{a} x_m(s) \iff f^{(s)}(x_n(s))\Rlab{a} 
    f^{(s)}(x_m(s)).
  	\]
  	Therefore $(\lmp{A}_s, s)$ is bisimilar to $(\calA_s, 0)$. The last part is immediate from the definitions of $h$ and $\ltsfrom_{h(s)}$.
\end{proof}

\begin{lemma}\label{lem:Omega-h-Borel-reduction}
	\begin{enumerate}
		\item $\Omega\circ h$ is a Borel reduction from $\sim$ on $\lmp{S}$ to  $\cong$ on
		$2^{M^{<\N}}\times \bigl(\prod_{a\in L}2^{M^{<\N}\times M^{<\N}}\bigr)$.
		\item\label{item:for_equiv} $\Omega_0\circ h$ is a Borel reduction from $\sim$ on $\lmp{S}$ to  $\equiv$ on $\Tr_M$.
	\end{enumerate}
\end{lemma}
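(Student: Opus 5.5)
The plan is to split the statement into Borel measurability of the maps and the reduction property. For measurability, note that $h$ is measurable by Lemma~\ref{lem:h:S-to-LTS-continuous}. By Lemma~\ref{lem:Omega-continuous}, $\Omega$ and each of its components are continuous. Hence $\Omega\circ h$ and $\Omega_0\circ h$ are Borel, since a continuous function composed with a measurable one is measurable. Here $S$ carries its standard Borel structure and $\omega\LTS$ its Polish one.

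For the reduction property, fix $s,s'\in S$. By Lemma~\ref{lem:s-bisim-T-hs} we have $(\lmp{S},s)\sim\ltsfrom_{h(s)}$ and $(\lmp{S},s')\sim\ltsfrom_{h(s')}$. Bisimilarity between pointed LTSs is an equivalence: it is symmetric by taking converse relations and transitive by composing bisimulations. Therefore
\[
s\sim s' \iff \ltsfrom_{h(s)}\sim\ltsfrom_{h(s')} \iff h(s)\sim h(s'),
\]
where the last equivalence is Definition~\ref{def:omegaLTS}. Lemma~\ref{lem:sim-iff-Omega-cong} then gives $h(s)\sim h(s')\iff\Omega(h(s))\cong\Omega(h(s'))$, which proves item~1. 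Lemma~\ref{lem:sim-iff-Omega0-equiv} gives $h(s)\sim h(s')\iff \Omega_0(h(s))\equiv\Omega_0(h(s'))$, which proves item~\ref{item:for_equiv}.

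The main point needing care is the meaning of ``$\sim$ on $\lmp{S}$''. Proposition~\ref{prop:equality-bisim-MLTS-tradit-ext} and the remark after it identify the state, external and standard bisimilarities on image-countable MLTSs. So one can treat $s\sim s'$ as standard LTS bisimilarity within $\lmp{S}$, which is what the chain of Lemma~\ref{lem:s-bisim-T-hs} uses. One should also check that Theorem~\ref{thm:kripke-iso} applies. Each $\ltsfrom_{h(s)}$ is image-countable, since it lives on $\N$, and $\lmp{S}$ is image-countable by the UMLTS hypothesis.

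A minor remark concerns the codomain in item~1. The paper writes $2^{M^{<\N}}$ rather than $\Tr_M$, but since $\Tr_M\subseteq 2^{M^{<\N}}$ is closed, nothing changes.
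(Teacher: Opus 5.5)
Your proposal is correct and follows the paper's proof essentially step for step: Borelness comes from Lemma~\ref{lem:h:S-to-LTS-continuous} and Lemma~\ref{lem:Omega-continuous}, and the reduction is the chain $s\sim s' \iff \ltsfrom_{h(s)}\sim\ltsfrom_{h(s')} \iff h(s)\sim h(s')$ obtained from Lemma~\ref{lem:s-bisim-T-hs}, transitivity of $\sim$ and Definition~\ref{def:omegaLTS}, closed off by Lemmas~\ref{lem:sim-iff-Omega-cong} and~\ref{lem:sim-iff-Omega0-equiv}. Your additional remarks on the meaning of $\sim$, on image-countability and on the codomain are accurate and only make explicit what the paper leaves implicit.
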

\begin{proof}
  By Theorem~\ref{lem:Omega-continuous}
  and Lemma~\ref{lem:h:S-to-LTS-continuous}, both compositions are Borel. 
  By Lemma~\ref{lem:s-bisim-T-hs}, the transitivity of $\sim$ and Definition~\ref{def:omegaLTS} we have:
  \[
  (\lmp{S}, s) \sim (\lmp{S}, t) \iff \ltsfrom_{h(s)} \sim   \ltsfrom_{h(t)} \iff h(s) \sim h(t).
  \]
  Using Lemma~\ref{lem:sim-iff-Omega-cong}, this is equivalent to $\Omega(h(s)) \cong \Omega(h(t))$, which proves the first Item.
  For the second one, use Lemma~\ref{lem:sim-iff-Omega0-equiv}.
\end{proof}

We obtain immediately from the first item:
\begin{theorem}\label{thm:bisim-MLTS-classifiable-countable-struct}
  If $\lmp{S}$ is a UMLTS, bisimilarity on $\lmp{S}$ is
  classifiable by countable structures.
\end{theorem}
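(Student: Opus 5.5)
The plan is to read the statement directly off Lemma~\ref{lem:Omega-h-Borel-reduction}(1), after unpacking what ``classifiable by countable structures'' means. By definition, an equivalence relation $E$ on a standard Borel space $X$ is classifiable by countable structures if there is a countable relational language $\mathcal{L}$ and a Borel map $F\colon X\to \mathrm{Mod}(\mathcal{L})$ with $x\mathrel{E}y \iff F(x)\cong F(y)$. Here $\mathrm{Mod}(\mathcal{L})$ is the Polish space of $\mathcal{L}$-structures with universe $\N$, and $\cong$ is the isomorphism relation on it. So I will exhibit such an $\mathcal{L}$ and $F$.

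\textbf{Step 1 (the map).} Take $\mathcal{L}=\{U\}\cup\{P_a\mid a\in L\}$, with $U$ unary and each $P_a$ binary; this language is countable because $L$ is. The set $M=\N\times L\times\N$ is countable, hence so is $M^{<\N}$. Fix once and for all a bijection $e\colon M^{<\N}\to\N$. Through $e$, each point $(T,\{X_a\}_{a\in L})$ of $2^{M^{<\N}}\times\prod_{a\in L}2^{M^{<\N}\times M^{<\N}}$ becomes an $\mathcal{L}$-structure on $\N$: interpret $U$ as $e[T]$ and $P_a$ as $(e\times e)[X_a]$. This identification is a homeomorphism onto $\mathrm{Mod}(\mathcal{L})$.

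\textbf{Step 2 (isomorphism is respected).} A bijection of $M^{<\N}$ carrying one pair $(T,\{X_a\})$ onto another corresponds exactly to an isomorphism of the associated $\mathcal{L}$-structures. In the image of $\Omega$ the relations $\Suc^a$ live inside $T\times T$, so the ambient copy of $M^{<\N}$ does not matter. However, the elements of $M^{<\N}\setminus T$ form a leftover part of the universe. For $\Omega(x)\cong\Omega(y)$, interpreted as an isomorphism of first-order structures, to agree with isomorphism of the induced $\mathcal{L}$-structures, those leftover parts must have the same cardinality.

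\textbf{Step 3 (the obstacle).} This cardinality check is the only nontrivial point. I expect to handle it in one of two ways. The first is to observe that every $\omega$-expansion tree is infinite whenever some transition exists, and that $M^{<\N}\setminus T$ is always infinite, since it contains all sequences whose first coordinate is not a successor of the root. This infinitude makes the two leftover parts equinumerous, so an isomorphism of the trees extends by an arbitrary bijection of the leftover parts. The second way is to note that $\Omega(x)\cong\Omega(y)$ can be read as isomorphism of the induced $\mathcal{L}$-structures outright. Either way the cardinality obstruction disappears.

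\textbf{Step 4 (conclusion).} Composing the homeomorphism of Step 1 with the Borel map $\Omega\circ h$ gives a Borel $F\colon S\to\mathrm{Mod}(\mathcal{L})$. By Lemma~\ref{lem:Omega-h-Borel-reduction}, $F$ satisfies $s\sim t\iff F(s)\cong F(t)$, which is exactly classifiability by countable structures.
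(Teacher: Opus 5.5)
Your route is the paper's, which states the theorem as immediate from Lemma~\ref{lem:Omega-h-Borel-reduction}(1) without further argument. You are right that passing to $\mathrm{Mod}(\mathcal{L})$ needs one more step. The $\cong$ in that lemma comes from Lemma~\ref{lem:sim-iff-Omega-cong} and Theorem~\ref{thm:kripke-iso}, so it is isomorphism of the LTSs $(T,\{\Suc^a_T\}_{a\in L})$ with universe $T$. It is not isomorphism of $\{U\}\cup\{P_a\}$-structures on the fixed universe $M^{<\N}$.

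Step~3 does not supply that step. The claim that $M^{<\N}\setminus T$ is always infinite fails for $\omega$-expansion trees in general. Take $x\in\omega\LTS$ with root $0$ and every $x_a$ constantly $1$. Take $y$ with root $0$ and every transition relation equal to $(\N\setminus\{1\})\times(\N\setminus\{1\})$. Then $\N\times(\N\setminus\{1\})$ is a bisimulation, so $\Omega(x)\cong\Omega(y)$ as LTSs. But $\Omega_0(x)=M^{<\N}$, while $\Omega_0(y)$ omits every sequence beginning with some $(1,a,m)$. In the encoded structures $U$ is the whole universe in one case and co-infinite in the other, so they are not isomorphic. Your justification only shows that sequences starting at a non-successor of the root lie outside $T$; it does not show that such non-successors exist. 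Your ``second way'' simply asserts the identification that is in question, and this example refutes it.

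For trees in the range of $\Omega_0\circ h$ the claim is true, but you must say why: $f^{(s)}$ is never onto $\N$. If $\tilde{\Tfont{T}}_a(s)=\emptyset$ for all $a$, its range is $\{0\}$. Otherwise pick $a$ with $\tilde{\Tfont{T}}_a(s)\neq\emptyset$ and $n\geq 1$ with $x_n=t_{1,0,a}$. Then $x_n(s)=s=x_0(s)$, so $n\notin f^{(s)}[A_s]$. Hence $n$ is no successor of $0$ in $\calA_s$, and all one-term sequences $(n,b,m)$ with $b\in L$, $m\in\N$ lie outside the tree.

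A cleaner repair, which does not rely on this quirk of the enumeration, is to pad the universe. Biject $M^{<\N}\times\{0,1\}$ with $\N$ and interpret $U$ and the $P_a$ only on the copy $M^{<\N}\times\{0\}$. The complement of $U$ is then always countably infinite. Since $\Suc^a_T\subseteq T\times T$, the encoded structures are isomorphic exactly when $(T,\{\Suc^a_T\})\cong(T',\{\Suc^a_{T'}\})$, and the encoding remains continuous. With either repair, Step~4 goes through.
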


We note that the isomorphism relation between countable structures is analytic (\cite[16.C]{Kechris}), but in general it is not Borel (\cite[27.D]{Kechris}). 
Since in the uniformly measurable MLTS 
$\lmp{F}$ of Example~\ref{exm:F-unif-meas} bisimilarity is not Borel 
(\cite[Thm.~27]{2012arXiv1211.0967S}), this is the best that can be said about the bisimilarity relation for this class of MLTS.

An immediate corollary of this classification is the following:

\begin{corollary}
	If $\lmp{S}$ is a UMLTS, then its bisimilarity classes are measurable sets.
\end{corollary}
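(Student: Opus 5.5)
The plan is to read the corollary off the Borel reduction in Lemma~\ref{lem:Omega-h-Borel-reduction}. Fix $s\in S$. Its bisimilarity class is $[s]_\sim=\{t\in S\mid (\lmp{S},t)\sim(\lmp{S},s)\}$. By Lemma~\ref{lem:Omega-h-Borel-reduction}(\ref{item:for_equiv}) this equals $(\Omega_0\circ h)^{-1}[C]$, where $C\defi\{T\in\Tr_M\mid T\equiv\Omega_0(h(s))\}$ is the $\equiv$-class of a single tree. Since $\Omega_0\circ h$ is Borel (Lemmas~\ref{lem:Omega-continuous} and~\ref{lem:h:S-to-LTS-continuous}), it suffices to show that $C$ is a Borel subset of $\Tr_M$.

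For that step I will use the fact that $\equiv$ is, up to the continuous map $T\mapsto(T,\{\Suc^a_T\}_{a\in L})$, the isomorphism relation for countable structures in the countable relational language $\{\Suc^a\}_{a\in L}$. Continuity of this map follows exactly as in the second half of the proof of Lemma~\ref{lem:Omega-continuous}. Elements of $\Tr_M$ have countable underlying sets, but some may be finite, and $M^{<\N}$ is countably infinite. So I will first split $\Tr_M$ into the Borel pieces $\{T : |T|=k\}$ for $k\in\N\cup\{\aleph_0\}$.

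On each piece I will use a Borel enumeration of $T$. For instance, list the nodes of $T$ in the order they appear in a fixed enumeration of $M^{<\N}$. This turns $T$ into a structure with universe $\N$ (or $\{0,\dots,k-1\}$), and it does so Borel-measurably. Then, by the Scott analysis (Scott's theorem, cf.\ \cite[16.C]{Kechris}), the isomorphism class of a fixed countable structure $\mathcal{M}$ is the set of models of its Scott sentence in $L_{\omega_1\omega}$. By the Lopez-Escobar theorem (or by direct induction on formulas), that set is Borel. Alternatively, isomorphism classes are orbits of the Borel action of $S_\infty$ on the space of structures, and orbits of Borel actions of Polish groups are Borel \cite[15.14]{Kechris}. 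Pulling back along the Borel enumeration, $C$ is Borel in $\Tr_M$.

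Composing, $[s]_\sim=(\Omega_0\circ h)^{-1}[C]\in\Sigma$, as required. The only delicate point is making the passage from trees in $\Tr_M$ to structures on $\N$ genuinely Borel, including the finite-tree case, so that the classical "orbits are Borel" fact applies. I expect this to be routine bookkeeping rather than a real obstacle.
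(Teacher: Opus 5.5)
Your proposal is correct and follows the paper's route. The paper also obtains the corollary by pulling back an isomorphism class along the Borel reduction $\Omega\circ h$ of Lemma~\ref{lem:Omega-h-Borel-reduction}, and that class is Borel by Scott's theorem \cite[Thm.~16.6]{Kechris}. Your extra bookkeeping (using the reduction to $\equiv$ on $\Tr_M$, splitting by cardinality, and Borel-enumerating each tree onto $\N$) spells out a step the paper leaves implicit. This matters because the paper's structures have universe $T\subseteq M^{<\N}$, which is not literally the space of structures on $\N$ to which Scott's theorem applies.
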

\begin{proof}
  This follows from Scott's \cite[Thm.16.6]{Kechris}, which ensures that
  isomorphism classes of countable structures are Borel.
\end{proof}

\subsection{Well-founded part of a uniformly measurable LTS}\label{subsect:complexity-bisim-mlts-bounded-rank}

The last result of the paper is analogous to
Theorem~\ref{thm:bisim-wLTSalpha-Borel}, where we classified bisimilarity for
bounded-rank LTS, but in the context of a single UMLTS.

Recall the formulas defined in \eqref{eq:rank-formulas}, which were used to define the rank of states. 
Let $\lmp{S}=(S,\Sigma,\{\tilde{\Tfont{T}}_a\mid a\in L\})$ be a UMLTS considered as an LTS under the relations in \eqref{eq:mlts-R_a-definition}. 

\begin{prop}\label{prop:semantic-is-Borel-MLTS}
	For every $\alpha <\omega_1$, $\sem{\varphi_\alpha}\in \Sigma$.
\end{prop}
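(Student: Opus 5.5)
We prove the stronger statement that $\sem{\varphi}\in\Sigma$ for every formula $\varphi$ of $\mathrm{ML}_{\omega_1}$, by structural induction on $\varphi$. Here $\sem{\varphi}=\{s\in S\mid (\lmp{S},s)\vDash\varphi\}$, with $\lmp{S}$ regarded as an LTS via \eqref{eq:mlts-R_a-definition}. The formulas $\varphi_\alpha$ from \eqref{eq:rank-formulas} are then particular cases. The argument mirrors Proposition~\ref{prop:semantic-is-Borel-wLTS}, with the root-moving maps $g_t$ replaced by the measurable enumerating functions $t_{n,a}$ of Lemma~\ref{lem:property-UMLTS}.

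The non-modal cases are immediate. We have $\sem{\top}=S$. Negation gives a complement, and the countable conjunctions and disjunctions give countable intersections and unions. So each of these clauses stays inside $\Sigma$ once the immediate subformulas have measurable validity sets.

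The only real work is the modal step $\pos{a}\varphi$, assuming $\sem{\varphi}\in\Sigma$. Let $D_a\defi\{s\in S\mid \tilde{\Tfont{T}}_a(s)\neq\emptyset\}$. This set is measurable, since it equals $\Tfont{T}_a^{-1}[H_S]$, or equivalently $\tilde{\Tfont{T}}_a^{-1}[H_S]$ by hit-measurability. On $D_a$ we have $\tilde{\Tfont{T}}_a(s)=\{t_{n,a}(s)\}_{n\in\omega}$. Hence
\[
\sem{\pos{a}\varphi}=\{s\in D_a\mid \exists n\; t_{n,a}(s)\in\sem{\varphi}\}=\bigcup_{n\in\omega} t_{n,a}^{-1}\bigl[\sem{\varphi}\bigr].
\]
Each $t_{n,a}$ is a measurable partial function with measurable domain $D_a$, so each preimage $t_{n,a}^{-1}[\sem{\varphi}]$ is a measurable subset of $D_a$, hence belongs to $\Sigma$. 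A countable union of such sets is again in $\Sigma$.

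The main point to check is exactly this modal step, namely that the existential quantifier over successors can be replaced by a countable union. This is where uniform measurability is used, and it is what fails for a general MLTS. The one delicate detail is that the enumerating functions are only partial. We handle it by noting that their common domain $D_a$ is measurable, so preimages under $t_{n,a}$ computed inside $D_a$ are measurable in $S$. We do not need to appeal to the alternative description of $\sem{\pos{a}\varphi}$ by the hit set $\tilde{\Tfont{T}}_a^{-1}[H_{\sem{\varphi}}]$, which would also work directly from the definition of MLTS.
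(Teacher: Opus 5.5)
Your proof is correct, but the modal step takes a different route from the paper's. The paper inducts on $\alpha$. At successor stages it writes $\sem{\varphi_{\alpha+1}}=\bigcup_{a\in L}\tilde{\Tfont{T}}_a^{-1}[H_{\sem{\varphi_\alpha}}]$, which lies in $\Sigma$ because each $\tilde{\Tfont{T}}_a$ is measurable into $(\Sigma,H(\Sigma))$ and $H_{\sem{\varphi_\alpha}}\in H(\Sigma)$ by the induction hypothesis; limit stages are countable intersections. You instead write $\sem{\pos{a}\varphi}=\bigcup_n t_{n,a}^{-1}[\sem{\varphi}]$ using the measurable enumerations of a UMLTS, mirroring the maps $g_t$ of Proposition~\ref{prop:semantic-is-Borel-wLTS}, and your handling of the partial domain $D_a$ is sound.

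The comparison favours the paper. Your computation in effect re-derives hit-measurability of $\tilde{\Tfont{T}}_a$ from uniformity, so it uses more hypotheses than needed: the paper's one-line argument works for every MLTS, image-countable or not, because hit-measurability is built into Definition~\ref{def:mlts}. Your comment that uniform measurability is what this step needs, and what fails for a general MLTS, is therefore misleading. Your own closing sentence concedes the point: the proposition holds for every MLTS. Your version's one gain is the explicit extension to all of $\mathrm{ML}_{\omega_1}$, but the hit-set argument extends to that generality just as easily.
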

\begin{proof}
	We proceed by induction on $\alpha$. The only non-trivial case is 
	that of $\varphi_{\alpha+1}$, for which we will use the measurable 
	structure of $\lmp{S}$: 
	\begin{align*}
		\sem{\varphi_{\alpha+1}}&=\textstyle{\{s\in S\mid 
			(\lmp{S},s)\vDash 
			\bigvee_{a\in L}\pos{a}\varphi_\alpha\}}\\
		&=\{s\in S\mid \exists a\in L, \; \exists t\in S, \; s\Rlab{a} t \wedge t\in \sem{\varphi_\alpha}\}\\
		&=\textstyle{\bigcup_{a\in 
				L}\tilde{\Tfont{T}}_a^{-1}[H_{\sem{\varphi_\alpha}}}].
	\end{align*}
	By the induction hypothesis, $\sem{\varphi_\alpha}\in \Sigma$, and the measurability of 
	$\tilde{\Tfont{T}}_a$ ensures that $\sem{\varphi_{\alpha+1}}\in \Sigma$. 
\end{proof}
\begin{remark}
  It would have been desirable to obtain the previous Proposition as a
  consequence of the measurability of the NLMP formulas
  \cite[Sect.~5]{D'Argenio:2012:BNL:2139682.2139685}. Unfortunately, we see no
  straightforward translation from $\mathrm{ML}_{\omega_1}$ formulas to NLMP
  ones, since their variants applying to states do not include neither negation
  nor disjunction, and lack infinitary conjunctions.
\end{remark}

\begin{definition}
	For each $\alpha<\omega_1$, let $S^{\leq\alpha}$ be the set of states $s\in S$ such that $(\lmp{S},s)$, with the LTS structure given by \eqref{eq:mlts-R_a-definition}, is well-founded of rank at most $\alpha$.
\end{definition}
	
\begin{corollary}\label{cor:S_leq_alpha_is_Borel_subset}
  $S^{\leq\alpha}$ is a Borel subset of $S$.
\end{corollary}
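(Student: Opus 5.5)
The plan is to express $S^{\leq\alpha}$ as the complement of a single validity set and then invoke Proposition~\ref{prop:semantic-is-Borel-MLTS}, exactly as Corollary~\ref{cor:omegaLTS_leq_alpha_borel} was derived from Proposition~\ref{prop:semantic-is-Borel-wLTS}.

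First I will check, from Definition~\ref{def:rank-lts}, that a state $s$ has rank at most $\alpha$ if and only if $(\lmp{S},s)\nvDash\varphi_{\alpha+1}$. The rank is the least $\beta$ with $(\lmp{S},s)\nvDash\varphi_{\beta+1}$. Hence the direction from ``$\nvDash\varphi_{\alpha+1}$'' to ``rank $\le\alpha$'' is immediate.

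For the converse I need monotonicity: $\beta\le\gamma$ implies $\sem{\varphi_\gamma}\subseteq\sem{\varphi_\beta}$. This follows by a routine induction on $\gamma$, proving $\varphi_{\gamma}\to\varphi_\beta$ for all $\beta\le\gamma$.
\begin{itemize}
\item In the successor case, $\varphi_{\gamma+1}\to\varphi_{\beta+1}$ for $\beta\le\gamma$ follows from the induction hypothesis by applying $\pos{a}$ inside the disjunction. For $\beta=0$ the implication is trivial.
\item The case where $\beta$ is a limit below $\gamma+1$ is handled by the limit clause.
\item In the limit case, $\varphi_\lambda$ is a conjunction that includes every $\varphi_\beta$ with $\beta<\lambda$.
\end{itemize}
Consequently, if the rank $\beta$ satisfies $\beta\le\alpha$, then $s\nvDash\varphi_{\beta+1}$. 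Since $\beta+1\le\alpha+1$, we have $\sem{\varphi_{\alpha+1}}\subseteq\sem{\varphi_{\beta+1}}$, and so $s\nvDash\varphi_{\alpha+1}$.

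This yields $S^{\leq\alpha}=S\setminus\sem{\varphi_{\alpha+1}}$. By Proposition~\ref{prop:semantic-is-Borel-MLTS} applied to $\alpha+1<\omega_1$, the set $\sem{\varphi_{\alpha+1}}$ lies in $\Sigma$. Since $\Sigma$ is the Borel $\sigma$-algebra of the standard Borel space $S$, the complement is Borel.

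The only mildly delicate point is the monotonicity of the $\varphi_\beta$. The earlier corollary used it implicitly, so I expect no real obstacle.
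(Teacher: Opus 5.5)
Your proposal is correct and matches the paper's proof. The paper also writes $S^{\leq\alpha}$ as the complement of $\sem{\varphi_{\alpha+1}}$, which lies in $\Sigma$ by Proposition~\ref{prop:semantic-is-Borel-MLTS}. What you add is the explicit monotonicity argument ($\sem{\varphi_\gamma}\subseteq\sem{\varphi_\beta}$ for $\beta\le\gamma$), which the paper uses tacitly to justify $s\in S^{\leq\alpha}\iff(\lmp{S},s)\nvDash\varphi_{\alpha+1}$.
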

\begin{proof}
	$s\in S^{\leq \alpha} \iff (\lmp{S},s)\nvDash 
	\varphi_{\alpha+1}\iff 
	s\notin \sem{\varphi_{\alpha+1}}$.
\end{proof}

\begin{theorem}\label{thm:bisim_S_leq_alpha_Borel}
  Bisimilarity on $S^{\leq \alpha}$ is Borel.
\end{theorem}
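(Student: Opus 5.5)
The plan is to reuse the reduction $\Omega_0\circ h$ from Lemma~\ref{lem:Omega-h-Borel-reduction}(\ref{item:for_equiv}), restricted to the Borel set $S^{\leq\alpha}$ (Corollary~\ref{cor:S_leq_alpha_is_Borel_subset}). Then we pull back the Borel relation ${\equiv}\rest(\WF_M^{\leq \alpha})$, exactly as in the proof of Theorem~\ref{thm:bisim-wLTSalpha-Borel}. Concretely, write $g\defi\Omega_0\circ h: S\to\Tr_M$, which is Borel by Lemmas~\ref{lem:h:S-to-LTS-continuous} and~\ref{lem:Omega-continuous}. For $s,t\in S$ we have $s\sim t \iff g(s)\equiv g(t)$. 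Hence ${\sim}\rest S^{\leq\alpha} = (g\times g)^{-1}[{\equiv}] \cap (S^{\leq\alpha}\times S^{\leq\alpha})$.

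The key step is to show that $g$ maps $S^{\leq\alpha}$ into $\WF_M^{\leq\alpha}$. I would argue as follows. By Lemma~\ref{lem:s-bisim-T-hs}, $(\lmp{S},s)\sim\ltsfrom_{h(s)}$. Proposition~\ref{prop:bisim-implies-formulas-satisf} then gives that $\ltsfrom_{h(s)}$ has the same rank as $s$, so $h(s)\in\omega\LTS^{\leq\alpha}$. Lemma~\ref{lem:Omega0-codomain} now yields $\Omega_0(h(s))\in\WF_M^{\leq\alpha}$. The one point needing care is that Proposition~\ref{prop:bisim-implies-formulas-satisf} applies to image-countable LTSs. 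This holds for $\lmp{S}$ because a UMLTS is image-countable by definition, and for $\calA_s$ because it lives on $\N$.

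With this in hand, $(g\times g)^{-1}[{\equiv}\rest\WF_M^{\leq\alpha}]\cap(S^{\leq\alpha})^2$ equals ${\sim}\rest S^{\leq\alpha}$. Since ${\equiv}\rest\WF_M^{\leq\alpha}$ is Borel (Appendix~\ref{sec:isom-rank-restr-trees}, via Friedman--Stanley), $\WF_M^{\leq\alpha}$ is Borel (Proposition~\ref{prop:WF-alpha-Borel-standard}), and $g\times g$ is Borel, the preimage is Borel. Intersecting with the Borel rectangle $(S^{\leq\alpha})^2$ preserves this, which concludes the proof.

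I expect the main obstacle to be the Borelness of ${\equiv}$ on rank-bounded trees, but that is already supplied by the appendix and by its use in Theorem~\ref{thm:bisim-wLTSalpha-Borel}. The remaining subtlety is that $\equiv$ is only known to be Borel on $\WF_M^{\leq\alpha}\times\WF_M^{\leq\alpha}$, not on all of $\Tr_M$. This is why the codomain inclusion above is essential. If $\alpha=0$ creates an edge case (the hypothesis $1\le\alpha$ in Theorem~\ref{thm:bisim-wLTSalpha-Borel}), I would note that on $S^{\leq 0}$ all states have no successors. In that case all states are bisimilar and ${\sim}\rest S^{\leq 0}=(S^{\leq0})^2$ is trivially Borel.
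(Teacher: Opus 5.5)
Your proposal is correct and is essentially the paper's own proof. It restricts $\Omega_0\circ h$ to the Borel set $S^{\leq\alpha}$, uses Lemma~\ref{lem:s-bisim-T-hs}, Proposition~\ref{prop:bisim-implies-formulas-satisf} and Lemma~\ref{lem:Omega0-codomain} to land in rank-bounded trees, and pulls back the Borel relation ${\equiv}\rest\WF_M^{\leq\alpha}$ from Appendix~\ref{sec:isom-rank-restr-trees}.

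You do inherit an off-by-one from the paper. Since $\rho(\{\emptyset\})=1$, a state of rank $\alpha$ yields a tree of rank $\alpha+1$, so Lemma~\ref{lem:Omega0-codomain} should read $\WF_M^{\leq\alpha+1}$. This is harmless because ${\equiv}\rest\WF_M^{\leq\alpha+1}$ is equally Borel, and after this correction your separate $\alpha=0$ case is covered by the general argument.
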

\begin{proof}
	By Lemma~\ref{lem:Omega-h-Borel-reduction}(\ref{item:for_equiv}), it is enough to show that an appropriate (co)restric\-tion of $\Omega_0\circ h$ reduces $\sim$ on $S^{\leq \alpha}$ to ${\equiv}\rest(\WF_M^{\leq \alpha})$.
	If $s\in S^{\leq \alpha}$, by Lemma~\ref{lem:s-bisim-T-hs} $(\lmp{S},s)\sim \ltsfrom_{h(s)}$. 
	By Proposition~\ref{prop:bisim-implies-formulas-satisf} we have that these two LTSs have the same rank.
	Hence, using Definition \ref{def:omegaLTS-alpha}, we conclude that	$h(s)\in \omega \LTS^{\leq \alpha}$. 
	Finally, Lemma~\ref{lem:Omega0-codomain} implies $\Omega_0(h(s))\in \WF_M^{\leq 
    \alpha}$.  
\end{proof}

\section{Conclusion}
\label{sec:conclusion}

We have studied the definability of the bisimilarity relation in particular classes of
uncountable stochastic processes that include internal nondeterminism, viz. NLMPs.
We have focused on the image-countable case.

The finding in \cite{2012arXiv1211.0967S} that state bisimilarity is not Borel
in general classes of NLMPs (implying the absence of a countable measurable
logic characterizing the former) served as the catalyst for this research. 
This was further motivated by the more precise assertion that, in the NLMP
$\lmp{F}$, bisimilarity behaves like the isomorphism relation between countable
structures; viz., it is a proper analytic relation with Borel equivalence
classes. In Example~\ref{exm:F-unif-meas} we were able to explain that the
reason for this is that $\lmp{F}$ belongs to the larger class of uniformly 
measurable LTSs, for which bisimilarity is classifiable by countable structures.

We hoped that this result could be extended to finitely supported uniform NLMP
(or at least for those with singleton supports) but we were unable to do
so; this is our first open question. On the other extreme of the spectrum, we
still do not know if bisimilarity is analytic for general NLMPs. In future work,
we would like to have sharper lower bounds for the complexity on some of the
classes of processes studied here, with emphasis on surpassing the $E_0$ reduction.

We add two comments on the technical side of this research. Firstly, we found that the original definition of
internal bisimulation \cite{DWTC09:qest} as symmetric relation, which simplified
several aspects (as their definition of the $\Sigma(R)$), turned out to be
inconvenient when studying the interaction with external versions. We believe
that replacing symmetry with the standard zig and zag conditions (by using
$\Sigma^\times(R)$ instead) could provide a more natural framework.

Secondly, substructures continue to play an important role in our work.
They provided the first layer of simplification: Restricting bisimilarity
to the \textit{generated} substructures allowed us to obtain definability for
finitely supported uniform NLMP. But their interaction with bisimulations is
very nuanced; in particular, “going down” (restricting a bisimulation to a
substructure to obtain another one) requires much stronger hypotheses than
“going up” (showing that being a bisimulation is preserved by passing from a
substructure to the ambient space).

We conclude with a wish list concerning the gap between the hypotheses of some of
our theorems and the supporting counterexamples. In
Lemma~\ref{lem:restriction-smooth-is-NLMP-state-bisim}, we showed that the
restriction of a smooth state bisimulation is a bisimulation, and the next
paragraph commented on a counterexample involving a “smooth” relation (of sorts)
but on a non-definable base space; on the other hand,
Example~\ref{exm:counterexm-NLMP-state-restriction}, discusses a somewhat milder
base space but with no trace of smoothness. We would like to have a stronger
version of the Lemma, or tighter examples. The same goes for
Lemma~\ref{lem:R_descends_to_bisim_between_substruc}, where we are assuming a
discrete process and a measurably generated bisimulation; we do not have any
example that barely misses the hypotheses. We would also have liked to present
an example of a non-uniform MLTS, but we were unable to do so.

\paragraph*{Acknowledgments}
We want to specially thank Nancy Moyano (Centro de Investigación y Estudios de Matemática, CIEM-FaMAF) for support during this project.

\providecommand{\noopsort}[1]{}
\begin{small}\end{small}

\newpage

\appendix

\section{More results on substructures}
\label{sec:more-results-substructures}

We have seen in Equation \eqref{eq:bisim_in_S_iff_in_substructures} that for a
single discrete NLMP, (global) bisimilarity between two states is determined by
(local) bisimilarity between them over any pair of substructures containing those states.
In this appendix, we provide a generalization of this characterization to states belonging to two possibly distinct discrete processes. 
To this end, we employ a standard construction of a sum process, which allows us to reduce binary relations between distinct objects to a single endorelation on their coproduct.

The \emph{disjoint union} $S\oplus S'$ of two sets is the
tagged union $(S\times \{0\})\cup (S'\times\{1\})$,
equipped with the natural inclusion maps $\inl : S\to S\oplus S'$ and $\inr : S' \to S\oplus S'$.

The \emph{sum of two measurable spaces} $(S,\Sigma)$ and
$(S',\Sigma')$ is the measurable space $(S\oplus S',\Sigma\oplus \Sigma')$, with
the following abuse of $\oplus$: The coproduct $\sigma$-algebra is defined as $\Sigma\oplus\Sigma'\defi\{Q\oplus
Q'\mid Q \in \Sigma,\  Q'\in \Sigma'\}$.
For any measure $\mu\in \Delta(S)$, we define its pushforward $\mu^{\oplus l}\in \Delta(S\oplus S')$ along the inclusion map by $\mu^{\oplus l}(E\oplus E') \defi \mu(E)$. We define $(\mu')^{\oplus r}\in \Delta(S\oplus S')$ analogously for any $\mu'\in \Delta(S')$.

\begin{definition}\label{def:sum-NLMP}
	Let $\lmp{S}$ and $\lmp{S}'$ be two NLMPs. 
	The \emph{sum} NLMP	$\lmp{S}\oplus \lmp{S'}$ has the measurable space $(S\oplus S',\Sigma \oplus \Sigma')$ as its underlying state space, and its transition set functions are given by $\Tfont{T}^\oplus_a(\inl(s)) = \{\mu^{\oplus l} \mid \mu \in \Tfont{T}_a(s)\}$, $\Tfont{T}^\oplus_a(\inr(s')) = \{(\mu')^{\oplus r} \mid \mu' \in \Tfont{T}'_a(s')\}$.
\end{definition}

Note that the sum of two substructures naturally forms a substructure of the sum process. In particular, this applies to the canonical embedding of any substructure from just one of the summands.

Given a relation $R$ on the sum $S\oplus S'$, its \emph{descent} is:
\begin{equation*}\label{eq:R_times}
	R_\times\defi \{(s,s')\mid \inl(s) \mathrel{R} \inr(s')\}\sbq S\times S'.  
\end{equation*}

If now $R\sbq S\times S'$, we can lift it to a relation on $S\oplus S'$ as follows:
\begin{equation*}
	\liftrel{R}\defi \{(\inl(s),\inr(s'))\mid s\mathrel{R} s'\} 
	\sbq (S\oplus S')\times (S\oplus S').
\end{equation*}

\begin{lemma}[{\cite[Proposition~3.23(1)]{survey_bisim}}]\label{lem:ext-equiv-oplus-closed}
	\begin{enumerate}
		\item\label{item:closed-pairs-in-sum-1} Let $R$ be a relation on $A\oplus A'$. 
		If $E\oplus E'$ is $R$-closed, then the pair $(E,E')$ is $R_\times$-closed.
	      \item\label{item:closed-pairs-in-sum-2} If $R\sbq A\times A'$ and the pair $(E,E')$ is $R$-closed, then $E\oplus E'$ is $\bigl(\liftrel{R} \cup \liftrel{R}^{-1}\bigr)$-closed.%
                \footnote{%
                  The original Proposition only states $\liftrel{R}$-closure,
                  but it is easy to see that it also holds for the converse
                  relation.}
	\end{enumerate}	
\end{lemma}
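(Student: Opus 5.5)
Both items will be proved by unwinding the definitions of $\oplus$-sets and closedness, using Lemma~\ref{lem:pair-equiv}(\ref{item:in-equiv}) to translate pair-closedness into the two image inclusions $R[E]\subseteq E'$ and $R^{-1}[E']\subseteq E$. Recall that a subset $W$ of $A\oplus A'$ is $R$-closed when every element related to a point of $W$ (in either direction) lies in $W$. Every subset of $A\oplus A'$ can be written uniquely as $E\oplus E'$, with $\inl(x)\in E\oplus E'$ iff $x\in E$ and $\inr(y)\in E\oplus E'$ iff $y\in E'$.

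For item~\ref{item:closed-pairs-in-sum-1}, assume $E\oplus E'$ is $R$-closed. We verify the two inclusions. First, take $x\in E$ and $y\in R_\times[x]$. Then $\inl(x)\mathrel{R}\inr(y)$ and $\inl(x)\in E\oplus E'$, so closedness gives $\inr(y)\in E\oplus E'$, i.e.\ $y\in E'$. Second, take $y\in E'$ and $x$ with $x\mathrel{R_\times}y$. Then $\inl(x)\mathrel{R}\inr(y)$ with $\inr(y)\in E\oplus E'$, so closedness (in the backward direction) gives $\inl(x)\in E\oplus E'$, i.e.\ $x\in E$. By Lemma~\ref{lem:pair-equiv}(\ref{item:in-equiv}), $(E,E')$ is $R_\times$-closed.

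For item~\ref{item:closed-pairs-in-sum-2}, assume $(E,E')$ is $R$-closed, so $R[E]\subseteq E'$ and $R^{-1}[E']\subseteq E$. The relation $T\defi\liftrel{R}\cup\liftrel{R}^{-1}$ is symmetric, so $T$-closedness of $W=E\oplus E'$ reduces to checking $T[W]\subseteq W$. Every pair in $T$ has one of two shapes.
\begin{itemize}
\item $(\inl(x),\inr(y))$ with $x\mathrel{R}y$. If $\inl(x)\in W$, then $x\in E$, hence $y\in R[E]\subseteq E'$, hence $\inr(y)\in W$.
\item $(\inr(y),\inl(x))$ with $x\mathrel{R}y$. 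If $\inr(y)\in W$, then $y\in E'$, hence $x\in R^{-1}[E']\subseteq E$, hence $\inl(x)\in W$.
\end{itemize}
No pairs of $T$ relate two $\inl$'s or two $\inr$'s, so these cases are exhaustive and $W$ is $T$-closed. The first case alone gives $\liftrel{R}$-closedness, as in the original proposition. The second case is exactly the contribution of $\liftrel{R}^{-1}$ mentioned in the footnote.

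The proof is entirely routine and has no real obstacle. The one point needing care is fixing the convention for closedness of a set under a non-symmetric relation on $A\oplus A'$: it must be read with both directions, as in the paper's definition of $\Sigma(R)$. With that convention, item~\ref{item:closed-pairs-in-sum-1} needs no symmetry assumption on $R$.
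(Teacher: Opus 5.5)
Your proof is correct. The paper gives no argument of its own: it cites \cite[Proposition~3.23(1)]{survey_bisim} and says only that the extension to $\liftrel{R}^{-1}$ is easy. Unwinding both items through Lemma~\ref{lem:pair-equiv}(\ref{item:in-equiv}), as you do, is exactly the routine verification intended.

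One small correction to your commentary. Under the two-sided notion of closedness you adopt, $\liftrel{R}$-closedness already needs both of your cases, and it coincides with $(\liftrel{R}\cup\liftrel{R}^{-1})$-closedness. So your remark that the first case alone yields $\liftrel{R}$-closedness is accurate only for a one-sided (forward-image) notion of closedness.
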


\begin{lemma}\label{lem:ext-equiv-oplus-bisim}
	Let $\lmp{S}$ and $\lmp{S}'$ be two NLMP and $\lmp{A}$, $\lmp{A}'$ two substructures of $\lmp{S}$ and $\lmp{S}'$ respectively. 
	\begin{enumerate}
	\item\label{item:ext-bisim-to-sum} If $R\sbq S\times S'$ is an external state bisimulation, then the symmetrization $\liftrel{R} \cup \liftrel{R}^{-1}$ of $\liftrel{R}$ is an internal bisimulation on $\lmp{S}\oplus \lmp{S'}$.
	\item\label{item:int-in-sum-to-ext} If $R$ is a state bisimulation on $\lmp{A}\oplus \lmp{A'}$ such that $R\sbq \liftrel{R_\times} \cup \bigl(\liftrel{R_\times}\bigr)^{-1}$, then $R_\times$ is an external state bisimulation between $\lmp{A}$ and $\lmp{A'}$.
	\end{enumerate}   
\end{lemma}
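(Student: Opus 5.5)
Both items reduce to checking the zig/zag conditions measure by measure, using Lemma~\ref{lem:ext-equiv-oplus-closed} to move between closed pairs on $S\times S'$ (or $A\times A'$) and closed sets of the sum. The basic identity, immediate from the definition of the pushforwards, is
\[
\mu^{\oplus l}(E\oplus E')=\mu(E)\quad\text{and}\quad(\mu')^{\oplus r}(E\oplus E')=\mu'(E').
\]

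For Item~\ref{item:ext-bisim-to-sum}, let $\hat R\defi\liftrel{R}\cup\liftrel{R}^{-1}$, which is symmetric by construction. Take a related pair, say $\inl(s)\mathrel{\hat R}\inr(s')$ with $s\mathrel{R}s'$; the other orientation is handled by the zag clause of $R$. Let $\nu\in\Tfont{T}^\oplus_a(\inl(s))$, so $\nu=\mu^{\oplus l}$ with $\mu\in\Tfont{T}_a(s)$. The zig clause gives $\mu'\in\Tfont{T}'_a(s')$ with $\mu\mathrel{\bar R}\mu'$, and we propose $(\mu')^{\oplus r}$ as the matching measure.

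It remains to check that $\mu^{\oplus l}$ and $(\mu')^{\oplus r}$ agree on every $\hat R$-closed measurable set. Every measurable set of the sum has the form $E\oplus E'$ with $E\in\Sigma$ and $E'\in\Sigma'$. If $E\oplus E'$ is $\hat R$-closed, then it is also $\liftrel R$-closed. By Lemma~\ref{lem:ext-equiv-oplus-closed}(\ref{item:closed-pairs-in-sum-1}), the pair $(E,E')$ is then $(\liftrel R)_\times=R$-closed. Since $\mu\mathrel{\bar R}\mu'$, this gives $\mu(E)=\mu'(E')$, which is the required equality by the identity above. The case $\inr(s')\mathrel{\hat R}\inl(s)$ is symmetric, using zag.

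For Item~\ref{item:int-in-sum-to-ext}, let $s\mathrel{R_\times}s'$, i.e.\ $\inl(s)\mathrel R\inr(s')$, and take $\mu\in\Tfont{T}_a(s)$, working in the substructures. Since $R$ is a state bisimulation, there is some $\nu\in\Tfont{T}^\oplus_a(\inr(s'))$, necessarily $\nu=(\mu')^{\oplus r}$ with $\mu'\in\Tfont{T}'_a(s')$, such that $\mu^{\oplus l}\mathrel{\bar R}(\mu')^{\oplus r}$.

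Now let $(E,E')$ be an $R_\times$-closed measurable pair. By Lemma~\ref{lem:ext-equiv-oplus-closed}(\ref{item:closed-pairs-in-sum-2}), $E\oplus E'$ is closed under $\liftrel{R_\times}\cup\liftrel{R_\times}^{-1}$. Because $R\sbq\liftrel{R_\times}\cup\liftrel{R_\times}^{-1}$, it is also $R$-closed. Hence
\[
\mu(E)=\mu^{\oplus l}(E\oplus E')=(\mu')^{\oplus r}(E\oplus E')=\mu'(E').
\]
For the zag clause, start from $\mu'$ and use the symmetry of $R$, which gives $\inr(s')\mathrel R\inl(s)$, and argue the same way.

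The main obstacle is the bookkeeping of the sum structure. In Item~\ref{item:int-in-sum-to-ext} we need the hypothesis $R\sbq\liftrel{R_\times}\cup\liftrel{R_\times}^{-1}$ to rule out pairs within one summand; without it, $R$-closed sets would be strictly fewer than $\liftrel{R_\times}$-closed ones, and the implication would break. We must also confirm that the sum of the substructures is a substructure of $\lmp S\oplus\lmp S'$, with restricted measures matching the pushforwards, so that the lemma applies to $\lmp A\oplus\lmp A'$ as stated.
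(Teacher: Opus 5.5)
Your proof is correct and follows the paper's argument essentially step for step. In Item~\ref{item:ext-bisim-to-sum} you pass from $(\liftrel{R}\cup\liftrel{R}^{-1})$-closed sets $E\oplus E'$ to $R$-closed pairs via Lemma~\ref{lem:ext-equiv-oplus-closed}(\ref{item:closed-pairs-in-sum-1}) and $(\liftrel{R})_\times=R$. In Item~\ref{item:int-in-sum-to-ext} you go the other way via part~(\ref{item:closed-pairs-in-sum-2}) and the hypothesis $R\subseteq\liftrel{R_\times}\cup(\liftrel{R_\times})^{-1}$, and you spell out the zag clause via the symmetry of $R$, which the paper leaves implicit.

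Your closing worry about whether the sum of substructures is a substructure of $\lmp S\oplus\lmp S'$ is unnecessary for this lemma. Item~\ref{item:int-in-sum-to-ext} only concerns the sum NLMP $\lmp A\oplus\lmp A'$, built directly by Definition~\ref{def:sum-NLMP}.
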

Note that the condition $R\sbq \liftrel{R_\times} \cup \bigl(\liftrel{R_\times}\bigr)^{-1}$ is another way of stating that $R$ does not relate points within the same summand.
\begin{proof}
  For \ref{item:ext-bisim-to-sum}, assume that $R\sbq S\times S'$ is an external state bisimulation and let $(\inl(s),\inr(s'))\in \liftrel{R}$.
  Every element of $\Tfont{T}^\oplus_a(\inl(s))$ is of the form $\mu^{\oplus l}$,
  and hence there exists $\mu' \in \Tfont{T}'_a(s')$ such that $\mu \mathrel{\bar{R}} \mu'$.
  Let $E\oplus E'\sbq S\oplus S'$ be measurable and  $\bigl(\liftrel{R} \cup \liftrel{R}^{-1}\bigr)$-closed. 
  Then, it is also $\liftrel{R}$-closed. By Lemma~\ref{lem:ext-equiv-oplus-closed}(\ref{item:closed-pairs-in-sum-1}) the pair $(E,E')$ is $(\liftrel{R})_\times$-closed.
  It is clear that $(\liftrel{R})_\times=R$. 
  Therefore, $\mu^{\oplus l}(E\oplus E') =\mu(E) = \mu'(E') =(\mu')^{\oplus r}(E\oplus E')$.
  The zag condition is analogous.
  Also, the zig and zag conditions for pairs in $\liftrel{R}^{-1}$ are equivalent to the zag and zig conditions respectively for the corresponding pair in $\liftrel{R}$. 
  
  For \ref{item:int-in-sum-to-ext}, suppose that $R$ is an internal bisimulation on $\lmp{A}\oplus \lmp{A'}$ and $(s,s')\in R_\times$. 
  If $\mu \in \Tfont{T}_a(s)$, there exists $\mu' \in \Tfont{T}'_a(s')$ such that $\mu^{\oplus l} \mathrel{\bar{R}} (\mu')^{\oplus r}$.
  Let $(E,E')$ be a measurable $R_\times$-closed pair, by
  Lemma~\ref{lem:ext-equiv-oplus-closed}(\ref{item:closed-pairs-in-sum-2})
  $E\oplus E'$ is measurable $\bigl(\liftrel{R_\times} \cup \liftrel{R_\times}^{-1}\bigr)$-closed.
  The hypothesis about $R$ implies that this set is also $R$-closed.
  Thus, $\mu(E) =\mu^{\oplus l}(E\oplus E') =(\mu')^{\oplus r}(E\oplus E') =\mu'(E')$.
\end{proof}

If $\D \subseteq \Power(S)\times\Power(S')$, define a relation $\Rel^\times(\D) \subseteq S\times S'$ as
\[ s\mathrel{\Rel^\times(\D)} s' \iff \forall (Q,Q') \in \D \; (s\in Q \Leftrightarrow s'\in Q').\]

The composition $\Rel^\times\circ \Sigma^\times$ acts as a closure operator satisfying the following properties (\cite{survey_bisim}, Lemma 4.26, Corollary 4.27):
\begin{enumerate}
	\item $R \subseteq \Rel^\times(\Sigma^\times(R))$.
	\item $\Sigma^\times(R) =
	\Sigma^\times(\Rel^\times(\Sigma^\times(R)))$.
	\item If $R$ is an external bisimulation, then
	$\Rel^\times(\Sigma^\times(R))$ also is.
\end{enumerate}

\begin{prop}\label{prop:bisim_rest_substructures}
	Let $\lmp{S}$ and $\lmp{S}'$ be two discrete NLMP
	over separable spaces and $\lmp{A}$, $\lmp{A}'$ two substructures of $\lmp{S}$ and $\lmp{S}'$ respectively. 
	Let $s\in A$ and $s'\in A'$.
	If $R\subseteq S\times S'$ is an external state 
	bisimulation between $(\lmp{S},s)$ and $(\lmp{S}',s')$ such that $R = \Rel^\times(\Sigma^\times(R))$, 
	then $R{\downarrow}$ is an external state bisimulation between 
	$(\lmp{A},s)$ and $(\lmp{A}',s')$.
\end{prop}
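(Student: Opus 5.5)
Here $R{\downarrow}=R\cap(A\times A')$. The plan is to redo the proof of Lemma~\ref{lem:R_descends_to_bisim_between_substruc} in the external setting: an interpolation lemma for pairs, in the spirit of Lemmas~\ref{lem:R-bar-interpolation} and~\ref{lem:R-bar-interpolation-countable-supp}, where $\Sigma^\times(R)$ plays the role of $\Sigma(R)$ and ``$R$-classes'' are replaced by $R$-closed pairs. First note $s\mathrel{R{\downarrow}}s'$. Now take $t\mathrel{R{\downarrow}}t'$ and $\nu_A\in(\Tfont{T}_a\rest A)(t)$ with $\nu\in\Tfont{T}_a(t)$. Since $R$ is an external bisimulation, there is $\nu'\in\Tfont{T}'_a(t')$ with $\nu\mathrel{\bar R}\nu'$. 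It then suffices to show $\nu_A\mathrel{\overline{R{\downarrow}}}\nu'_{A'}$; the zag direction is symmetric.

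The key step uses $R=\Rel^\times(\Sigma^\times(R))$ to separate points. If $x\in S$, $y'\in S'$ and $x\not\mathrel{R}y'$, I pick $(Q_{x,y'},Q'_{x,y'})\in\Sigma^\times(R)$ with $x\in Q_{x,y'}$ and $y'\notin Q'_{x,y'}$. Using complements (Lemma~\ref{lem:pair-equiv}(\ref{item:closure-prop})), the roles can also be swapped. Let $D\defi\supp\nu\cup\supp\nu'$, a countable subset of $S\oplus S'$. For each $x\in\supp\nu$, intersect countably many such pairs over the points of $D$ that are not ``$R$-equivalent'' to $x$. To make this an equivalence, I would work in the sum $\lmp S\oplus\lmp S'$: the relation generated by $\Sigma^\times(R)$ through the separation property is measurably generated there. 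This produces countably many measurable pairs $(Q_C,Q'_C)\in\Sigma^\times(R)$, one for each ``class'' $C$ meeting $D$, such that $Q_C\oplus Q'_C$ traces exactly $C$ on $D$. Because the sets are closed pairs, $\nu(Q_C)=\nu'(Q'_C)$, and thickness of $A$ and $A'$ gives $\nu_A(Q_C\cap A)=\nu'_{A'}(Q'_C\cap A')$. Distinct classes give pieces that are disjoint on the supports, so $\{Q_C\cap A\}$ and $\{Q'_C\cap A'\}$ are partitions in measure.

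Next I take an arbitrary $R{\downarrow}$-closed measurable pair $(U,U')$ in $A\times A'$ and repeat the summation argument of Lemma~\ref{lem:R-bar-interpolation}. Suppose $\nu_A(U\cap Q_C)>0$. Then some support point $u\in U\cap C\cap A$ exists, and every $u'\in C\cap A'\cap\supp\nu'$ satisfies $u\mathrel{R}u'$. Since both points lie in $A\times A'$, this means $u\mathrel{R{\downarrow}}u'$, and closedness of $(U,U')$ forces $u'\in U'$. Therefore $\nu'_{A'}(U'\cap Q'_C)=\nu'_{A'}(Q'_C\cap A')$. Summing over $C$ gives $\nu_A(U)=\nu'_{A'}(U')$.

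The main obstacle is the claim that two support points in the same ``class'' are actually $R$-related, so that the previous step applies. For $x\in S$ and $y'\in S'$ this is exactly $R=\Rel^\times(\Sigma^\times(R))$. For two points on the same side, however, I need the cross-side chain to be transitive. I would handle this by defining classes only via cross-side relatedness: I would use Lemma~\ref{lem:A_s_saturated}-type arguments to show every support point has an $R$-partner in the other support, since a partnerless point would violate $\nu\mathrel{\bar R}\nu'$ on a separating closed pair. I would also show that $R$ is $z$-closed, using the fact that $\Rel^\times(\Sigma^\times(R))$ is always $z$-closed, because closed pairs are closed under the relevant compositions. With $z$-closedness, the relation $R^{-1}\circ R$ is an equivalence, and the classes are well defined.
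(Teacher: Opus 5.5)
Your proposal is correct, but it takes a different route from the paper. The paper does no new measure computation. It passes to the sum, where $R_0=\liftrel{R}\cup\liftrel{R}^{-1}$ is an internal state bisimulation (Lemma~\ref{lem:ext-equiv-oplus-bisim}), and coarsens it to the measurably generated $R_1=\Rel(\Sigma(R_0))$. It then applies the already proved internal Lemma~\ref{lem:R_descends_to_bisim_between_substruc} to $\inl[\lmp{A}]$ and $\inr[\lmp{A}']$. Only at the end does it use $R=\Rel^\times(\Sigma^\times(R))$, together with $E\oplus E'\in\Sigma(R_0)\iff(E,E')\in\Sigma^\times(R)$, to identify $(R_1{\downarrow})_\times$ with $R{\downarrow}$.

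You instead rerun the interpolation argument of Lemmas~\ref{lem:R-bar-interpolation} and~\ref{lem:R-bar-interpolation-countable-supp} directly on pairs, using the sum only to index classes. These classes are precisely the $R_1$-classes, and your use of the hypothesis (cross-side points of one class are $R$-related) is exactly the paper's final identification. The paper's version is shorter and modular, but it depends on two auxiliary facts: that the sum is a discrete NLMP, and that $R_1$ is again a state bisimulation with $R_1=\Rel(\Sigma(R_1))$. Yours is self-contained and invokes the bisimulation property of $R$ only at pairs $t\mathrel{R{\downarrow}}t'$, which are genuinely $R$-related.

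Two small remarks. First, your ``main obstacle'' is milder than you fear. With the $R_1$-classes, a class $C$ meeting $\supp\nu$ but not $\supp\nu'$ would give $\nu(Q_C)>0=\nu'(Q'_C)$, contradicting $\nu\mathrel{\bar R}\nu'$. So every relevant class meets both supports, and two same-side support points of $C$ are linked through a common partner on the other side; $z$-closedness is then not needed. Second, where you do invoke $z$-closedness, your stated reason is off. $\Rel^\times(\mathcal{D})$ is $z$-closed simply because it relates exactly the points having the same membership pattern in $\mathcal{D}$, not because closed pairs are closed under composition.
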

\begin{proof}
	By Lemma~\ref{lem:ext-equiv-oplus-bisim}(\ref{item:ext-bisim-to-sum}), the relation $R_0 = \liftrel{R} \cup \liftrel{R}^{-1}$ is a state bisimulation on the discrete NLMP $\lmp{S} \oplus \lmp{S}'$ which relates $\inl(s)$ and $\inr(s')$.
	Now consider the coarser state bisimulation $R_1=\Rel(\Sigma(R_0))$, which satisfies $R_1=\Rel(\Sigma(R_1))$.
	
	We can use Lemma~\ref{lem:R_descends_to_bisim_between_substruc} applied to $R_1$ and the substructures $\inl[\lmp{A}]$ and $\inr[\lmp{A}']$ to conclude that $R_1{\downarrow}\sbq \inl[A]\times \inr[A']$ is a $z$-closed external state bisimulation between $(\inl[\lmp{A}],\inl(s))$ and $(\inr[\lmp{A}'],\inr(s'))$.
	Given the isomorphisms $\inl{\lmp{A}} \cong \lmp{A}$ and $\inr{\lmp{A}'} \cong \lmp{A}'$, now is easy to see that $(R_1{\downarrow})_\times$ is an external state bisimulation between $(\lmp{A},s)$ and $(\lmp{A}',s')$.
	We only need to check that this relation is exactly $R{\downarrow}$.
	Note that $E\oplus E' \in \Sigma(R_0)$ if and only if $(E,E')\in \Sigma^\times(R)$. Therefore, for $(x,y)\in A\times A'$,
	\begin{align*}
	  (x,y) \in (R_1{\downarrow})_\times & \iff  (\inl(x),\inr(y)) \in R_1 \\
          & \iff  \forall E\oplus E' \in \Sigma(R_0), \; (x\in E \iff y\in E') \\
		&\iff \forall (E,E')\in \Sigma^\times(R), \; (x\in E \iff y\in E') \\
		&\iff (x,y) \in \Rel^\times(\Sigma^\times(R)) = R.\qedhere 
	\end{align*}
\end{proof}

\begin{corollary}
	Let $\lmp{S}$ and $\lmp{S}'$ be two discrete NLMP
	over separable spaces and $\lmp{A}$, $\lmp{A}'$ two substructures of $\lmp{S}$ and $\lmp{S}'$ respectively. 
	For $s\in A$ and $s'\in A'$, 
	\[
	(\lmp{S},s)\sim_\s^\times (\lmp{S}',s') \iff (\lmp{A},s) \sim_\s^\times (\lmp{A}',s').
	\]
\end{corollary}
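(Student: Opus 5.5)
The plan is to prove both directions using Proposition~\ref{prop:bisim_rest_substructures} for ($\Rightarrow$) and the ``going up'' results for ($\Leftarrow$).

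For ($\Leftarrow$), suppose $R\subseteq A\times A'$ is an external state bisimulation between $\lmp{A}$ and $\lmp{A}'$ with $s\mathrel{R}s'$. Lemma~\ref{lem:ext-in-A-iff-ext-in-S} is stated for substructures of possibly distinct NLMPs $\lmp{S}$ and $\lmp{S}'$. It yields directly that $R$ is an external state bisimulation between $\lmp{S}$ and $\lmp{S}'$, so $(\lmp{S},s)\sim_\s^\times(\lmp{S}',s')$. No discreteness is needed here.

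For ($\Rightarrow$), let $R$ be an external state bisimulation between $\lmp{S}$ and $\lmp{S}'$ relating $s$ and $s'$; for instance take $R={\sim_\s^\times}$ itself. I would replace $R$ by its closure $R^\ast\defi\Rel^\times(\Sigma^\times(R))$. By the three closure properties quoted before Proposition~\ref{prop:bisim_rest_substructures}:
\begin{itemize}
\item $R^\ast$ is again an external bisimulation;
\item $R\subseteq R^\ast$, so $s\mathrel{R^\ast}s'$;
\item $\Sigma^\times(R^\ast)=\Sigma^\times(R)$, hence $\Rel^\times(\Sigma^\times(R^\ast))=\Rel^\times(\Sigma^\times(R))=R^\ast$.
\end{itemize}
Thus $R^\ast$ satisfies the fixed-point hypothesis of Proposition~\ref{prop:bisim_rest_substructures}. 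Applying that proposition (both processes discrete over separable spaces, $s\in A$, $s'\in A'$), $R^\ast{\downarrow}=R^\ast\cap(A\times A')$ is an external state bisimulation between $(\lmp{A},s)$ and $(\lmp{A}',s')$. It contains $(s,s')$, so $(\lmp{A},s)\sim_\s^\times(\lmp{A}',s')$.

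The main point needing care is the idempotence of the closure operator, i.e.\ that $R^\ast$ really is a fixed point. This follows from item~2 exactly as computed above, so the corollary is essentially immediate once Proposition~\ref{prop:bisim_rest_substructures} is available. I would also double-check that ``relates $s$ to $s'$'' in that proposition refers to the restricted relation, which holds because $(s,s')\in A\times A'$.
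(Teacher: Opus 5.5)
Your proof is correct and follows the paper's: Proposition~\ref{prop:bisim_rest_substructures} gives ($\Rightarrow$) and Lemma~\ref{lem:ext-in-A-iff-ext-in-S} gives ($\Leftarrow$). The only difference is how you secure the fixed-point hypothesis: you pass to the closure $\Rel^\times(\Sigma^\times(R))$ of an arbitrary bisimulation, whereas the paper applies the proposition to $\sim_\s^\times$ directly, which is already a fixed point (it is the largest external bisimulation and is contained in its closure, which is again an external bisimulation).
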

\begin{proof}
	From left to right, apply Proposition~\ref{prop:bisim_rest_substructures} to $\sim_\s^\times$. The other direction follows from Lemma~\ref{lem:ext-in-A-iff-ext-in-S}.
\end{proof}

One might ask: How restrictive is the condition that $R$ be a fixed point of $\Rel^\times \circ \Sigma^\times$? 
The following proposition shows that this condition is satisfied automatically whenever $R$ arises from the restriction of a measurably generated equivalence on the sum process.

\begin{lemma}\label{lem:R-downarrow-property}
	If $R=\Rel(\Sigma(R))$, then
	$R{\downarrow}=\Rel^\times(\Sigma^\times(R{\downarrow}))$.
\end{lemma}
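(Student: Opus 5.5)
The inclusion $R{\downarrow}\subseteq \Rel^\times(\Sigma^\times(R{\downarrow}))$ is the general closure property (item 1 in the list above), so only the reverse inclusion needs work. The context is as in the appendix: $R$ is a relation on a single measurable space $(S,\Sigma)$, $A,A'\subseteq S$, and $R{\downarrow}=R\cap (A\times A')$. The pairs in $\Sigma^\times(R{\downarrow})$ are measurable with respect to $(\Sigma\rest A,\Sigma\rest A')$. The strategy is to produce, from each measurable $R$-closed set, a measurable $R{\downarrow}$-closed pair. Then the hypothesis $R=\Rel(\Sigma(R))$ turns separation by such pairs into non-membership in $R$.

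The first step is this. For $Q\in\Sigma(R)$, I would show that $(Q\cap A, Q\cap A')$ is an $R{\downarrow}$-closed pair, using Lemma~\ref{lem:pair-equiv}(\ref{item:in-equiv}).

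\begin{itemize}
\item Take $x\in Q\cap A$ and $x\mathrel{R{\downarrow}} y$. Then $x\rR y$ and $y\in A'$. Since $Q$ is $R$-closed, $y\in Q$, so $y\in Q\cap A'$.
\item Symmetrically, $R{\downarrow}^{-1}[Q\cap A']\subseteq Q\cap A$.
\end{itemize}

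Measurability holds because $Q\cap A\in\Sigma\rest A$ and $Q\cap A'\in\Sigma\rest A'$. Hence $(Q\cap A,Q\cap A')\in\Sigma^\times(R{\downarrow})$.

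The second step proves the reverse inclusion. Let $(x,y)\in A\times A'$ with $(x,y)\in\Rel^\times(\Sigma^\times(R{\downarrow}))$. For every $Q\in\Sigma(R)$, the first step gives $x\in Q\cap A\iff y\in Q\cap A'$. Since $x\in A$ and $y\in A'$, this reduces to $x\in Q\iff y\in Q$. So $x\mathrel{\Rel(\Sigma(R))} y$, which by hypothesis means $x\rR y$. Together with $(x,y)\in A\times A'$, this gives $x\mathrel{R{\downarrow}}y$.

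The main obstacle is bookkeeping rather than substance. One must confirm that $\Rel^\times(\Sigma^\times(R{\downarrow}))$ is read as a subset of $A\times A'$, with closed pairs taken in the subspace $\sigma$-algebras. If instead it is read in $S\times S'$ (or on the sum space, as in Proposition~\ref{prop:bisim_rest_substructures}), then pairs outside $A\times A'$ must be excluded. I would handle this by using the pair $(A\cap Q, A'\cap Q)$ with $Q=S$, i.e.\ $(A,A')$, which is trivially $R{\downarrow}$-closed. This pair separates any point outside $A$ from any point inside $A'$, and vice versa. That forces related points to lie in $A\times A'$, provided $A,A'$ are measurable, as is the case for substructures.

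If the ambient setting is the sum $S\oplus S'$, the same argument goes through by translating via $\inl$ and $\inr$ and Lemma~\ref{lem:ext-equiv-oplus-closed}.
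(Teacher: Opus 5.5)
Your proof is correct and is essentially the paper's argument run directly rather than by contraposition: the paper also takes a separating $Q\in\Sigma(R)$ (supplied by $R\supseteq\Rel(\Sigma(R))$) and uses $(Q\cap A,Q\cap A')\in\Sigma^\times(R{\downarrow})$, which the paper asserts and you actually verify via Lemma~\ref{lem:pair-equiv}. Your bookkeeping paragraph is unnecessary, since the paper works with $R{\downarrow}\subseteq A\times A'$ and the subspace $\sigma$-algebras; under the other reading it is also slightly off, because $(A,A')$ does not separate $x\notin A$ from $y\notin A'$, and you would instead need pairs such as $(S\setminus A,\emptyset)$ and $(\emptyset,S\setminus A')$, which are trivially $R{\downarrow}$-closed.
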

\begin{proof}
	We only need to prove the inclusion $\supseteq$. Suppose that $x\mathrel{\Rel^\times(\Sigma^\times(R{\downarrow}))}y$ but $x\mathrel{\cancel{R{\downarrow}}}y$. Then, given the hypothesis $R\supseteq \Rel(\Sigma(R))$, there exists $Q_{x,y} \in \Sigma(R)$ such that $Q\cap\{x,y\}=\{x\}$.
	But then $(Q_{x,y}\cap A, Q_{x,y}\cap A') \in \Sigma^\times(R{\downarrow})$ can distinguish $x$ and $y$. 
\end{proof}

Observe that this lemma cannot be strengthen to the equality $\Rel(\Sigma(R)){\downarrow} = \Rel^\times(\Sigma^\times(R{\downarrow}))$.
Only the $\supseteq$ inclusion is valid. 
For the other one, consider $S=\{1, 2, 3\}$, $A=\{1,2\}$ and $A'=\{3\}$. The relation is $R =\{(1,2),(2,3)\}$. Then $(1,3)\in \Rel(\Sigma(R)){\downarrow}\setminus \Rel^\times(\Sigma^\times(R{\downarrow}))$, because $\Sigma(R) =\{S, \emptyset\}$ and $(\{1\},\emptyset)\in \Sigma^\times(R{\downarrow})$.

The next two lemmas give the same conclusion as Proposition~\ref{prop:bisim_rest_substructures}, but
in other scenarios rather than discrete spaces or measurably generated relations.

\begin{lemma}\label{lem:restriction_bisim_closed_pair}
	Let $\lmp{S}$ and $\lmp{S}'$ be two NLMPs, and let $\lmp{A}$, $\lmp{A}'$ be substructures of $\lmp{S}$ and $\lmp{S}'$ respectively. 
	Let $R\subseteq S\times S'$ be an external state bisimulation between $\lmp{S}$ and $\lmp{S}'$.
	If the pair $(A,A')$ is $R$-closed, then the restriction $R{\downarrow}$ is an external state bisimulation between $\lmp{A}$ and $\lmp{A}'$.
\end{lemma}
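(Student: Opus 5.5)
The plan is to verify the zig and zag conditions for $R{\downarrow} = R\cap (A\times A')$ directly. The key point is that, because $(A,A')$ is $R$-closed, the measurable $R{\downarrow}$-closed pairs of the substructures extend to measurable $R$-closed pairs in $(\Sigma,\Sigma')$. Once that is in place, the transfer of measures is exactly the argument of the $(\Leftarrow)$ direction of Lemma~\ref{lem:ext-in-A-iff-ext-in-S}.

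\textbf{Step 1: the relation restricts faithfully.} Since $(A,A')$ is $R$-closed, Lemma~\ref{lem:pair-equiv}(\ref{item:in-equiv}) gives $R[A]\subseteq A'$ and $R^{-1}[A']\subseteq A$. Hence for $x\in A$ we have $R{\downarrow}[x]=R[x]$, and for $x'\in A'$ we have $(R{\downarrow})^{-1}[x']=R^{-1}[x']$. In words, $R$ never leaves $A\times A'$ when started from $A$ or from $A'$.

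\textbf{Step 2: extending closed pairs.} Let $(E,E')$ be a $(\Sigma\rest A,\Sigma'\rest A')$-measurable $R{\downarrow}$-closed pair. Since $A,A'$ are measurable, $E\in\Sigma$ and $E'\in\Sigma'$. We claim $(E,E')$ is already $R$-closed as a pair in $S\times S'$, which we check via Lemma~\ref{lem:pair-equiv}(\ref{item:in-equiv}).
\begin{itemize}
\item For $R[E]\subseteq E'$: if $x\in E\subseteq A$ and $x\mathrel{R}y$, then $y\in A'$ by Step 1. So $x\mathrel{R{\downarrow}}y$, and therefore $y\in E'$.
\item For $R^{-1}[E']\subseteq E$: this is symmetric.
\end{itemize}

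\textbf{Step 3: zig.} Let $s\mathrel{R{\downarrow}}s'$ and $\tilde\mu=\mu_A\in(\Tfont{T}_a\rest A)(s)$ with $\mu\in\Tfont{T}_a(s)$. Since $R$ is an external bisimulation, there is $\mu'\in\Tfont{T}'_a(s')$ with $\mu\mathrel{\bar R}\mu'$. Because $s'\in A'$, the measure $\mu'_{A'}$ belongs to $(\Tfont{T}'_a\rest A')(s')$. For $(E,E')$ as in Step 2, using thickness and $E\subseteq A$, $E'\subseteq A'$, we get
\[
\mu_A(E)=\mu(E)=\mu'(E')=\mu'_{A'}(E').
\]
Hence $\mu_A\mathrel{\overline{R{\downarrow}}}\mu'_{A'}$. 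The zag condition follows by the symmetric argument.

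\textbf{Main obstacle.} The only delicate point is Step 2. Without the hypothesis that $(A,A')$ is $R$-closed, an $R{\downarrow}$-closed pair need not lift to an $R$-closed pair, as Example~\ref{exm:counterexm-NLMP-state-restriction} illustrates. Here the hypothesis makes the lift trivial, namely by taking the pair $(E,E')$ itself. Measurability poses no difficulty, since we are assuming the substructures are defined on measurable $A$ and $A'$.
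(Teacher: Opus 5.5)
Your proposal is correct and follows essentially the same argument as the paper. You use $R[A]\subseteq A'$ and $R^{-1}[A']\subseteq A$ to show that every measurable $R{\downarrow}$-closed pair $(E,E')$ is already a measurable $R$-closed pair in $S\times S'$, then transfer $\mu_A(E)=\mu(E)=\mu'(E')=\mu'_{A'}(E')$ via thickness, and you spell out the choice of $\mu'$ and why $\mu'_{A'}\in(\Tfont{T}'_a\rest A')(s')$ slightly more explicitly than the paper does.
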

\begin{proof}
	Let $(Q\cap A,Q'\cap A')$ be a measurable $R{\downarrow}$-closed pair. Then, $R[Q\cap A]\sbq R[A]\sbq A'$, therefore $R[Q\cap A]=R{\downarrow}[Q\cap A]\sbq Q'\cap A'$.
	Similarly, $R^{-1}[Q'\cap A']=R{\downarrow}^{-1}[Q'\cap A']\sbq Q\cap A$. So we conclude that $(Q\cap A,Q'\cap A')$ it is also a measurable $R$-closed pair.
	Hence, $\mu_A(Q\cap A)=\mu(Q\cap A)= \mu'(Q'\cap A')=\mu'_{A'}(Q'\cap A')$.
\end{proof}

\begin{lemma}\label{lem:bisim-restriction-MLTS}
  Let $\lmp{S}$ and $\lmp{S}'$ be two image-countable MLTSs. If $R\subseteq S\times S'$ is a 
  bisimulation between $(\lmp{S},s)$ and $(\lmp{S}',s')$, 
  then the restriction $R{\downarrow} = R \cap (A_s \times A_{s'})$ is a bisimulation between $(\lmp{A}_s,s)$ and $(\lmp{A}_{s'},s')$.
\end{lemma}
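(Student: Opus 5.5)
The claim is purely relational once we read $\lmp{S}$, $\lmp{S}'$ as LTSs via \eqref{eq:mlts-R_a-definition}. In an MLTS the transitions are Dirac measures, so $\tilde{\Tfont{T}}_a(s)$ is exactly the set of $a$-successors of $s$, and $A_s$ is the set of states reachable from $s$ in finitely many steps. I will therefore work with the standard notion of bisimulation from Definition~\ref{def:LTS-bisimulation}. By Proposition~\ref{prop:equality-bisim-MLTS-tradit-ext}, this agrees with the external state notion whenever the latter is needed. The plan is to verify zig and zag for $R{\downarrow} = R\cap(A_s\times A_{s'})$ directly, using the closure of $A_s$ and $A_{s'}$ under transitions. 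We have $s \mathrel{R{\downarrow}} s'$ because $s\in A_0(s)\subseteq A_s$ and $s'\in A_{s'}$.

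First I record the closure property. If $x\in A_s$ and $x\Rlab{a} y$, then $y\in A_s$. Indeed, $x\in A_n(s)$ for some $n$, and $y\in\tilde{\Tfont{T}}_a(x)\subseteq A_{n+1}(s)$ by Definition~\ref{def:general_A_s}. The same holds for $A_{s'}$. I also note that the transition relation of the substructure $\lmp{A}_s$ is just the restriction of $R_a$ to $A_s$: for $x\in A_s$, the successors of $x$ in $\lmp{A}_s$ are the points $y$ with $\delta_y\in(\Tfont{T}_a\rest A_s)(x)$, that is, $y\in\tilde{\Tfont{T}}_a(x)$, and all of these already lie in $A_s$.

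Now for zig, let $x\mathrel{R{\downarrow}}x'$ and $x\Rlab{a}y$ in $\lmp{A}_s$. Since $R$ is a bisimulation on the ambient systems, there is $y'$ with $x'\Rlab{a}y'$ in $\lmp{S}'$ and $y\mathrel{R}y'$. Because $x'\in A_{s'}$, the closure property gives $y'\in A_{s'}$, and likewise $y\in A_s$. Hence $y\mathrel{R{\downarrow}}y'$, and $x'\Rlab{a}y'$ holds in $\lmp{A}_{s'}$. Zag is symmetric.

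The only point needing care, and the main obstacle, is matching notions of bisimulation. If the statement intends external state bisimulation rather than the standard one, I would invoke Proposition~\ref{prop:equality-bisim-MLTS-tradit-ext} at both ends. That proposition requires a $z$-closed relation and separable spaces, and $R{\downarrow}$ need not be $z$-closed. To avoid this, I would use only the direction ``standard $\Rightarrow$ external'' from its proof, which does not use $z$-closedness. The countable spaces $A_s$ and $A_{s'}$ carry the restricted $\sigma$-algebras, and these are separable whenever the ambient ones are. If no separability is available, I would stay entirely within the LTS notion, which is what the statement literally asserts.
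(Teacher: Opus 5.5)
Your proof is correct and is essentially the paper's argument: read the systems as LTSs with the standard notion of bisimulation, use that $A_s$ and $A_{s'}$ are closed under $a$-successors, and verify zig and zag for $R{\downarrow}$ directly from those of $R$. The one caveat concerns your side remark. Under an external-state reading, the hypothesis end would also require $R$ itself to be $z$-closed, and without that the restriction can genuinely fail to be an external bisimulation, so staying with the standard notion, as both you and the paper do, is the right choice.
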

\begin{proof}
  Since all notions of bisimulation coincide, we use the standard definition. 
  If $(r, r') \in R{\downarrow}$ and $r \Rlab{a} t$, then $t \in A_r \subseteq A_s$. 
  Since $R$ is a bisimulation, there exists $t' \in S'$ such that $r' \Rlab{a} t'$ and $t \mathrel{R} t'$. 
  Thus, $t' \in A_{r'} \subseteq A_{s'}$, and therefore $(t, t') \in R{\downarrow}$. 
  The reciprocal condition is analogous.
\end{proof}

We end this section by extending Corollary~\ref{cor:bisim-in-NLMP-A}(\ref{item:state-on-A-iff-on-S}) to two other known definitions of bisimulation for NLMPs, namely hit and event bisimulations.

\begin{definition}\label{def:nlmp-bisimulations}
	\begin{enumerate}
		\item \label{def:nlmp_hb}
		A relation $R\sbq S\times S$ is a \emph{hit bisimulation} if it is symmetric and for all $a\in L$, $s\mathrel{R}t$ implies $\forall \xi\in \Delta(\Sigma(R))$, $\Tfont{T}_a(s)\cap \xi\neq \emptyset \iff \Tfont{T}_a(t)\cap\xi\neq \emptyset$.
		
		\item \label{def:nlmp_eb} 
		An \emph{event bisimulation} is a sub-$\sigma$-algebra $\Lambda$ of $\Sigma$ such that the map $\Tfont{T}_a: (S,\Lambda)\to(\Delta(\Sigma),H(\Delta(\Lambda)))$ is measurable for each $a\in L$. We also say that a relation $R$ is an event bisimulation if there exists an event bisimulation $\Lambda$ such that $R=\Rel(\Lambda)$.
	\end{enumerate}
	We say that $s,t\in S$ are hit (event) \emph{bisimilar}, denoted as
	$s\mathrel{\sim_\h}t$ ($s\mathrel{\sim_\e}t$), if there is a hit (event) bisimulation $R$ such that $s\mathrel{R}t$.
\end{definition}

\begin{lemma}\label{lem:bisim-in-NLMP-A-appendix} 
	Let $\lmp{A}$ be a substructure of $\lmp{S}$ and let $R\sbq A\times A$ be a symmetric relation.
	\begin{enumerate}
		\item 
		$R$ is a hit bisimulation on $\lmp{A}$ if and only if it is a	hit bisimulation on 	$\lmp{S}$.
		\item 
		If $\Lambda \sbq \Sigma$ is an event bisimulation on $\lmp{S}$, 
		then $\Lambda \rest A$ is an event bisimulation on $\lmp{A}$.
	\end{enumerate}
\end{lemma}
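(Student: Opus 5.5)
For item (1), the plan is to reduce hit bisimulation on $\lmp{A}$ to hit bisimulation on $\lmp{S}$ by comparing the two families of test sets $\Delta(\Sigma\rest A(R))$ and $\Delta(\Sigma(R))$ through $\Delta\iota$. We first use Corollary~\ref{cor:bisim-in-NLMP-A}(\ref{item:sigma-R-rest}), which gives $\Sigma\rest A(R)=\Sigma(R)\rest A$. Lemma~\ref{lem:delta-i-preimage} with $\Gamma=\Sigma(R)$ then yields
\[
\Delta(\Sigma\rest A(R))=(\Delta\iota)^{-1}[\Delta(\Sigma(R))].
\]
Next, for $s\in A$ we have $(\Tfont{T}_a\rest A)(s)=(\Delta\iota)^{-1}[\Tfont{T}_a(s)]$, as computed in the proof of Proposition~\ref{prop:substruct-is-NLMP}. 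Moreover, every $\mu\in\Tfont{T}_a(s)$ satisfies $\mu=\Delta\iota(\mu_A)$, so $\Tfont{T}_a(s)$ lies entirely in the image of $\Delta\iota$.

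These facts combine as follows. For $\xi\in\Delta(\Sigma(R))$ and $\tilde\xi=(\Delta\iota)^{-1}[\xi]$ we get
\[
(\Tfont{T}_a\rest A)(s)\cap\tilde\xi\neq\emptyset \iff \Tfont{T}_a(s)\cap\xi\neq\emptyset,
\]
because $\Delta\iota$ is injective and $\Tfont{T}_a(s)\subseteq\operatorname{ran}\Delta\iota$. Every $\tilde\xi\in\Delta(\Sigma\rest A(R))$ arises in this way. Hence the hit conditions for $s\mathrel{R}t$ (with $s,t\in A$) quantify over corresponding families and are equivalent in the two processes. Symmetry of $R$ is the same on both sides, and pairs outside $A\times A$ do not occur. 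Both directions therefore follow at once.

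For item (2), let $\Lambda$ be an event bisimulation on $\lmp{S}$. We must show that $\Tfont{T}_a\rest A:(A,\Lambda\rest A)\to(\Delta(\Sigma\rest A),H(\Delta(\Lambda\rest A)))$ is measurable; note that $\Lambda\rest A$ is a sub-$\sigma$-algebra of $\Sigma\rest A$. It suffices to check the generators $H_{\tilde D}$ with $\tilde D\in\Delta(\Lambda\rest A)$. By Lemma~\ref{lem:delta-i-preimage} with $\Gamma=\Lambda$, we can write $\tilde D=(\Delta\iota)^{-1}[D]$ with $D\in\Delta(\Lambda)$. The computation at the end of the proof of Proposition~\ref{prop:substruct-is-NLMP} then gives
\[
(\Tfont{T}_a\rest A)^{-1}[H_{\tilde D}]=\Tfont{T}_a^{-1}[H_D]\cap A.
\]
This set lies in $\Lambda\rest A$, since $\Tfont{T}_a^{-1}[H_D]\in\Lambda$ by hypothesis.

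The one point requiring care, and the expected obstacle, is that Lemma~\ref{lem:delta-i-preimage} says $(\Delta\iota)^{-1}[\Delta(\Lambda)]=\Delta(\Lambda\rest A)$ as a collection of preimages. We therefore need each element of $\Delta(\Lambda\rest A)$ to be of the form $(\Delta\iota)^{-1}[D]$, which is exactly what that equality of $\sigma$-algebras provides, since the preimage of a $\sigma$-algebra is a $\sigma$-algebra. A second subtlety is that hit sets in $H(\Delta(\Lambda))$ are formally indexed by elements of the $\sigma$-algebra itself. I would interpret $H_D$ for $D\in\Delta(\Lambda)$ as in Definition~\ref{def:nlmp_eb} and check that the identity with $\Tfont{T}_a^{-1}[H_D]\cap A$ uses only $\Tfont{T}_a(s)\subseteq\operatorname{ran}\Delta\iota$. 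That containment holds because $A$ is thick for all transitions from $A$.
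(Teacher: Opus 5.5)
Your proposal is correct and follows the paper's proof essentially verbatim. Item 1 goes through the identity $(\Delta\iota)^{-1}[\Delta(\Sigma(R))]=\Delta(\Sigma\rest A(R))$, obtained from Lemma~\ref{lem:delta-i-preimage} and Corollary~\ref{cor:bisim-in-NLMP-A}, together with the transfer $(\Tfont{T}_a\rest A)(s)\cap(\Delta\iota)^{-1}[D]\neq\emptyset\iff\Tfont{T}_a(s)\cap D\neq\emptyset$, and item 2 uses the same preimage computation $(\Tfont{T}_a\rest A)^{-1}[H_{\tilde D}]=\Tfont{T}_a^{-1}[H_D]\cap A$. The only difference is cosmetic: you set up the correspondence of test sets once and read off both directions of item 1, while the paper writes the two directions separately and leaves implicit, in the forward direction, that $(\Delta\iota)^{-1}[D]$ is an admissible test set on $\lmp{A}$.
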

\begin{proof}
	Below, we always assume that $s\rR t$ (necessarily we have $s,t\in A$).
	\begin{enumerate}
		\item $(\Rightarrow)$  
		Suppose $R$ is a hit bisimulation on $\lmp{A}$ and let 
		$D\in \Delta(\Sigma(R))$.
		\begin{align*}
			\Tfont{T}_a(s)\cap D\neq \emptyset 
			&\iff (\Tfont{T}_a\rest A)(s)\cap (\Delta \iota)^{-1}[D] \neq \emptyset \\
			&\iff (\Tfont{T}_a\rest A)(t)\cap (\Delta \iota)^{-1}[D] \neq \emptyset \\
			&\iff \Tfont{T}_a(t)\cap D\neq \emptyset.
		\end{align*}
		
		$(\Leftarrow)$  
		Apply Lemma~\ref{lem:delta-i-preimage} with $\Gamma\defi\Sigma(R)$ and, 
		using Corollary~\ref{cor:bisim-in-NLMP-A}(\ref{item:sigma-R-rest}), obtain the equality
		\[
		(\Delta \iota)^{-1}[\Delta(\Sigma(R))]=\Delta(\Sigma\rest A(R)).
		\]
		If $R$ is a hit bisimulation on $\lmp{S}$ and 
		$\tilde{D}\in \Delta(\Sigma\rest A(R))$, choose 
		$D\in \Delta(\Sigma(R))$ such that 
		$\tilde{D}=(\Delta \iota)^{-1}[D]$. Then
		\begin{align*}
			(\Tfont{T}_a\rest A)(s)\cap \tilde{D}\neq \emptyset 
			&\iff (\Tfont{T}_a\rest A)(s)\cap (\Delta \iota)^{-1}[D]\neq \emptyset \\
			&\iff \Tfont{T}_a(s)\cap D\neq \emptyset \\
			&\iff \Tfont{T}_a(t)\cap D\neq \emptyset \\
			&\iff (\Tfont{T}_a\rest A)(t)\cap \tilde{D}\neq \emptyset.
		\end{align*}
		
		\item  
		First apply Lemma~\ref{lem:delta-i-preimage} with $\Gamma\defi\Lambda$ 
		to obtain 
		$(\Delta \iota)^{-1}[\Delta(\Lambda)]=\Delta(\Lambda\rest A)$.  
		We must verify that
		\[
		\Tfont{T}_a\rest A:(A,\Lambda\rest A)\to
		(\Delta(\Sigma\rest A),H(\Delta(\Lambda\rest A)))
		\]
		is measurable.  
		Let $\tilde{D}\in \Delta(\Lambda\rest A)$ and choose 
		$D\in \Delta(\Lambda)$ such that 
		$\tilde{D}=(\Delta \iota)^{-1}[D]$. Then
		\begin{align*}
			(\Tfont{T}_a\rest A)^{-1}[H_{\tilde{D}}]
			&=\{s\in A\mid (\Tfont{T}_a\rest A)(s)\in H_{\tilde{D}}\}\\
			&=\{s\in A\mid (\Tfont{T}_a\rest A)(s)\cap \tilde{D}\neq\emptyset\}\\
			&=\{s\in A\mid (\Tfont{T}_a\rest A)(s)\cap (\Delta \iota)^{-1}[D]\neq\emptyset\}\\
			&=\{s\in A\mid \Tfont{T}_a(s)\cap D\neq\emptyset\}\\
			&=(\Tfont{T}_a)^{-1}[H_D]\cap A.
		\end{align*}
		Since $D\in \Delta(\Lambda)$, we have $H_D\in H(\Delta(\Lambda))$, 
		hence $(\Tfont{T}_a)^{-1}[H_D]\in \Lambda$ because $\Lambda$ is an 
		event bisimulation.\qedhere
	\end{enumerate}
\end{proof}

\section{Isomorphism on rank-restricted trees}
\label{sec:isom-rank-restr-trees}
In this section we prove that the restriction to
$\WF_M^{\leq \alpha}$ of the $\equiv$ relation defined at (\ref{eq:equiv-def})
is Borel. The result is analogous to the one for isomorphism of trees at
\cite[Section (1.2)]{friedman_stanley_1989}, but we follow the lines of the
proof at \cite[Theorem~13.2.5]{Gao2008InvariantDS}.

\begin{definition}\label{def:cong-alpha}
	For each $1 \leq \alpha < \omega_1$, we define a relation ${\equiv_\alpha} \subseteq \Tr_M \times \Tr_M$ by 
	\[ T \equiv_\alpha T' \iff T, T' \in \WF_M^{=\alpha} \wedge (T, \{\Suc_a(T)\}_{a \in L}) \cong (T', \{\Suc_a(T')\}_{a \in L}), \]
	where $(u, v) \in \Suc_a(T)$ if and only if $\exists x \in M \; \pi_2(x) = a \wedge v = u^\smallfrown (x)$.
\end{definition}

Since ${\equiv}\rest(\WF_M^{\leq \alpha})$ is the (countable) union of the relations
$\equiv_\beta$ for $\beta<\alpha$, it is enough to show that the latter
relations are Borel (Theorem~\ref{thm:iso-alpha-Borel}).

We write $M^k_{\mathrm{inj}} \subseteq M^k$ to denote the 
set of $k$-tuples without repetition, that is, injective functions 
$k \to M$. 
Also, for $k \geq 1$ and $2 \leq \alpha < \omega_1$, 
we denote by $\mathrm{forth}^\alpha_k(T, T')$ and 
$\mathrm{back}^\alpha_k(T, T')$ the following conditions:
\begin{multline*}
  \mathrm{forth}^\alpha_k(T, T')\colon\\
  T \in \WF_M^{=\alpha} 
  \wedge \forall u \in M^k_{\mathrm{inj}}, \; \bigl( \forall 
  i < k, \, (u_i) \in T \implies \\
  \exists u' \in 
  M^k_{\mathrm{inj}}, \; \forall i < k, \, (u'_i) \in T' \wedge \\
  \wedge 
  \forall i < k, \, \pi_2(u_i) = \pi_2(u'_i) \wedge 
  \forall i < k, \; \exists \beta < \alpha, \, 
  T_{(u_i)} \equiv_\beta T'_{(u'_i)} \bigr),
\end{multline*}
\begin{multline*}
  \mathrm{back}^\alpha_k(T, T')\colon\\
  T' \in \WF_M^{=\alpha} 
  \wedge \forall u' \in M^k_{\mathrm{inj}}, \; \bigl( \forall 
  i < k, \, (u'_i) \in T' \implies \\
  \exists u \in 
  M^k_{\mathrm{inj}}, \; \forall i < k, \, (u_i) \in T \wedge \\ 
  \wedge
  \forall i < k, \, \pi_2(u_i) = \pi_2(u'_i) \wedge 
  \forall i < k, \; \exists \beta < \alpha, \, 
  T_{(u_i)} \equiv_\beta T'_{(u'_i)}  \bigr).
\end{multline*}

With the first of these conditions, we express that for any 
finite family of nodes at the first level of $T$ 
(that is, nodes of length one), there are nodes at the first 
level of $T'$ that imitate them one by one, where imitating 
means that the corresponding sections are isomorphic. 
The second condition is the same but starts from $T'$.

\begin{remark}\label{note:forth-back-implies-rank} If $T, T' \in \Tr_M$ 
	satisfy the second part of the definitions of 
	$\mathrm{forth}^\alpha_1(T, T')$ and $\mathrm{back}^\alpha_1(T, T')$, 
	then we can prove that $T, T' \in \WF_M^{\leq \alpha + 1}$. Indeed: for each $(u) \in T$, there exist $(u') \in T'$ and 
	$\beta_u < \alpha$ such that $T_{(u)} \equiv_{\beta_u} T'_{(u')}$. By 
	the definition of $\equiv_{\beta_u}$, $T_{(u)} \in \WF_M^{=\beta_u}$, and 
	by Lemma~\ref{lem:tree-rank-tail},  
	$\rho(T_{(u)}) = \rho_{T_{(u)}}(\emptyset) + 1 = \rho_T((u)) + 1$. Clearly, 
	$T \in \WF_M$ since all trees $T_{(u)}$ with $(u) \in T$ 
	are well-founded. Then,
	\begin{align*}
		\rho(T) &= \rho_T(\emptyset) + 1 = \sup \{\rho_T((u)) + 1 \mid (u) \in 
		T\} + 1 \\
		&= \sup \{\rho(T_{(u)}) \mid (u) \in T\} + 1 \leq \sup \{\beta_u \mid 
		(u) \in T\} + 1 \leq \alpha + 1.
	\end{align*}
	Analogously for $T'$.
\end{remark}

\begin{lemma}\label{lem:iso-alpha-lemma}
	Given $x \in M$, the function $h_x : \Tr_M \to \Tr_M$ defined by 
	$h_x(T) = T_{(x)}$ is continuous.\qed
\end{lemma}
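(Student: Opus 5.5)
To prove that $h_x$ is continuous, I will use the fact that the topology of $\Tr_M$ is the relative topology inherited from the Cantor space $2^{M^{<\N}}$. Its subbasic open sets are therefore of the form $\{T\in\Tr_M\mid v\in T\}$ and $\{T\in\Tr_M\mid v\notin T\}$, for $v\in M^{<\N}$. It is enough to check that the preimage under $h_x$ of each subbasic set is open, exactly as was done for $\Omega_0$ in Lemma~\ref{lem:Omega-continuous} and for the gluing map $P$ in Lemma~\ref{lem:S-continuous}.

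The key computation comes directly from the definition of the section tree: $v\in T_{(x)}$ if and only if $(x)^\smallfrown v\in T$. Hence
\[
h_x^{-1}\bigl[\{T'\in\Tr_M\mid v\in T'\}\bigr]=\{T\in\Tr_M\mid (x)^\smallfrown v\in T\},
\]
which is again a subbasic clopen set of $\Tr_M$. The same identity, with $\notin$ in place of $\in$, handles the complementary subbasic sets. Consequently $h_x$ pulls back clopen subbasic sets to clopen subbasic sets, and it is continuous.

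Before the computation, I will add a brief remark that $h_x$ is well defined as a map into $\Tr_M$. The set $T_{(x)}$ is closed under initial segments, since $(x)^\smallfrown w\subseteq (x)^\smallfrown v$ whenever $w\subseteq v$. In the case $(x)\notin T$, $T_{(x)}$ is simply the empty tree. If the convention is that $\Tr_M$ excludes the empty tree, this degenerate case still sits inside the ambient Cantor space, and the computation above is unaffected. I expect no real obstacle here: the only point needing care is that the computation is carried out in the ambient product space, where it is coordinatewise. In effect $h_x$ is the restriction of the continuous coordinate-reindexing map $2^{M^{<\N}}\to 2^{M^{<\N}}$ given by $f\mapsto f\circ\bigl(v\mapsto (x)^\smallfrown v\bigr)$.
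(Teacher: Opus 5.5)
Your proof is correct and is the routine verification the paper omits (the lemma is stated without proof). The identity $h_x^{-1}[\{T'\mid v\in T'\}]=\{T\mid (x)^\smallfrown v\in T\}$ is the same subbasic-preimage computation the paper carries out for $\Omega_0$ in Lemma~\ref{lem:Omega-continuous} and for the gluing map $P$ in Lemma~\ref{lem:S-continuous}, and your handling of the degenerate case $(x)\notin T$ (where $T_{(x)}=\emptyset$, still a tree under the paper's definition) is fine.
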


\begin{lemma}\label{lem:back-forth-iso}
	For every $2 \leq \alpha < \omega_1$, $T \equiv_\alpha T' \iff 
	\forall k \geq 1 (\mathrm{forth}^\alpha_k(T, T') \wedge 
	\mathrm{back}^\alpha_k(T, T'))$.
\end{lemma}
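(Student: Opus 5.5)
We prove the two implications separately. Throughout we use that for $T\in\WF_M^{=\alpha}$ and $(x)\in T$, Lemma~\ref{lem:tree-rank-tail} gives $\rho(T_{(x)})=\rho_T((x))+1\leq\rho_T(\emptyset)<\rho(T)=\alpha$. So every first-level section has rank some $\beta<\alpha$. This is why the clauses ``$\exists\beta<\alpha$'' are the right ones.

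\emph{($\Rightarrow$)} Let $f\colon T\to T'$ be an isomorphism of the structures $(T,\{\Suc_a(T)\}_a)$ and $(T',\{\Suc_a(T')\}_a)$, with $T,T'\in\WF_M^{=\alpha}$. The first step is to check that $f(\emptyset)=\emptyset$. The root is the only node with no $\Suc$-predecessor, and this property is preserved by isomorphisms. Next, $f$ maps first-level nodes to first-level nodes. Moreover, if $(x)\in T$ and $f((x))=(x')$, then $\pi_2(x)=\pi_2(x')$, because $(\emptyset,(x))\in\Suc_{\pi_2(x)}(T)$ forces $(\emptyset,(x'))\in\Suc_{\pi_2(x)}(T')$.

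Then $f$ restricts to a bijection from the cone of $T$ above $(x)$ onto the cone of $T'$ above $(x')$. Stripping the first coordinate yields an isomorphism $T_{(x)}\cong T'_{(x')}$ of labelled structures, since $\Suc_a$ in a section corresponds to $\Suc_a$ in the cone by prefixing. Isomorphic trees have the same rank, so $T_{(x)}\equiv_\beta T'_{(x')}$ with $\beta=\rho(T_{(x)})<\alpha$. Given an injective $u\in M^k_{\mathrm{inj}}$ with all $(u_i)\in T$, we take $u'_i$ with $(u'_i)=f((u_i))$. This $u'$ is injective because $f$ is. This proves $\mathrm{forth}^\alpha_k$, and $\mathrm{back}^\alpha_k$ follows symmetrically using $f^{-1}$.

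\emph{($\Leftarrow$)} First, $T,T'\in\WF_M^{=\alpha}$ holds directly from the first conjuncts of $\mathrm{forth}^\alpha_1$ and $\mathrm{back}^\alpha_1$. Let $F=\{x\mid (x)\in T\}$ and $F'=\{x'\mid (x')\in T'\}$. On $F\cup F'$, regarded as labelled first-level nodes of either tree, consider the relation
\[
x\approx y \iff \pi_2(x)=\pi_2(y)\ \text{and}\ \text{the two sections are isomorphic as labelled trees}.
\]
This is an equivalence relation. Note that $\exists\beta\,(T_{(x)}\equiv_\beta T'_{(x')})$ simply says the sections are isomorphic, since the index $\beta$ is then the common rank. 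Fix an $\approx$-class $C$. If $C\cap F$ has at least $k$ elements, we list $k$ of them as an injective $u$ and apply $\mathrm{forth}^\alpha_k$. This gives $k$ distinct $u'_i\in F'$, each $\approx u_i$ and hence in $C$. Therefore $|C\cap F'|\geq|C\cap F|$ whenever the latter is finite, and $C\cap F'$ is infinite whenever $C\cap F$ is infinite. Since $M$ is countable, $\mathrm{back}^\alpha_k$ gives the reverse inequality, so $|C\cap F|=|C\cap F'|$ in $\{0,1,\dots,\aleph_0\}$. This counting step is the main point: the finite-$k$ conditions must be used for every $k$ to control infinite classes. It uses the fact that membership in a class is transitive, which holds because label equality and isomorphism are equivalences.

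Finally, we choose bijections $g_C\colon C\cap F\to C\cap F'$ for each class, and glue them into a label-preserving bijection $g\colon F\to F'$. For each $x\in F$ we choose an isomorphism $\phi_x\colon T_{(x)}\to T'_{(g(x))}$; countable choice suffices. Define $f(\emptyset)=\emptyset$ and $f((x)^\smallfrown v)=(g(x))^\smallfrown\phi_x(v)$. This $f$ is a bijection $T\to T'$. It preserves $\Suc_a$ at the root because $g$ preserves $\pi_2$, and it preserves $\Suc_a$ above the first level because each $\phi_x$ does and prefixing commutes with $\Suc_a$. Hence $T\equiv_\alpha T'$.
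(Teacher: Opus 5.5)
Your proof is correct and follows the same route as the paper. For ($\Rightarrow$) you restrict the isomorphism to first-level sections, and for ($\Leftarrow$) you use $\mathrm{forth}^\alpha_k$ and $\mathrm{back}^\alpha_k$ for every $k$ to match the multiplicity of each (label, isomorphism type) among first-level nodes and then glue section isomorphisms. Your version is more careful than the paper's sketch on root preservation, injectivity of $u'$, treating all first-level nodes rather than only those of the form $(n,a,0)$, and assembling $f$ explicitly. The one cosmetic caveat is that $F\cup F'$ should be read as a disjoint union, since $F$ and $F'$ may overlap in $M$ while carrying different sections.
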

\begin{proof}
  $(\Rightarrow)$ Assume $T \equiv_\alpha T'$. 
  By definition, $T,T' \in \WF_M^{=\alpha}$ and we have an 
  isomorphism $f:(T,\{\Suc_a(T)\}_a)\cong 
  (T',\{\Suc_a(T')\}_a)$. 
  Let $k\geq 1$ and $u=(u_i)\in M^k_{\mathrm{inj}}$ such that 
  $\forall i<k \; (u_i)\in T$. 
  Define $u' = (f((u_i)))_{0\leq i<k}\in M^k$, then 
  $(u'_i)\in T'$ and since $(\emptyset,(u_i)) \in 
  \Suc_a(T) \iff (\emptyset,f((u_i)))\in \Suc_a(T')$ 
  we have $\pi_2(u_i) = \pi_2(u'_i)$. 
  Let $\beta_i\defi\rho(T_{(u_i)})$. 
  By Lemma~\ref{lem:tree-rank-tail}, 
  $\beta_i<\rho(T)=\alpha$. 
  Since the restriction $f_i=f\rest T_{(u_i)}$ is an 
  isomorphism between $T_{(u_i)}$ and $T'_{(u'_i)}$, we have 
  that $T'_{(u'_i)}\in \WF_M^{=\beta_i}$ and hence 
  $T_{(u_i)}\equiv_{\beta_i} T'_{(u'_i)}$. 
  This proves $\mathrm{forth}_k^\alpha(T,T')$. 
  The condition $\mathrm{back}_k^\alpha(T,T')$ is analogous.
  
  $(\Leftarrow)$ Let $T,T'\in \Tr_M$ be nonempty trees 
  such that $\forall k\geq 1 (\mathrm{forth}^\alpha_k(T,T') 
  \wedge \mathrm{back}^\alpha_k(T,T'))$. 
  By definition of the conditions, $T,T'\in 
  \WF_M^{=\alpha}$ and it only remains to construct an 
  isomorphism 
  $f:(T,\{\Suc_a(T)\}_a)\cong (T',\{\Suc_a(T')\}_a)$. 
  For each $a\in L$, define $D_a(T)\defi \{n\in \N\mid 
  (n,a,0)\in T\}$ and define $D_a(T')$ analogously. 
  It suffices to give a bijection $g:D_a(T)\to D_a(T')$ such 
  that $T_{(n,a,0)}\cong T_{(g(n),a,0)}$ for every 
  $n\in D_a(T)$. 
  If either of the two sets is infinite, the validity of 
  the conditions $\mathrm{forth}^\alpha_k(T,T')$ and 
  $\mathrm{back}^\alpha_k(T,T')$ for every $k\geq 1$ 
  ensures that both are infinite. 
  In this case, to construct the required bijection we 
  consider the isomorphism types of $T_{(n,a,0)}$.
  The conditions for $k=1$ state that exactly the same 
  types occur in $T$ and $T'$. 
  If for some of these types there are $k$ occurrences in 
  $T$, the condition $\mathrm{forth}^\alpha_k(T,T')$ implies 
  that there are at least $k$ occurrences in $T'$, and 
  conversely using the condition 
  $\mathrm{back}^\alpha_k(T,T')$. 
  If some type occurs infinitely many times in one of the 
  trees, it occurs infinitely many times in the other. 
  In both cases we obtain a bijection between $D_a(T)$ and 
  $D_a(T')$ that preserves isomorphism types.
  
  On the other hand, if the sets are finite, using the 
  condition $\mathrm{forth}^\alpha_k$ with 
  $k=|D_a(T)|$ and $\mathrm{back}^\alpha_{k'}$ with 
  $k'=|D_a(T')|$ we obtain that they have the same 
  cardinality and moreover obtain a bijection satisfying 
  the required property.
\end{proof}

\begin{theorem}\label{thm:iso-alpha-Borel}
	For each \( 1 \leq \alpha < \omega_1 \), \( \equiv_\alpha \) is Borel.
\end{theorem}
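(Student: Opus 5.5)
We argue by transfinite induction on $\alpha$, using Lemma~\ref{lem:back-forth-iso} to express $\equiv_\alpha$ through countable Boolean operations on relations already known to be Borel. Throughout, $\WF_M^{=\alpha} = \WF_M^{\leq\alpha}\setminus\bigcup_{\beta<\alpha}\WF_M^{\leq\beta}$ is Borel by Proposition~\ref{prop:WF-alpha-Borel-standard}, as a countable Boolean combination of Borel sets.

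\emph{Base case $\alpha=1$.} Every tree of rank $1$ is $\{\emptyset\}$, so $T\equiv_1 T'$ iff $T=T'=\{\emptyset\}$. This is a closed condition, hence Borel.

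\emph{Inductive step, $2\le\alpha<\omega_1$.} Assume $\equiv_\beta$ is Borel for every $1\le\beta<\alpha$. By Lemma~\ref{lem:back-forth-iso} it suffices to show that each $\mathrm{forth}^\alpha_k$ and $\mathrm{back}^\alpha_k$ is a Borel subset of $\Tr_M\times\Tr_M$; then $\equiv_\alpha=\bigcap_{k\ge1}(\mathrm{forth}^\alpha_k\cap\mathrm{back}^\alpha_k)$ is Borel. Inspecting $\mathrm{forth}^\alpha_k$, all its quantifiers range over countable sets:
\begin{itemize}
\item $u,u'\in M^k_{\mathrm{inj}}$ (countable, since $M$ is);
\item $i<k$ (finite);
\item $\beta<\alpha$ (countable, since $\alpha<\omega_1$).
\end{itemize}
So the condition is a countable union/intersection of atomic conditions, which we check one by one:
\begin{itemize}
\item $(u_i)\in T$ is clopen;
\item $\pi_2(u_i)=\pi_2(u'_i)$ does not depend on $(T,T')$, so it is either everything or empty;
\item $T\in\WF_M^{=\alpha}$ is Borel, as noted above;
\item $T_{(u_i)}\equiv_\beta T'_{(u'_i)}$ is the preimage of the Borel set $\equiv_\beta$ under $(T,T')\mapsto(h_{u_i}(T),h_{u'_i}(T'))$, which is continuous by Lemma~\ref{lem:iso-alpha-lemma}.
\end{itemize}
The case $\beta=0$ never arises, because the relations $\equiv_\beta$ are only defined for $\beta\ge1$; if it were needed, the same clopen argument as in the base case would handle it. The argument for $\mathrm{back}^\alpha_k$ is symmetric.

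\emph{Main obstacle.} The point needing care is that the induction is legitimate: the family $\{\equiv_\beta\}_{\beta<\alpha}$ must be handled uniformly, and the quantifier $\exists\beta<\alpha$ must range over a countable set. Both hold because we work below $\omega_1$ and each $\equiv_\beta$ is Borel by the induction hypothesis. The base case must also match the convention $\alpha\geq 2$ in Lemma~\ref{lem:back-forth-iso}, which is why $\alpha=1$ is treated separately by hand.
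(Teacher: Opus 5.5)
Your proof is correct and follows essentially the same route as the paper. The paper also uses strong induction with base case $\equiv_1=\{(\{\emptyset\},\{\emptyset\})\}$, and in the inductive step it combines Lemma~\ref{lem:back-forth-iso} with four facts: all quantifiers range over countable sets, $(u)\in T$ is clopen, $\WF_M^{=\alpha}$ is Borel by Proposition~\ref{prop:WF-alpha-Borel-standard}, and $\equiv_\beta$ pulls back along the continuous maps $h_u\times h_{u'}$. Your extra remarks, writing $\WF_M^{=\alpha}$ as a difference of rank-bounded sets and noting that the label condition does not depend on $(T,T')$, only spell out details the paper leaves implicit.
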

\begin{proof}
  We use strong induction on $\alpha$: in the case $\alpha=1$ we note that 
  $(\equiv_1) = \{(\{\emptyset\},\{\emptyset\})\}$ is Borel in 
  $\Tr_M\times \Tr_M$. 
  Now suppose that for every $\beta<\alpha$, $\equiv_\beta$ is Borel. 
  By Lemma~\ref{lem:back-forth-iso}, it suffices to see that the predicates 
  $\mathrm{back}_k^\alpha(T,T')$ and $\mathrm{forth}_k^\alpha(T,T')$ define Borel subsets of $\Tr_M\times \Tr_M$. 
  First note that by Proposition~\ref{prop:WF-alpha-Borel-standard}, the conditions 
  $T,T'\in \WF_M^{=\alpha}$ are Borel. 
  Moreover, all quantifiers range over countable sets and the condition $(u)\in T$ defines a clopen set.
  Finally, the predicate $T_{(u)}\equiv_\beta T'_{(u')}$ is equivalent to 
  $(T,T')\in (h_u\times h_{u'})^{-1}[\equiv_\beta]$ for the function $h_u$ of Lemma~\ref{lem:iso-alpha-lemma}, which defines a Borel set by the inductive hypothesis.
\end{proof}

\end{document}